\newtheorem{definition}{Definition}
\newtheorem{theorem}[definition]{Theorem}
\newtheorem{lemma}[definition]{Lemma}
\newtheorem{corollary}[definition]{Corollary}
\newtheorem{remark}[definition]{Remark}
\newtheorem{example}[definition]{Example}
\newcommand{\Br}{\mathbb{R}}
\newcommand{\Ora}[1]{\overrightarrow{#1}}
\newcommand{\Fp}[2][\sigma]{f_{#1(#2)}}
\newcommand{\Fplr}[3][\sigma]{\Fp[#1]{#2}\circ\cdots\circ\Fp[#1]{#3}}
\newcommand{\Flr}[2]{f_{#1}\circ\cdots\circ f_{#2}}
\newcommand{\Fall}[1][\sigma]{f^{#1}}
\newcommand{\Bp}[2][\sigma]{b_{#1(#2)}}
\newcommand{\Ap}[2][\sigma]{a_{#1(#2)}}
\newcommand{\Dp}[2][\sigma]{d_{#1(#2)}}
\newcommand{\Tf}[1]{\theta(f_{#1})}
\newcommand{\Tflr}[2]{\theta(\Flr{#1}{#2})}
\newcommand{\Tfp}[2][\sigma]{\theta(\Fp[#1]{#2})}
\newcommand{\Tfall}[1][\sigma]{\theta(\Fall[#1])}
\newcommand{\Mtt}[4]{\begin{pmatrix*} #1 & #2 \\ #3 & #4 \end{pmatrix*}}
\newcommand{\Mto}[2]{\begin{pmatrix*} #1 \\ #2 \end{pmatrix*}}
\newcommand{\Mot}[2]{\begin{pmatrix*} #1 & #2 \end{pmatrix*}}
\newcommand{\Mp}[2][\sigma]{M_{#1(#2)}}
\newcommand{\Mplr}[3][\sigma]{\Mp[#1]{#2}\cdots\Mp[#1]{#3}}
\newcommand{\Mall}[1][\sigma]{M^{#1}}
\begin{document}

\title{Composition Orderings for Linear Functions and \\Matrix Multiplication Orderings}
\author{
Susumu Kubo
\and 
Kazuhisa Makino
\and 
Souta Sakamoto
}
\maketitle

\begin{abstract}
In this paper, we consider composition orderings for linear functions of one variable. 
Given $n$ linear functions $f_1,\dots,f_n:\Br \rightarrow\Br$ and a constant $c \in \Br$, the objective is to  find a permutation $\sigma:[n]\rightarrow[n]$ that minimizes/maximizes  $\Fplr{n}{1}(c)$, where $[n]=\{1, \dots , n\}$.
The problem is fundamental in many fields such as combinatorial optimization, computer science,  and operations research. 
It was first studied in the area of time-dependent scheduling, and known to be solvable in $O(n \log n)$ time if all functions are nondecreasing. 
In this paper, we present a complete characterization of optimal composition orderings for this case, by regarding linear functions as two-dimensional vectors. 
We also show several interesting properties on optimal composition orderings such as the equivalence between local and global optimality.
Furthermore, by using the characterization above, we provide a fixed-parameter tractable (FPT) algorithm for the composition ordering problem for general linear functions, with respect to the number of decreasing linear functions. 

We next deal with matrix multiplication orderings as a generalization of composition of linear functions. 
Given $n$ matrices $M_1,\dots,$ $M_n\in \Br^{m\times m}$ and two vectors $\bm{w},\bm{y}\in \Br^m$, where $m$ denotes a positive integer, the objective is to find a permutation $\sigma:[n]\rightarrow[n]$ that minimizes/maximizes $\bm{w}^\top \Mplr{n}{1} \bm{y}$.
The problem is also viewed as a generalization of  
flow shop scheduling through a limit. 
By extending the results for composition orderings for linear functions, 
we show that the multiplication ordering problem for $2\times 2$ matrices is solvable in $O(n \log n)$ time if all the matrices are  simultaneously triangularizable and 
have nonnegative determinants, 
and  FPT with respect to the number of matrices with negative determinants, 
 if all the matrices are simultaneously triangularizable.  
As the negative side, we finally prove that three possible natural generalizations are NP-hard: 1) when $m = 2$, even if  all the matrices have nonnegative determinants,  2) 
when $m \geq 3$, even if 
all the matrices are upper triangular with nonnegative elements, 
and 3) the target version of the problem, i.e., finding a permutation $\sigma$ with minimum $|\bm{w}^\top \Mplr{n}{1} \bm{y}-t|$ for a given target $ t\in \Br$, even if the problem corresponds to the composition ordering problem for monotone linear functions. 
\end{abstract}

\section{Introduction}\label{section:introduction}
In this paper, we consider composition ordering for linear functions, that is, polynomial functions of degree one or zero. 
Namely, given a constant $c \in \Br$ and $n$ linear functions $f_1,\dots,f_n:\Br \rightarrow\Br$, each of which is expressed as $f_i(x)=a_i x+b_i$ for some $a_i, b_i \in \Br$, we find a permutation $\sigma:[n]\rightarrow[n]$ that minimizes/maximizes  $\Fplr{n}{1}(c)$, where  $\Br$ denotes the set of  real numbers, and $[n] = \{1, \dots , n\}$ for a positive integer $n$.  
Since composition of functions is {\em not} commutative even for linear functions, i.e., $f_{\sigma(2)} \circ f_{\sigma(1)}\not=f_{\sigma(1)} \circ f_{\sigma(2)}$ holds in general, 
it makes sense to investigate the problem. 
For example, let $f_1(x)=-\frac{1}{2}x+\frac{3}{2}$, $f_2(x)=x-3$, $f_3(x)=3x-1$, and $c=0$,  
then the identity $\sigma$ (i.e., $\sigma(1)=1$, $\sigma(2)=2$ and $\sigma(3)=3$) provides 
$f_3\circ f_2\circ f_1(0) = f_3(f_2(f_1(0))) = f_3(f_2(\frac{3}{2})) = f_3(-\frac{3}{2}) = -\frac{11}{2}$, while the permutation $\tau$ with 
$\tau(1)=2$, $\tau(2)=1$ and $\tau(3)=3$ provides $f_3\circ f_1\circ f_2(0) = 8$ . In fact, we can see that $\sigma$ and $\tau$ are respectively minimum and maximum permutations for the problem. 
 The composition ordering problem is natural and fundamental
in many fields such as combinatorial optimization, computer science, and operations research. 

The problem was first studied from an algorithmic point of view under the name of 
{\em time-dependent scheduling}\cite{Gawiejnowicz,GuptaX2,Ho,Mosheiov,Tanaev,Wajs}.
We are given  $n$ jobs $j \in [n]$ with processing time $p_j$.  
Unlike the classical scheduling, 
the processing time $p_j$
is {\em not} constant, depending on the starting time of job $j$. 
 Here each $p_j$
is assumed to satisfy
$p_j(t) \leq  s + p_j(s+t)$ for any positive reals $s$ and $t$, 
since we should be able to finish processing job $j$ earlier if it starts earlier.
The model was introduced to deal
with learning and deteriorating effects. 
As the most fundamental setting of the time-dependent scheduling, 
we consider the linear model of  single-machine makespan minimization, 
where the makespan denotes the time when all the jobs have finished processing, and we assume that the
machine can handle only one job at a time and preemption is not allowed.
Linear model means that the processing time $p_j$ is  linear in the starting time $s$, i.e.,  $p_j(s) = \tilde{a}_j s + \tilde{b}_j$ for some constant $\tilde{a}_j$ and $\tilde{b}_j$. 
 Then it is not difficult to see that the model can be regarded as the minimum composition ordering problem for linear functions $f_j(x)=
 (\tilde{a}_j+1) x + \tilde{b}_j$, since $f_j$ represents the time to finish job $j$ if it start processing at time $x$. 
  Mosheiov \cite{Mosheiov} showed the makespan is independent of the schedule, i.e., any permutation provides the same composite, 
  if $\tilde{b}_j = 0$ for any $j \in [n]$. 
Gawiejnowicz and Pankowska \cite{Gawiejnowicz}, Gupta and Gupta \cite{GuptaX2}, Tanaev et al. \cite{Tanaev}, and Wajs \cite{Wajs} studied
the linear deterioration model, that is,  $\tilde{a}_j, \tilde{b}_j > 0$ (i.e., $a_j > 1$ and $b_j>0$) for any $j \in [n]$. 
Here $\tilde{a}_j$ and $\tilde{b}_j$ are
respectively called the {\em deterioration rate} and the {\em basic processing time} of job $j$. 
It can be shown that an optimal permutation can be obtained by arranging the jobs nonincreasingly with respect to $\tilde{a}_j/\tilde{b}_j\,(=(a_j-1)/b_j)$. 
Ho, Leung and Wei \cite{Ho} considered the linear shortening model, that is,  
$0> \tilde{a}_j >-1, \tilde{b}_j > 0$ (i.e., $1>a_j > 0$ and $b_j>0$) for any $j \in [n]$ and showed that an optimal permutation can be obtained again by arranging the jobs nonincreasingly with respect to $\tilde{a}_j/\tilde{b}_j\,(=(a_j-1)/b_j)$. 
Later,  
Kawase, Makino and Seimi \cite{KMS:linear} introduced the composition ordering problem,  and showed that the maximization can be formulated as the minimization problem, and 
propose an O($n \log n$)-time algorithm if all linear $f_i$'s are monotone nondecreasing, i.e.,  $a_i \geq 0$ for any $i \in [n]$. 
However, it is still open whether it is polynomially computable for general linear functions.  
Moreover, it is not known even when constantly many functions are monotone decreasing.

We remark that the time-dependent scheduling with 
the ready time and the deadline can be regarded as the composition ordering problem for piecewise linear functions, and is known to be NP-hard, and  Kawase, Makino and Seimi   \cite{KMS:linear} also studied the composition ordering for non-linear functions as well as the related problems such as partial composition and $k$-composition. 
We also remark that the free-order
secretary problem,  which is closely related to a branch of the problems such as the full-information secretary problem \cite{Ferguson},
knapsack and matroid secretary problems \cite{Babaioff1,Babaioff2,OveisGharan} and stochastic knapsack problems \cite{Dean05,Dean08},
can also be regarded as the composition ordering problem \cite{KMS:linear}.

\bigskip

 \subsection*{Main results obtained in this paper}

In this paper, we first characterize the minimum/maximum composition ordering for monotone (increasing) linear functions, in terms of their polar angles. 
In order to describe our result, we need to define three important concepts: counterclockwiseness, colinearity, and potential identity. 

We view a linear function $f(x)=ax+b$ as the vector $\Mto{b}{1-a}$ in $\Br^2$, and its {\em angle}, denoted by $\theta(f)$, is defined as the polar angle in $[0, 2\pi)$ of the vector, where we define $\theta(f)=\bot$ if the vector of $f$ is the origin $\Mto{0}{0}$. 
For example, if $\Ora{f}=\Mto{0}{2}$ and  $\Ora{g}=\Mto{-\sqrt{3}}{-1}$, then we have $\displaystyle \theta(f)=\frac{\pi}{2}$ and  $\displaystyle \theta(g)=\frac{7}{6}\pi$ (See also Figure \ref{figure:example_theta}). 
 \begin{figure}[htb]
     \centering
     \includegraphics[width=50mm]{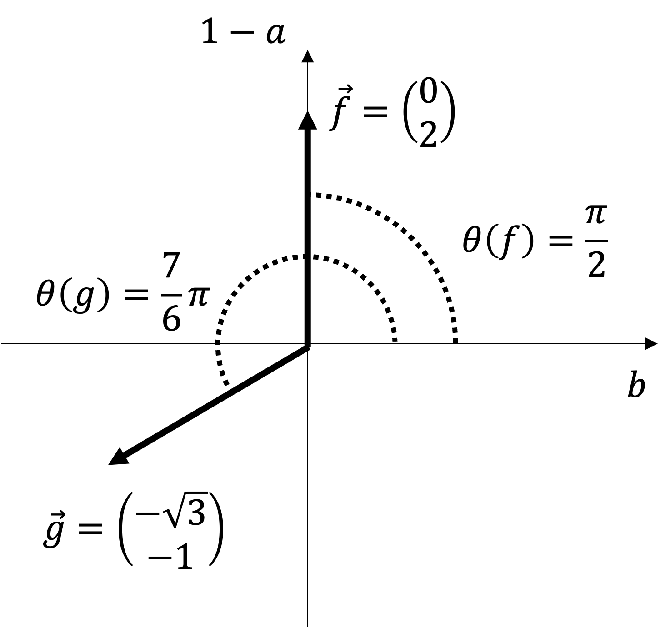}
     \caption{Vector representations of $f(x)=-x$ and $g(x)=2x-\sqrt{3}$.}
     \label{figure:example_theta}
 \end{figure}
For linear functions $f_1, \dots , f_n$, 
a permutation $\sigma:[n]\to [n]$ is called 
{\em counterclockwise}
if there exists an integer $k \in [n]$ such that $\theta(f_{\sigma(k)})\leq \dots \leq \theta(f_{\sigma(n)}) \leq \theta(f_{\sigma(1)}) \leq \dots \leq \theta(f_{\sigma(k-1)})$, where identical functions $f_i$ (i.e., $\theta(f_i)=\bot$) are ignored and the inequalities are assumed to be transitive. 
 Linear functions $f_1, \dots , f_n$ are called {\em colinear} if the corresponding vectors lie in some line through the origin,  
i.e., there exists an angle $\lambda$ such that $\theta(f_i) \in \{\lambda, \lambda+\pi, \bot\}$ for all $i \in [n]$, and {\em potentially identical} if there exists a counterclockwise permutation $\sigma:[n]\to [n]$ such that the corresponding composite 
is identical, i.e., $f_{\sigma(n)} \circ \dots \circ f_{\sigma(1)}(x)=x$. 
A permutation is called {\em minimum} (resp., {\em maximum}) if the corresponding composite is the minimum (resp., maximum). Then we have the following complete characterization of optimal permutations.

\begin{theorem}\label{theorem:main_theorem-x}
Let $f_1 , \dots , f_n$ be  monotone linear functions. Then we have the following statements.  
 \begin{description}
     \item[{\rm  (i)}] They are colinear if and only if any permutation 
     is minimum.  
      
     \item[{\rm  (ii)}] If they are not colinear, then  the following statements are   
     equivalent: 
      \begin{description}
     \item {\rm (ii-1)} They are potentially identical. 
     \item {\rm (ii-2)}
     A permutation  
     is minimum if and only if it is counterclockwise.  
     \end{description}
     \item[{\rm  (iii)}] If they are neither colinear nor potentially identical, then a permutation 
     is minimum if and only if it is a counterclockwise permutation such that  $\theta(\Fall)+\pi\in[\Tfp{t},\Tfp{s}]_{2\pi}$, where $s$ and $t$ denote the first and last integers $i$ such that $f_{\sigma(i)}$ is non-identical.
 \end{description}
    \end{theorem}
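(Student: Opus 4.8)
The plan is to work entirely with the vector representation $\Ora{f}=\Mto{b}{1-a}$ and the polar angle $\theta$, reducing composition to a $2\times 2$ matrix action and then to elementary planar geometry. First I would record the basic dictionary: composition $f\circ g$ corresponds to a fixed bilinear operation on the vectors, and the composite $\Fall=f_{\sigma(n)}\circ\cdots\circ f_{\sigma(1)}$ has its own angle $\theta(\Fall)$ and its own ``$b$-coordinate'' (the value of $\Fall(0)$, up to the known normalization). Since all $f_i$ are monotone, $a_i\ge 0$, so each vector $\Ora{f_i}$ lies in the closed right half-plane, i.e.\ $\theta(f_i)\in[0,\pi]$ (or $\bot$); this is the geometric fact that makes ``counterclockwise'' well defined and is what I would lean on throughout. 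I would then quote the characterizations already established in the paper — part (i) for the colinear case and the equivalence (ii-1)$\Leftrightarrow$(ii-2) — so that in (iii) I may assume the functions are non-colinear and \emph{not} potentially identical, and that a minimum permutation is in particular counterclockwise (this direction should already be available from the machinery behind (ii), or provable by the same exchange argument: swapping an adjacent out-of-order pair does not increase the composite).

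The core of the argument is the following local-move analysis. Fix a counterclockwise permutation $\sigma$ and let $s,t$ be the first and last indices with $f_{\sigma(i)}$ non-identical. The ``wrap point'' of a counterclockwise order is the place where the angle sequence jumps by going the long way around; sliding this wrap point one notch corresponds to cyclically moving one function from the front of the composition to the back (or vice versa), which changes $\Fall$ by conjugation/composition with a single $f_{\sigma(i)}$. I would compute how $\Fall(c)$ changes under such a cyclic shift and show that the value is minimized exactly when the wrap point is positioned so that $\theta(\Fall)+\pi$ falls inside the arc $[\Tfp{t},\Tfp{s}]_{2\pi}$ — equivalently, so that the ``direction of steepest decrease'' of the affine map $\Fall$ (which is governed by $\theta(\Fall)+\pi$, the antipode) is sandwiched between the first and last summand directions. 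Concretely, among all $n$ counterclockwise permutations obtained from one another by these cyclic shifts, I would show the composite $\Fall(c)$, viewed as a function of the shift parameter, is ``unimodal'' on the cycle and that its minimizer is characterized by the stated angle condition; the non-potentially-identical hypothesis guarantees $\theta(\Fall)\ne\bot$ so that $\theta(\Fall)+\pi$ is defined and the minimizer is genuinely at the boundary/interior of that arc rather than degenerate.

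Finally I would assemble the two directions. For ``minimum $\Rightarrow$ counterclockwise with the angle condition'': any minimum permutation is counterclockwise by the exchange argument, and if its wrap point violated $\theta(\Fall)+\pi\in[\Tfp{t},\Tfp{s}]_{2\pi}$ then the cyclic-shift computation above exhibits a strictly better permutation, a contradiction. For the converse: all counterclockwise permutations satisfying the angle condition are related by cyclic shifts that leave $\Fall$ unchanged up to the ordering of equal-angle blocks and of identical functions (which do not affect the composite), hence all achieve the same value, which by the unimodality is the global minimum over counterclockwise permutations and therefore over all permutations. The main obstacle I anticipate is the bookkeeping in the cyclic-shift computation — tracking how both coordinates of $\Ora{\Fall}$ transform under pre- and post-composition by a single monotone linear function, and converting the resulting inequality on $\Fall(c)$ cleanly into the arc condition on $\theta(\Fall)+\pi$, especially handling the boundary cases where $\theta(\Fall)+\pi$ equals $\Tfp{s}$ or $\Tfp{t}$ and the cases with repeated angles or with $a_{\sigma(i)}=0$ (a vector on the positive-$b$ axis, i.e.\ $\theta=0$ or $\pi$ at the extremes of the allowed range). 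I would isolate that computation as a short lemma about a single cyclic shift and then run the unimodality argument on top of it.
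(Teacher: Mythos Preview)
Your overall strategy for part~(iii) --- reduce to counterclockwise permutations, study how $f^\sigma$ changes under cyclic shifts, prove unimodality over the cycle, and read off the angle condition at the minimizer --- matches the paper's approach closely (Lemmas~\ref{lemma:yamatani} and~\ref{lemma:unimodal} play exactly the role of your ``single cyclic shift'' lemma and the unimodality argument). So the architecture is right.

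There is, however, a concrete error in your geometric setup that would derail the execution. You assert that since the $f_i$ are monotone, $a_i\ge 0$, hence $\Ora{f_i}=\Mto{b_i}{1-a_i}$ lies in the closed right half-plane and $\theta(f_i)\in[0,\pi]$. This is false on two counts: monotone here means $a_i>0$ strictly, and more importantly the second coordinate $1-a_i$ can be negative (whenever $a_i>1$) while $b_i$ is unconstrained, so $\theta(f_i)$ ranges over all of $[0,2\pi)$. Example~\ref{ex-unimodals} in the paper already has vectors in every quadrant. Since you say you will ``lean on'' this half-plane picture throughout, every subsequent step that tacitly uses it (in particular any argument that counterclockwise orderings are straightforward to describe, or that boundary angles sit at $0$ or $\pi$) needs to be reworked. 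The case $a_{\sigma(i)}=0$ you flag does not occur at all.

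A second gap is your treatment of ``minimum $\Rightarrow$ counterclockwise.'' You suggest this follows from the adjacent-swap exchange argument, but that argument only yields \emph{local} optimality in the sense of Lemma~\ref{lemma-mx1}. Passing from local optimality to a genuine counterclockwise ordering is the content of Lemma~\ref{lemma:localopt_counterclockwise}, and it is substantially harder than a single exchange: the paper needs the auxiliary Lemmas~\ref{lemma:no_pi}, \ref{lemma-cor:no_x}, and~\ref{lamma-newm001} to rule out angle sequences that wind more than once around the circle while still being locally optimal with respect to all interval swaps. Your proposal does not account for this difficulty, and without the half-plane constraint you (incorrectly) assumed, there is no shortcut.
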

\noindent    
Here we define
 \begin{equation*}
     [\theta_1,\theta_2]_{2\pi}=
     \{
  \theta\in[\lambda_1,\lambda_2]
  \mid 
      \lambda_1=_{2\pi}\theta_1, 
   \ \lambda_2=_{2\pi}\theta_2, 
   \ \lambda_2-\lambda_1\in[0,2\pi) \}, 
 \end{equation*}
where 
 for two angles $\theta_1,\theta_2\in \Br$, we write $\theta_1=_{2\pi} \theta_2$ if they are congruent on the angle, i.e., $\theta_1-\theta_2 \in 2\pi \mathbb{Z}$, where $\mathbb{Z}$ denotes the set of integers. 
We note that the lexicographical orderings which Kawase et al. \cite{KMS:linear} introduced 
can be interpreted as counterclockwise permutations, and they showed the existence of counterclockwise minimum permutations. 

We also note that (i) Theorem \ref{theorem:main_theorem-x} can characterize maximum permutations by replacing ``counterclockwise"  by ``clockwise,"  which is obtained from 
the transformation between minimization and maximization in \cite{KMS:linear}  (See Section \ref{section:notation}), and (ii) Theorem \ref{theorem:main_theorem-x} can be generalized to characterize 
minimum/maximum permutations for monotone nondecreasing linear functions (See Section \ref{section:nondecreasing}). 

These results enable us to efficiently count and enumerate all minimum/maximum permutations. 
We also show that (i) a permutation is {\em globally} minimum (resp., maximum) if and only if it is {\em locally}\footnote{The term ``locally optimal'' will be defined in Section \ref{section:monotone}.} minimum (resp., maximum),  
and (ii) counterclockwise orderings are unimodal, which also reveals interesting discrete structures of composition orderings. 

We then deal with the composition ordering for general linear functions. 
We provide several structural properties of the optimal orderings.  
These, together with the characterization for monotone linear functions, provide a fixed-parameter tractable (FPT) algorithm for the composition ordering problem for general linear functions, 
with respect to the number of monotone decreasing linear functions. 

  \begin{theorem}\label{theorem:main3}
An optimal permutation for linear functions $f_1, \dots, f_n$ can be computed in $O(k 2^k  n^6)$ time, where $k\,(>0)$ denotes the number of monotone decreasing functions. 
\end{theorem}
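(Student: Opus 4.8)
The plan is to reduce to the minimization problem via the minimization–maximization transformation of Kawase et al.\ \cite{KMS:linear} (Section \ref{section:notation}), and then to exploit that only $k$ of the functions fail to be monotone nondecreasing. First I would fix an (unknown) optimal permutation $\sigma$ and record the $k$ monotone decreasing functions $g_1,\dots,g_k$ in the order they occur in $\Fplr{n}{1}$; they cut the composite into $k+1$ maximal runs of monotone nondecreasing functions, so that
\[
  \Fall \;=\; \tilde B_k\circ g_k\circ\tilde B_{k-1}\circ\cdots\circ g_1\circ\tilde B_0 ,
\]
where $\tilde B_i(x)=\alpha_i x+\beta_i$ is the composite of run $i$. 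The first observation is that, since composition multiplies leading coefficients, $\alpha_i=\prod_{f\in B_i}a_f\ge 0$ depends only on the \emph{set} of functions assigned to run $i$; the sole order-dependent quantity inside a run is the constant term $\beta_i$.

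Next I would prove a decoupling lemma. Viewing $\Fall(c)$ as a function of a single $\beta_i$ with everything else frozen, it is affine with slope $\bigl(\prod_{j>i}a_{g_j}\bigr)\bigl(\prod_{j>i}\alpha_j\bigr)$, whose sign is $(-1)^{k-i}\operatorname{sign}\bigl(\prod_{j>i}\alpha_j\bigr)$; so an optimal ordering inside run $i$ is obtained simply by \emph{minimizing} (resp.\ \emph{maximizing}) the composite of the functions of $B_i$ according to that sign, and is immaterial when $\alpha_j=0$ for some $j>i$. Crucially, because $\alpha_i\ge 0$ is order-independent, the optimal internal ordering of a run, and which of minimization/maximization it solves, are \emph{independent of the value fed into the run}. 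By Theorem \ref{theorem:main_theorem-x} (and its clockwise version for maxima) such an optimal internal ordering is a cyclic sort of the polar angles of the functions of $B_i$, with the seam constrained as in part~(iii) by that run's own composite. Hence an optimal permutation is completely determined by: (a) the order of $g_1,\dots,g_k$; and (b) the assignment of the nondecreasing functions to the $k+1$ runs — the internal orderings then being forced by (a), (b) and Theorem \ref{theorem:main_theorem-x}.

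I would then turn this into an algorithm by processing the decreasing functions with an FPT-style dynamic program that builds $\Fall$ from the innermost run outward: a state records the set $T\subseteq\{g_1,\dots,g_k\}$ already placed together with the parity data it induces, and a transition appends one more decreasing function preceded by the locally optimal intervening nondecreasing run; this yields the $k\,2^k$ factor, and by the decoupling lemma the locally optimal run is context-free, so no state is revisited. For each transition one must select, using Theorem \ref{theorem:main_theorem-x} and the structural properties of optimal orderings for general linear functions proved earlier in the paper, the best admissible nondecreasing run together with its seam — and a globally compatible choice of which nondecreasing functions each run receives; I expect this to reduce to solving polynomially many monotone composition ordering instances (each in $O(n\log n)$ time by \cite{KMS:linear}) and hence to $O(n^6)$ work per transition. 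Multiplying the three factors gives $O(k\,2^k n^6)$, and correctness follows because, by the decoupling lemma, every optimal permutation has the enumerated form.

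The step I expect to be the main obstacle is (b): a priori the nondecreasing functions can be split among the $k+1$ runs in exponentially many (in $n$) ways, so the work is to show — using the angular characterization of Theorem \ref{theorem:main_theorem-x} together with the structural lemmas for general linear functions — that, for a fixed order of the decreasing functions, only polynomially many ``run-contents-plus-seam'' configurations can be optimal, so that each dynamic-programming transition genuinely runs in polynomial time. Everything else is either the monotone characterization already in hand or routine bookkeeping.
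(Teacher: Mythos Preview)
Your high-level decomposition and the decoupling lemma are correct and match the paper's Lemma~\ref{lemma:LU_clockwise1}, and a dynamic program over subsets of the decreasing functions is indeed the right shape. But the obstacle you flag in (b) is precisely where the substance of the proof lies, and your proposal does not supply the idea that resolves it. In particular, with your DP state being only $T\subseteq\{g_1,\dots,g_k\}$, the ``locally optimal next run'' still depends on \emph{which} nondecreasing functions remain available, not just on $T$; so as stated the transition is not polynomial.

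The paper's resolution is not a per-transition enumeration. It proves two \emph{global} structural facts about an optimal $\sigma$ (Lemmas~\ref{lemma:LU_clockwise} and~\ref{lemma:strong_L_left_R_right}): collecting all nondecreasing functions that sit in even-parity (min) runs into $L^\sigma$ and the odd-parity (max) ones into $U^\sigma$, the whole of $L^\sigma$ is counterclockwise and the whole of $U^\sigma$ clockwise---not just each run separately---and there exist threshold angles $\psi_1\in(0,\pi)$, $\psi_2\in(\pi,2\pi)$ with $\theta(L^\sigma)\subseteq[\psi_1,\psi_2]$, $\theta(U^\sigma)\subseteq(\psi_2,\psi_1)_{2\pi}$, and distinct runs have disjoint angle sets. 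Thus the $(L,U)$ partition is fixed by a pair of thresholds ($O(n^2)$ choices), and after fixing one counterclockwise seam for $L$ and one clockwise seam for $U$ ($O(n^2)$ more), every run's content is forced to be a \emph{prefix} of the remaining $L$- or $U$-list. The DP state then becomes $(s,t,G)$ with $s,t$ prefix indices into the fixed orderings of $L,U$ and $G\subseteq[k]$, giving $O(n^2 2^k)$ states with $O(k)$ work each; the outer $O(n^4)$ loop yields $O(k\,2^k n^6)$. The missing ingredient in your plan is exactly this pair of lemmas.
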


\noindent
Theorem \ref{theorem:main3} means that both minimum and maximum permutations can be computed in $O(k 2^k  n^6)$ time. 
We remark that the FPT algorithm can be modified to 
efficiently count and enumerate all optimal permutations.

\medskip

In this paper, we finally consider the multiplication ordering for matrices as a generalization of the composition ordering for linear functions. 
The problem for matrices is to 
 find a permutation $\sigma:[n]\rightarrow[n]$ that minimizes/maximizes $\bm{w}^\top \Mplr{n}{1} \bm{y}$ for given  $n$ matrices $M_1,\dots,$ $M_n\in \Br^{m\times m}$ and two vectors $\bm{w},\bm{y}\in \Br^m$, 
 where $m$ denotes a positive integer. 
In fact, if $\bm{w}=\Mto{1}{0}
$,  $\bm{y}=\Mto{0}{1}$, and $M_i=\Mtt{a_i}{b_i}{0}{1}$ for any $i \in [n]$, 
then the matrix multiplication ordering problem is equivalent to the composition ordering problem for linear functions $f_i(x) = a_i x+ b_i$.
Here, we consider the minimization problem, since the maximization problem for $(M_1, \ldots, M_n, \bm{w}, \bm{y})$ corresponds to the minimization one for $(M_1, \ldots, M_n, -\bm{w}, \bm{y})$.

We obtain the following generalization of the results of linear functions. 
Matrices $M_1, \dots,$ $ M_n\in \Br^{m\times m}$ are called {\em simultaneously triangularizable} if  there exists a regular matrix $P\in\Br^{m\times m}$ such that $P^{-1}M_iP$ is an upper triangular matrix for any $i\in[n]$.


\begin{theorem}\label{theorem:triangularizable}
     For the optimal multiplication ordering problem for $2\times 2$ simultaneously triangularizable matrices, the following statements hold.
     \begin{enumerate}
  \item[{\rm (i)}] If all matrices have nonnegative  determinants, then an optimal multiplication ordering can be computed in $O(n\log n)$ time.
  \item[{\rm (ii)}] If some matrix has a negative determinant, then an optimal multiplication ordering can be computed in $O(k2^k n^6)$ time, where $k$ denotes the number of matrices with negative determinants. 
     \end{enumerate}
 \end{theorem}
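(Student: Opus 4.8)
The plan is to reduce the $2\times 2$ matrix multiplication ordering problem to the composition ordering problem for linear functions, exploiting the simultaneous triangularization. Fix a regular $P\in\Br^{2\times 2}$ with $\tilde{M}_i:=P^{-1}M_iP=\Mtt{\alpha_i}{\beta_i}{0}{\delta_i}$ for every $i\in[n]$, and put $\tilde{\bm w}^{\top}:=\bm w^{\top}P$ and $\tilde{\bm y}:=P^{-1}\bm y$. Since $\Mplr{n}{1}=P\,(\tilde{M}_{\sigma(n)}\cdots\tilde{M}_{\sigma(1)})\,P^{-1}$, we get $\bm w^{\top}\Mplr{n}{1}\bm y=\tilde{\bm w}^{\top}(\tilde{M}_{\sigma(n)}\cdots\tilde{M}_{\sigma(1)})\tilde{\bm y}$, so it suffices to optimize the right-hand side over $\sigma$. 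The product $\tilde{M}_{\sigma(n)}\cdots\tilde{M}_{\sigma(1)}$ is again upper triangular, its diagonal entries $\prod_i\alpha_i$ and $\prod_i\delta_i$ are independent of $\sigma$, and a routine induction shows that its $(1,2)$-entry equals $B(\sigma):=\sum_{j=1}^{n}\bigl(\prod_{l>j}\alpha_{\sigma(l)}\bigr)\beta_{\sigma(j)}\bigl(\prod_{l<j}\delta_{\sigma(l)}\bigr)$.

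Assume first that no diagonal entry vanishes, i.e.\ $\det M_i=\alpha_i\delta_i\neq 0$ for all $i$, and set $f_i(x):=(\alpha_i/\delta_i)x+\beta_i/\delta_i$. Factoring $\alpha_{\sigma(l)}=(\alpha_{\sigma(l)}/\delta_{\sigma(l)})\delta_{\sigma(l)}$ in the definition of $B(\sigma)$ and comparing with the standard expansion $\Fplr{n}{1}(0)=\sum_{j=1}^{n}\bigl(\prod_{l>j}(\alpha_{\sigma(l)}/\delta_{\sigma(l)})\bigr)(\beta_{\sigma(j)}/\delta_{\sigma(j)})$ gives $B(\sigma)=\bigl(\prod_i\delta_i\bigr)\Fplr{n}{1}(0)$. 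Hence, with $\tilde{\bm w}=\Mto{w_1}{w_2}$ and $\tilde{\bm y}=\Mto{y_1}{y_2}$,
\[
  \bm w^{\top}\Mplr{n}{1}\bm y \;=\; w_1y_1\!\prod_i\alpha_i \;+\; w_2y_2\!\prod_i\delta_i \;+\; \Bigl(w_1y_2\!\prod_i\delta_i\Bigr)\,\Fplr{n}{1}(0),
\]
so, up to an affine change of the objective, the problem is exactly the composition ordering problem for $f_1,\dots,f_n$ with $c=0$: minimize $\Fplr{n}{1}(0)$ when $w_1y_2\prod_i\delta_i>0$, maximize it when $w_1y_2\prod_i\delta_i<0$, and every ordering is optimal when $w_1y_2\prod_i\delta_i=0$. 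The key point is that $f_i$ is monotone nondecreasing iff $\alpha_i/\delta_i\geq0$ iff $\det M_i\geq0$, and monotone decreasing iff $\det M_i<0$.

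This immediately yields the theorem in the nondegenerate case. For (i), all $f_i$ are monotone nondecreasing, so the known $O(n\log n)$-time algorithm for monotone nondecreasing functions (equivalently, the characterization obtained by generalizing Theorem \ref{theorem:main_theorem-x} to the nondecreasing case), which solves both the minimization and the maximization versions, applies. For (ii), exactly $k>0$ of the $f_i$ are monotone decreasing, so Theorem \ref{theorem:main3} gives an $O(k2^kn^6)$-time algorithm. A suitable $P$, and hence $\tilde{\bm w}$ and $\tilde{\bm y}$, can be computed in polynomial time: $M_1,\dots,M_n$ are simultaneously triangularizable over $\Br$ iff they share a common real eigenvector, which is found by intersecting the eigenspaces of $M_1$ with the invariance constraints coming from $M_2,\dots,M_n$; this preprocessing costs $O(n)$ time and does not affect the stated bounds.

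The main obstacle is the degenerate subcase in which some $\tilde{M}_i$ has a zero on the diagonal (then $\det M_i=0$, so $i$ belongs to the nonnegative-determinant group, yet the function $f_i$ above is ill defined). When $\alpha_i=0$ this merely produces a constant function $f_i\equiv\beta_i/\delta_i$, which the composition ordering framework already allows; the genuinely new behaviour is when $\delta_i=0$, since multiplying by such a $\tilde{M}_i$ annihilates the second coordinate and each factor applied afterwards only rescales the first coordinate by its top-left entry. I plan to handle this by locating, for a given $\sigma$, the factor with $\delta_{\sigma(j)}=0$ applied first (the rightmost such factor in the written product): once it has acted, the remaining factors only multiply a scalar by $\prod\alpha_{\sigma(l)}$, so $\bm w^{\top}\Mplr{n}{1}\bm y$ splits into an order-independent part times the value of a composition ordering subproblem on the factors preceding that one, plus constants. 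A short case analysis on $\alpha_{\sigma(j)}$ at that position (and, if it too is $0$, on the preceding factors) then reduces the degenerate instance to instances of the nondegenerate problem already solved, at no asymptotic cost; alternatively, a perturbation/limiting argument as $\delta_i\to0$ may be invoked to pin down the optimal value. Carrying out this degeneracy analysis cleanly, rather than the reduction itself, is where the real work lies.
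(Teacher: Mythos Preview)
Your approach is essentially the same as the paper's: triangularize via $P$, then reduce to the composition ordering problem for the linear functions $f_i(x)=(\alpha_i/\delta_i)x+\beta_i/\delta_i$, with the number of decreasing $f_i$ equal to the number of negative-determinant $M_i$. For the degenerate case $\delta_i=0$ the paper takes precisely the perturbation route you mention as an alternative---replacing $\delta_i=0$ by $\epsilon$ and proving a matrix analogue of Lemma~\ref{lemma:epsilon}---and observes that the algorithm only needs angle comparisons so $\epsilon$ is never computed explicitly; this cleanly disposes of the degeneracy without the case analysis you sketch.
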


We remark that Theorem \ref{theorem:triangularizable} (i) can be regarded as a refinement of the result by  
Bouquard, Lenté and Billaut \cite{MaxPlus} for multiplication ordering of $2 \times 2$ upper triangular  matrices in the max-plus algebra, 
since the objective value in the max-plus algebra is obtained by taking the limit of the one in linear algebra. 
They showed that the problem in the max-plus algebra is a  generalization of the two-machine flow shop scheduling problem to minimize the makespan, and solvable in $O(n\log n)$ time by using an extension of  Johnson's rule \cite{Johnson} for the two-machine flow shop scheduling.  
Kubo and Nishinari examined the relationship between the flow shop scheduling and usual matrix multiplication \cite{Kubo}. 
We obtain the result in \cite{MaxPlus} as a corollary of our result in Section \ref{section:matrix}, . 


As a negative side, we show that all possible natural generalizations turn out to be intractable unless P$=$NP.

 \begin{theorem}\label{theorem:matrix-hardness1}
 \begin{enumerate}
 \item[{\rm (i)}]
   It is strongly NP-hard to compute an optimal multiplication ordering for $2\times 2$ matrices, even if all matrices have nonnegative entries and determinants.
 \item[{\rm (ii)}]
      It is strongly NP-hard to compute an optimal multiplication ordering for $m\times m$ matrices with $m \geq 3$, even if all matrices are nonnegative (i.e., all the entries are nonnegative) and upper triangular. 
\end{enumerate}
 \end{theorem}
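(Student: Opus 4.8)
\quad
My plan is to reduce both parts from \textsc{3-Partition}, which is strongly NP-hard: given $3k$ positive integers $a_1,\dots,a_{3k}$ with $\sum_i a_i=kB$ and $B/4<a_i<B/2$, decide whether they can be split into $k$ groups each of sum $B$. Since the goal is \emph{strong} NP-hardness, the matrix entries produced must be polynomially bounded in the (polynomially bounded) $a_i$; this rules out the tempting ``large base'' encoding that would let one read a max-plus/flow-shop makespan off the $(1,m)$ entry of the product, and forces the objective to be an honest low-degree polynomial in the data. The common tool in both reductions is the \emph{path (automaton) reading} of a product of nonnegative matrices: the entry $\bm{e}_p^\top \Mplr{n}{1}\bm{e}_q$ is the sum, over monotone state-sequences starting at $p$ and ending at $q$, of the products of the corresponding matrix entries along the ordered product. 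I will use a few ``value'' matrices carrying the $a_i$ and a few identical ``separator'' matrices, chosen so that the objective becomes a strictly convex (or strictly concave) function of the sums of the runs of value matrices delimited by separators.

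For part (i) I would take the $3k$ value matrices $\Mtt{1}{a_i}{0}{1}$ and $k-1$ separator matrices $\Mtt{1}{0}{1}{1}$ (all entrywise nonnegative, determinant $1$, and without a common eigenvector, hence not simultaneously triangularizable, as it must be outside the tractable regime), with $\bm{w}=\Mto{1}{0}$ and $\bm{y}=\Mto{0}{1}$. Reading the ordered product as a two-state automaton, a path contributing to the $(1,2)$ entry starts in state $1$, picks up the factor $a_i$ each time it uses a value matrix to move $1\to2$, stays in state $2$ at weight $1$ afterwards, and can return $2\to1$ only at a separator (the $(2,1)$ entry of a value matrix is $0$). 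Hence, if the separators cut the value matrices into runs with sums $S_1,\dots,S_q$ (empty runs ignored, so $q\le k$), the objective equals $\sum_{\emptyset\neq T\subseteq[q]}\prod_{\ell\in T}S_\ell=\prod_{\ell=1}^{q}(1+S_\ell)-1$. By strict AM--GM together with the fact that $q\mapsto(1+kB/q)^q$ is strictly increasing on $[1,k]$, $\prod_{\ell=1}^q(1+S_\ell)$ is maximized, over all admissible $q\le k$ and nonnegative integer $S_\ell$ summing to $kB$, uniquely at $q=k$ and $S_1=\dots=S_k=B$; this configuration is realizable exactly when the \textsc{3-Partition} instance is a yes-instance. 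As the objective is always a nonnegative integer, the maximum objective equals $(1+B)^k-1$ iff the answer is yes (otherwise it is at most $(1+B)^k-2$), so an algorithm computing an optimal (maximum) ordering decides \textsc{3-Partition}.

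For part (ii) I would run the same idea one dimension up, which lets the separators stay upper triangular. Use the $3\times3$ (padded by an identity block to any $m\ge3$) upper triangular nonnegative value matrices $\left(\begin{smallmatrix}1&a_i&0\\0&1&a_i\\0&0&1\end{smallmatrix}\right)$ and $k-1$ separators $\operatorname{diag}(1,0,1)$, with $\bm{w},\bm{y}$ the first and third unit vectors. Now a path contributing to the $(1,3)$ entry moves $1\to2\to3$, collecting $a_i$ at the $1\to2$ step and $a_j$ at the $2\to3$ step, and \emph{dies} if it ever meets a separator while in state $2$ (the $(2,2)$ entry of a separator is $0$); so the surviving ordered pairs are exactly those lying in a common run, and the objective equals $\sum_{\text{runs }\ell}\sum_{\{i,j\}\subseteq\ell}a_ia_j=\tfrac12\!\left(\sum_\ell S_\ell^2-\sum_i a_i^2\right)$, which is an integer. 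By strict convexity, $\sum_\ell S_\ell^2$ is minimized, over $q\le k$ runs with total $kB$, uniquely at $q=k$ and all $S_\ell=B$, so the minimum objective attains $\tfrac12(kB^2-\sum_i a_i^2)$ iff the \textsc{3-Partition} instance is a yes-instance; hence an optimal (minimum) ordering again decides \textsc{3-Partition}. Both constructions use $O(k)$ matrices of constant dimension with entries in $\{0,1\}\cup\{a_i\}$, so the reductions are polynomial and yield \emph{strong} NP-hardness; incidentally, part (ii) also shows the $2\times2$ tractability does not survive to $3\times3$, even for (simultaneously) upper triangular matrices.

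The main obstacle in both parts is precisely the evaluation of the objective for permutations that are \emph{not} of the intended ``value-runs separated by single separators'' shape --- clustered separators, leading or trailing separators, and empty runs. Unlike typical scheduling arguments, where exhibiting one well-structured optimum is enough, here one must certify a two-sided bound over all $n!$ orderings; the automaton reading of nonnegative-matrix products is the device that collapses this to the clean identities above, and the strict AM--GM / strict-convexity estimates (plus integrality of the objective) then supply the gap needed to decide \textsc{3-Partition}. A secondary point to keep honest is the algebraic one: in part (i) the gadget matrices, though nonnegative with nonnegative determinant, must fail to be simultaneously triangularizable --- otherwise the tractability results would contradict the reduction --- which is exactly why a genuine lower-shear separator, rather than another upper-triangular gadget, is used.
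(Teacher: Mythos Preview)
Your approach to part~(ii) is correct and is more direct than the paper's. The paper proves (ii) by reducing from the strongly NP-hard max--plus $3\times3$ upper-triangular problem (itself hard via three-machine flow shop), encoding each max--plus entry $a$ as $\gamma^a$ for a suitably large base $\gamma$; your gadget with value matrices $\left(\begin{smallmatrix}1&a_i&0\\0&1&a_i\\0&0&1\end{smallmatrix}\right)$ and idempotent separators $\operatorname{diag}(1,0,1)$ goes straight from \textsc{3-Partition}, keeps all entries bounded by $\max_i a_i$, and yields the clean objective $\tfrac12(\sum_\ell S_\ell^2-\sum_i a_i^2)$ whose strict convexity does the rest. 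That is a genuine simplification.

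Part~(i), however, has a real gap: the formula $\prod_\ell(1+S_\ell)-1$ is false for three or more runs. With $P=\Mtt{1}{0}{1}{1}$ and $A_S=\Mtt{1}{S}{0}{1}$ one computes directly
\[
(A_{S_1}PA_{S_2}PA_{S_3})_{1,2}=S_1+S_2+S_3+S_1S_2+S_2S_3+2S_1S_3+S_1S_2S_3,
\]
which is not even symmetric in the $S_\ell$, so the AM--GM step cannot be applied as written. The mistake in the automaton reading is that after a $1\to2$ move the path may stay in state $2$ \emph{through} a separator (since $P_{2,2}=1$), skip over intermediate runs entirely, and return $2\to 1$ only at a later separator; this creates extra paths that your subset formula does not count. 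The paper avoids this by using the separator $\Mtt{0}{0}{1}{0}$ (still nonnegative entries, determinant $0$), for which the $(1,2)$ entry collapses exactly to $\prod_\ell S_\ell$ --- the zero in position $(2,2)$ is precisely what kills the spurious paths --- and then takes $\bm{y}=\Mto{0}{-1}$ so that minimizing the objective is maximizing $\prod_\ell S_\ell$, after which AM--GM gives the desired equivalence with \textsc{3-Partition}. Swapping your separator for this one repairs the argument immediately.
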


We also deal with the target version of the multiplication ordering problem for matrices, i.e., minimizing the value $|\bm{w}^\top \Mplr{n}{1} \bm{y}-t|$ for a given target $t \in \Br$. 
Unfortunately, it is also strongly NP-hard.

 \begin{theorem}\label{theorem:minimize_norm-hardness}
Given a target $t \in \Br$, it is strongly NP-hard to compute an permutation $\sigma:[n]\to [n]$ such that 
$|\bm{w}^\top \Mplr{n}{1} \bm{y}-t| \leq c_1  \cdot \min_\rho |\bm{w}^\top \Mplr[\rho]{n}{1} \bm{y}-t|+c_2$ for any positive $c_1$ and $c_2$, even if the problem corresponds to composition ordering for monotone linear functions. 
 \end{theorem}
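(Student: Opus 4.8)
The plan is to reduce, in a way that creates a gap, from 3-\textsc{Partition}, which is strongly NP-complete. An instance consists of $3m$ positive integers $a_1,\dots,a_{3m}$ with $\sum_i a_i = mB$ and $B/4 < a_i < B/2$ for every $i$, and asks whether $\{a_1,\dots,a_{3m}\}$ can be split into $m$ groups each summing to $B$. Put $\beta = mB+1$ and create $n = 4m-1$ monotone linear functions: $3m$ ``item'' functions $g_i(x) = x+a_i$ and $m-1$ ``separator'' functions $h(x) = \beta x$; evaluate at $c = 0$. (Equivalently, take $\bm{w} = \Mto{1}{0}$, $\bm{y} = \Mto{0}{1}$, $M_i = \Mtt{1}{a_i}{0}{1}$ for the items and $\Mtt{\beta}{0}{0}{1}$ for the separators, so the instance is literally a composition ordering instance for monotone linear functions.) Finally set the target to $t = B\,(1+\beta+\dots+\beta^{m-1})$.

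The first step is a structural identity. For any permutation $\sigma$, the $m-1$ copies of $h$ cut the sequence $\Fplr{n}{1}$ into $m$ consecutive (possibly empty) blocks of item functions; unwinding $\Fplr{n}{1}(0)$ shows it equals $\sum_{k=1}^m \beta^{m-k}\Sigma_k$, where $\Sigma_k \ge 0$ is the sum of those $a_i$ whose $g_i$ lies in the $k$-th block, so $\sum_k \Sigma_k = mB$. Since the item functions commute with one another and the separators are identical, only the induced ordered partition $(\Sigma_1,\dots,\Sigma_m)$ of the multiset $\{a_i\}$ matters, and conversely every such ordered partition is realized by some $\sigma$.

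Correctness rests on a place-value observation. If the instance is a yes-instance, placing one group per block gives $\Sigma_k = B$ for all $k$ and $\Fplr{n}{1}(0) = t$, so $\min_\sigma|\Fplr{n}{1}(0)-t| = 0$. Conversely, for any $\sigma$ we have $\Fplr{n}{1}(0) - t = \sum_{k=1}^m \beta^{m-k}(\Sigma_k - B)$ with $|\Sigma_k - B| \le (m-1)B < \beta$; if $k_0$ is the largest index with $\Sigma_{k_0} \neq B$, this integer is a multiple of $\beta^{m-k_0}$ but not of $\beta^{m-k_0+1}$, hence nonzero, so $|\Fplr{n}{1}(0)-t| \ge 1$ for every $\sigma$. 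This ``no-cancellation'' step is the crux and the part I expect to need the most care: one must check that empty blocks and the freedom in ordering items and separators cause no trouble, and that the bound $|\Sigma_k-B|<\beta$ genuinely leaves no room for lower-order digits to interfere.

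It remains to handle arbitrary positive $c_1,c_2$. Given them, scale every $a_i$ and the target $t$ by the integer $N = \lceil c_2\rceil + 1$ while keeping $\beta$ fixed; every composite value is multiplied by $N$, so $\mathrm{OPT} := \min_\sigma|\Fplr{n}{1}(0)-t|$ equals $0$ on yes-instances and is at least $N > c_2$ on no-instances. Any algorithm returning $\sigma$ with $|\Fplr{n}{1}(0)-t| \le c_1\cdot\mathrm{OPT} + c_2$ therefore returns, on a yes-instance, a permutation whose composite is within $c_2$ of $t$, whereas on a no-instance no permutation comes within $c_2$ of $t$; testing this single inequality decides 3-\textsc{Partition} in polynomial time. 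Here $c_1$ plays no role because $\mathrm{OPT}=0$ on yes-instances, and $N$ depends only on $c_2$, so the coefficients $N a_i$ and $\beta$ of the constructed instance remain polynomially bounded in the input size, matching the strong NP-completeness of 3-\textsc{Partition}.
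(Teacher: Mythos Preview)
Your argument is correct and yields the theorem, but it follows a genuinely different construction from the paper's. Both proofs reduce from \textsc{3-Partition} and manufacture a gap (optimum $0$ on yes-instances versus optimum exceeding $c_2$ on no-instances), but they use different ``separator'' functions and different reasons for the gap. The paper inserts $m$ \emph{distinct} affine separators $f_{n+j}(x)=(c_2+mT)(x-jT)+jT$, each with fixed point $jT$; the analysis then tracks the first index $j$ at which the running partial sum deviates from $jT$ and argues, via two sign cases and monotonicity of the remaining functions, that the final value is pushed past $mT+c_2$ or below $-c_2$. Your construction instead uses $m-1$ \emph{identical} multiplicative separators $h(x)=\beta x$, so that $f^\sigma(0)=\sum_k\beta^{m-k}\Sigma_k$ literally writes the block sums as base-$\beta$ digits; the gap then follows from a one-line uniqueness-of-representation argument using $|\Sigma_k-B|<\beta$. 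What this buys you is a shorter and more self-contained analysis (no case split, no tracking of trajectories), at the cost of an extra scaling step to push the gap above~$c_2$; what the paper's approach buys is that the amplification factor $c_2+mT$ already builds the required gap into the separators, so no post-hoc scaling is needed. Both are valid and of comparable length; yours is arguably the more transparent mechanism.
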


    \subsection*{The organization of the paper}
 The rest of the paper is organized as follows.
 Section \ref{section:notation} provides some notation and basic properties needed in the paper. 
 In Section \ref{section:monotone},  we consider composition orderings for monotone linear functions. 
 We provide the proof of Theorem \ref{theorem:main_theorem-x} and structural properties such as the equivalence between local and global optimality.
  In Section \ref{section:nondecreasing}, we show that the nondecreasing case 
  is reduced to the monotone case and characterize optimal composition orderings. 
  Section \ref{section:general} proves Theorem \ref{theorem:main3} by making an FPT algorithm of the general case. 
In Section \ref{section:matrix} we generalize composition of linear functions to matrix multiplication in linear and max-plus algebras, 
 and prove Theorems \ref{theorem:triangularizable}, \ref{theorem:matrix-hardness1}, and \ref{theorem:minimize_norm-hardness}. 

\section{Notation and Basic Properties} \label{section:notation}
 In this section, we first fix notation
and state several basic properties of linear functions,
which will be used in this paper. 
 We then mention that minimum and maximum compositions are polynomially equivalent. 

We view a linear function $f(x)=ax+b$ as the vector $\Mto{b}{1-a}$ in $\Br^2$, and its {\em angle}, denoted by $\theta(f)$, is defined as the polar angle in $[0, 2\pi)$ of the vector, where we define $\theta(f)=\bot$ if the vector of $f$ is the origin $\Mto{0}{0}$. 

 For two reals $\ell$ and $r$ with $\ell < r$, let  $[\ell,r] =\{x\in\Br\mid l\leq x \leq r\}$.  
 Similarly, we denote semi-open intervals by $(\ell,r]$ and $[\ell,r)$ and open intervals by $(\ell,r)$.
For a linear function $f(x)=ax+b$, 
 we respectively denote by $\alpha(f)$ and $\beta(f)$ the slope and intercept of $f(x)$, i.e., $\alpha(f)=a$ and $\beta(f)=b$.
A linear function $f$ is respectively called {\em monotone $($increasing$)$}, {\em constant},  and {\em monotone decreasing} if $\alpha(f)>0$, $\alpha(f)=0$, and $\alpha(f)<0$. 
Since the result of arithmetic operations on angles may take a value outside of $[0,2\pi)$, 
 we provide some notation to deal with such situations, some of which have already been used in the introduction. 
  For two angles $\theta_1,\theta_2\in \Br$, we write $\theta_1=_{2\pi} \theta_2$ if they are congruent on the angle, i.e., $\theta_1-\theta_2 \in 2\pi \mathbb{Z}$, 
  and define
 \begin{equation*}
     [\theta_1,\theta_2]_{2\pi}=
     \{
  \theta\in[\lambda_1,\lambda_2]
  \mid 
      \lambda_1=_{2\pi}\theta_1, 
   \ \lambda_2=_{2\pi}\theta_2, 
   \ \lambda_2-\lambda_1\in[0,2\pi) \}.
 \end{equation*}
 For example, if $\theta_1=\frac{\pi}{6}$ and $\theta_2=\frac{2\pi}{3}$ then 
 \[[\theta_1,\theta_2]_{2\pi}=\cdots\cup\left[-\frac{11\pi}{6},-\frac{4\pi}{3}\right]\cup\left[\frac{\pi}{6},\frac{2\pi}{3}\right]\cup\left[\frac{13\pi}{6},\frac{8\pi}{3}\right]\cup\cdots.
 \]
 We similarly define  open and semi-open intervals such as $(\theta_1,\theta_2)_{2\pi}$, $[\theta_1,\theta_2)_{2\pi}$, and $(\theta_1,\theta_2]_{2\pi}$.
 For a non-interval set $S$, we define $S_{2\pi}=\{\theta \mid \theta =_{2\pi}\lambda\text{ for }\lambda\in S\}$.
 


    
We next state four basic properties of linear functions.  Note that Lemmas \ref{lemma:theta_composition=x1}, \ref{lemma:sin}, and \ref{lemma:vector_composition}  
 do not assume the monotonicity of linear functions.

\begin{lemma}\label{lemma:theta_composition=x1}
Let $g$ be the identical function, i.e., $g(x)=x$. 
Then for any function $h$, we have 
 $h\circ g=g\circ h=h$.
\end{lemma}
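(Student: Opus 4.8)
The plan is simply to unwind the definition of function composition pointwise. First I would fix an arbitrary $x\in\Br$ and compute $(h\circ g)(x)=h(g(x))$; since $g$ is the identical function, $g(x)=x$, so $(h\circ g)(x)=h(x)$. As $x$ was arbitrary, this yields the equality of functions $h\circ g=h$.

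Next I would carry out the symmetric computation for $g\circ h$: for arbitrary $x\in\Br$ we have $(g\circ h)(x)=g(h(x))=h(x)$, where the last step again uses $g(y)=y$ for every real $y$, this time with $y=h(x)$. Hence $g\circ h=h$, and combining the two displays gives $h\circ g=g\circ h=h$. There is no real obstacle here; the statement is immediate from the definition of composition, and it is recorded only because the identical function occupies a distinguished role in the sequel — its vector representation is the origin $\Mto{0}{0}$ with angle $\bot$, so it must be handled separately in the characterization of optimal composition orderings.
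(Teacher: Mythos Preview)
Your proof is correct and matches the paper's own treatment: the paper simply writes ``Straightforward,'' and your pointwise verification via the definition of composition is exactly the intended argument.
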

\begin{proof}
Straightforward.
\end{proof}

 \begin{lemma}\label{lemma:sin}
For two non-identical linear functions $g$ and $h$, we have the following two equivalences. 
\begin{description}
  \item[{\rm (i) }]
      $ h\circ g <\footnote{The inequality for functions means that the inequality holds for any argument.} g\circ h \  \Leftrightarrow \   \theta(h)-\theta(g) \in (0,\pi)_{2\pi}$. 
  \item[{\rm (ii)}]
      $h\circ g = g\circ h \ \Leftrightarrow \  \theta(h)-\theta(g) \in \{0,\pi\}_{2\pi}$.
     \end{description}
 \end{lemma}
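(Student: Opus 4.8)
The plan is to reduce both equivalences to a single sign computation carried out in the vector representation. Write $g(x)=a_g x+b_g$ and $h(x)=a_h x+b_h$. First I would compute the two compositions explicitly, obtaining $(h\circ g)(x)=a_g a_h x+a_h b_g+b_h$ and $(g\circ h)(x)=a_g a_h x+a_g b_h+b_g$; in particular they share the slope $a_g a_h$, so their difference is the \emph{constant}
\[
(h\circ g)(x)-(g\circ h)(x)\;=\;b_h(1-a_g)-b_g(1-a_h)\;=\;\det\Mtt{b_h}{b_g}{1-a_h}{1-a_g},
\]
which is precisely the signed area of the parallelogram spanned by $\Ora{h}=\Mto{b_h}{1-a_h}$ and $\Ora{g}=\Mto{b_g}{1-a_g}$. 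Because this quantity is independent of the argument, the relation ``$h\circ g<g\circ h$'' (holding for every argument) is equivalent to this determinant being negative, and ``$h\circ g=g\circ h$'' to its vanishing; no analysis in $x$ is needed.

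Next I would rewrite the determinant in polar form. Since $g$ and $h$ are non-identical, the vectors $\Ora{g}$ and $\Ora{h}$ are nonzero, so we may put $\Ora{g}=r_g(\cos\theta(g),\sin\theta(g))^{\top}$ and $\Ora{h}=r_h(\cos\theta(h),\sin\theta(h))^{\top}$ with $r_g,r_h>0$. Substituting gives
\[
(h\circ g)(x)-(g\circ h)(x)\;=\;r_g r_h\bigl(\sin\theta(g)\cos\theta(h)-\cos\theta(g)\sin\theta(h)\bigr)\;=\;r_g r_h\sin\bigl(\theta(g)-\theta(h)\bigr),
\]
so, as $r_g r_h>0$, the sign of $(h\circ g)(x)-(g\circ h)(x)$ is exactly the sign of $\sin(\theta(g)-\theta(h))$.

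Finally I would translate the sign of the sine into the $(\,\cdot\,)_{2\pi}$ notation: $\sin(\theta(g)-\theta(h))<0$ holds iff $\theta(g)-\theta(h)\in(\pi,2\pi)_{2\pi}$, equivalently $\theta(h)-\theta(g)\in(0,\pi)_{2\pi}$, which is (i); and $\sin(\theta(g)-\theta(h))=0$ holds iff $\theta(g)-\theta(h)\in\{0,\pi\}_{2\pi}$, equivalently $\theta(h)-\theta(g)\in\{0,\pi\}_{2\pi}$, which is (ii). (The remaining case $\sin(\theta(g)-\theta(h))>0$ corresponds symmetrically to $h\circ g>g\circ h$, which is not needed here.) I do not expect a genuine obstacle: the only points requiring care are the orientation conventions — which vector occupies which column of the $2\times2$ determinant, and hence the sign and argument of the resulting sine — and the routine bookkeeping that recasts ``$\sin<0$'' as membership of $\theta(h)-\theta(g)$ in $(0,\pi)_{2\pi}$; the fact that the difference of the two compositions is a constant is what makes the clause ``for every argument'' free.
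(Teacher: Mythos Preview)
Your proof is correct and follows essentially the same approach as the paper's: both compute the difference of the two compositions, observe it is a constant equal to the cross product of $\Ora{g}$ and $\Ora{h}$, and rewrite it as $|\Ora{g}|\,|\Ora{h}|$ times a sine of the angle difference. The only cosmetic discrepancy is that the paper computes $g\circ h-h\circ g=|\vec g||\vec h|\sin(\theta(h)-\theta(g))$ directly, while you compute the negative of this and then flip the sign when passing to the $(0,\pi)_{2\pi}$ condition; the content is identical.
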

 \begin{proof}
     It follows from the equalities
     \begin{align*}
  g\circ h(x)-h\circ g(x)
  &=\bigl(\alpha(g)(\alpha(h)x+\beta(h))+\beta(g)\bigr) -\bigl(\alpha(h)(\alpha(g)x+\beta(g))+\beta(h)\bigr)\\
  &=\beta(g)(1-\alpha(h))-\beta(h)(1-\alpha(g))\\
  &=|\vec{g}||\vec{h}|\sin(\theta(h)-\theta(g)). 
     \end{align*}
 \end{proof}
 
 \begin{lemma}\label{lemma:vector_composition}
     Let $g$ and $h$ be two linear functions, then $\Ora{h\circ g} = \vec{h}+\alpha(h)\vec{g}$.
 \end{lemma}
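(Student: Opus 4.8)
The plan is to verify the identity by a direct computation in coordinates. First I would write the two functions as $g(x)=\alpha(g)x+\beta(g)$ and $h(x)=\alpha(h)x+\beta(h)$ and compose them:
\[
 h\circ g(x)=\alpha(h)\bigl(\alpha(g)x+\beta(g)\bigr)+\beta(h)=\alpha(h)\alpha(g)\,x+\bigl(\alpha(h)\beta(g)+\beta(h)\bigr),
\]
so that $\alpha(h\circ g)=\alpha(h)\alpha(g)$ and $\beta(h\circ g)=\alpha(h)\beta(g)+\beta(h)$. By the definition of the vector representation $f\mapsto\Mto{\beta(f)}{1-\alpha(f)}$, this yields $\Ora{h\circ g}=\Mto{\alpha(h)\beta(g)+\beta(h)}{1-\alpha(h)\alpha(g)}$.

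Next I would expand the claimed right-hand side,
\[
 \vec h+\alpha(h)\vec g=\Mto{\beta(h)}{1-\alpha(h)}+\alpha(h)\Mto{\beta(g)}{1-\alpha(g)}=\Mto{\beta(h)+\alpha(h)\beta(g)}{\bigl(1-\alpha(h)\bigr)+\alpha(h)\bigl(1-\alpha(g)\bigr)},
\]
and compare it entrywise with $\Ora{h\circ g}$. The first coordinates agree at once. For the second, it suffices to note the cancellation $\bigl(1-\alpha(h)\bigr)+\alpha(h)-\alpha(h)\alpha(g)=1-\alpha(h)\alpha(g)$, which is precisely the second coordinate of $\Ora{h\circ g}$; this establishes the equality.

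There is essentially no real obstacle: the statement is an elementary algebraic identity, and the only point deserving a moment's care is the sign convention in the vector representation $f\mapsto\Mto{b}{1-a}$ --- it is the $1-a$ in the second coordinate (rather than $a$) that produces the cancellation of the term linear in $\alpha(h)$, and hence the clean formula $\vec h+\alpha(h)\vec g$. Note that the computation uses no monotonicity of $g$ or $h$ and automatically covers the degenerate cases in which either function is constant or identical, so no case analysis is needed. One could alternatively argue through the matrix encoding $f\mapsto\Mtt{a}{b}{0}{1}$, under which composition corresponds to matrix multiplication, but the one-line coordinate check above is the most transparent route.
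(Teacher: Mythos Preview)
Your proof is correct and follows exactly the same approach as the paper: a direct coordinate computation of $\Ora{h\circ g}$ from the composition formula, followed by recognizing the result as $\vec{h}+\alpha(h)\vec{g}$. The paper simply compresses the entrywise check into a single chain of equalities, but the content is identical.
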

 \begin{proof}
     It follows from
     \begin{align*}
  \Ora{h\circ g} 
  &= \Mto{\alpha(h)\beta(g)+\beta(h)}{1-\alpha(h)\alpha(g)} 
  = \Mto{\beta(h)}{1-\alpha(h)} + \alpha(h)\Mto{\beta(g)}{1-\alpha(g)}
  = \vec{h}+\alpha(h)\vec{g}. 
     \end{align*}
 \end{proof}

\begin{lemma}\label{lemma:theta_composition}
     For non-identical monotone linear functions $g$ and $h$, we have the following statements.
     
\begin{description}
\item[{\rm (i)}] $\theta(h) - \theta(g) \in (0,\pi)_{2\pi} \, \Leftrightarrow \, \theta(h\circ g) \in (\theta(g),\theta(h))_{2\pi} \, \Leftrightarrow \, \theta(g\circ h) \in (\theta(g),\theta(h))_{2\pi} $. 

\item[{\rm (ii)}] $\theta(h) - \theta(g) \in \{0,\pi\}_{2\pi} \,  \Leftrightarrow \, \theta(h\circ g) \in \{\theta(g),\theta(h),\bot\} \Leftrightarrow \, \theta(g\circ h) \in \{\theta(g),\theta(h),\bot\}$.

\item[{\rm (iii)}] $\theta(h) = \theta(g) \, \Rightarrow \,\theta(h\circ g) =\theta(g\circ h) =\theta(h)\,(=\theta(g)).$

\item[{\rm (iv)}]  $\theta(h\circ g) =\bot   \, \Leftrightarrow \, \theta(g\circ h) =\bot  \, \Rightarrow  \, \theta(h) - \theta(g)  =_{2\pi} \pi$.
\end{description}
\end{lemma}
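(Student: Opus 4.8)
\textbf{Proof proposal for Lemma~\ref{lemma:theta_composition}.}

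The plan is to derive all four parts from the single identity $\Ora{h\circ g}=\vec{h}+\alpha(h)\vec{g}$ of Lemma~\ref{lemma:vector_composition} together with its symmetric counterpart $\Ora{g\circ h}=\vec{g}+\alpha(g)\vec{h}$. The role of the monotonicity hypothesis is that $\alpha(g),\alpha(h)>0$, so each of these is a \emph{strictly positive} linear combination of the two \emph{nonzero} vectors $\vec{g},\vec{h}$ (nonzero since $g,h$ are non-identical). First I will record the elementary planar fact that drives everything: for $c_1,c_2>0$ one has $\vec{g}\times(c_1\vec{g}+c_2\vec{h})=c_2(\vec{g}\times\vec{h})$ and $(c_1\vec{g}+c_2\vec{h})\times\vec{h}=c_1(\vec{g}\times\vec{h})$, where $\vec{g}\times\vec{h}=|\vec{g}||\vec{h}|\sin(\theta(h)-\theta(g))$ as in the proof of Lemma~\ref{lemma:sin}. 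Hence, if $\theta(h)-\theta(g)\in(0,\pi)_{2\pi}$ (so this sine is positive) both cross products are positive, which forces $c_1\vec{g}+c_2\vec{h}\neq\bzero$ and $\theta(c_1\vec{g}+c_2\vec{h})\in(\theta(g),\theta(h))_{2\pi}$; if instead $\theta(h)-\theta(g)\in\{0,\pi\}_{2\pi}$, then $\vec{g},\vec{h}$ are parallel or antiparallel and $c_1\vec{g}+c_2\vec{h}$ lies on the line $\mathbb{R}\vec{g}$, so its angle is in $\{\theta(g),\theta(h),\bot\}$, and it is a \emph{positive} multiple of $\vec{g}$ (angle $\theta(g)$) when $\theta(g)=\theta(h)$.

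Applying this to $\Ora{h\circ g}=\alpha(h)\vec{g}+\vec{h}$ and $\Ora{g\circ h}=\vec{g}+\alpha(g)\vec{h}$, both of the form $c_1\vec{g}+c_2\vec{h}$ with $c_1,c_2>0$, parts (i)--(iii) follow by unwinding the notation. For (i): if $\theta(h)-\theta(g)\in(0,\pi)_{2\pi}$ then $\theta(h\circ g)$ and $\theta(g\circ h)$ both lie in $(\theta(g),\theta(h))_{2\pi}$; conversely, if say $\theta(h\circ g)\in(\theta(g),\theta(h))_{2\pi}$, then $\theta(h)-\theta(g)\notin\{0,\pi\}_{2\pi}$ (otherwise $\theta(h\circ g)\in\{\theta(g),\theta(h),\bot\}$, which misses the open arc) and $\theta(h)-\theta(g)\notin(\pi,2\pi)_{2\pi}$ (otherwise, applying the fact with the roles of $\vec{g},\vec{h}$ interchanged, $\theta(h\circ g)\in(\theta(h),\theta(g))_{2\pi}$, a set disjoint from $(\theta(g),\theta(h))_{2\pi}$), leaving $\theta(h)-\theta(g)\in(0,\pi)_{2\pi}$; the same applies to $g\circ h$, which closes the chain of equivalences. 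For (ii), $\theta(h)-\theta(g)\in\{0,\pi\}_{2\pi}$ iff $\vec{g},\vec{h}$ are linearly dependent iff $\Ora{h\circ g}\in\mathbb{R}\vec{g}$ iff $\theta(h\circ g)\in\{\theta(g),\theta(h),\bot\}$, and likewise for $g\circ h$; the only care needed is to rule out, in the ``$\Leftarrow$'' direction, a spurious coincidence such as $\Ora{h\circ g}=s\vec{h}$ with $s=1$, which would force $\vec{g}=\bzero$. Part (iii) is the sub-case $\theta(g)=\theta(h)$, where $\vec{h}$ is a positive multiple of $\vec{g}$ and hence so are $\Ora{h\circ g}$ and $\Ora{g\circ h}$.

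Finally, (iv) comes straight out of the identity: $\theta(h\circ g)=\bot$ iff $\vec{h}+\alpha(h)\vec{g}=\bzero$, i.e.\ $\vec{h}=-\alpha(h)\vec{g}$; comparing the second coordinates of this planar equation gives $1-\alpha(h)=-\alpha(h)\bigl(1-\alpha(g)\bigr)$, hence $\alpha(g)\alpha(h)=1$, and then $\vec{g}+\alpha(g)\vec{h}=\vec{g}-\alpha(g)\alpha(h)\vec{g}=\bzero$, so $\theta(g\circ h)=\bot$; the converse is symmetric. Moreover $\vec{h}=-\alpha(h)\vec{g}$ with $-\alpha(h)<0$ means $\vec{g}$ and $\vec{h}$ are antiparallel, i.e.\ $\theta(h)-\theta(g)=_{2\pi}\pi$. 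I do not anticipate a genuine difficulty here: everything reduces to Lemma~\ref{lemma:vector_composition} plus the sign of a single cross product. The only thing demanding some care is the bookkeeping with the $(\cdot,\cdot)_{2\pi}$ notation --- checking that the open minor arc between $\theta(g)$ and $\theta(h)$ coincides with $(\theta(g),\theta(h))_{2\pi}$ exactly when $\theta(h)-\theta(g)\in(0,\pi)_{2\pi}$, that the oriented open arcs $(\theta(g),\theta(h))_{2\pi}$ and $(\theta(h),\theta(g))_{2\pi}$ are disjoint, and that the degenerate value $\bot$ is correctly excluded from the open-arc statements in (i).
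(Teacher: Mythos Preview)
Your proposal is correct and follows essentially the same approach as the paper: parts (i)--(iii) are derived from Lemma~\ref{lemma:vector_composition} together with $\vec{g},\vec{h}\neq\bzero$ and $\alpha(g),\alpha(h)>0$, exactly as the paper (tersely) indicates. The only minor difference is in (iv), where you compute directly that $\alpha(g)\alpha(h)=1$ from the second coordinate of $\vec{h}=-\alpha(h)\vec{g}$, whereas the paper deduces (iv) more structurally from (ii), (iii), and Lemma~\ref{lemma:sin}\,(ii) (since $\theta(h\circ g)=\bot$ forces the colinear case by (ii), rules out $\theta(g)=\theta(h)$ by (iii), and then $h\circ g=g\circ h$ by Lemma~\ref{lemma:sin}\,(ii)); both arguments are equally short and valid.
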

 
\begin{proof}
(i), (ii), and (iii) follow from  Lemma \ref{lemma:vector_composition}, 
$\Ora{h}, \Ora{g}\not=0$, and $\alpha(h)>0$.
(iv) follows from (ii), (iii), and Lemma \ref{lemma:sin} (ii). \end{proof}  

In fact, the equivalence of (iv) holds for general linear functions $g$ and $h$.

For linear functions $f_1, \dots , f_n$ and a permutation $\sigma:[n]\to [n]$, we denote $f_{\sigma(n)} \circ \dots \circ f_{\sigma(1)}$ by $f^\sigma$.

Before ending this section, we provide a linear-time transformation between the maximization problem and the minimization problem \cite{KMS:linear}. 
 For a linear function $f(x)=ax+b$, we define a linear function $\tilde{f}$ by
 \begin{gather}
     \tilde{f}(x) \ = \ ax-b.
\label{eq--0001}
 \end{gather}
Note that $\widetilde{f}$ is monotone if $f$ is monotone. 
For linear functions $f_1, \dots , f_n$ and a permutation $\sigma:[n]\to [n]$, we have $\beta(f^\sigma) \ = \ -\beta(\tilde{f}^\sigma)$.
Since any permutation $\sigma:[n]\to [n]$ provides $\alpha(f^\sigma)=\prod_{i \in [n]}\alpha(f_i)$, we can see that 
 the maximum composition for $f_1,\dots,f_n$ is equivalent to the minimum composition  for $\tilde{f}_1,\dots,\tilde{f}_n$. 
 where this transformation was mentioned in \cite{KMS:linear}.
 Therefore, we mainly deal with the minimization problem for linear functions, and sometimes use the term ``optimal" instead of ``minimum".

\section{Composition of Monotone  Linear Functions}\label{section:monotone}
In this section, we consider composition orderings for monotone linear functions. 
Especially, we prove Theorem \ref{theorem:main_theorem-x} and show 
 structural properties of the composites.  
\subsection{Proof of Theorem \ref{theorem:main_theorem-x} (i)}
We first prove Theorem \ref{theorem:main_theorem-x} (i), which can be easily obtained from basic properties in Section \ref{section:notation}. 

\begin{proof}[Proof of Theorem \ref{theorem:main_theorem-x}(i)]
Let us first show the only-if part. 
For any $i\in[n-1]$, let $\rho_i:[n]\to [n]$ be the $i$-th adjacent transposition, i.e., the transposition of two consecutive integers $i$ and $i+1$. 
Let ${\rm id}:[n]\to [n]$ denote the identity permutation.
Then we have $f^{\rho_i} = f^{\rm id}$, 
since $f_{i} \circ f_{i+1} =f_{i+1} \circ f_{i} $
by Lemmas \ref{lemma:theta_composition=x1}
and \ref{lemma:sin} (ii).
It is well-known that any permutation can be obtained by a product of adjacent transpositions and therefore for any permutation $\sigma$ we obtain $f^\sigma = f^{\rm id}$, which is minimum.

For the if part, suppose, without loss of generality, that $f_1$ and $f_2$ are not colinear.
Then we have $f_1 \circ f_2 \not= f_2 \circ f_1$ by Lemma \ref{lemma:sin} (i), which implies that 
$f_1 \circ f_2 \circ (f_n \circ \dots \circ f_3) \ \not= \  f_2 \circ f_1 \circ (f_n \circ \dots \circ f_3)$, 
which completes the proof of the if part. 
\end{proof}

Note that in fact Theorem \ref{theorem:main_theorem-x} (i) does not require the monotonicity, and hence it is true even if $f_i$'s are general linear functions.

\subsection{Proof of Theorem \ref{theorem:main_theorem-x} (ii)}
In order to prove Theorem \ref{theorem:main_theorem-x} (ii), 
we first define the neighborhood of a permutation and the local optimality of permutations. 
     Let $\sigma :[n]\rightarrow[n]$ be a permutation.
     For three positive integers $\ell$, $m$ and $r$ with $\ell\leq m< r$, define a permutation $\sigma_{\ell,m,r}:[n]\rightarrow[n]$ by
     \begin{align*}
  \sigma_{\ell,m,r}(i)&=
  \begin{cases}
      \sigma(i) & (1\leq i<\ell, r<i\leq n),\\
      \sigma(i-\ell+m+1) & (\ell\leq i<\ell-m+r),\\
      \sigma(i+m-r) & (\ell-m+r\leq i\leq r)
  \end{cases}
  \end{align*}
which is illustrated in Fig. \ref{figure:neighbor}. In particular, $\sigma_{1,k,n}$ is abbreviated as $\sigma_k$ and referred to as the \textit{$k$-shift} of $\sigma$. 
 \begin{figure}[htb]
     \centering
     \includegraphics[width=120mm]{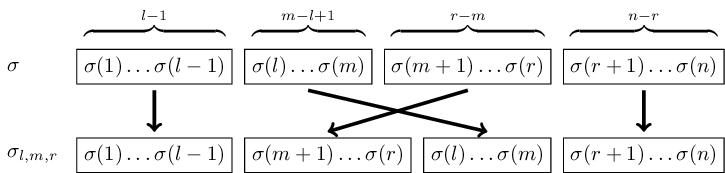}
     \caption{$Permutation \sigma_{l,m,r}$ obtained from $\sigma$ by swapping two adjacent intervals.}
     \label{figure:neighbor}
 \end{figure}
The \textit{neighborhood} $N(\sigma)$ of $\sigma$ is defined by
$N(\sigma)=\{\sigma_{\ell,m,r}\mid \ell\leq m< r\}$, 
that is, the set of permutations obtained from $\sigma$ by swapping two adjacent intervals in $\sigma$.
A permutation $\sigma$ is \textit{locally optimal} if $\Fall \leq \Fall[\mu]$ for any permutation $\mu\in N(\sigma)$.
 
 The next lemma plays an important role throughout the paper and its proof is technical and involved. 
 \begin{lemma}\label{lemma:localopt_counterclockwise}
 A locally optimal permutation for non-colinear monotone linear functions is counterclockwise.
 \end{lemma}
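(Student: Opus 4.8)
The goal is to show that a locally optimal permutation $\sigma$ for non-colinear monotone linear functions is counterclockwise, i.e., that after a cyclic rotation of the index sequence the angles $\theta(f_{\sigma(i)})$ are monotone. The plan is to argue by contradiction: if $\sigma$ is \emph{not} counterclockwise, I want to exhibit a neighbor $\mu\in N(\sigma)$ obtained by swapping two adjacent intervals for which $f^\mu < f^\sigma$, contradicting local optimality. The key analytic tool is Lemma~\ref{lemma:sin}: for non-identical linear $g,h$, $h\circ g < g\circ h$ iff $\theta(h)-\theta(g)\in(0,\pi)_{2\pi}$, combined with Lemma~\ref{lemma:theta_composition}, which tells us that composing non-identical monotone functions produces a function whose angle lies \emph{strictly between} the two angles (in the counterclockwise arc of length $<\pi$) when $\theta(h)-\theta(g)\in(0,\pi)_{2\pi}$, and stays in $\{\theta(g),\theta(h),\bot\}$ otherwise. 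So a block of consecutive functions $f_{\sigma(\ell)},\dots,f_{\sigma(m)}$ composes to a function whose angle, if defined, lies in the ``angular hull'' of the block's angles; this lets me treat each adjacent interval in $\sigma$ as a single effective linear function and reduce interval-swaps to adjacent transpositions of effective functions.

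First I would set up notation: ignore identical ($\theta=\bot$) functions (they can be placed anywhere by Lemma~\ref{lemma:theta_composition=x1}), and write $g_i = f_{\sigma(i)}$. For an interval $[\ell,r]$ write $G_{\ell,r} = g_r\circ\cdots\circ g_\ell$. By Lemma~\ref{lemma:sin}, swapping two adjacent intervals $A=[\ell,m]$ and $B=[m+1,r]$ strictly decreases the overall composite precisely when $G_B\circ G_A < G_A\circ G_B$, i.e. when $\theta(G_B)-\theta(G_A)\in(0,\pi)_{2\pi}$ (one must handle the degenerate cases where $G_A$ or $G_B$ is identical, or $\theta(G_B)-\theta(G_A)\in\{0,\pi\}_{2\pi}$, separately — in those cases the swap leaves the composite unchanged or we pick a different witness). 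So local optimality forces: for \emph{every} way of partitioning a contiguous run into two adjacent intervals, $\theta(G_A)-\theta(G_B)\notin(0,\pi)_{2\pi}$, i.e. $\theta(G_B)$ is never in the open counterclockwise half-turn ahead of $\theta(G_A)$. Taking singleton intervals, this already says consecutive angles never ``turn counterclockwise by less than $\pi$''. The work is to bootstrap from singleton blocks to the full cyclic-monotonicity statement.

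The main step is a combinatorial/geometric argument on the cyclic sequence of angles $\theta(g_1),\dots,\theta(g_n)\in[0,2\pi)$. Using the interval-swap freedom I would show that the angle sequence, read cyclically, cannot ``wrap around'' more than once and cannot have a local reversal: suppose it does, then locate a minimal offending configuration — indices $p<q$ (possibly wrapping) with $\theta(g_p),\theta(g_q)$ such that the block between them has angular hull forcing a profitable swap. Concretely, I expect to prove: if $\sigma$ is locally optimal and not counterclockwise, there exist adjacent intervals $A,B$ whose composed angles satisfy $\theta(G_B)-\theta(G_A)\in(0,\pi)_{2\pi}$; here the non-colinearity hypothesis is essential to guarantee the relevant composites are non-identical and the angular inequalities are strict rather than equalities (colinear functions would make all the $\sin$ terms vanish, which is exactly the excluded case handled by Theorem~\ref{theorem:main_theorem-x}(i)). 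One delicate point: the angular hull of a block need not be a single well-controlled arc unless the block is itself ``locally sorted'', so I would induct — first show every length-$2$ block is ordered, then use that to control length-$3$ blocks, etc., or equivalently argue that the minimal counterexample to counterclockwiseness already appears among short intervals.

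\textbf{Main obstacle.} The hard part is the bookkeeping around the arithmetic modulo $2\pi$ and the degenerate cases: when a composed block $G_A$ becomes the identity function ($\theta=\bot$), or when angle differences hit exactly $0$ or $\pi$, the strict inequality from Lemma~\ref{lemma:sin}(i) degrades to the equality of Lemma~\ref{lemma:sin}(ii), so a naive ``profitable swap'' may only be weakly profitable. Handling these requires either a perturbation argument or a careful choice of which interval boundary to cut at so that the witnessed composites are genuinely non-identical and the angle gap genuinely lies in the open interval $(0,\pi)_{2\pi}$; the non-colinearity assumption is what makes such a choice always possible, and pinning down exactly why is the crux of the lemma.
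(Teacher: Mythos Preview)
Your overall strategy --- contradiction via exhibiting a profitable interval swap --- is the same as the paper's, but the proposal stops short of the actual content of the argument and contains a sign error that, if carried forward, would make the geometry point the wrong way.

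First, the sign issue. Before the swap the composite contains $G_B\circ G_A$ (since $B$ comes after $A$), and after the swap it contains $G_A\circ G_B$. So the swap \emph{decreases} the composite iff $G_A\circ G_B < G_B\circ G_A$, which by Lemma~\ref{lemma:sin} is $\theta(G_A)-\theta(G_B)\in(0,\pi)_{2\pi}$. Local optimality therefore forces $\theta(G_B)-\theta(G_A)\in[0,\pi]_{2\pi}$ for every split (this is Lemma~\ref{lemma-mx1}); your verbal conclusion ``$\theta(G_B)$ is never in the open counterclockwise half-turn ahead of $\theta(G_A)$'' is exactly backwards --- it is \emph{always} there. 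This matters because your whole picture of which direction the angle sequence is forced to go is inverted.

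Second, and more substantively, the plan ``induct on block length'' or ``locate a minimal offending configuration'' does not by itself produce a profitable swap, and the obstacle you flag (blocks composing to the identity, or angle gaps hitting exactly $\pi$) is not a technicality but the heart of the matter. The paper handles it via two dedicated auxiliary lemmas you have no analogue for: Lemma~\ref{lemma:no_pi} shows that if two consecutive (non-identical) functions in a locally optimal ordering have angle difference exactly $\pi$, then \emph{all} the functions are colinear --- this is what converts the non-colinearity hypothesis into the strict inequalities you need --- and Lemma~\ref{lamma-newm001} controls the angle of a composed block by the angles at its endpoints. With these in hand, the paper does not induct on block length at all; instead it lifts the angles $\theta(f_{\sigma(i)})$ to a nondecreasing real sequence $\lambda_1\le\cdots\le\lambda_n$ with $\lambda_i=_{2\pi}\theta(f_{\sigma(i)})$, so that counterclockwiseness becomes the single inequality $\lambda_n\le\lambda_1+2\pi$. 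Assuming $\lambda_n>\lambda_1+2\pi$, it pinpoints specific indices $u,v$ (the first place the span exceeds $2\pi$ and the last index still within $2\pi$ of $\lambda_u$), forms the two blocks $f_{\sigma(v)}\circ\cdots\circ f_{\sigma(1)}$ and $f_{\sigma(u-1)}\circ\cdots\circ f_{\sigma(v+1)}$, and uses the auxiliary lemmas to trap their angles in incompatible arcs, contradicting Lemma~\ref{lemma-mx1}. Your proposal has neither the unwrapping trick nor the $\pi$-gap lemma, and without them the ``minimal counterexample'' search has no anchor: the degenerate cases are precisely what prevent a naive local argument from going through.
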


 For the proof, we provide the following four lemmas.
The first lemma directly follows from the definition and Lemma \ref{lemma:sin}. 
 \begin{lemma}\label{lemma-mx1}
For monotone linear functions $f_1, \dots , f_n$, a permutation $\sigma:[n]\to [n]$ is  locally optimal if and only if $\theta(f_{\sigma(r)}\circ \cdots \circ f_{\sigma(m+1)})-\theta(f_{\sigma(m)}\circ \cdots \circ f_{\sigma(\ell)}) \in [0,\pi]_{2\pi}$ holds for any integers $\ell, m$ and $r$ with $\ell \leq m< r$ such that $f_{\sigma(r)}\circ \cdots \circ f_{\sigma(m+1)}$ and $f_{\sigma(m)}\circ \cdots \circ f_{\sigma(\ell)}$ are non-identical. 
 \end{lemma}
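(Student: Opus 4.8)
The plan is to prove Lemma~\ref{lemma-mx1} by unwinding the definitions of the neighborhood $N(\sigma)$ and of local optimality, and then applying Lemma~\ref{lemma:sin}~(i)--(ii) to each comparison $\Fall$ versus $\Fall[\mu]$ with $\mu\in N(\sigma)$. First I would fix integers $\ell\le m<r$ and consider the neighbor $\mu=\sigma_{\ell,m,r}$, which is obtained from $\sigma$ by swapping the two adjacent blocks $B_1=f_{\sigma(m)}\circ\cdots\circ f_{\sigma(\ell)}$ and $B_2=f_{\sigma(r)}\circ\cdots\circ f_{\sigma(m+1)}$ while leaving the prefix $f_{\sigma(\ell-1)}\circ\cdots\circ f_{\sigma(1)}$ and the suffix $f_{\sigma(n)}\circ\cdots\circ f_{\sigma(r+1)}$ untouched. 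Writing $P$ for the prefix composite and $S$ for the suffix composite, we have $\Fall = S\circ (B_2\circ B_1)\circ P$ and $\Fall[\mu] = S\circ (B_1\circ B_2)\circ P$. Since $S$ is monotone nondecreasing (a composite of monotone increasing linear functions, hence increasing) and $P$ is a fixed linear function, the inequality $\Fall\le \Fall[\mu]$ as functions — evaluated at the relevant argument, and in fact at every argument since $P$ is a bijection of $\Br$ onto $\Br$ or a constant — is equivalent to $B_2\circ B_1 \le B_1\circ B_2$.

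The next step is to translate $B_2\circ B_1\le B_1\circ B_2$ via Lemma~\ref{lemma:sin}. If both $B_1$ and $B_2$ are non-identical, part~(i) gives $B_2\circ B_1 < B_1\circ B_2 \iff \theta(B_2)-\theta(B_1)\in(0,\pi)_{2\pi}$, and part~(ii) gives $B_2\circ B_1 = B_1\circ B_2 \iff \theta(B_2)-\theta(B_1)\in\{0,\pi\}_{2\pi}$; combining, $B_2\circ B_1\le B_1\circ B_2 \iff \theta(B_2)-\theta(B_1)\in[0,\pi]_{2\pi}$, which is exactly the claimed condition with $B_1=f_{\sigma(m)}\circ\cdots\circ f_{\sigma(\ell)}$ and $B_2=f_{\sigma(r)}\circ\cdots\circ f_{\sigma(m+1)}$. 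If one of the blocks is identical, then by Lemma~\ref{lemma:theta_composition=x1} we get $B_2\circ B_1 = B_1\circ B_2$ trivially, so the neighbor imposes no constraint — and correspondingly the lemma's condition is vacuous for that triple $(\ell,m,r)$, since it is required only when both composites are non-identical. So local optimality, i.e.\ $\Fall\le\Fall[\mu]$ for all $\mu\in N(\sigma)$, holds if and only if the angle condition holds for every admissible triple $(\ell,m,r)$.

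One subtlety I would want to address carefully is the reduction from "$\Fall\le\Fall[\mu]$ (at the base point)" to "$B_2\circ B_1\le B_1\circ B_2$ as functions." The footnote convention is that an inequality between functions means it holds for every argument, so $\Fall\le\Fall[\mu]$ should be read that way; then peeling off the common outer factor $S$ uses that $S$ is nondecreasing, and peeling off the inner factor $P$ uses that $P$ ranges over all of $\Br$ (if $P$ is non-constant) so an inequality of the composites at all points $P(x)$ is the same as at all points, while if $P$ is constant the two sides are both constants and one only needs the inequality at that one value — but monotonicity of the blocks then forces it at all points anyway, or one can simply note the statement only asserts equivalence of the \emph{permutation being locally optimal}. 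The main obstacle is thus bookkeeping rather than depth: making sure the block decomposition of $\sigma_{\ell,m,r}$ matches the displayed piecewise definition, handling the degenerate cases $\ell=m$ (so $B_1$ is a single function) and the identical-composite cases cleanly, and confirming that "$\le$ as functions" is the right reading so that the $=_{2\pi}$ arithmetic in Lemma~\ref{lemma:sin} assembles into the closed interval $[0,\pi]_{2\pi}$. I would therefore present the proof as: (1) identify $B_1,B_2,P,S$ for a generic neighbor; (2) reduce $\Fall\le\Fall[\mu]$ to $B_2\circ B_1\le B_1\circ B_2$ using monotonicity of $S$; (3) apply Lemma~\ref{lemma:sin}~(i)--(ii) and Lemma~\ref{lemma:theta_composition=x1} to get the angle characterization; (4) conclude by quantifying over all triples.
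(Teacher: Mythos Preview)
Your proposal is correct and follows essentially the same approach as the paper's proof: decompose $f^\sigma$ and $f^{\sigma_{\ell,m,r}}$ into a common prefix and suffix with swapped middle blocks, reduce the comparison to $B_2\circ B_1 \le B_1\circ B_2$, and apply Lemma~\ref{lemma:sin} (together with Lemma~\ref{lemma:theta_composition=x1} for the identical-block case). Note that since ``monotone'' in this paper means $\alpha(f_i)>0$, the composites $P$ and $S$ are always strictly increasing bijections of $\Br$, so the subtlety you raise about $P$ possibly being constant does not actually arise.
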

 \begin{proof}
 By definition,  $\sigma$ is locally optimal if and only if $f^\sigma \leq f^{\sigma_{\ell,m,r}}$ for any integers $\ell, m$ and $r$ with $\ell \leq m< r$, 
 which is equivalent to that 
$f_{\sigma(r)} \circ \dots \circ f_{\sigma(\ell)} \leq 
(f_{\sigma(m)}\circ \cdots \circ f_{\sigma(\ell)}) \circ (f_{\sigma(r)}\circ \cdots \circ f_{\sigma(m+1)})$.  
By Lemma \ref{lemma:sin},  this is further equivalent to the condition in the lemma. 
\end{proof}

We then show three useful properties of locally optimal permutations. 
 \begin{lemma}\label{lemma:no_pi}
    For monotone linear functions $f_1, \dots , f_n$, let $\sigma:[n]\to [n]$ be  locally optimal.
If $\Tfp{j}-\Tfp{i} =_{2\pi} \pi$ for some $i, j\in[n]$ with $i < j$ and  $\Tfp{\ell}=\bot$ for all $\ell \in [n]$ with $i < \ell < j$,  then $f_1, \dots , f_n$ are colinear.
 \end{lemma}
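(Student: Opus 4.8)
The plan is to argue by contradiction, exploiting the fact that between positions $i$ and $j$ all functions are identical, so $f_{\sigma(j)}\circ f_{\sigma(i)}$ effectively acts as a composition of two functions whose angles differ by $\pi$. First I would invoke Lemma \ref{lemma:theta_composition=x1} to collapse the identical functions strictly between $i$ and $j$, so that $f_{\sigma(j)}\circ f_{\sigma(j-1)}\circ\cdots\circ f_{\sigma(i)} = f_{\sigma(j)}\circ f_{\sigma(i)}$. By Lemma \ref{lemma:sin}(ii), since $\theta(f_{\sigma(j)})-\theta(f_{\sigma(i)}) =_{2\pi}\pi$, we get $f_{\sigma(j)}\circ f_{\sigma(i)} = f_{\sigma(i)}\circ f_{\sigma(j)}$; and by Lemma \ref{lemma:theta_composition}(iv) this composite is either identical ($\bot$) or has angle equal to $\theta(f_{\sigma(i)})$ (equivalently $\theta(f_{\sigma(j)})$ modulo $\pi$). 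In either case, call this common line direction $\lambda$, so the composite's vector lies on the line through the origin with direction $\lambda$.

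Next I would use local optimality of $\sigma$, via Lemma \ref{lemma-mx1}, applied to blocks that straddle the pair $(i,j)$. Suppose for contradiction that $f_1,\dots,f_n$ are not colinear; then there is some non-identical $f_{\sigma(p)}$ whose angle is not $\equiv\lambda \pmod\pi$. Consider the adjacent pair consisting of $f_{\sigma(i)}$ (or $f_{\sigma(j)}$) and this $f_{\sigma(p)}$; applying Lemma \ref{lemma-mx1} to the block boundary just before and just after this pair forces $\theta(f_{\sigma(p)}) - \theta(f_{\sigma(i)}) \in [0,\pi]_{2\pi}$ when $p>i$ reading one way, and also the reverse orientation constraint when reading the block the other way around the cyclic order — but because $\theta(f_{\sigma(j)}) = \theta(f_{\sigma(i)})+\pi$, these two constraints together pin $\theta(f_{\sigma(p)})$ to lie in $[\theta(f_{\sigma(i)}),\theta(f_{\sigma(i)})+\pi]_{2\pi}$ on one side and in $[\theta(f_{\sigma(i)})+\pi,\theta(f_{\sigma(i)})+2\pi]_{2\pi}$ on the other, whose intersection is exactly $\{\theta(f_{\sigma(i)}),\theta(f_{\sigma(i)})+\pi\}_{2\pi}$. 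Hence $\theta(f_{\sigma(p)}) \in \{\lambda,\lambda+\pi\}$, contradicting the choice of $p$.

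More concretely, I would set up the contradiction using the two specific neighbors of the pair. If $p<i$, apply Lemma \ref{lemma-mx1} with the block $f_{\sigma(i)}\circ\cdots = f_{\sigma(j)}\circ\cdots\circ f_{\sigma(i)}$ having line-direction $\lambda$ on the left of the boundary and $f_{\sigma(p)}$-containing suffix on the right, and also apply it with the roles of "before block" and "after block" exchanged by cycling; local optimality gives $\theta(f_{\sigma(j)}\circ f_{\sigma(i)}) - \theta(\text{prefix ending at }p) \in [0,\pi]_{2\pi}$ and, from the complementary cut, the opposite inclusion, so the angle difference is $\equiv 0$ or $\pi$, forcing $f_{\sigma(p)}$ onto the line $\lambda$. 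The symmetric argument handles $p>j$, and for $i<p<j$ the hypothesis already gives $\theta(f_{\sigma(p)})=\bot$. Since this holds for every $p\in[n]$, all $f_\ell$ lie on the line through the origin in direction $\lambda$, i.e., they are colinear — the desired contradiction.

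The main obstacle I anticipate is the careful bookkeeping of the cyclic/modular angle inequalities: Lemma \ref{lemma-mx1} gives an inequality in $[0,\pi]_{2\pi}$ for one choice of block split, and one must extract a *reversed* inequality from a different split (or from reading the cyclic arrangement the other way) to sandwich $\theta(f_{\sigma(p)})$ to the line. Getting the two cuts to genuinely produce complementary half-circle constraints — rather than two copies of the same constraint — is the delicate point, and it is exactly here that the hypothesis $\theta(f_{\sigma(j)})-\theta(f_{\sigma(i)})=_{2\pi}\pi$ (which places the two "endpoints" antipodally) is used. Once the sandwiching is set up correctly, Lemmas \ref{lemma:sin}(ii) and \ref{lemma:theta_composition}(ii)--(iv) close everything routinely.
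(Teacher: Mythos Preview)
Your proposal has a genuine gap in how you extract constraints from local optimality. Lemma~\ref{lemma-mx1} is asymmetric: for a split into a left block $g$ and a right block $h$, it gives only $\theta(h)-\theta(g)\in[0,\pi]_{2\pi}$, never the reverse inclusion. There is no ``cycling'' or ``reading the other way'' available here---the neighborhood $N(\sigma)$ consists of swaps of adjacent intervals in a \emph{linear} order, so you cannot obtain the complementary half-circle constraint $\theta(g)-\theta(h)\in[0,\pi]_{2\pi}$ from a different cut. Consequently your sandwich argument, which needs \emph{two} opposite half-circle constraints to pin $\theta(f_{\sigma(p)})$ to the line $\{\lambda,\lambda+\pi\}$, does not go through as written. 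Even using the two antipodal anchors $\theta(f_{\sigma(i)})=\lambda$ and $\theta(f_{\sigma(j)})=\lambda+\pi$ does not help: for $p>j$, comparing $f_{\sigma(p)}$ against $f_{\sigma(j)}$ gives $\theta(f_{\sigma(p)})\in[\lambda+\pi,\lambda+2\pi]_{2\pi}$, but comparing against $f_{\sigma(i)}$ gives a constraint on $\theta(f_{\sigma(p)}\circ\cdots\circ f_{\sigma(i+1)})$, not on $\theta(f_{\sigma(p)})$ itself, and the composite $f_{\sigma(j)}\circ f_{\sigma(i)}$ may have angle $\lambda+\pi$ (or $\bot$) rather than $\lambda$, in which case the second constraint is redundant or vacuous.

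The paper's proof avoids this by propagating inductively outward one step at a time and, crucially, by comparing a \emph{two-function composite} against a single anchor. Concretely (after reducing to the case of consecutive non-identical functions with $\theta(f_{j+1})-\theta(f_j)=_{2\pi}\pi$), one first gets $\theta(f_{j+2})\in(\theta(f_{j+1}),\theta(f_{j+1})+\pi]_{2\pi}$ from the adjacent swap; then if the difference were strictly less than $\pi$, Lemma~\ref{lemma:theta_composition}(i) places $\theta(f_{j+2}\circ f_{j+1})$ strictly in $(\theta(f_{j+1}),\theta(f_{j+1})+\pi)_{2\pi}$, whence $\theta(f_{j+2}\circ f_{j+1})-\theta(f_j)\in(\pi,2\pi)_{2\pi}$, contradicting Lemma~\ref{lemma-mx1} applied to the blocks $f_j$ and $f_{j+2}\circ f_{j+1}$. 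This forces $\theta(f_{j+2})-\theta(f_{j+1})=_{2\pi}\pi$, and the argument repeats in both directions. The missing idea in your sketch is precisely this use of the composite block $f_{j+2}\circ f_{j+1}$ against the single function $f_j$: that is what produces a second, genuinely independent constraint without appealing to any nonexistent cyclic symmetry.
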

 \begin{proof}
     Without loss of generality, we assume that $\sigma$ is the identity permutation.
We first prove the statement when no $f_i$ for $i\in [n]$ is identical and $\Tf{i+1}\neq \Tf{i}$ holds for all $i\in[n-1]$.
Let $j$ be a positive integer with  $\Tf{j+1}-\Tf{j} =_{2\pi} \pi$.  
We claim that  $\Tf{i+1}-\Tf{i} =_{2\pi} \pi$ for any  $i\in [n-1]$. 

Lemma \ref{lemma-mx1}  implies that $\Tf{j+2} \in (\Tf{j+1},\Tf{j+1}+\pi]_{2\pi}$, since $\sigma$ is locally optimal.  
If $\Tf{j+2} -\Tf{j+1}\not=_{2\pi} \pi$, then we have 
\begin{align*}
  \theta(f_{j+2}\circ f_{j+1})\in(\Tf{j+1},\Tf{j+2})_{2\pi}\subseteq(\Tf{j+1},\Tf{j+1}+\pi)_{2\pi}, 
\end{align*}
implying that 
\begin{align*}
  \theta(f_{j+2}\circ f_{j+1})-\Tf{j}\in(\pi,2\pi)_{2\pi}.
\end{align*}
     This contradicts the local optimality of $\sigma$ by Lemma \ref{lemma-mx1}.
Therefore, we have $\Tf{j+2}-\Tf{j+1} =_{2\pi} \pi$, and by repeatedly applying this argument, we obtain $\Tf{i+1}-\Tf{i} =_{2\pi} \pi$ for all $i \geq j$. 
The case of $i$ with $i \leq j-1$ is treated similarly. 
In fact,  Lemma \ref{lemma-mx1}, together with the local optimality of $\sigma$, implies that $\Tf{j} \in (\Tf{j-1},\Tf{j-1}+\pi]_{2\pi}$.
If  $\Tf{j} -\Tf{j-1}\not=_{2\pi} \pi$, then we have 
\begin{align*}
\Tf{j+1} - \theta(f_{j}\circ f_{j-1})\in(\pi,2\pi)_{2\pi}, 
\end{align*}
which contradicts the local optimality of $\sigma$ by Lemma \ref{lemma-mx1}.
Again by repeatedly applying this argument, we obtain $\Tf{i+1}-\Tf{i} =_{2\pi} \pi$ for all $i \leq j$. 
Therefore, $\Tf{i+1}-\Tf{i}=_{2\pi}\pi$ holds for any $i\in[n-1]$. 

We now turn to the general case.
Let us partition $[n]$ into fewer intervals $S_1, \dots, S_k$ such that $|\theta(S_p)\setminus \{\bot\}|=1$ for all $p \in [k]$, where $\theta(S_p)=\{\theta(f_i) \mid i \in S_p\}$.  
Namely, there exist $(k+1)$ positive integers $\ell_1=1< \ell_2 <\cdots<\ell_k <  \ell_{k+1} = n$ and reals $\lambda_1 \not= \lambda_2 \not= \dots  \not=\lambda_k$ such that $S_p=\{j \in \mathbb{Z}\mid \ell_p \leq  j < \ell_{p+1}\}$ and $\theta(S_p)\setminus \{\bot\}=\{\lambda_p\}$ for all $p \in [k]$.  
For an index $p \in [k]$, let $g_p=\Flr{\ell_{p+1}-1}{\ell_p}$.
Then it is not difficult to see that  $g_p$'s are all monotone linear, the ordering $(g_1,\dots g_k)$ is locally optimal, and  $\theta(g_{p+1})-\theta(g_{p}) =_{2\pi} \pi$ for some $p\in[k-1]$. 
Thus by applying the proof for the first case, we obtain that $g_1, \dots , g_k$ are colinear, which implies that $f_1, \dots , f_n$ are colinear. 
 \end{proof}


\begin{lemma}\label{lemma-cor:no_x}
For monotone linear functions $f_1, \dots , f_n$,  let $\sigma: [n]\to [n]$ be a local optimal permutation.
If an interval $S=\{i \in \mathbb{Z} \mid \ell \leq i \leq r \}$  contains a non-identical function $f_{\sigma(i)}$ 
 and satisfies $\Fplr{r}{\ell}(x)=x$, then 
 $f_{\sigma(j)}$ is identical for any $j \in [n]\setminus S$, 
 unless $f_1, \dots , f_n$ are colinear.
 \end{lemma}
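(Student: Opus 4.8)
The plan is to assume $f_1,\dots,f_n$ are not colinear and to deduce that every $f_{\sigma(j)}$ with $j\notin S$ is identical. First I would note it is harmless to shrink $S$ by dropping identical functions from its two ends until $f_{\sigma(\ell)}$ and $f_{\sigma(r)}$ are both non-identical: this preserves $\Fplr{r}{\ell}=\mathrm{id}$, keeps a non-identical function inside $S$, and only enlarges $[n]\setminus S$. So assume $\ell<r$ with $f_{\sigma(\ell)},f_{\sigma(r)}$ non-identical, and suppose toward a contradiction that $f_{\sigma(j_0)}$ is non-identical for some $j_0\notin S$ --- say $j_0>r$, the other case being symmetric.

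The first real step is to pin down a single line. From $\Fplr{r}{\ell}=\mathrm{id}$ one gets $\Fplr{r}{\ell+1}=f_{\sigma(\ell)}^{-1}$, and since the vector of $f_{\sigma(\ell)}^{-1}$ is a negative multiple of the vector of $f_{\sigma(\ell)}$ (apply Lemma~\ref{lemma:vector_composition} to $f_{\sigma(\ell)}^{-1}\circ f_{\sigma(\ell)}=\mathrm{id}$), $\theta(\Fplr{r}{\ell+1})=_{2\pi}\theta(f_{\sigma(\ell)})+\pi$; likewise $\theta(\Fplr{r-1}{\ell})=_{2\pi}\theta(f_{\sigma(r)})+\pi$. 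I would then replace, inside $\sigma$, the contiguous block of positions $\ell+1,\dots,r$ by a single ``function'' equal to $\Fplr{r}{\ell+1}$; since the neighbourhood moves of this coarsened ordering are exactly the neighbourhood moves of $\sigma$ that treat the block as a unit, and its composite is still $f^\sigma$, it remains locally optimal. There $f_{\sigma(\ell)}$ and the block sit consecutively with angular difference $\pi$, so Lemma~\ref{lemma:no_pi} forces all entries of the coarsened ordering onto the line $L$ through $\theta(f_{\sigma(\ell)})$; in particular $f_{\sigma(i)}\in L$ for every $i\notin[\ell+1,r]$. The mirror coarsening of the block $\ell,\dots,r-1$ likewise puts $f_{\sigma(i)}$ on the line $L'$ through $\theta(f_{\sigma(r)})$ for every $i\notin[\ell,r-1]$. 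Because $f_{\sigma(j_0)}$ is non-identical and lies on both $L$ and $L'$, we get $L=L'=:L$; hence $f_{\sigma(i)}\in L$ for all $i\notin[\ell+1,r-1]$ and $f_{\sigma(\ell)},f_{\sigma(r)}\in L$.

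What is left is to put $f_{\sigma(i)}$ on $L$ for the interior indices $\ell<i<r$ as well; then $f_1,\dots,f_n$ are colinear, contradicting our assumption. This last part is the hard one. A coarsening cannot both keep a chosen interior function of $S$ visible and exhibit the prefix/suffix pair of $S$ whose angular difference is $\pi$ (the only source of the hypothesis of Lemma~\ref{lemma:no_pi} here), so the interior of $S$ must be handled by recursion; but the obvious sub-interval $[\ell+1,r-1]$ no longer composes to the identity --- its composite is $f_{\sigma(r)}^{-1}\circ f_{\sigma(\ell)}^{-1}$, an element of $L$ that is bordered back to the identity by $f_{\sigma(\ell)}$ and $f_{\sigma(r)}$ --- and the naive weakening ``the sub-interval composes to an element of $L$ and everything outside lies on $L$'' is simply false: for the (globally, hence locally, optimal) ordering $(2x,\ x+1,\ \tfrac{1}{2}x-\tfrac{1}{2})$, the prefix $(2x,\ x+1)$ composes to $2x+1\in L$ with $L=\{y=-x\}$ and the last function lies on $L$, yet $2x\notin L$. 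I would therefore set up a strengthened statement, proved by induction on the number of non-identical functions in the interval under consideration, that additionally records the two bordering on-$L$ functions restoring the identity (equivalently, that the interval together with those two borders composes to the identity): the base case of at most two non-identical functions is immediate, and the inductive step peels one endpoint off again with the same coarsening/Lemma~\ref{lemma:no_pi} device. Isolating the correct form of this invariant, and checking that it is preserved by the peeling, is the main obstacle I anticipate.
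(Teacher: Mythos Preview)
Your approach shares the paper's core move---coarsen the interval $S$ into two adjacent blocks whose angles differ by $\pi$ and invoke Lemma~\ref{lemma:no_pi}---but it diverges exactly where you yourself flag the difficulty, and the paper resolves that difficulty by a cleaner idea than the recursion you are trying to set up.

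The paper does \emph{not} use only the two extreme coarsenings (split at $m=\ell$ and at $m=r-1$) and then attempt to peel inward. Instead it first takes $S$ \emph{minimal} (not just trimmed at the ends), and then, for \emph{every} $m\in\{\ell,\dots,r-1\}$, considers the coarsened locally optimal sequence
\[
\bigl(f_1,\dots,f_{\ell-1},\ \Flr{m}{\ell},\ \Flr{r}{m+1},\ f_{r+1},\dots,f_n\bigr).
\]
Minimality of $S$ guarantees that both blocks $\Flr{m}{\ell}$ and $\Flr{r}{m+1}$ are non-identical (otherwise one of the subintervals $[\ell,m]$ or $[m+1,r]$ would be a strictly smaller admissible $S$), and since their composite is the identity, Lemma~\ref{lemma:theta_composition}(iv) gives the $\pi$ gap. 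Applying Lemma~\ref{lemma:no_pi} to each such coarsening, and using the shared non-identical witness $f_{j_0}$ outside $S$ to see that all these lines coincide, you get in one stroke that \emph{every} partial composite $\theta(\Flr{m}{\ell})$, $m=\ell,\dots,r-1$, as well as $\theta(f_r)$ and all $\theta(f_i)$ with $i\notin S$, lie on a single line $L$. The interior functions then fall immediately: from Lemma~\ref{lemma:vector_composition}, $\vec{f_{m+1}}=\overrightarrow{\Flr{m+1}{\ell}}-\alpha(f_{m+1})\,\overrightarrow{\Flr{m}{\ell}}$, so if two consecutive partial composites lie on $L$ then so does $f_{m+1}$. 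No induction on an auxiliary invariant is needed.

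So the concrete gap in your plan is the attempt to recurse into $[\ell+1,r-1]$. As you observed with your counterexample, the natural weakened hypothesis is false, and strengthening it correctly is awkward; the fix is to replace the recursion by the family of coarsenings over all split points $m$, after passing to a minimal $S$.
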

 \begin{proof}
 Assuming that there exists an interval $S$ which satisfies as above and a $j\in[n]\setminus S$ such that $f_{\sigma(j)}(x)\neq x$, we show that $f_1,\dots,f_n$ are colinear.
Without loss of generality, we  assume that $\sigma$ is the identity permutation. Let $S$ be a minimal interval that satisfies the condition in the lemma. 
By the condition and the minimality of $S$, Lemma \ref{lemma:theta_composition} (iv) implies that any index $m \in S \setminus \{r\}$ satisfies  $\theta(\Flr{r}{m+1})- \theta(\Flr{m}{\ell})=_{2\pi}\pi$. Note that an ordering $(f_1, \dots , f_{\ell-1}, \Flr{m}{\ell}, \Flr{r}{m+1}, f_{r+1}, \dots, f_n)$ is locally optimal for the functions.  
Thus the linear functions corresponding to 
\[
\{\Tf{i} \mid i\in [n]\setminus S\} \cup \{\Tflr{r}{m+1}, \Tflr{m}{\ell} \mid m \in S \setminus \{r\}\}
\]
are colinear by Lemma \ref{lemma:no_pi}.
In particular, 
\[\{\Tf{i} \mid i\in [n]\setminus S\} \cup \{\Tf{r}\}\cup \{\Tflr{m}{\ell} \mid m \in S \setminus \{r\}\}\]
are colinear.
Then $f_1,\dots,f_n$ are colinear by Lemma \ref{lemma:theta_composition} (ii).
\end{proof}

For monotone non-identical linear functions $f_1, \dots , f_n$, 
let us assume without loss of generality that the identity permutation  ${\rm id}:[n]\to [n]$ is locally optimal. 
Let $\lambda_1, \dots,\lambda_n$ denote  the smallest reals such that 
\begin{equation}
\label{eq-new-mx1}
\begin{array}{l} \lambda_i=_{2\pi}\Tf{i}\  \mbox{ for all } \ i \in [n]\\[.15cm]
 0\leq\lambda_1\leq\lambda_2\leq\cdots\leq\lambda_n. 
     \end{array}
     \end{equation}
Note that they are well-defined. 

\begin{lemma}
\label{lamma-newm001}
Let $\lambda_1, \dots , \lambda_n$ be defined as above.
If $\ell$ and $r$ in $[n]$ satisfy $\lambda_r-\lambda_\ell<2\pi$ and $\ell \leq r$, then 
we have $\theta(f_r \circ \dots \circ f_\ell) \in [\theta(f_\ell),\theta(f_r)]_{2\pi}\cup\{\bot\}$.
In addition, if $f_1,\dots,f_n$ are not colinear and $\lambda_\ell<\lambda_r$ then 
we have $\theta(f_r \circ \dots \circ f_\ell) \in (\theta(f_\ell),\theta(f_r))_{2\pi}\cup\{\bot\}$.
\end{lemma}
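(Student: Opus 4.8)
The plan is to prove both statements simultaneously by induction on $r-\ell$. The base case $r=\ell$ is trivial since $\theta(f_r\circ\cdots\circ f_\ell)=\theta(f_\ell)=\theta(f_r)$ and the interval collapses to a point; the second assertion is vacuous because $\lambda_\ell<\lambda_r$ fails. For the inductive step, consider $\ell<r$ with $\lambda_r-\lambda_\ell<2\pi$. Write $g=f_{r-1}\circ\cdots\circ f_\ell$, so that $f_r\circ\cdots\circ f_\ell = f_r\circ g$. By the induction hypothesis applied to the pair $(\ell,r-1)$ (noting $\lambda_{r-1}-\lambda_\ell\le\lambda_r-\lambda_\ell<2\pi$), we have $\theta(g)\in[\theta(f_\ell),\theta(f_{r-1})]_{2\pi}\cup\{\bot\}$. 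First I would dispose of the degenerate situations: if $\theta(g)=\bot$ then $f_r\circ g = f_r$ by Lemma \ref{lemma:theta_composition=x1}, so $\theta(f_r\circ g)=\theta(f_r)\in[\theta(f_\ell),\theta(f_r)]_{2\pi}$ and we are done; similarly one handles the case where $g$ or $f_r$ has already been absorbed.

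The core of the argument is then to combine $\theta(g)\in[\lambda_\ell,\lambda_{r-1}]_{2\pi}$ with the local-optimality constraint. Since ${\rm id}$ is locally optimal, Lemma \ref{lemma-mx1} (applied with the split at $m=r-1$ inside $[\ell,r]$) gives $\theta(f_r)-\theta(g)\in[0,\pi]_{2\pi}$. Combined with $\theta(g)\ge_{2\pi}\lambda_\ell$ and $\theta(f_r)=_{2\pi}\lambda_r$ together with the hypothesis $\lambda_r-\lambda_\ell<2\pi$, this pins down the relative position of $\theta(g)$ and $\theta(f_r)$ on the circle: the angular gap from $\theta(g)$ to $\theta(f_r)$ going counterclockwise lies in $[0,\pi]$. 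Now apply Lemma \ref{lemma:theta_composition}: if $\theta(f_r)-\theta(g)\in(0,\pi)_{2\pi}$, part (i) yields $\theta(f_r\circ g)\in(\theta(g),\theta(f_r))_{2\pi}$, and since $\theta(g)\in[\lambda_\ell,\lambda_{r-1}]_{2\pi}\subseteq[\lambda_\ell,\lambda_r]_{2\pi}$ and $\theta(f_r)=_{2\pi}\lambda_r$, transitivity of the angular ordering on an arc of length $<2\pi$ gives $\theta(f_r\circ g)\in[\theta(f_\ell),\theta(f_r)]_{2\pi}$. If instead $\theta(f_r)-\theta(g)\in\{0,\pi\}_{2\pi}$, part (ii) gives $\theta(f_r\circ g)\in\{\theta(g),\theta(f_r),\bot\}$, which again lands in $[\theta(f_\ell),\theta(f_r)]_{2\pi}\cup\{\bot\}$.

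For the strengthened second statement, assume in addition that $f_1,\dots,f_n$ are not colinear and $\lambda_\ell<\lambda_r$. Here I would argue that the endpoints cannot be attained unless colinearity is forced. The case $\theta(f_r)-\theta(g)=_{2\pi}\pi$ with all intervening functions becoming identical would, via Lemma \ref{lemma:no_pi} (or rather the packaging in Lemma \ref{lemma-cor:no_x} when $\theta(f_r\circ g)=\bot$), force colinearity — contradiction; and $\theta(f_r)=\theta(g)$ would propagate backward through the induction, eventually forcing $\theta(f_\ell)=\theta(f_r)$, contradicting $\lambda_\ell<\lambda_r$. So we are in the case $\theta(f_r)-\theta(g)\in(0,\pi)_{2\pi}$, and Lemma \ref{lemma:theta_composition}(i) gives the strict containment $\theta(f_r\circ g)\in(\theta(g),\theta(f_r))_{2\pi}$; one then needs $\theta(g)$ to be strictly past $\theta(f_\ell)$ or $\theta(f_\ell)=\theta(g)$ but $\theta(f_r)>\theta(g)$ strictly — either way the open-interval conclusion $(\theta(f_\ell),\theta(f_r))_{2\pi}$ follows. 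The main obstacle I anticipate is the careful bookkeeping of angles modulo $2\pi$: one must consistently lift everything to the real line using the $\lambda_i$'s and the bound $\lambda_r-\lambda_\ell<2\pi$ to legitimately use transitivity of $\le$, since the $[\cdot,\cdot]_{2\pi}$ notation hides a genuine choice of representative, and the interplay with the $\bot$ cases and with ruling out colinearity in the strict version requires invoking Lemmas \ref{lemma:no_pi} and \ref{lemma-cor:no_x} at exactly the right moment.
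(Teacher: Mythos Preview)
Your plan for the first (closed-interval) claim is correct and essentially matches the paper's: both proceed by induction, combining the local-optimality bound from Lemma~\ref{lemma-mx1} with Lemma~\ref{lemma:theta_composition}, lifting to the $\lambda_i$'s to make the $[\,\cdot\,,\,\cdot\,]_{2\pi}$ containments transitive. The only cosmetic difference is that the paper absorbs from the left ($f_{\ell+1}\circ f_\ell$ first) while you peel from the right ($f_r\circ g$ with $g=f_{r-1}\circ\cdots\circ f_\ell$).

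For the strict (open-interval) claim, however, your plan has a genuine gap. In the case $\theta(f_r)-\theta(g)=_{2\pi}\pi$, Lemma~\ref{lemma:theta_composition}(ii) allows $\theta(f_r\circ g)=\theta(f_r)$, which is \emph{not} in $(\theta(f_\ell),\theta(f_r))_{2\pi}$. You propose to rule this out via Lemma~\ref{lemma:no_pi} applied to the compressed sequence $(f_1,\dots,f_{\ell-1},g,f_r,\dots,f_n)$, but that lemma only yields colinearity of the \emph{compressed} sequence, not of the original $f_1,\dots,f_n$; the individual functions $f_\ell,\dots,f_{r-1}$ hidden inside $g$ need not be colinear with the rest. (In particular, when $\ell=1$ and $r=n$ the compressed sequence has only two elements, so its colinearity is trivially true and yields no contradiction.) Your backward-propagation argument for $\theta(f_r)=\theta(g)$ does work once one reduces to strictly increasing $\lambda_i$'s, but that does not touch the $\pi$ case with $\theta(f_r\circ g)=\theta(f_r)$.

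The paper closes this gap by a symmetric two-sided argument rather than pure one-sided induction: it first uses $\theta(f_{\ell+1})-\theta(f_\ell)\in(0,\pi)_{2\pi}$ (which \emph{does} follow from Lemma~\ref{lemma:no_pi} on the original sequence under non-colinearity) to get $\theta(f_{\ell+1}\circ f_\ell)\in(\theta(f_\ell),\theta(f_{\ell+1}))_{2\pi}$ strictly, then combines with the closed first claim on $f_r\circ\cdots\circ f_{\ell+2}$ to obtain openness at $\theta(f_\ell)$; a mirror argument starting with $f_r\circ f_{r-1}$ gives openness at $\theta(f_r)$. Intersecting the two half-open results yields the open interval. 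Your induction can be repaired by the same device: after your right-peeling step, also consider the split $(f_r\circ f_{r-1})\circ(f_{r-2}\circ\cdots\circ f_\ell)$; since $\theta(f_r\circ f_{r-1})$ is strictly below $\theta(f_r)$, the composite cannot land at $\theta(f_r)$.
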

\begin{proof}
Note that the identity permutation is locally optimal.
By Lemma \ref{lemma:theta_composition} (iii), we may assume without loss of generality that $\lambda_i<\lambda_{i+1}$ for any $i\in[n-1]$.
Initially, we show the first claim.
If $\ell=r$ then the claim clearly holds and thus we assume $\ell<r$.
By Lemmas \ref{lemma:theta_composition} (i)\,(ii) and \ref{lemma-mx1},   we have 
$\theta(f_{\ell+1} \circ f_\ell) \in [\theta(f_\ell),\theta(f_{\ell+1})]_{2\pi}\cup\{\bot\}$. 
If $\ell+1=r$, then it proves the statement of the first claim.
On the other hand, if $\ell+1 < r$, then we divide our discussion into two cases, whether $\theta(f_{\ell+1} \circ f_\ell)\in[\theta(f_\ell),\theta(f_{\ell+1})]_{2\pi}$ or not.
If $\theta(f_{\ell+1} \circ f_\ell)\in[\theta(f_\ell),\theta(f_{\ell+1})]_{2\pi}$ then we note that an ordering $( f_{1}, \dots , f_{\ell-1}, f_{\ell+1}\circ f_\ell, f_{\ell+2}, \dots , f_n)$ is locally optimal for linear functions $f_{1}, \dots , f_{\ell-1}, f_{\ell+1}\circ f_\ell, f_{\ell+2}, \dots , f_n$, and $f_{\ell+1} \circ f_\ell, f_{\ell+2}, \dots , f_r$ satisfies the condition in the lemma. 
Thus by repeatedly applying the argument above to $f_{\ell+1}\circ f_\ell, f_{\ell+2}, \dots , f_r$, we can obtain the proof of the first claim.
If $\theta(f_{\ell+1} \circ f_\ell)=\bot$ then we note that an ordering $( f_{1}, \dots , f_{\ell-1}, f_{\ell+2}, \dots , f_n)$ is locally optimal for linear functions $f_{1}, \dots , f_{\ell-1}, f_{\ell+2}, \dots , f_n$, and $f_{\ell+2}, \dots , f_r$ satisfies the condition in the lemma. 
Thus by repeatedly applying the argument above to $f_{\ell+2}, \dots , f_r$, we can obtain the proof of the first claim.

Next, we prove the second claim by showing that $\theta(f_r\circ\cdots\circ f_\ell)\in(\theta(f_\ell),\theta(f_r)]_{2\pi}\cup\{\bot\}$ and $\theta(f_r\circ\cdots\circ f_\ell)\in[\theta(f_\ell),\theta(f_r))_{2\pi}\cup\{\bot\}$.
By Lemmas \ref{lemma-mx1} and \ref{lemma:no_pi}, $\theta(f_{i+1})-\theta(f_i)\in(0,\pi)_{2\pi}$ for any $i\in[n-1]$.

We first show $\theta(f_r\circ\cdots\circ f_\ell)\in(\theta(f_\ell),\theta(f_r)]_{2\pi}\cup\{\bot\}$.
By Lemma \ref{lemma:theta_composition} (i), we have $\theta(f_{\ell+1}\circ f_\ell)\in(\theta(f_\ell),\theta(f_{\ell+1}))_{2\pi}$.
By the first claim of lemma, $\theta(f_r\circ\cdots\circ f_{\ell+2})\in[\theta(f_{\ell+2}),\theta(f_r)]_{2\pi}\cup\{\bot\}$.
If $\theta(f_r\circ\cdots\circ f_{\ell+2})=\bot$ then it proves the statement of the second claim because $\theta(f_r\circ\cdots\circ f_\ell)=\theta(f_{\ell+1}\circ f_\ell)\in(\theta(f_\ell),\theta(f_{\ell+1}))_{2\pi}$.
On the other hand, if $\theta(f_r\circ\cdots\circ f_{\ell+2})\in[\theta(f_{\ell+2}),\theta(f_r)]_{2\pi}$, then $\theta(f_r\circ\cdots\circ f_{\ell+2})-\theta(f_{\ell+1}\circ f_\ell)\in[0,\pi]_{2\pi}$ by Lemma \ref{lemma-mx1}.
Therefore, we have $\theta(f_r\circ\cdots\circ f_\ell)\in[\theta(f_{\ell+1}\circ f_\ell),\theta(f_r\circ\cdots\circ f_{\ell+2})]_{2\pi}\cup\{\bot\}\subseteq(\theta(f_l),\theta(f_r)]_{2\pi}\cup\{\bot\}$, where the first relationship is followed by Lemma \ref{lemma:theta_composition} (i).

Similarly, we show $\theta(f_r\circ\cdots\circ f_\ell)\in[\theta(f_\ell),\theta(f_r))_{2\pi}\cup\{\bot\}$.
By Lemma \ref{lemma:theta_composition} (i), $\theta(f_r\circ f_{r-1})\in(\theta(f_{r-1}),\theta(f_r))_{2\pi}$.
By the first claim of lemma, $\theta(f_{r-2}\circ\cdots\circ f_\ell)\in[\theta(f_\ell),\theta(f_{r-2})]_{2\pi}\cup\{\bot\}$.
If $\theta(f_{r-2}\circ\cdots\circ f_\ell)=\bot$ then it proves the statement of the second claim because $\theta(f_r\circ\cdots\circ f_\ell)=\theta(f_r\circ f_{r-1})\in(\theta(f_{r-1}),\theta(f_r))_{2\pi}$.
On the other hand, if $\theta(f_{r-2}\circ\cdots\circ f_\ell)\in[\theta(f_\ell),\theta(f_{r-2})]_{2\pi}$ then $\theta(f_r\circ f_{r-1})-\theta(f_{r-2}\circ\cdots\circ f_{\ell})\in[0,\pi]_{2\pi}$ by Lemma \ref{lemma-mx1}.
Therefore, we have $\theta(f_r\circ\cdots\circ f_\ell)\in[\theta(f_{r-2}\circ\cdots\circ f_{\ell}),\theta(f_r\circ f_{r-1})]_{2\pi}\cup\{\bot\}\subseteq[\theta(f_l),\theta(f_r))_{2\pi}\cup\{\bot\}$, where the first relationship is followed by Lemma \ref{lemma:theta_composition} (i).
\end{proof}

\begin{proof}[Proof of Lemma \ref{lemma:localopt_counterclockwise}]
Without loss of generality, we assume that all $f_i$’s are non-identical and 
the locally optimal permutation is the identity permutation.
Let $\lambda_1, \dots,\lambda_n$ denote  the reals defined before Lemma \ref{lamma-newm001}. 
Note that 
the identity permutation is counterclockwise if and only if  $\lambda_n\leq\lambda_1+2\pi$.
      Suppose to the contrary that $\lambda_n>\lambda_1+2\pi$, and define  $u$ and $v$ by 
\begin{equation}
\label{eq-wx-1}
\begin{array}{lll}
  u&=\min\{i\in[n]\mid\lambda_i>\lambda_1+2\pi\},\\[.1cm]
  v&=\max\{i\in[n]\mid\lambda_i\leq\lambda_u-2\pi\}.
\end{array}
\end{equation}
     Then by definition, we have $\lambda_{v+1}>\lambda_u-2\pi$.
Since $\sigma$ is locally optimal,  it follows from Lemmas \ref{lemma-mx1} and \ref{lemma:no_pi}  that 
\begin{equation}
\label{eq-xnew001}
\lambda_{v+1}<\lambda_v+\pi\leq\lambda_u-\pi<\lambda_{u-1}.  
\end{equation}
which in particular implies $v+1<u-1$.
We define $\varphi_1$ and $\varphi_2$ by 
\begin{align*}
  \varphi_1&=\Tflr{v}{1},\\
  \varphi_2&=\Tflr{u-1}{v+1}. 
\end{align*}
Since $\Tf{1},\dots,\Tf{v}\in[\Tf{u-1},\Tf{u}]_{2\pi}$ and $\Tf{1}\in[\Tf{u-1},\Tf{u})_{2\pi}$ by \eqref{eq-wx-1} and $\Tf{u} - \Tf{u-1} \in (0, \pi)_{2\pi}$
by Lemmas \ref{lemma-mx1} and \ref{lemma:no_pi}, 
Lemma \ref{lemma:theta_composition}  implies that
 \begin{equation}
 \label{eq-0}
  \varphi_1\in[\Tf{u-1},\Tf{u})_{2\pi}. 
\end{equation}
Note that $0< \lambda_{u-1}-\lambda_{v+1} < \lambda_{u-1}-\lambda_1 \leq 2\pi$ holds by \eqref{eq-wx-1} and \eqref{eq-xnew001}, and thus it follows from Lemmas \ref{lemma-cor:no_x} and \ref{lamma-newm001} that 
\begin{equation}
\label{eq-1}
\varphi_2\in(\Tf{v+1},\Tf{u-1})_{2\pi}.
\end{equation}     
Furthermore, by the local optimality of $\sigma$,  Lemma \ref{lemma-mx1} implies that 
\[
\varphi_2\in[\varphi_1,\varphi_1+\pi]_{2\pi}\cap [\Tf{u}-\pi,\Tf{u}]_{2\pi}.  
  \]
By this together with \eqref{eq-0},  we obtain $\varphi_2\in[\varphi_1,\Tf{u}]_{2\pi}$, which  contradicts \eqref{eq-0} and \eqref{eq-1}, since $\theta(f_u) \not\in (\Tf{v+1},\Tf{u-1})_{2\pi}$. 
Thus we have $\lambda_n \leq \lambda_1+2\pi$, 
completing the proof.
 \end{proof}

By the following lemma, 
we can obtain the proof of Theorem \ref{theorem:main_theorem-x} (ii). 
\begin{lemma}\label{lemma:MainTheorem2}
Let  $\sigma:[n]\to [n]$ be a counterclockwise permutation for monotone linear functions $f_1, \dots , f_n$. 
If it provides the identity, i.e., $\Fall(x)=x$,
then any of the counterclockwise permutations provides the identity.
 \end{lemma}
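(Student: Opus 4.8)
The plan is to isolate two operations on permutations, verify that each preserves the relevant property, and then link $\sigma$ to an arbitrary counterclockwise permutation by these operations. The operations are: (a) cyclic shifts, which I will show preserve ``the composite is the identity'', and (b) transposing two adjacent functions that have the same angle, which in fact preserves the composite itself. Before starting, I would delete every identical function among $f_1,\dots,f_n$: this changes no composite by Lemma \ref{lemma:theta_composition=x1}, and it is harmless for counterclockwiseness, since identical functions are ignored there. So we may assume all $f_i$ are monotone and non-identical; if at most one remains the claim is trivial, so assume $n\ge 2$.

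For operation (a): if $f^{\rho}(x)=x$ for a permutation $\rho:[n]\to[n]$, then $f^{\rho_1}(x)=x$ for its $1$-shift $\rho_1$. Indeed, since $f_{\rho(1)}$ is monotone it is a bijection of $\Br$ with linear inverse $f_{\rho(1)}^{-1}$, so from $f_{\rho(n)}\circ\cdots\circ f_{\rho(1)}$ being the identity we get $f_{\rho(n)}\circ\cdots\circ f_{\rho(2)}=f_{\rho(1)}^{-1}$, and hence $f^{\rho_1}=f_{\rho(1)}\circ f_{\rho(1)}^{-1}$ is the identity; iterating, $f^{\rho_k}(x)=x$ for every cyclic shift $\rho_k$ of $\rho$. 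For operation (b): if $\theta(g)=\theta(h)$ then $\theta(h)-\theta(g)=0\in\{0,\pi\}_{2\pi}$, so $g\circ h=h\circ g$ by Lemma \ref{lemma:sin} (ii); consequently a maximal block of functions sharing a common angle may be reordered arbitrarily inside any composite without changing it.

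To finish, let $\tau$ be any counterclockwise permutation. By the definition of counterclockwiseness, some cyclic shift $\hat\tau$ of $\tau$ satisfies $\theta(f_{\hat\tau(1)})\le\cdots\le\theta(f_{\hat\tau(n)})$; choose likewise a cyclic shift $\hat\sigma$ of $\sigma$ that is sorted in this way. Both $\hat\sigma$ and $\hat\tau$ list the multiset $\{\theta(f_i)\mid i\in[n]\}$ in non-decreasing order, so $\hat\tau$ differs from $\hat\sigma$ only in which function occupies each position within a maximal block of equal angle; that is, $\hat\tau$ is obtained from $\hat\sigma$ by operations of type (b). Now operation (a) applied to $\sigma$ gives $f^{\hat\sigma}(x)=x$, operations of type (b) then give $f^{\hat\tau}=f^{\hat\sigma}$, and since $\tau$ is a cyclic shift of $\hat\tau$, operation (a) applied to $\hat\tau$ gives $f^{\tau}(x)=x$, which is what we wanted.

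The step I expect to need the most care is the structural claim in the last paragraph: that a counterclockwise permutation becomes, after a suitable cyclic shift, a genuinely non-decreasing arrangement of the angles, and that two such arrangements of the same multiset are related only by equal-angle transpositions. This requires some attention to the $2\pi$-periodic conventions and to ties, but is otherwise routine; the remainder of the argument uses only the invertibility of monotone linear functions together with Lemmas \ref{lemma:theta_composition=x1} and \ref{lemma:sin}.
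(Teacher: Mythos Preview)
Your proof is correct and follows essentially the same approach as the paper's: both arguments reduce the claim to two invariance facts---(a) cyclic shifts preserve the property ``composite is the identity'' (the paper's Lemma~\ref{lemma:MainTheorem2-1}), and (b) adjacent transpositions of equal-angle functions preserve the composite (via Lemma~\ref{lemma:sin} (ii))---and then observe that any counterclockwise permutation is reachable from $\sigma$ by these moves. The only cosmetic difference is that you justify (a) by direct invertibility of monotone linear functions, whereas the paper invokes Lemma~\ref{lemma:theta_composition}~(iv); your argument is slightly more elementary but equivalent.
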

 The proof follows from the following property of the $k$-shifts of a permutation which produces the identity.  
 
 \begin{lemma}\label{lemma:MainTheorem2-1}
Let  $f_1, \dots, f_n$ be monotone linear functions. 
If a  permutation $\sigma:[n]\to [n]$ provides the identity, i.e., $\Fall(x)=x$, then  any $k$-shift $\sigma_k$ of $\sigma$ provides the identity.
 \end{lemma}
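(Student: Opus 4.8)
The plan is to prove Lemma~\ref{lemma:MainTheorem2-1} first, since it contains the combinatorial heart of the matter, and then deduce Lemma~\ref{lemma:MainTheorem2}. For the $k$-shift statement, write $\sigma$, without loss of generality, as the identity, so $f_n\circ\cdots\circ f_1 = \mathrm{id}$. The $k$-shift $\sigma_k$ corresponds to the composite $g := (f_k\circ\cdots\circ f_1)\circ(f_n\circ\cdots\circ f_{k+1})$. Set $h = f_k\circ\cdots\circ f_1$ and $h' = f_n\circ\cdots\circ f_{k+1}$, so $\mathrm{id} = h'\circ h$ and we want $h\circ h' = \mathrm{id}$. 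Since $h'\circ h = \mathrm{id}$, both $h$ and $h'$ are invertible (their slopes are nonzero — indeed each $\alpha(f_i)>0$ so $\alpha(h),\alpha(h')>0$), and $h' = h^{-1}$; hence $h\circ h' = h\circ h^{-1} = \mathrm{id}$. This is essentially immediate once one observes that $h'\circ h=\mathrm{id}$ forces $h'=h^{-1}$ as functions $\Br\to\Br$. So the proof of Lemma~\ref{lemma:MainTheorem2-1} is short; I would just be careful to note that the monotonicity (positivity of slopes) guarantees $h$ is a genuine bijection of $\Br$, which is what licenses the inverse.

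Next, for Lemma~\ref{lemma:MainTheorem2}, let $\sigma$ be a counterclockwise permutation with $f^\sigma = \mathrm{id}$, and let $\tau$ be any other counterclockwise permutation; I want $f^\tau = \mathrm{id}$. The idea is that two counterclockwise permutations differ, up to relabeling, only by a cyclic shift and by reordering of functions with equal angle (and insertion/removal of identical functions, which by Lemma~\ref{lemma:theta_composition=x1} do not affect the composite). More precisely: first discard all identical $f_i$ (those with $\theta(f_i)=\bot$), which changes neither composite by Lemma~\ref{lemma:theta_composition=x1}. Among the remaining functions, two functions with the same angle commute and in fact their composite has that same angle (Lemma~\ref{lemma:theta_composition}(iii)), so any counterclockwise permutation can be transformed, without changing the composite, into a "canonical" counterclockwise order where functions are grouped by angle; and any two counterclockwise orders of the grouped blocks are cyclic rotations of each other. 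Thus it suffices to show that a single cyclic shift preserves the property of yielding the identity — and that is exactly Lemma~\ref{lemma:MainTheorem2-1}, applied repeatedly (every cyclic rotation is an iterate of the $1$-shift, i.e. $\sigma_k$ for varying $k$; more carefully, a cyclic shift by $k$ positions is the $k$-shift $\sigma_k$).

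The main obstacle I anticipate is the bookkeeping in the second step: making rigorous the claim that any two counterclockwise permutations of the same multiset of linear functions are related by (a) deleting/reinserting identical functions, (b) permuting within a maximal block of equal-angle functions, and (c) a cyclic shift. Part (a) is handled by Lemma~\ref{lemma:theta_composition=x1}; part (b) needs that swapping two adjacent equal-angle functions leaves the whole composite unchanged, which follows from Lemma~\ref{lemma:sin}(ii) (equal angles give $h\circ g = g\circ h$) together with the usual fact that adjacent transpositions generate all permutations of a block; part (c) is where Lemma~\ref{lemma:MainTheorem2-1} does the work. I would argue part (c) by: if $\theta(f_{\sigma(1)})\le\cdots\le\theta(f_{\sigma(n)})$ is the sorted counterclockwise order realized by some $\sigma$, then every counterclockwise $\tau$ (after normalizing as in (a),(b)) reads the sorted angles starting from some position $k$ and wrapping around, i.e.\ $\tau$ is the $k$-shift of $\sigma$ for some $k$; then $f^\tau = f^{\sigma_k} = \mathrm{id}$ by Lemma~\ref{lemma:MainTheorem2-1}. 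One subtlety to watch: the definition of counterclockwise allows ties to be "split" at the cyclic cut point in either direction, but since equal-angle functions commute this ambiguity is absorbed by step (b), so it causes no real trouble. Everything else is routine.
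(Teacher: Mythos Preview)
Your proof of Lemma~\ref{lemma:MainTheorem2-1} is correct and is essentially the same as the paper's: both split the composite as $h'\circ h=\mathrm{id}$ and conclude $h\circ h'=\mathrm{id}$, with the paper invoking the equivalence in Lemma~\ref{lemma:theta_composition}(iv) (which encodes exactly your inverse argument in the angle language) where you argue directly that $h'=h^{-1}$. Your sketch for Lemma~\ref{lemma:MainTheorem2} also matches the paper's proof, which likewise reduces any two counterclockwise permutations to one another via adjacent swaps of equal-angle functions (Lemma~\ref{lemma:sin}(ii)) and $k$-shifts.
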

 \begin{proof}
By the equivalence of Lemma \ref{lemma:theta_composition} (iv), $(f_n \circ \dots \circ f_{k+1})\circ (f_{k}\circ \dots \circ f_{1})
=(f_{k} \circ \dots \circ f_{1})\circ (f_{n}\circ \dots \circ f_{k+1})$ for any $k \in \{0, 1, \ldots, n-1\}$.
This means that $\Fall[\sigma_k](x)=\Fall(x)$.
 \end{proof}

\begin{proof}[Proof of Lemma \ref{lemma:MainTheorem2}]
We note that for any permutation $\nu:[n]\to [n]$, 
$\theta(f_{\nu(k)})=\theta(f_{\nu(k+1)})$  implies that $f^{\nu}=f^{\nu_{k,k,k+1}}$ by Lemma \ref{lemma:sin}. 
Since any of the counterclockwise permutations is obtained by repeatedly applying this operation and $k$-shifting to $\sigma$,  Lemma \ref{lemma:MainTheorem2-1} provides the proof. 
\end{proof}

\begin{proof}[Proof of Theorem \ref{theorem:main_theorem-x} (ii)]
(ii-1) $\Longrightarrow$ (ii-2) follows from Lemmas \ref{lemma:localopt_counterclockwise} and \ref{lemma:MainTheorem2}. 

For the converse direction, by Lemma \ref{lemma:localopt_counterclockwise} we suppose, on the  contrary, that all counterclockwise permutations provide the same non-identical function $g$. Since $f_i$'s are not colinear, 
there exists a non-identical linear function $f_i$ such that 
\begin{equation}
\label{eq-lats1}\theta(f_i)\ \not\in \{\theta(g), \theta(g)+\pi \}_{2\pi}.
\end{equation}
Consider a counterclockwise permutation $\sigma:[n]\to [n]$ with $\sigma(1)=i$, and 
let $h=f_{\sigma(n)} \circ \dots \circ f_{\sigma(2)}$. 
Then we have $g=h \circ f_i$. 
Since $\theta(h)\not\in\{\theta(f_i), \theta(f_i)+\pi\}_{2\pi}\cup\{\bot\}$ by \eqref{eq-lats1} and Lemma \ref{lemma:vector_composition}, 
Lemma \ref{lemma:sin} (i) implies that 
$h \circ f_i \not= f_i \circ h$, which contradicts the assumption. 
\end{proof}
\subsection{Proof of Theorem \ref{theorem:main_theorem-x} (iii)}
We prove 
Theorem \ref{theorem:main_theorem-x} (iii) by showing the unimodality of $f^\sigma$ for counterclockwise permutations $\sigma$. 

For a (finite) cyclically ordered set $E=\{ e_1, \ldots, e_m\}$ with $e_1 \prec e_2 \prec \dots\prec e_m   \prec e_1$ and  a weakly ordered set $S$,   a function $g:E\to S$  is called  {\em unimodal}  if there exists two integers $k$ and $\ell$ in $[m]$ such that $g(e_k) \leq g(e_{k+1})\leq \cdots \leq g(e_\ell) \geq g(e_{\ell+1}) \geq \cdots \geq g(e_{k-1})\geq g(e_k)$.   
For a permutation $\tau:[n]\to [n]$, we regard the set $E_\tau=\{\tau_1, \ldots, \tau_{n-1}, \tau_n\,(=\tau) \}$ of its $k$-shifts as a cyclically ordered set and let $g(\sigma)= f^\sigma$ for $\sigma \in E_\tau$.

We show that $f^\sigma$ is unimodal for counterclockwise permutations $\sigma$ by providing  the following adjacent property of $f^\sigma$, where an illustrative example is presented  
in 
 Example \ref{ex-unimodals}. 

 \begin{lemma}\label{lemma:yamatani}
Let $f_1, \dots , f_n$ be monotone linear functions. For a permutation $\sigma: [n]\to [n]$, we have
\begin{eqnarray}
f^{\sigma} \ < \ \min\{ f^{\sigma_1},f^{\sigma_{n-1}} \} &\Rightarrow &\theta(f^\sigma)+\pi \ \in \ (\Tfp{n},\Tfp{1})_{2\pi},\label{eq-new25}\\
  f^{\sigma}  \ > \ \max\{ f^{\sigma_1},f^{\sigma_{n-1}}\} &\Rightarrow& \theta(f^{\sigma}) \ \in \  (\Tfp{n},\Tfp{1})_{2\pi}.\label{eq-new26}
\end{eqnarray}
 \end{lemma}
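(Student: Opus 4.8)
The plan is to notice that comparing $f^{\sigma}$ with its two adjacent shifts $f^{\sigma_1}$ and $f^{\sigma_{n-1}}$ is really a statement about three linear functions, which is then resolved by Lemmas~\ref{lemma:sin} and~\ref{lemma:theta_composition}. Set $a=f_{\sigma(1)}$, $b=f_{\sigma(n)}$, and $h=f_{\sigma(n-1)}\circ\cdots\circ f_{\sigma(2)}$ (the identity when $n=2$), so that $f^{\sigma}=b\circ h\circ a$. Unwinding the definition of the $k$-shift gives $f^{\sigma_1}=a\circ b\circ h$ and $f^{\sigma_{n-1}}=h\circ a\circ b$. Writing $q=b\circ h$ and $p=h\circ a$, we then have $f^{\sigma}=q\circ a=b\circ p$, $f^{\sigma_1}=a\circ q$, and $f^{\sigma_{n-1}}=p\circ b$; thus ``$f^{\sigma}$ versus $f^{\sigma_1}$'' and ``$f^{\sigma}$ versus $f^{\sigma_{n-1}}$'' are each a comparison of $X\circ Y$ with $Y\circ X$, exactly the situation that Lemma~\ref{lemma:sin}(i) handles.

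\textbf{Degeneracies.} Before applying it I would clear away the degenerate cases. Under either hypothesis of the lemma, a strict inequality holds among $f^{\sigma},f^{\sigma_1},f^{\sigma_{n-1}}$, and a one-line check with Lemma~\ref{lemma:theta_composition=x1} shows this forces $a,b,p,q$ all to be non-identical: e.g.\ if $a$ were identical then $f^{\sigma}=q\circ a=q=a\circ q=f^{\sigma_1}$, and similarly $b$ identical gives $f^{\sigma}=b\circ p=p=p\circ b=f^{\sigma_{n-1}}$, $p$ identical gives $f^{\sigma}=b=f^{\sigma_{n-1}}$, and $q$ identical gives $f^{\sigma}=a=f^{\sigma_1}$. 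Hence $a,b,p,q$ and $f^{\sigma}$ are all non-identical monotone linear functions, so both Lemmas~\ref{lemma:sin} and~\ref{lemma:theta_composition} apply to the pairs $(q,a)$ and $(b,p)$.

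\textbf{Main argument.} For \eqref{eq-new25}: from $q\circ a=f^{\sigma}<f^{\sigma_1}=a\circ q$, Lemma~\ref{lemma:sin}(i) gives $\theta(q)-\theta(a)\in(0,\pi)_{2\pi}$, and then Lemma~\ref{lemma:theta_composition}(i) gives $\theta(f^{\sigma})\in(\theta(a),\theta(q))_{2\pi}$, so the counterclockwise arc from $\theta(a)$ to $\theta(f^{\sigma})$ has length in $(0,\pi)$. Symmetrically, $b\circ p=f^{\sigma}<f^{\sigma_{n-1}}=p\circ b$ yields $\theta(b)-\theta(p)\in(0,\pi)_{2\pi}$ and $\theta(f^{\sigma})\in(\theta(p),\theta(b))_{2\pi}$, so the counterclockwise arc from $\theta(f^{\sigma})$ to $\theta(b)$ has length in $(0,\pi)$. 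Thus $\theta(a),\theta(f^{\sigma}),\theta(b)$ occur in this counterclockwise cyclic order with both gaps in $(0,\pi)$; an elementary arc computation — the counterclockwise arc from $\theta(b)$ to $\theta(a)$ equals $2\pi$ minus the two gaps, while $\theta(f^{\sigma})+\pi$ sits at counterclockwise distance $\pi$ minus the second gap from $\theta(b)$ — then places $\theta(f^{\sigma})+\pi$ strictly inside that arc, which is precisely \eqref{eq-new25} (since $\theta(f_{\sigma(n)})=\theta(b)$, $\theta(f_{\sigma(1)})=\theta(a)$). The implication \eqref{eq-new26} is the same argument with every inequality reversed: one obtains $\theta(f^{\sigma})\in(\theta(q),\theta(a))_{2\pi}$ and $\theta(f^{\sigma})\in(\theta(b),\theta(p))_{2\pi}$, so $\theta(b),\theta(f^{\sigma}),\theta(a)$ appear in this counterclockwise order with both gaps in $(0,\pi)$, whence $\theta(f^{\sigma})\in(\theta(b),\theta(a))_{2\pi}$ directly, with no shift by $\pi$ needed.

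\textbf{Main obstacle.} The only point requiring care is the modular bookkeeping of angles: one must verify that the intervals produced by Lemma~\ref{lemma:theta_composition}(i) are ``short'' (of length less than $\pi$), so that the two counterclockwise gaps genuinely add up without wrapping past $2\pi$, and confirm that the small degeneracies ($h$ identical, $n=2$, or some of the angles $\theta(a),\theta(b),\theta(p),\theta(q)$ coinciding) are already absorbed by the reduction above. Beyond this bookkeeping the argument is routine; its content is the passage to the three-function picture together with the planar angle computation.
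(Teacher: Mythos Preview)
Your proof is correct and is essentially the paper's own argument: the paper also factors $f^{\sigma}=q\circ a=b\circ p$ (with $q=\Fplr{n}{2}$, $p=\Fplr{n-1}{1}$), applies Lemmas~\ref{lemma:sin}(i) and~\ref{lemma:theta_composition}(i) to each pair, and then intersects the two resulting half-circles $(\Tfp{1}-\pi,\Tfp{1})_{2\pi}$ and $(\Tfp{n},\Tfp{n}+\pi)_{2\pi}$ to conclude containment in $(\Tfp{n},\Tfp{1})_{2\pi}$. Your explicit treatment of the degeneracies (forcing $a,b,p,q$ non-identical) is in fact more careful than the paper, which leaves that check implicit.
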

 
 \begin{proof}
Without loss of generality, we assume that $\sigma$ is the identity permutation.
To prove 
\eqref{eq-new25},  
we assume that  
$f^{\sigma} < f^{\sigma_1}$ and $f^{\sigma} <  f^{\sigma_{n-1}}$. 
Then we have 
     \begin{eqnarray*}
 f^{\sigma} < f^{\sigma_1}
  &\Leftrightarrow& \Tflr{n}{2} - \Tf{1} \in (0,\pi)_{2\pi}\\
  &\Leftrightarrow& \theta( f^{\sigma}) \in (\Tf{1},\Tf{1}+\pi)_{2\pi}\\
  &\Leftrightarrow &\theta( f^{\sigma})+\pi \in (\Tf{1}-\pi,\Tf{1})_{2\pi},
  \end{eqnarray*}
where the first and second equivalences follow from Lemmas \ref{lemma:sin} (i) and \ref{lemma:theta_composition} (i), respectively.
Similarly, by Lemmas \ref{lemma:sin} (i) and \ref{lemma:theta_composition},  we obtain 
\begin{eqnarray*}
  f^{\sigma} <  f^{\sigma_{n-1}}
  &\Leftrightarrow& \Tf{n} - \Tflr{n-1}{1}\in (0,\pi)_{2\pi}\\
  &\Leftrightarrow &\theta(f^{\sigma}) \in (\Tf{n}-\pi,\Tf{n})_{2\pi}\\
  &\Leftrightarrow &\theta(f^{\sigma})+\pi \in (\Tf{n},\Tf{n}+\pi)_{2\pi}.
     \end{eqnarray*}
Thus we have \[
\theta(f^{\sigma})+\pi \in (\Tf{1}-\pi,\Tf{1})_{2\pi} \, \cap \, (\Tf{n},\Tf{n}+\pi)_{2\pi} \ \subseteq \  (\Tf{n},\Tf{1})_{2\pi}.
\]

To prove \eqref{eq-new26}, assume that 
$f^{\sigma}   >  f^{\sigma_1}$ and $f^\sigma > f^{\sigma_{n-1}}$. Then by applying an argument similar to the proof  of  \eqref{eq-new25}, 
we obtain 
\begin{eqnarray*}
  f^{\sigma} > f^{\sigma_1}
  &\Leftrightarrow& \Tf{1} - \Tflr{n}{2} \in (0,\pi)_{2\pi}\\
  &\Leftrightarrow& \theta(f^{\sigma}) \in (\Tf{1}-\pi,\Tf{1})_{2\pi},\\
  f^{\sigma} > f^{\sigma_{n-1}}
  &\Leftrightarrow& \Tflr{n-1}{1} - \Tf{n} \in (0,\pi)_{2\pi}\\
  &\Leftrightarrow& \theta(f^{\sigma}) \in (\Tf{n},\Tf{n}+\pi)_{2\pi}, 
\end{eqnarray*}
which implies 
\[
\theta(f^{\sigma}) \in (\Tf{1}-\pi,\Tf{1})_{2\pi} \, \cap \, (\Tf{n},\Tf{n}+\pi)_{2\pi} \ \subseteq \  (\Tf{n},\Tf{1})_{2\pi}.
\]
 \end{proof}

 \begin{lemma}\label{lemma:unimodal}
Let $\tau:[n]\to [n]$ be a counterclockwise permutation for monotone linear functions $f_1, \dots , f_n$.   
Then $f^{\sigma}$ is unimodal for permutations $\sigma \in E_\tau$.
\end{lemma}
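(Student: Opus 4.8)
The plan is to read off unimodality from the local ``adjacent property'' of Lemma~\ref{lemma:yamatani}, using the hypothesis that $\tau$ is counterclockwise to control where the relevant angles can lie. First I would record the structure of $E_\tau$: writing $\tau_n:=\tau$, the set $E_\tau=\{\tau_1,\dots,\tau_n\}$ is cyclically ordered, the cyclic successor of $\sigma=\tau_j$ is its $1$-shift $\sigma_1=\tau_{j+1}$ and the cyclic predecessor is $\sigma_{n-1}=\tau_{j-1}$ (indices modulo $n$), and $\sigma(n)=\tau(j)$, $\sigma(1)=\tau(j+1)$. Since every composite $f^\sigma$ with $\sigma\in E_\tau$ has the same slope $\prod_i\alpha(f_i)$, we have $f^\sigma\le f^{\sigma'}\iff\beta(f^\sigma)\le\beta(f^{\sigma'})$, so it suffices to prove that the cyclic sequence $j\mapsto\beta(f^{\tau_j})$ is unimodal. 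We may delete any identical $f_i$ (it changes no composite) and induct on $n$; and if $f^{\tau_j}$ is identical for some $j$, then by Lemma~\ref{lemma:MainTheorem2-1} every element of $E_\tau$ yields the identity and the sequence is constant, hence unimodal. So assume from now on that all $f_i$ and all $f^{\tau_j}$ are non-identical, in particular each $\theta(f^{\tau_j})\in[0,2\pi)$ is defined.

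Next comes the core argument, by contradiction. If $j\mapsto\beta(f^{\tau_j})$ is not unimodal, then after collapsing maximal constant runs the cycle has two distinct ``valleys'' separated on both sides by ``peaks'', so there are indices, in cyclic order $j\prec m\prec j'\prec m'$, at which $f^{\tau_\bullet}$ is strictly smaller than, strictly larger than, strictly smaller than, strictly larger than both of its cyclic neighbours, respectively. Applying Lemma~\ref{lemma:yamatani} at these four indices yields
\[
  \theta(f^{\tau_j})+\pi\in A_j,\qquad \theta(f^{\tau_m})\in A_m,\qquad \theta(f^{\tau_{j'}})+\pi\in A_{j'},\qquad \theta(f^{\tau_{m'}})\in A_{m'},
\]
where $A_i:=(\theta(f_{\tau(i)}),\theta(f_{\tau(i+1)}))_{2\pi}$. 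Because $\tau$ is counterclockwise, the angles $\theta(f_{\tau(1)}),\dots,\theta(f_{\tau(n)})$ sweep the circle monotonically exactly once, so the arcs $A_1,\dots,A_n$ are pairwise disjoint and their cyclic order around the circle matches the cyclic order of the indices. Hence the four angles above occur around the circle in the cyclic order $\theta(f^{\tau_j})+\pi,\ \theta(f^{\tau_m}),\ \theta(f^{\tau_{j'}})+\pi,\ \theta(f^{\tau_{m'}})$. I would then combine this placement with the way the composite angle $\theta(f^{\tau_i})=\theta\bigl((\text{product of the other }n-1)\circ f_{\tau(i+1)}\bigr)$ migrates along a maximal monotone run of $\beta$-values (its motion being controlled by Lemma~\ref{lemma:theta_composition}) to force two of these four placements into one arc $A_i$, contradicting disjointness; equivalently, the cycle can have only one valley and one peak, i.e.\ it is unimodal.

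The delicate point is the ``collapsing constant runs'' step, since Lemma~\ref{lemma:yamatani} as stated needs strict inequalities with the \emph{immediate} neighbours, whereas inside a plateau $\beta(f^{\tau_j})=\beta(f^{\tau_{j+1}})=\cdots$ that fails. By the identity displayed in the proof of Lemma~\ref{lemma:sin}, such a plateau forces $\theta(h_j)-\theta(f_{\tau(j+1)})\in\{0,\pi\}_{2\pi}$ (or one of the two vectors vanishes), where $h_j$ is the product of the other $n-1$ functions in the cyclic order of $\tau$; the value $\pi$ is impossible here (it makes $f^{\tau_j}$ identical, Lemma~\ref{lemma:theta_composition}(iv)), so $\theta(h_j)=\theta(f_{\tau(j+1)})$ and thus $\theta(f^{\tau_j})=\theta(f_{\tau(j+1)})$ by Lemma~\ref{lemma:theta_composition}(iii). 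I would exploit this to treat each maximal plateau as a single unit: either it lets a block of consecutive inputs be replaced by one function (decreasing $n$, using the commuting relation of Lemma~\ref{lemma:sin}(ii) exactly as in the proof of Lemma~\ref{lemma:MainTheorem2}), or, in the degenerate case where $h_j$ is itself the identity, every $f^{\tau_i}$ turns out to be a conjugate of the single function $f_{\tau(j+1)}$ and unimodality is verified by a direct computation; in either case a $k$-shift version of Lemma~\ref{lemma:yamatani} can be applied to a plateau's endpoints against their non-plateau neighbours.

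I expect two places to be the main obstacles. One is this plateau bookkeeping: the reductions must be set up so that they strictly shrink the instance and provably do not create spurious extra extrema in the $\beta$-sequence. The other is closing the contradiction in the core argument — turning the mere cyclic ordering of the four extremal angles into an actual coincidence of two of them within one $A_i$ requires a careful, mildly case-heavy analysis of how $\theta(f^{\tau_i})$ evolves across an increasing, resp.\ decreasing, run of shifts, and it is here that counterclockwiseness of $\tau$ enters decisively.
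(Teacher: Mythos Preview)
Your overall architecture is right and matches the paper's: apply Lemma~\ref{lemma:yamatani} at local extrema of the shift cycle and derive a contradiction from counterclockwiseness. But the two steps you flag as ``obstacles'' are exactly where your proposal diverges from the paper, and in both places the paper's move is simpler than what you sketch.

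First, the endgame. You propose to ``track how $\theta(f^{\tau_i})$ migrates along a monotone run'' so as to ``force two of the four placements into one arc $A_i$.'' That is not how the contradiction closes, and no migration analysis is needed. The point---which you recorded in your first paragraph and then dropped---is that every $f^{\tau_i}$ has the same slope $\prod_j\alpha(f_j)$, so every $\theta(f^{\tau_i})$ lies in one fixed set among $\{0,\pi\}$, $(0,\pi)$, $(\pi,2\pi)$. Your four yamatani outputs then place four points, in cyclic order on the circle, \emph{alternately} in the upper and lower open half-circles; a single revolution crosses $\{0,\pi\}$ only twice, so this is impossible. That three-case split on $\prod_j\alpha(f_j)$ is precisely what the paper does.

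Second, the plateau bookkeeping. The paper eliminates it by a reduction you do not make: it first anchors at a globally optimal shift $\sigma$ (one exists because any optimum is counterclockwise by Lemma~\ref{lemma:localopt_counterclockwise}, and swapping equal-angle neighbours does not change the composite), then, assuming non-unimodality, takes the first descent $\ell$ and the first subsequent ascent $r$ and passes to the four block functions
\[
g_1=f_{\sigma(\ell)}\circ\cdots\circ f_{\sigma(1)},\quad
g_2=f_{\sigma(r)}\circ\cdots\circ f_{\sigma(\ell+1)},\quad
g_3=f_{\sigma(r+1)},\quad
g_4=f_{\sigma(n)}\circ\cdots\circ f_{\sigma(r+2)}.
\]
In this four-term cycle the strict inequalities $\max\{f^{\sigma},f^{\sigma_r}\}<\min\{f^{\sigma_\ell},f^{\sigma_{r+1}}\}$ follow directly from $\sigma$ being minimal among shifts, so Lemma~\ref{lemma:yamatani} applies at \emph{each} of the four vertices with no plateaus to collapse; and $(g_1,g_2,g_3,g_4)$ is counterclockwise by Lemma~\ref{lemma:localopt_counterclockwise} because it inherits local optimality from $\sigma$. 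Your $n$-level argument can be made to work once you supply the slope observation above, but the four-block reduction is what makes the paper's proof short and what removes both of your obstacles at once.
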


\begin{proof}
If linear functions are colinear or potentially identical, 
then 
 the statement in the lemma holds by  
Theorem \ref{theorem:main_theorem-x} (i) and Lemmas  \ref{lemma:localopt_counterclockwise} 
 and \ref{lemma:MainTheorem2}.  
Otherwise, by Lemma \ref{lemma:localopt_counterclockwise} we can take $\sigma_k$ that is optimal and rename $\sigma_k$ to $\sigma$. 
We will show that $(f^{\sigma}, f^{\sigma_{1}}, \dots , f^{\sigma_{n-1}})$ is unimodal.

Suppose, on the contrary, that 
the sequence is not unimodal. 
Define two positive integers $\ell$ and $r$ by 
\begin{align*}
  \ell&=\min\{i\in[n-2]\mid f^{\sigma_i}>f^{\sigma_{i+1}}\} ,
  \\
  r&=\min\{i\in[n-2]\mid i > \ell,  f^{\sigma_i}<f^{\sigma_{i+1}}\}.
\end{align*}
Then we have 
\begin{equation}
\label{eq--eq1}
\max\{  f^{\sigma}, f^{\sigma_{r}}\}< \min \{f^{\sigma_\ell}, f^{\sigma_{r+1}}\}.
\end{equation}
Let $g_1,g_2,g_3$, and $g_4$ be monotone linear functions defined by 
     \begin{align*}
  g_1 &= \Flr{\sigma(\ell)}{\sigma(1)},\\
  g_2 &= \Flr{\sigma(r)}{\sigma(\ell+1)},\\
  g_3 &= f_{\sigma(r+1)},\\
  g_4 &= \Flr{\sigma(n)}{\sigma(r+2)}.
     \end{align*}
     Then we can see that 
\begin{align*}
  f^{\sigma} &=g_4\circ g_3\circ g_2\circ g_1, \\
 f^{\sigma_\ell}&= g_1\circ g_4\circ g_3\circ g_2,\\
f^{\sigma_r} &= g_2\circ g_1\circ g_4\circ g_3,\\
 f^{\sigma_{r+1}}&= g_3\circ g_2\circ g_1\circ g_4.
\end{align*}
Since $f^{\sigma}<f^{\sigma_\ell}$, 
$g_1,\dots,g_4$ are not colinear by Theorem \ref{theorem:main_theorem-x} (i).
Hence it follows from Lemma \ref{lemma:localopt_counterclockwise} that  $(g_1,\dots,g_4)$ 
is counterclockwise.
Moreover, by Lemma \ref{lemma:yamatani} and \eqref{eq--eq1}, we have 
\begin{equation}
\label{eq--eq2}
\begin{array}{rll}
  \theta(g_4\circ g_3\circ g_2\circ g_1)+\pi &\in  & (\theta(g_4),\theta(g_1))_{2\pi},\\
  \theta(g_1\circ g_4\circ g_3\circ g_2) & \in  &(\theta(g_1),\theta(g_2))_{2\pi},\\
  \theta(g_2\circ g_1\circ g_4\circ g_3)+\pi & \in &(\theta(g_2),\theta(g_3))_{2\pi},\\
  \theta(g_3\circ g_2\circ g_1\circ g_4) & \in  &(\theta(g_3),\theta(g_4))_{2\pi}.
\end{array}     
\end{equation}
We show that \eqref{eq--eq2} derives a contradiction by separately considering the following three cases. 

\smallskip

{\bf Case 1}.  If $\prod_{i\in[n]} \alpha(f_i)=1$, then  we have  $\theta(f^\nu)\in\{0,\pi\}$ for any permutation $\nu:[n]\to [n]$.
     Therefore, \eqref{eq--eq2}  implies that  $\{0,\pi\} \cap (\theta(g_i),\theta(g_{i+1}))_{2\pi} \neq \emptyset$ for any $i\in[4]$, where we denote $g_5 = g_1$.
However, this contradicts that $(g_1,\dots,g_4)$ is counterclockwise.

\smallskip
{\bf Case 2}. If  $\prod_{i\in[n]} \alpha(f_i)<1$, then we have $\theta(f^\nu)\in(0,\pi)$ for any permutation $\nu:[n]\to [n]$.
Therefore, the following four sets are nonempty:
$(\theta(g_1),\theta(g_2))_{2\pi} \cap (0,\pi)$, 
$(\theta(g_2),\theta(g_3))_{2\pi}\cap (\pi,2\pi)$,  
$(\theta(g_3),\theta(g_4))_{2\pi} \cap (0,\pi)$, and  $(\theta(g_4),\theta(g_1))_{2\pi} \cap (\pi,2\pi)$. 
This again contradicts that $(g_1,\dots,g_4)$ is counterclockwise.

\smallskip
{\bf Case 3}. Otherwise (i.e.,  $\prod_{i\in[n]} \alpha(f_i)>1$), 
it can be proven similarly to Case 2.  
In fact, $\theta(f^\nu)\in(\pi,2\pi)$ holds for any permutation $\nu:[n]\to [n]$.
Hence, we have the following four nonempty sets:
$(\theta(g_1),\theta(g_2))_{2\pi} \cap  (\pi,2\pi)$,
$(\theta(g_2),\theta(g_3))_{2\pi}\cap  (0,\pi)$, 
$(\theta(g_3),\theta(g_4))_{2\pi} \cap  (\pi,2\pi)$, and  
 $(\theta(g_4),\theta(g_1))_{2\pi}\cap  (0,\pi)$, which again contradicts that $(g_1,\dots,g_4)$ is counterclockwise.
 \end{proof}

Example \ref{ex-unimodals} provides an instance for the unimodality.  
In the next subsection, we further prove a stronger property of counterclockwise permutations. 

\begin{example}
\label{ex-unimodals}
Consider the following five monotone linear functions
\[
f_1=\frac{1}{2}x+1, \
f_2=\frac{1}{3}x-1, \
f_3=2x-2 \ 
f_4=2x-1,  \mbox{ and }\
f_5=3x.  
 \]
Then their vectors are given as follows (See Figure \ref{figure:ex--cc}):
 \[
\Ora{f_1}=\Mto{1}{\frac{1}{2}}, \
\Ora{f_2}=\Mto{-1}{\frac{2}{3}}, \
\Ora{f_3}=\Mto{-2}{-1}, \
\Ora{f_4}=\Mto{-1}{-1}, \mbox{ and }\ 
\Ora{f_5}=\Mto{0}{-2}. 
 \]
Note that the identity permutation ${\rm id}:[n]\to [n]$ is counterclockwise for $f_i$'s, and moreover, by Lemma \ref{lemma:localopt_counterclockwise}, we can see that it is optimal, since  
\[
f^{{\rm id}}=2x-23, \,
f^{{\rm id}_1}=2x-\frac{27}{2}, \,
f^{{\rm id}_2}=2x-\frac{19}{6}, \,
f^{{\rm id}_3}=2x-\frac{13}{3}, \, 
f^{{\rm id}_4}=2x-\frac{23}{3}, 
 \]
which also shows that $(f^{{\rm id}}, f^{{\rm id}_1}, f^{{\rm id}_2},f^{{\rm id}_3},f^{{\rm id}_4})$ is unimodal. 
 \begin{figure}[htb]
 \centering
 \includegraphics[width=70mm]{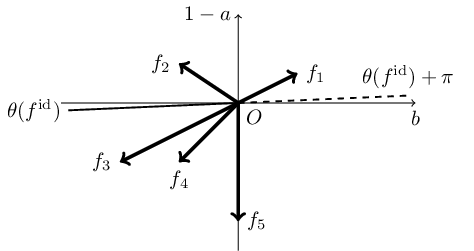}
  \caption{The vector representation for $f_1, \ldots, f_5$.}
  \label{figure:ex--cc}
\end{figure}
 \end{example}

 \begin{proof}[Proof of Theorem \ref{theorem:main_theorem-x} (iii)]
 Without loss of generality, we can assume that all $f_i$'s are non-identical. 
To show the only-if part, let us assume without loss of generality that 
an optimal permutation $\sigma:[n]\to [n]$
for $f_i$'s is the identity. 
Then  $\sigma$ is counterclockwise for $f_i$'s by Lemma \ref{lemma:localopt_counterclockwise}.
We also note that  $\Tflr{n}{2}$ and 
  $\Tflr{n-1}{1}$ are both non-identical by Lemma \ref{lemma-cor:no_x}. 
 Since $\sigma$ is optimal, Lemma \ref{lemma:sin} implies 
     \begin{align*}
  \Tflr{n}{2}-\Tf{1}, \  
  \Tf{n}-\Tflr{n-1}{1} \,\, \in \,\, [0,\pi]_{2\pi}.  
     \end{align*}
By this together with Lemma \ref{lemma:theta_composition}, we have 
     \begin{align*}
  \theta(f^{\sigma})&\ \in \ [\Tf{1},\Tf{1}+\pi]_{2\pi} \ \cap \ [\Tf{n}-\pi,\Tf{n}]_{2\pi},  
     \end{align*}
which  is equivalent to the condition 
     \begin{align*}
  \theta(f^{\sigma})+\pi& \ \in\ [\Tf{1}-\pi,\Tf{1}]_{2\pi} \ \cap \
 [\Tf{n},\Tf{n}+\pi]_{2\pi}.  
\end{align*}
Therefore, we have  $\theta(f^{\sigma})+\pi\in[\Tf{n},\Tf{1}]_{2\pi}$, which completes the proof of the only-if part.     
     
 
For the if-part, we assume that the identity $\sigma\,(={\rm id})$ is  counterclockwise for $f_i$'s and satisfies  that $\theta(f^{\sigma})+\pi\in[\Tf{n},\Tf{1}]_{2\pi}$. 
We first show the following two containment relationships:
     \begin{align}
  \Tf{1}&\ \in \ [\theta(f^{\sigma})-\pi,\theta(f^{\sigma}))_{2\pi}\label{equation:hoge},\\
  \Tf{n}&\ \in \ (\theta(f^{\sigma}),\theta(f^{\sigma})+\pi]_{2\pi}\label{equation:huga}.
     \end{align}
     To prove \eqref{equation:hoge}, suppose, on the contrary, that $\Tf{1}\in[\theta(f^{\sigma}),\theta(f^{\sigma})+\pi)_{2\pi}$.
     Then it follows that 
     \begin{align}
     \label{eqlg1}
\theta(f^{\sigma})+\pi\ \in \ (\Tf{1},\Tf{1}+\pi]_{2\pi} \ \cap \ [\Tf{n},\Tf{1}]_{2\pi}, 
     \end{align}
which implies that    
  $\Tf{n} \ \in \ (\Tf{1},\Tf{1}+\pi]_{2\pi}$. 
Since $\sigma$ is counterclockwise, it holds that $\Tf{i}\in [\Tf{1},\Tf{1}+\pi]_{2\pi}$ for any $i\in[n]$.
By this together with non-colinearity of $f_i$'s,  Lemma \ref{lemma:sin} implies 
that $\theta(f^{\sigma}) \in (\Tf{1},\Tf{n})_{2\pi} \subseteq (\Tf{1},\Tf{1}+\pi)_{2\pi}$, which contradicts \eqref{eqlg1}. 
Thus we have \eqref{equation:hoge}.

Similarly, to prove \eqref{equation:huga}, 
we suppose, on the contrary, that $\Tf{n}\in(\theta(f^{\sigma})-\pi,\theta(f^{\sigma})]_{2\pi}$.
     Then it follows that 
     \begin{align}
     \label{eqlg2}
\theta(f^{\sigma})+\pi\ \in \ [\Tf{n}-\pi,\Tf{n})_{2\pi} \ \cap \ [\Tf{n},\Tf{1}]_{2\pi}, 
     \end{align}
which implies that    
  $\Tf{1} \ \in \ [\Tf{n}-\pi,\Tf{n})_{2\pi}$.
Since $\sigma$ is counterclockwise, it holds that $\Tf{i}\in [\Tf{n}-\pi,\Tf{n}]_{2\pi}$ for any $i\in[n]$.
By this together with non-colinearity of $f_i$'s,  Lemma \ref{lemma:sin} implies 
that $\theta(f^{\sigma}) \in (\Tf{1},\Tf{n})_{2\pi} \subseteq (\Tf{n}-\pi,\Tf{n})_{2\pi}$, which contradicts \eqref{eqlg2}. 
Thus we have \eqref{equation:huga}.

Let us denote two integers $\ell$ and $r$ by 
     \begin{align*}
  \ell=\min\{i\in[n]\mid\Tf{i}\neq_{2\pi}\theta(f^{\sigma})+\pi\}, \\
  r=\max\{i\in[n]\mid\Tf{i}\neq_{2\pi}\theta(f^{\sigma})+\pi\}.
     \end{align*}
Then it follows from  Lemmas \ref{lemma:theta_composition=x1}, \ref{lemma:sin}, and \ref{lemma:theta_composition} that $f^{\sigma_i}=f^{\sigma}$ for any $i$ with $0\leq i < \ell$ or $r\leq i < n$. 
By Lemma \ref{lemma:unimodal}, we only need to show that $f^{\sigma}<f^{\sigma_{\ell}}$ and $f^{\sigma}<f^{\sigma_{r-1}}$.

By Lemma  \ref{lemma:sin}, we note that      
  $f^{\sigma}\,(=f^{\sigma_{\ell-1}})<f^{\sigma_{\ell}}$ 
if and only if  $
    \theta(f_{\ell-1}\circ \dots \circ f_1 \circ f_n \circ \dots \circ f_{\ell+1} )-\theta(f_\ell) \in (0,\pi)_{2\pi} $, 
    which is also equivalent to 
 $\Tf{\ell}\in( \theta(f^{\sigma})-\pi,\theta(f^{\sigma}))_{2\pi}$.
If $\ell=1$, then 
this is satisfied by \eqref{equation:hoge}, 
since $\Tf{1} \not=_{2\pi}\theta(f^{\sigma})-\pi$ by the definition of $l$. 
If $\ell>1$, we again have $\Tf{\ell} \in (\theta(f^{\sigma})-\pi, \theta(f^\sigma))_{2\pi}$ by 
 Lemma \ref{lemma:sin}, 
since otherwise, $\theta(f_i) \in [\theta(f^{\sigma}), \theta(f^\sigma)+\pi]_{2\pi}$ holds for all $i$, which together with their non-colinearity implies that 
$\theta(f^\sigma) \in (\theta(f^{\sigma}), \theta(f^\sigma)+\pi)_{2\pi}$, a contradiction. 
Therefore, we can conclude that 
$f^{\sigma}<f^{\sigma_{\ell}}$. 

Similarly,  
we note that  
 $f^{\sigma}\,(=f^{\sigma_{r}})<f^{\sigma_{r-1}}$ 
if and only if  $
\theta(f_r) -    \theta(f_{r-1}\circ \dots \circ f_1 \circ f_n \circ \dots \circ f_{r+1} )\in (0,\pi)_{2\pi} $,   which is also equivalent to 
 $\Tf{r}\in( \theta(f^{\sigma}),\theta(f^{\sigma})+\pi)_{2\pi}$.
If $r=n$, then 
this is satisfied by \eqref{equation:huga}, 
since $\Tf{n} \not=_{2\pi}\theta(f^{\sigma})+\pi$ by the definition of $r$. 
If $r<n$, 
we again have $\Tf{r} \in (\theta(f^{\sigma}), \theta(f^\sigma)+\pi)_{2\pi}$ by 
 Lemma \ref{lemma:sin}, 
since otherwise 
 $\theta(f_i) \in [\theta(f^{\sigma})-\pi, \theta(f^\sigma)]_{2\pi}$ holds for all $i$, which together with their non-colinearity implies that 
$\theta(f^\sigma) \in (\theta(f^{\sigma})-\pi, \theta(f^\sigma))_{2\pi}$, a contradiction. 
Therefore, we can conclude that 
$f^{\sigma}<f^{\sigma_{r-1}}$, which completes the proof. 
 \end{proof}

\subsection{Structural Properties of Composition of Monotone Linear Functions.} 
We have proved the various properties of the optimal permutations so far. 
Using them, we will show the local structure of composition of monotone linear functions. 
We obtain the following interesting properties. 

 \begin{theorem}\label{theorem:localopt_opt}
    For monotone linear functions $f_1, \dots, f_n$,  a permutation $\sigma:[n]\to [n]$ is optimal if and only if it is locally optimal.
 \end{theorem}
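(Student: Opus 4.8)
The plan is to prove both directions. One direction is trivial: a globally optimal permutation is automatically locally optimal, since $N(\sigma)$ is a subset of all permutations, so $f^\sigma \le f^\mu$ for all $\mu$ implies $f^\sigma \le f^\mu$ for all $\mu \in N(\sigma)$. The real content is the converse: every locally optimal permutation is globally optimal. So let $\sigma$ be locally optimal and let $\sigma^*$ be globally optimal; I must show $f^\sigma = f^{\sigma^*}$.

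I would argue by cases according to the structure of the $f_i$'s, leaning on Theorem \ref{theorem:main_theorem-x} and the lemmas already proved. First, if the $f_i$'s are colinear, then by Theorem \ref{theorem:main_theorem-x}(i) every permutation is optimal, so in particular $\sigma$ is, and we are done. Second, if they are not colinear, then by Lemma \ref{lemma:localopt_counterclockwise} the locally optimal $\sigma$ is counterclockwise. Now split further: if the $f_i$'s are potentially identical, then by Theorem \ref{theorem:main_theorem-x}(ii) a permutation is minimum if and only if it is counterclockwise, so $\sigma$ being counterclockwise makes it optimal. The remaining case is that the $f_i$'s are neither colinear nor potentially identical; here I invoke Theorem \ref{theorem:main_theorem-x}(iii), which says a permutation is optimal iff it is counterclockwise and satisfies the angle condition $\theta(f^\sigma)+\pi \in [\theta(f_{\sigma(s)}),\theta(f_{\sigma(t)})]_{2\pi}$ with $s,t$ the first/last non-identical indices. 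Since $\sigma$ is counterclockwise, I only need to verify this angle condition for a locally optimal $\sigma$.

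The heart of the argument is therefore that angle condition, and I expect it to be the main obstacle. The idea is exactly the computation already carried out in the first half of the proof of Theorem \ref{theorem:main_theorem-x}(iii): WLOG assume $\sigma$ is the identity and all $f_i$ are non-identical (so $s=1$, $t=n$). Local optimality, applied to the splits $(\ell,m,r)=(1,1,n)$ and $(\ell,m,r)=(1,n-1,n)$ in Lemma \ref{lemma-mx1}, gives
\[
\theta(f_n\circ\cdots\circ f_2)-\theta(f_1)\in[0,\pi]_{2\pi},\qquad
\theta(f_n)-\theta(f_{n-1}\circ\cdots\circ f_1)\in[0,\pi]_{2\pi}.
\]
Then Lemma \ref{lemma:theta_composition}, combined with Lemma \ref{lemma-cor:no_x} (which guarantees $f_n\circ\cdots\circ f_2$ and $f_{n-1}\circ\cdots\circ f_1$ are non-identical under the non-colinearity / non-potential-identity hypotheses), yields $\theta(f^\sigma)\in[\theta(f_1),\theta(f_1)+\pi]_{2\pi}\cap[\theta(f_n)-\pi,\theta(f_n)]_{2\pi}$, which rearranges to $\theta(f^\sigma)+\pi\in[\theta(f_n),\theta(f_1)]_{2\pi}$. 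This is precisely the condition in Theorem \ref{theorem:main_theorem-x}(iii), so $\sigma$ is optimal.

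One subtlety I would be careful about is the reduction to the non-identical case: identical functions $f_i$ (those with $\theta(f_i)=\bot$) do not affect the composite by Lemma \ref{lemma:theta_composition=x1}, and removing them from $\sigma$ preserves both local optimality (the neighborhood operations on the reduced sequence lift back) and the counterclockwise/angle properties, so the reduction is clean; but I should state it rather than wave at it. With that handled, the four cases above are exhaustive and the theorem follows.
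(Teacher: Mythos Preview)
Your proposal is correct and follows the same case decomposition as the paper (colinear / potentially identical / neither, using Lemma~\ref{lemma:localopt_counterclockwise} to get counterclockwiseness). The only difference is in the final case: you directly re-verify the angle condition of Theorem~\ref{theorem:main_theorem-x}(iii) by observing that the only-if argument there uses only comparisons with neighbours (i.e.\ local optimality), whereas the paper argues more indirectly that Theorem~\ref{theorem:main_theorem-x}(iii) guarantees, for any counterclockwise $\nu$, an optimal permutation among its shifts $\nu_k\in N(\nu)$, so a locally optimal counterclockwise $\sigma$ is automatically globally optimal. Your route is slightly more explicit and self-contained; the paper's is a one-line appeal to the already-proved characterization. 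Both are fine, and the substance is the same.
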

 
\begin{proof}
 Note that the only-if part is immediate from the definitions, 
 and moreover, the if part follows from Theorem \ref{theorem:main_theorem-x} (i),  if $f_i$'s are colinear. 
Thus, 
 we only consider the if part when 
$f_i$'s  are not colinear.   
Let $\sigma:[n]\to [n]$ be a locally optimal permutation for $f_i$'s.  
     By lemma \ref{lemma:localopt_counterclockwise}, the $\sigma$ is counterclockwise.
Hence Theorem \ref{theorem:main_theorem-x} (ii) implies that $\sigma$ is optimal for $f_i$'s if they are potentially identical. 
On the other hand, if they are not potentially identical, then 
by Theorem \ref{theorem:main_theorem-x} (iii), 
for any counterclockwise permutation $\nu$, there exists an optimal permutation $\mu\in N(\nu)$. This implies that $\sigma$ must be optimal, which completes the proof. 
 \end{proof}

For a  cyclically ordered set $E=\{ e_1, \ldots, e_m\}$  and  a weakly ordered set $S$,   a  function $g:E\to S$  is called  {\em strictly unimodal}  if
1) it is unimodal and  
2) $g(e_k)=g(e_{k+1})$ implies that 
they attain either minimum or maximum.

 \begin{theorem}\label{theorem:strong_unimodal}
Let $\tau:[n]\to [n]$
be a counterclockwise permutation for monotone non-identical linear functions $f_1, \dots , f_n$.  
Then $f^{\sigma}$ is strictly unimodal for permutations $\sigma \in E_\tau$.
Furthermore, if $f_i's$ 
 are not potentially identical or colinear, then $f^{\sigma_k} = f^{\sigma_{k+1}}$
 implies  exactly one of the following two conditions.  
     \begin{description}
  \item[{\rm (i)}] $\sigma_k$ and $\sigma_{k+1}$ are minimum permutations in their shifts $($i.e., $f^{\sigma_k} = f^{\sigma_{k+1}}=\min_{0\leq i < n}f^{\sigma_i})$ \ such that    \ $\Tfp{k+1}+\pi=_{2\pi} \theta(f^{\sigma_k})$.
  \item[{\rm (ii)}] $\sigma_k$ and $\sigma_{k+1}$ are maximum permutations in their shifts $($i.e., $f^{\sigma_k} = f^{\sigma_{k+1}}=\max_{0\leq i < n}f^{\sigma_i})$ \ such that    \ $\Tfp{k+1}=_{2\pi} \theta(f^{\sigma_k})$.
      \end{description}
 \end{theorem}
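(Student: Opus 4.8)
The plan is to reduce both assertions to the explicit dichotomy (i)/(ii), and then to read that dichotomy off from the commutation relation governing $f^{\sigma_k}=f^{\sigma_{k+1}}$ together with Theorem~\ref{theorem:main_theorem-x}(iii) and a maximization counterpart of it. First, for strict unimodality: if $f_1,\dots,f_n$ are colinear or potentially identical, then by Theorem~\ref{theorem:main_theorem-x}(i) and Lemmas~\ref{lemma:localopt_counterclockwise}, \ref{lemma:MainTheorem2} the sequence $(f^{\sigma})_{\sigma\in E_\tau}$ is constant, hence trivially strictly unimodal; since the ``furthermore'' hypothesis excludes these cases, we may assume the $f_i$ are neither colinear nor potentially identical and (as in Lemma~\ref{lemma:unimodal}) all non-identical. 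Then $(f^\sigma)_{\sigma\in E_\tau}$ is unimodal by Lemma~\ref{lemma:unimodal}, so it suffices to prove the ``furthermore'' statement, which asserts precisely that every consecutive equal pair attains the minimum or the maximum of the sequence.

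So fix $k$ with $f^{\sigma_k}=f^{\sigma_{k+1}}$. Cyclically relabel the functions so that $\sigma_k$ becomes the identity permutation $\mathrm{id}$ and $\sigma_{k+1}$ its $1$-shift $\mathrm{id}_1$; write the relabeled functions as $g_1,\dots,g_n$, so that $g_1=f_{\sigma(k+1)}$, and note that $\mathrm{id}$ stays counterclockwise since counterclockwiseness is preserved by $k$-shifts. Setting $h=g_n\circ\cdots\circ g_2$ we have $f^{\mathrm{id}}=h\circ g_1$ and $f^{\mathrm{id}_1}=g_1\circ h$, so $f^{\mathrm{id}}=f^{\mathrm{id}_1}$ together with Lemma~\ref{lemma:sin} forces that $h$ is identical or $\theta(h)-\theta(g_1)\in\{0,\pi\}_{2\pi}$; and since the $g_i$ are not potentially identical, $f^{\mathrm{id}}$ is non-identical, so Lemmas~\ref{lemma:vector_composition} and \ref{lemma:theta_composition} give $\theta(f^{\mathrm{id}})\in\{\theta(g_1),\theta(g_1)+\pi\}_{2\pi}$. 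Since $\theta(g_1)$ and $\theta(g_1)+\pi$ are not congruent modulo $2\pi$, the two cases below are mutually exclusive, which yields the ``exactly one'' clause.

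If $\theta(f^{\mathrm{id}})=_{2\pi}\theta(g_1)+\pi$, then $\theta(f^{\mathrm{id}})+\pi=_{2\pi}\theta(g_1)$, an endpoint of the interval $[\theta(g_n),\theta(g_1)]_{2\pi}$; since $\mathrm{id}$ is counterclockwise, Theorem~\ref{theorem:main_theorem-x}(iii) shows $\mathrm{id}$ is a minimum permutation, and the same argument for $\mathrm{id}_1$ (whose first and last non-identical functions are $g_2,g_1$, and for which $\theta(f^{\mathrm{id}_1})+\pi=_{2\pi}\theta(g_1)$ is again an endpoint of the relevant interval $[\theta(g_1),\theta(g_2)]_{2\pi}$) shows $\mathrm{id}_1$ is minimum too. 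A global minimum permutation is in particular a minimum among the shifts, so $f^{\sigma_k}=f^{\sigma_{k+1}}=\min_{0\le i<n}f^{\sigma_i}$ and $\theta(f_{\sigma(k+1)})+\pi=_{2\pi}\theta(f^{\sigma_k})$, i.e.\ (i) holds.

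The remaining case $\theta(f^{\mathrm{id}})=_{2\pi}\theta(g_1)$ is the main obstacle, because a counterclockwise permutation need not be a \emph{global} maximum (being counterclockwise it is typically not clockwise), so the maximization version of Theorem~\ref{theorem:main_theorem-x} does not apply verbatim. I would instead prove the natural dual of Theorem~\ref{theorem:main_theorem-x}(iii) for the maximum \emph{among the shifts}: a counterclockwise permutation $\sigma$ attains $\max_{0\le i<n}f^{\sigma_i}$ iff $\theta(f^\sigma)\in[\theta(f_{\sigma(n)}),\theta(f_{\sigma(1)})]_{2\pi}$. This mirrors the ``if'' part of the proof of Theorem~\ref{theorem:main_theorem-x}(iii): using \eqref{eq-new26} of Lemma~\ref{lemma:yamatani} in place of \eqref{eq-new25}, one shows $f^\sigma>f^{\sigma_\ell}$ and $f^\sigma>f^{\sigma_{r-1}}$ at the boundary indices $\ell=\min\{i:\theta(f_{\sigma(i)})\neq_{2\pi}\theta(f^\sigma)\}$ and $r=\max\{i:\theta(f_{\sigma(i)})\neq_{2\pi}\theta(f^\sigma)\}$, and then Lemma~\ref{lemma:unimodal} forces $\sigma$ to sit at the peak of the (weakly) unimodal sequence. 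Applying this to $\mathrm{id}$ and to $\mathrm{id}_1$, for both of which $\theta(f^\sigma)=\theta(g_1)$ is an endpoint of the corresponding interval, gives (ii). The points requiring care throughout are the degenerate sub-configurations where $h$ is identical or some (but not all) of the $g_i$ are colinear: there the interval conditions collapse and the ``peak'' may be a longer plateau, and these are handled via the equality cases of Lemmas~\ref{lemma:sin} and \ref{lemma:theta_composition} and the fact that adjacent equal-angle functions may be transposed without changing $f^\sigma$ (as in the proof of Lemma~\ref{lemma:MainTheorem2}).
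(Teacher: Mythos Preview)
Your plan is essentially the paper's: reduce to the dichotomy $\theta(f^{\mathrm{id}})\in\{\theta(g_1),\theta(g_1)+\pi\}_{2\pi}$ via Lemma~\ref{lemma:sin}, dispatch the $+\pi$ case by Theorem~\ref{theorem:main_theorem-x}(iii), and in the remaining case define the boundary indices $\ell,r$ exactly as you do, prove $f^{\mathrm{id}}>f^{\mathrm{id}_\ell}$ and $f^{\mathrm{id}}>f^{\mathrm{id}_{r-1}}$, and conclude by Lemma~\ref{lemma:unimodal}. (The paper does this directly rather than packaging it as a stand-alone ``dual'' of Theorem~\ref{theorem:main_theorem-x}(iii), but the content is the same.)

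One correction is needed in your case (ii). Your appeal to \eqref{eq-new26} of Lemma~\ref{lemma:yamatani} is in the wrong direction: that implication deduces an angle condition \emph{from} $f^\sigma>\max\{f^{\sigma_1},f^{\sigma_{n-1}}\}$, not the converse, so it cannot produce the inequalities $f^{\mathrm{id}}>f^{\mathrm{id}_\ell}$ and $f^{\mathrm{id}}>f^{\mathrm{id}_{r-1}}$. The paper obtains these directly from Lemma~\ref{lemma:sin} after establishing $\theta(g_\ell)\in(\theta(f^{\mathrm{id}}),\theta(f^{\mathrm{id}})+\pi)_{2\pi}$ by contradiction: if instead $\theta(g_\ell)\in[\theta(f^{\mathrm{id}})-\pi,\theta(f^{\mathrm{id}})]_{2\pi}$, then since $\mathrm{id}$ is counterclockwise and $\theta(g_1)=\cdots=\theta(g_{\ell-1})=\theta(f^{\mathrm{id}})$, every $\theta(g_i)$ lies in $[\theta(f^{\mathrm{id}})-\pi,\theta(f^{\mathrm{id}})]_{2\pi}$, and then Lemma~\ref{lemma:theta_composition} together with non-colinearity forces $\theta(f^{\mathrm{id}})\in(\theta(f^{\mathrm{id}})-\pi,\theta(f^{\mathrm{id}}))_{2\pi}$, which is absurd. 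The argument for $r$ is symmetric. With this fix your sketch matches the paper.
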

 \begin{proof}
Since the second statement of the theorem implies strict unimodality of $f^\sigma$ for counterclockwise permutations, we only show the second one. 
We assume without loss of generality  that $\sigma$ is the identity  and $k=0$.
By Lemma \ref{lemma:sin},  $f^\sigma=f^{\sigma_1}$ implies that $\theta(f^\sigma)$, $\Tf{1}$, and $\Tflr{n}{2}$ are colinear.
Hence, either $\Tf{1}+\pi =_{2\pi} \theta(f^{\sigma})$ or $\Tf{1} = \theta(f^{\sigma})$ holds. 
In the former case,  $f^\sigma$ is  minimum  by Theorem \ref{theorem:main_theorem-x}, which implies (i) in the theorem. 

 For the latter case, let us assume that  $\Tf{1} = \theta(f^\sigma)$. 
 Let $\ell= \min\{i\in[n]\mid\Tf{i}\neq\Tf{1}\}$ and $r= \max\{i\in[n]\mid\Tf{i}\neq\Tf{1}\}$.
Then 
by 
Lemmas \ref{lemma:theta_composition=x1}, \ref{lemma:sin}, and \ref{lemma:theta_composition},  
it holds that 
 \begin{equation}
 f^{\sigma_{r}}=\cdots=f^{\sigma_{n-1}} =f^\sigma=f^{\sigma_1}=\cdots=f^{\sigma_{\ell-1}}.
 \label{eq--48c1}
 \end{equation}
  We shall show that $f^\sigma> f^{\sigma_\ell}$ and $f^\sigma> f^{\sigma_{r-1}}$, which completes the proof by  Lemma \ref{lemma:unimodal}. 

 In order to show that $f^\sigma> f^{\sigma_\ell}$, we claim 
  that $\Tf{\ell} \in (\theta(f^{\sigma}),\theta(f^\sigma)+\pi)_{2\pi}$.
If it is not the case, i.e., $\Tf{\ell}\in [\theta(f^{\sigma})-\pi,\theta(f^{\sigma})]_{2\pi}$, 
then any $i\in[n]$ satisfies  $\Tf{i}\in[\Tf{\ell},\Tf{1}]_{2\pi} \subseteq [\theta(f^{\sigma})-\pi,\theta(f^{\sigma})]_{2\pi}$, since $\sigma$ is counterclockwise. 
Lemmas \ref{lemma:theta_composition=x1} and  
\ref{lemma:theta_composition}, together with non-colinearity of $f_i$'s imply that $\theta(f^\sigma)\in (\theta(f^{\sigma})-\pi,\theta(f^{\sigma}))_{2\pi}$, which is a contradiction.
Thus our claim holds.
By this together with Lemma \ref{lemma:theta_composition} and \eqref{eq--48c1}, 
we have $\theta(f_{\ell-1}\circ \dots \circ f_1\circ f_n \circ \dots \circ f_{\ell+1})\in (\Tf{\ell}-\pi,\theta(f_\ell))_{2\pi}$.
Therefore,  Lemma \ref{lemma:sin} implies that $f^\sigma=f^{\sigma_{\ell-1}}>f^{\sigma_{\ell}}$.

Similarly, we can prove that $f^\sigma> f^{\sigma_{r-1}}$. 
In fact, we have $\Tf{r} \in (\theta(f^\sigma)-\pi,\theta(f^{\sigma}))_{2\pi}$, which together with Lemma \ref{lemma:theta_composition} and \eqref{eq--48c1} implies  
 $\theta(f_{r-1}\circ \dots \circ f_1\circ f_n \circ \dots \circ f_{r+1})\in (\Tf{r},\theta(f_r)+\pi)_{2\pi}$.
Therefore,  by Lemma \ref{lemma:sin}, we have $f^\sigma=f^{\sigma_{r}}>f^{\sigma_{r-1}}$.
 \end{proof}

By this theorem, 
we have the following corollary. 
\begin{corollary}
Let $f_1, \dots, f_n$ be 
monotone non-identical linear functions. 
If a counterclockwise permutation $\sigma:[n]\to [n]$ 
 for them satisfies $f^{\sigma}\leq f^{\sigma_1}$,  $f^{\sigma}\leq f^{\sigma_{n-1}}$,  and $\Tfp{1}\neq \Tfall$, then it is optimal for them.  
\end{corollary}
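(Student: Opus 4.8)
The plan is to reduce the statement to the situation described in Theorem~\ref{theorem:main_theorem-x}(iii) and then to verify that theorem's angle condition directly. If $f_1,\dots,f_n$ are colinear, then $\sigma$ is optimal by Theorem~\ref{theorem:main_theorem-x}(i); if they are potentially identical but not colinear, then, since $\sigma$ is counterclockwise, $\sigma$ is optimal by Theorem~\ref{theorem:main_theorem-x}(ii). So I would assume henceforth that $f_1,\dots,f_n$ are neither colinear nor potentially identical. Then $\Fall$ is not the identity, hence $\Tfall\neq\bot$, and the hypothesis $\Tfp{1}\neq\Tfall$ becomes a genuine inequality between two angles of $[0,2\pi)$. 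As all $f_i$ are non-identical and $\sigma$ is counterclockwise, Theorem~\ref{theorem:main_theorem-x}(iii) (with $s=1$, $t=n$) reduces the claim to showing $\Tfall+\pi\in[\Tfp{n},\Tfp{1}]_{2\pi}$.

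To get this I would put $h=\Fplr{n}{2}$ and $h'=\Fplr{n-1}{1}$, so that $\Fall=h\circ\Fp{1}=\Fp{n}\circ h'$, $f^{\sigma_1}=\Fp{1}\circ h$, and $f^{\sigma_{n-1}}=h'\circ\Fp{n}$. First, $h$ is non-identical: otherwise $\Fall=\Fp{1}$, contradicting $\Tfp{1}\neq\Tfall$. Hence, from $\Fall\leq f^{\sigma_1}$, Lemma~\ref{lemma:sin} gives $\theta(h)-\Tfp{1}\in[0,\pi]_{2\pi}$, and then Lemma~\ref{lemma:theta_composition}, together with $\Tfall\neq\bot$, yields $\Tfall\in[\Tfp{1},\theta(h)]_{2\pi}\subseteq[\Tfp{1},\Tfp{1}+\pi]_{2\pi}$. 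Symmetrically, if $h'$ is non-identical then $\Fall\leq f^{\sigma_{n-1}}$ gives $\Tfp{n}-\theta(h')\in[0,\pi]_{2\pi}$, so $\Tfall\in[\theta(h'),\Tfp{n}]_{2\pi}\subseteq[\Tfp{n}-\pi,\Tfp{n}]_{2\pi}$; and if $h'$ is identical, then $\Fall=\Fp{n}$ and the same containment $\Tfall\in[\Tfp{n}-\pi,\Tfp{n}]_{2\pi}$ holds trivially. Combining, $\Tfall+\pi$ lies in $[\Tfp{1}-\pi,\Tfp{1}]_{2\pi}\cap[\Tfp{n},\Tfp{n}+\pi]_{2\pi}$.

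The last step is to intersect these two half-circles. If $\Tfp{1}\neq\Tfp{n}$, a routine computation on the circle shows $[\Tfp{1}-\pi,\Tfp{1}]_{2\pi}\cap[\Tfp{n},\Tfp{n}+\pi]_{2\pi}\subseteq[\Tfp{n},\Tfp{1}]_{2\pi}$, which finishes the proof. If $\Tfp{1}=\Tfp{n}$, that intersection is just the two residue classes of $\Tfp{1}$ and $\Tfp{1}+\pi$; since $\Tfall\neq\Tfp{1}$ excludes $\Tfall+\pi=\Tfp{1}+\pi$, we must have $\Tfall+\pi=\Tfp{1}\in[\Tfp{n},\Tfp{1}]_{2\pi}$. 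Either way the condition of Theorem~\ref{theorem:main_theorem-x}(iii) holds and $\sigma$ is optimal. I expect the hard part to be exactly the equality cases $\Fall=f^{\sigma_1}$ or $\Fall=f^{\sigma_{n-1}}$ and the degenerate case $\Tfp{1}=\Tfp{n}$: there the two half-circles no longer meet precisely in $[\Tfp{n},\Tfp{1}]_{2\pi}$, and the hypothesis $\Tfp{1}\neq\Tfall$ is what rules out the bad alternative, distinguishing a plateau of the unimodal sequence $\Fall,f^{\sigma_1},\dots,f^{\sigma_{n-1}}$ at its minimum from one at its maximum. An alternative, closer to the preceding development, would be to use the strict unimodality of this sequence (Theorem~\ref{theorem:strong_unimodal}) to first deduce $\Fall=\min_{0\le i<n}f^{\sigma_i}$---the dichotomy there together with $\Tfp{1}\neq\Tfall$ forbids the possibility that this plateau sits at the maximum---and then to recognize this minimizer as optimal via Lemma~\ref{lemma:yamatani} and Theorem~\ref{theorem:main_theorem-x}(iii).
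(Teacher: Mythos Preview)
Your main argument is correct: you directly verify the angle condition of Theorem~\ref{theorem:main_theorem-x}(iii) by essentially rerunning the only-if half of that theorem's proof under the weaker hypotheses $f^\sigma\le f^{\sigma_1}$, $f^\sigma\le f^{\sigma_{n-1}}$, using $\Tfp{1}\ne\Tfall$ to handle the degenerate case $\Tfp{1}=\Tfp{n}$ and to guarantee $h$ is non-identical. The half-circle intersection step is indeed routine and holds regardless of the relative position of $\Tfp{1}$ and $\Tfp{n}$.

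The paper, however, takes the shorter route you sketch at the end: it simply cites Theorem~\ref{theorem:main_theorem-x}(iii) together with Theorem~\ref{theorem:strong_unimodal}. The logic there is that by strict unimodality the inequalities $f^\sigma\le f^{\sigma_1}$ and $f^\sigma\le f^{\sigma_{n-1}}$ force $f^\sigma=\min_{0\le i<n}f^{\sigma_i}$ unless we are on a ``maximum plateau'', and the dichotomy in Theorem~\ref{theorem:strong_unimodal} says that a plateau at the maximum with $k=0$ would entail $\Tfp{1}=\Tfall$, which the hypothesis excludes. So $\sigma$ is the minimum shift and hence optimal. Compared with your direct computation, the paper's proof leverages the heavy lifting already done in Theorem~\ref{theorem:strong_unimodal} and avoids re-deriving the angle containments; your approach, on the other hand, is self-contained once Theorem~\ref{theorem:main_theorem-x}(iii) is available and does not need strict unimodality at all.
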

\begin{proof}
It follows from Theorem \ref{theorem:main_theorem-x} (iii) and Theorem \ref{theorem:strong_unimodal} 
\end{proof}

\section{Composition of Nondecreasing Linear Functions}\label{section:nondecreasing}
In this section, we deal with the case in which 
aIn this section, we characterize optimal composition orderings for  monotone nondecreasing linear functions $f_1, \dots , f_n$, i.e., $\alpha(f_i)\geq 0$ for all $i \in [n]$.
Since the monotone (increasing) case has been treated in Section \ref{section:monotone}, we assume that 
$\alpha(f_i)=0$ for some $i$.
Note that any permutation $\sigma:[n]\to [n]$ provides a constant function $f^\sigma$ if some $f_i$ is a constant. 
We first show that this case can be transformed to the monotone case.

 For a linear function $f$ and a real number $\epsilon$, we define $f^{(\epsilon)}$ as follows
 \begin{align}\label{eq:epsilon}
     f^{(\epsilon)}(x)=
     \begin{cases}
  f(x) 
  & \text{if } \ \alpha(f) \neq 0,\\
  f(x)+\epsilon x
  & \text{if }\ \alpha(f)=0.
     \end{cases}
 \end{align}
 We denote $f^{(\epsilon)}_{\sigma(n)}\circ\cdots\circ f^{(\epsilon)}_{\sigma(1)} $ by $(f^{(\epsilon)})^{\sigma}$.

\begin{lemma}\label{lemma:epsilon}
For any linear functions $f_1, \dots, f_n$,  there exists a real number $r>0$ such that 
 $(f^{(\epsilon)})^{\sigma}\leq (f^{(\epsilon)})^{\rho}$ 
 implies $\Fall\leq \Fall[\rho]$ 
for any two permutations $\sigma, \rho:[n]\to [n]$
and any real $\epsilon$ with $|\epsilon| < r$.  \end{lemma}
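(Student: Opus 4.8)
The plan is to reduce the statement to finitely many one-variable continuity facts. First I would observe that for every permutation $\sigma$ the composite $(f^{(\epsilon)})^{\sigma}$ is again a linear function, and its slope equals $\prod_{i\in[n]}\alpha(f_i^{(\epsilon)})$, which does \emph{not} depend on $\sigma$; here, by definition of $f^{(\epsilon)}$, $\alpha(f_i^{(\epsilon)})$ equals $\alpha(f_i)$ when $\alpha(f_i)\neq 0$ and equals $\epsilon$ when $\alpha(f_i)=0$. Consequently, for a fixed $\epsilon$ the two functions $(f^{(\epsilon)})^{\sigma}$ and $(f^{(\epsilon)})^{\rho}$ have a common slope, so $(f^{(\epsilon)})^{\sigma}\leq(f^{(\epsilon)})^{\rho}$ as functions holds if and only if $\beta\bigl((f^{(\epsilon)})^{\sigma}\bigr)\leq\beta\bigl((f^{(\epsilon)})^{\rho}\bigr)$; applying the same remark at $\epsilon=0$, where $f^{(0)}=f$, shows that $\Fall\leq\Fall[\rho]$ is equivalent to $\beta(\Fall)\leq\beta(\Fall[\rho])$.

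Second, for each ordered pair of permutations $(\sigma,\rho)$ I would introduce $Q_{\sigma,\rho}(\epsilon)=\beta\bigl((f^{(\epsilon)})^{\rho}\bigr)-\beta\bigl((f^{(\epsilon)})^{\sigma}\bigr)$. Expanding the composition rule $h\circ g(x)=\alpha(h)\alpha(g)\,x+\bigl(\alpha(h)\beta(g)+\beta(h)\bigr)$ repeatedly, and using that each $\alpha(f_i^{(\epsilon)})$ and $\beta(f_i^{(\epsilon)})$ is affine in $\epsilon$, one sees that $\beta\bigl((f^{(\epsilon)})^{\sigma}\bigr)$ is a polynomial in $\epsilon$; in particular $Q_{\sigma,\rho}$ is continuous on $\Br$ and $Q_{\sigma,\rho}(0)=\beta(\Fall[\rho])-\beta(\Fall)$. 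By the previous paragraph, the claim of the lemma is exactly the following: there exists $r>0$ such that for every $\epsilon$ with $|\epsilon|<r$ and every pair $(\sigma,\rho)$, the inequality $Q_{\sigma,\rho}(\epsilon)\geq 0$ implies $Q_{\sigma,\rho}(0)\geq 0$.

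Finally I would fix a pair $(\sigma,\rho)$ and distinguish two cases. If $Q_{\sigma,\rho}(0)\geq 0$, the implication holds for every $\epsilon$ because its conclusion is always true. If $Q_{\sigma,\rho}(0)<0$, continuity of $Q_{\sigma,\rho}$ at $0$ provides a radius $r_{\sigma,\rho}>0$ with $Q_{\sigma,\rho}(\epsilon)<0$ for all $|\epsilon|<r_{\sigma,\rho}$, so on that interval the hypothesis $Q_{\sigma,\rho}(\epsilon)\geq 0$ never holds and the implication is vacuously true. Since there are only finitely many pairs of permutations of $[n]$, letting $r$ be the minimum of the finitely many $r_{\sigma,\rho}$ coming from pairs with $Q_{\sigma,\rho}(0)<0$ (and $r=1$, say, if no such pair exists) yields a single $r>0$ that works simultaneously for all pairs, which proves the lemma.

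I do not expect a genuine obstacle; the one step deserving care is the observation in the first paragraph that all composites $(f^{(\epsilon)})^{\sigma}$ share one and the same slope, since it is precisely this fact that turns the functional inequality into a comparison of intercepts and makes the limit $\epsilon\to 0$ harmless. Everything else is the (polynomial, hence continuous) dependence of $\beta\bigl((f^{(\epsilon)})^{\sigma}\bigr)$ on $\epsilon$ together with the finiteness of the set of pairs $(\sigma,\rho)$.
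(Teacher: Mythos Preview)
Your proof is correct and follows essentially the same approach as the paper: reduce the functional inequality to a comparison of intercepts via the common slope $\prod_i\alpha(f_i^{(\epsilon)})$, note that $\beta\bigl((f^{(\epsilon)})^{\sigma}\bigr)$ is a polynomial in $\epsilon$ with constant term $\beta(f^\sigma)$, and then argue near $\epsilon=0$ over the finitely many pairs $(\sigma,\rho)$. The only difference is that the paper makes the radius explicit---taking $r=\min\{1,\delta/(2n\Delta^n)\}$ where $\delta$ is the smallest nonzero intercept gap and $\Delta=\max_i\{|\alpha(f_i)|,|\beta(f_i)|\}$---while you invoke continuity abstractly; since the algorithm later avoids computing $\epsilon$ explicitly anyway, nothing is lost.
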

\begin{proof}
Let $\sigma, \rho:[n]\to [n]$ be two permutations.
 If $\Fall \leq \Fall[\rho]$ holds, then the statement in the lemma clearly holds. 
We thus consider the case in which  $\Fall > \Fall[\rho]$. 
Define  $\delta$ and $\Delta$ by 
 \begin{align*}
\delta&= \min\{|\beta(\Fall) - \beta(\Fall[\rho])| \mid \beta(\Fall) \neq \beta(\Fall[\rho]) \}, \\
\Delta&= \max\{|\alpha(f_1)|,\dots,|\alpha(f_n)|,|\beta(f_1)|,\dots,|\beta(f_n)|\}.
 \end{align*}
Since $\Delta=0$ implies that $f^\sigma=f^\rho=0$, 
we assume that $\Delta >0$. 
Note that 
$\beta((f^{(\epsilon)})^{\sigma})$ can be expressed as a polynomial in $\epsilon$ of degree at most $n$ with the absolute value of each coefficient at most  $\Delta^n$.
 In addition, $\beta(\Fall)$ is the constant term of 
 a polynomial for $\beta((f^{(\epsilon)})^{\sigma})$, since $\beta((f^{(0)})^{\sigma})=\beta(\Fall)$.  
We thus denote $\beta((f^{(\epsilon)})^{\sigma})$ by $\epsilon P_\sigma(\epsilon)+\beta(\Fall)$, where  
  $P_\sigma(\epsilon)$ is a polynomial of degree at most 
   $n-1$ with the absolute value of each coefficient at most $\Delta^n$.
 Let $r= \min\{1,\frac{\delta}{2n\Delta^n}\}$ and fix $\epsilon$ to satisfy   $|\epsilon|<r$.
 Then  we have
 \begin{align*}
     (f^{(\epsilon)})^{\sigma}-(f^{(\epsilon)})^{\rho}
     &=\beta((f^{(\epsilon)})^{\sigma})-\beta((f^{(\epsilon)})^{\rho})\\
     &=\beta(\Fall)-\beta(\Fall[\rho])+\epsilon(P_{\sigma}(\epsilon)-P_{\rho}(\epsilon))\\ 
     &\geq\delta-\epsilon(|P_\sigma(\epsilon)|+|P_{\rho}(\epsilon)|)\\
     &\geq\delta-\epsilon(2n\Delta^n) \ > \ 0.
 \end{align*}
    \end{proof}
We  remark that  the lemma does not make use of nondecreasing property on $f_i$'s.    
Lemma \ref{lemma:epsilon} states that 
   the optimality for $f_1^{(\epsilon)},\dots,f_n^{(\epsilon)}$ implies the one for $f_1,\dots,f_n$, if 
$|\epsilon|$ is sufficiently small.
 In other words,  an optimal permutation can be computed 
by making use of an algorithm for the monotone case. 

\begin{lemma}
\label{lemma:epslion1}
Let $f_1, \dots , f_n$ be linear functions. 
For a sufficiently small $\epsilon > 0$, 
any optimal permutation for $f^{(\epsilon)}_i$'s is also optimal for $f_i$'s. 
\end{lemma}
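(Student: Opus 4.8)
The plan is to deduce Lemma~\ref{lemma:epslion1} from the already-established Lemma~\ref{lemma:epsilon} by a routine finiteness argument. Since there are only finitely many permutations of $[n]$, there are only finitely many ordered pairs $(\sigma,\rho)$, so it suffices to handle each pair and then take the minimum of the finitely many thresholds produced.

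First I would recall what Lemma~\ref{lemma:epsilon} gives: a single $r>0$ such that for \emph{all} pairs $(\sigma,\rho)$ and all $\epsilon$ with $|\epsilon|<r$, the inequality $(f^{(\epsilon)})^{\sigma}\le (f^{(\epsilon)})^{\rho}$ implies $f^{\sigma}\le f^{\rho}$. (Reading the proof of Lemma~\ref{lemma:epsilon}, the $r$ it constructs, namely $r=\min\{1,\delta/(2n\Delta^n)\}$, already does not depend on the particular pair, so no extra minimization is even needed; but even if one reads it as pair-dependent, taking the minimum over the finitely many pairs yields a uniform $r$.) Fix any $\epsilon$ with $0<\epsilon<r$.

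Now let $\sigma$ be an optimal (i.e.\ minimum) permutation for the functions $f^{(\epsilon)}_1,\dots,f^{(\epsilon)}_n$, and let $\rho$ be any permutation. By optimality of $\sigma$ for the perturbed functions we have $(f^{(\epsilon)})^{\sigma}\le (f^{(\epsilon)})^{\rho}$, and hence by Lemma~\ref{lemma:epsilon} we conclude $f^{\sigma}\le f^{\rho}$. Since $\rho$ was arbitrary, $\sigma$ is optimal for $f_1,\dots,f_n$, which is exactly the claim. (Here ``optimal'' means ``minimum'', consistent with the convention fixed at the end of Section~\ref{section:notation}; the maximization case is symmetric, or follows from the same transformation.)

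There is essentially no obstacle here: the only point that needs a moment's care is checking that the threshold $r$ from Lemma~\ref{lemma:epsilon} can be taken uniform over all pairs $(\sigma,\rho)$ — which is guaranteed because $[n]$ admits only finitely many permutations — and noting that an optimal permutation for the $f^{(\epsilon)}_i$'s exists for the same finiteness reason, so the quantification over $\rho$ in the conclusion is legitimate. Everything else is a direct application of the preceding lemma.
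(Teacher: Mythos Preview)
Your proposal is correct and matches the paper's own proof, which simply states ``It follows from Lemma~\ref{lemma:epsilon}.'' You have merely spelled out the straightforward deduction, and your observation about uniformity of $r$ is already built into the quantifier order in the statement of Lemma~\ref{lemma:epsilon}.
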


\begin{proof}
It follows from Lemma \ref{lemma:epsilon}. 
\end{proof}
   
   However, we need to be careful about $\epsilon$, which will be mentioned in Section \ref{section:general}.
Furthermore, we can note that the converse of  
 Lemma \ref{lemma:epsilon} does not hold in general, 
 which can be seen in the following example. 
  \begin{example}
  \label{ex-aa21}
Let us consider four monotone nondecreasing functions $f_1,f_2,f_3$, and $f_4$ given by 
\begin{equation*}
 f_1=2x+2, \ f_2=x+2, \ f_3=0x+1,\ f_4=2x-3.
 \end{equation*}
Note that they are non-colinear and not potentially identical. 
The identity permutation is counterclockwise and optimal. 
Since $f_3$ is a constant function, we have 
$f_3=f_3\circ f_2\circ f_1 = f_3\circ f_1\circ f_2$, 
and hence $f_4\circ f_3\circ f_1\circ f_2$ provided by a non-counterclockwise permutation is also optimal.   
\end{example}

We now characterize optimal composition orderings for nondecreasing linear functions. 
Let $f_1, \dots , f_n$ be nondecreasing linear functions, at least one of which is 
a constant. 
Then we first note that any permutation $\sigma:[n]\to [n]$ produces a constant function $f^\sigma$.
Moreover, 
for a permutation $\sigma:[n]\to [n]$, let $q_\sigma$ be the largest integer $q \in [n]$ such that  $\Fp{q}$ is a constant, 
i.e., $q_\sigma=\max\{q \mid \alpha(\Fp{q})=0\}$. 
Then as seen in Example \ref{ex-aa21},  
we have $\Fall=\Fplr{n}{q_\sigma}$, which 
implies that $\Fall=\Fall[\sigma\circ\rho]$ for any permutation $\rho$ such that $\rho(i) = i$ for any $i \in [q_\sigma, n]$.
    In other words, the functions composited before the last constant function can be ordered arbitrarily. 

The following two theorems 
respectively correspond to  Theorem \ref{theorem:main_theorem-x} (i) and (iii) for the monotone case, where we define 
\begin{eqnarray}
\beta_{\min} &=&\min\{\beta(f_i)\mid \alpha(f_i)=0\}, \\
\beta_{\rm OPT}&=&\min\{f^\sigma\mid \sigma \in \Sigma\}.  
\end{eqnarray}
Here $\Sigma$ denotes the set of all permutations of $[n]$.


\begin{theorem}\label{theorem:main_theorem_nondecreasing-1}    
Let $f_1,\dots,f_n$ be monotone nondecreasing linear functions at least one of which is 
a constant. 
 Then 
the following three statements are equivalent. 
 \begin{description}
\item[{\rm (i)}] Any $f_i$ satisfies $\Tf{i}\in[\theta(\beta_{\min})-\pi,\theta(\beta_{\min})]_{2\pi}\cup\{\bot\}$.  
\item[{\rm (ii)}] 
$\beta_{\rm OPT}=\beta_{\min}$.  

\item[{\rm (iii)}] Optimal permutations $\sigma:[n]\to [n]$ for $f_i$'s are those that satisfies that $f_{\sigma(q_{\sigma})}, f_{\sigma(q_{\sigma}+1)}, \ldots, $ $f_{\sigma(n)}$ are colinear. 
\end{description}
\end{theorem}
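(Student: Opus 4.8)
The plan is to prove the cyclic chain of implications (i) $\Rightarrow$ (ii) $\Rightarrow$ (iii) $\Rightarrow$ (i), exploiting the reduction to the monotone case from Lemma \ref{lemma:epslion1} together with the observation, already recorded before the theorem, that $f^\sigma = \Fplr{n}{q_\sigma}$ and that the functions placed before the last constant in the ordering may be permuted freely. The central structural fact I would isolate first is this: for any permutation $\sigma$, the value $f^\sigma$ is a constant equal to $\beta(\Fplr{n}{q_\sigma})$, and since $f_{\sigma(q_\sigma)}$ is constant, $\beta(\Fplr{n}{q_\sigma})$ depends only on $\beta(f_{\sigma(q_\sigma)})$ and on the monotone linear functions $f_{\sigma(q_\sigma+1)}, \dots, f_{\sigma(n)}$ composed after it. So the minimization splits: choose which constant function sits in the ``last constant'' slot (contributing its intercept $\beta$), and choose how to order the monotone functions that come after it so as to minimize the resulting intercept; the remaining functions are irrelevant. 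Thus $\beta_{\rm OPT} = \min_{i:\alpha(f_i)=0} \min \{\beta(g \circ f_i) \mid g \text{ a composite of a subset of the monotone } f_j\text{'s in some order}\}$, and since $g$ can be taken to be the identity (put no monotone function after the chosen constant), $\beta_{\rm OPT} \le \beta_{\min}$ always.

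For (i) $\Rightarrow$ (ii): assuming every $f_i$ has $\theta(f_i) \in [\theta(\beta_{\min})-\pi, \theta(\beta_{\min})]_{2\pi} \cup \{\bot\}$, I would show that composing any monotone $f_j$ after a constant function $f_i$ with $\beta(f_i) = \beta_{\min}$ cannot decrease the intercept below $\beta_{\min}$. Here $\theta(\beta_{\min})$ refers to the angle of the constant function attaining $\beta_{\min}$, i.e.\ the angle of the vector $\binom{\beta_{\min}}{1}$ when $\beta_{\min}\ne 0$ (the degenerate case $\beta_{\min}=0$ is handled separately and directly). Using Lemma \ref{lemma:vector_composition}, $\Ora{g \circ f_i} = \vec g + \alpha(g)\vec{f_i}$, and the hypothesis that all the monotone vectors lie in the closed half-turn $[\theta(\beta_{\min})-\pi, \theta(\beta_{\min})]_{2\pi}$ forces, by an induction on the number of functions composed after $f_i$ (each step keeping the running composite's vector in that same half-turn, via Lemma \ref{lemma:theta_composition}), that the final vector still lies in that half-turn, whence its $\beta$-component (first coordinate) is at least $\beta_{\min}$. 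Combined with $\beta_{\rm OPT}\le\beta_{\min}$, this gives $\beta_{\rm OPT}=\beta_{\min}$. I expect this induction — tracking that the intercept never strictly drops while the angle stays in the prescribed half-turn — to be the main obstacle, because one must handle the $\bot$ (identity-composite) cases and the boundary angles $\theta(\beta_{\min})$ and $\theta(\beta_{\min})-\pi$ carefully, and must also check that monotone functions placed \emph{before} the constant genuinely do not matter.

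For (ii) $\Rightarrow$ (iii): if $\beta_{\rm OPT} = \beta_{\min}$, then an ordering $\sigma$ is optimal iff $\beta(\Fplr{n}{q_\sigma}) = \beta_{\min}$; I would argue that this forces $\beta(f_{\sigma(q_\sigma)}) = \beta_{\min}$ and that the monotone composite after it contributes nothing, which (again via Lemma \ref{lemma:vector_composition} and the sign analysis of Lemma \ref{lemma:sin}) happens exactly when $f_{\sigma(q_\sigma)}, f_{\sigma(q_\sigma+1)}, \dots, f_{\sigma(n)}$ are colinear — because a non-colinear monotone tail would bend the vector off the line through $\vec{f_{\sigma(q_\sigma)}}$ and strictly change the intercept, and by optimality of $\beta_{\min}$ it could only increase it. Conversely any $\sigma$ with that colinearity property yields intercept $\beta(f_{\sigma(q_\sigma)}) \ge \beta_{\min}$, and one checks it equals $\beta_{\min}$; here I'd use that the minimum over colinear tails is realized and equals $\beta_{\rm OPT}$. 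Finally (iii) $\Rightarrow$ (i): contrapositively, if some $f_i$ has $\theta(f_i) \notin [\theta(\beta_{\min})-\pi,\theta(\beta_{\min})]_{2\pi}\cup\{\bot\}$, I would exhibit an optimal ordering violating the colinearity condition of (iii) — or, more cleanly, show (i) fails $\Rightarrow$ (ii) fails (by composing that offending $f_i$ after the $\beta_{\min}$-constant to strictly lower the intercept, using Lemma \ref{lemma:sin}(i) to get the strict inequality) and then note (ii) fails $\Rightarrow$ (iii) fails since (iii) implies $\beta_{\rm OPT}=\beta_{\min}$. Packaging the argument as (i)$\Rightarrow$(ii)$\Rightarrow$(iii)$\Rightarrow$(i) with the last arrow proved contrapositively through (ii) is, I think, the cleanest route and avoids redoing the half-turn analysis twice.
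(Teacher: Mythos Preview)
Your (i)$\Leftrightarrow$(ii) argument is essentially the paper's Lemma~\ref{lemma_nondecreasing-11}, and your contrapositive route for (iii)$\Rightarrow$(i) is fine. Where your plan diverges from the paper is (ii)$\Rightarrow$(iii): the paper does not argue directly but invokes Lemma~\ref{lemma:basic_for_nondecreasing}(iv), which is built on the local--global equivalence of Theorem~\ref{theorem:localopt_opt_nondecreasing}, to pin $\theta(f_{\sigma(i)})$ for $i\ge q_\sigma$ into $[\theta(\beta_{\min}),\theta(f^\sigma)+\pi]$; when $f^\sigma=\beta_{\min}$, intersecting this with~(i) collapses the tail angles to $\{\theta(\beta_{\min}),\theta(\beta_{\min})+\pi,\bot\}$.

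Your direct attempt at (ii)$\Rightarrow$(iii) has a genuine gap. After correctly deriving $c=\beta_{\min}$ and $g(\beta_{\min})=\beta_{\min}$, you assert that a non-colinear tail ``would bend the vector off the line and strictly change the intercept.'' This is false as stated: $g(\beta_{\min})=\beta_{\min}$ alone does not force each individual tail function to fix $\beta_{\min}$ --- with $\beta_{\min}=0$, $h_1(x)=2x+1$, $h_2(x)=2x-2$ one has $(h_2\circ h_1)(0)=0$ yet neither $h_i$ is colinear with the zero constant. What rescues the argument is condition~(i), which you do have available since you prove both directions of (i)$\Leftrightarrow$(ii). Under~(i), each tail function $f_{\sigma(i)}$ satisfies $f_{\sigma(i)}(\beta_{\min})\ge\beta_{\min}$ (Lemma~\ref{corollary:constant}) and is strictly increasing, so it maps $(\beta_{\min},\infty)$ into itself; hence if any partial value $v_j=f_{\sigma(j)}\circ\cdots\circ f_{\sigma(q_\sigma)}$ exceeds $\beta_{\min}$ then so do all subsequent $v_{j+1},\ldots,v_n$, contradicting $v_n=\beta_{\min}$. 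Thus $v_j=\beta_{\min}$ for every $j\ge q_\sigma$, every tail function fixes $\beta_{\min}$, and colinearity follows. With this repair your route is more elementary than the paper's, but the step must be written out rather than gestured at. A related caution on the converse: your ``one checks it equals $\beta_{\min}$'' requires the tail to be colinear \emph{with the $\beta_{\min}$-constant specifically}; a tail colinear with a constant of value $c>\beta_{\min}$ gives $f^\sigma=c\ne\beta_{\rm OPT}$, so this specification (which the paper's own statement of~(iii) also leaves implicit but its proof enforces) matters.
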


\begin{theorem}\label{theorem:main_theorem_nondecreasing-2}    
Let $f_1,\dots,f_n$ be monotone nondecreasing linear functions at least one of which is 
a constant. 
If there exists  a linear function $f_i$ such that  $\Tf{i}\in(\theta(\beta_{\min}),\theta(\beta_{\min})+\pi)$ $($equivalently, 
$\beta_{\rm OPT} < \beta_{\min})$, 
then the following two statements are equivalent. 
\begin{description}
\item[{\rm (i)}] A permutation $\sigma:[n]\to [n]$ is 
 optimal.
 \item[{\rm (ii)}] 
A permutation $\sigma:[n]\to [n]$  can be composed by  
 $\sigma=\mu\circ\rho$ for   
 a counterclockwise permutation $\mu:[n]\to [n]$ such that  
 $\Tfall[\mu]+\pi\in[\Tfp[\mu]{t},\Tfp[\mu]{s}]_{2\pi}$, where $s$ and $t$ denote the first and last integers $i$ such that $f_{\sigma(i)}$ is non-identical, and a permutation $\rho:[n]\to [n]$ such that $\rho(i) = i$ for any $i \in [q_\sigma, n]$. 
 \end{description}
\end{theorem}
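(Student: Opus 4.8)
The plan is to reduce the nondecreasing case to the monotone (increasing) case handled in Theorem~\ref{theorem:main_theorem-x}(iii), using the $\epsilon$-perturbation of Lemmas~\ref{lemma:epsilon} and~\ref{lemma:epslion1}, and to isolate the combinatorial freedom encoded by $\rho$ by a simple structural observation. Since some $f_i$ is constant, $f^\sigma$ is constant and in fact $f^\sigma=\Fplr{n}{q_{\sigma}}$: composing a constant function on the right annihilates everything composed before it. Hence $f^\sigma$ depends only on the restriction of $\sigma$ to $[q_\sigma,n]$. Consequently, if $\sigma=\mu\circ\rho$ with $\rho$ fixing $[q_\sigma,n]$, then $\mu$ and $\sigma$ carry the same functions at positions $q_\sigma,\dots,n$, so $q_\mu=q_\sigma$ and $f^\mu=f^\sigma$; in particular $\sigma$ is optimal iff $\mu$ is. This reduces the theorem to showing that a permutation $\mu$ is optimal for $f_1,\dots,f_n$ if and only if it is counterclockwise with $\Tfall[\mu]+\pi\in[\Tfp[\mu]{t},\Tfp[\mu]{s}]_{2\pi}$, together with the fact that the prefix $\mu(1),\dots,\mu(q_\mu-1)$ of any such $\mu$, and of any optimal permutation, may be reordered freely.

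For the perturbation step I would fix a generic sufficiently small $\epsilon>0$, so that the $f_i^{(\epsilon)}$ are monotone increasing. They are not colinear: by hypothesis (equivalently, by Theorem~\ref{theorem:main_theorem_nondecreasing-1}, using $\beta_{\rm OPT}<\beta_{\min}$) some $f_i$ has $\Tf{i}\in(\theta(\beta_{\min}),\theta(\beta_{\min})+\pi)$, and the perturbed $\beta_{\min}$-constant, having vector $\Mto{\beta_{\min}}{1-\epsilon}$, keeps an angle within $O(\epsilon)$ of $\theta(\beta_{\min})$, so for small $\epsilon$ this $f_i$ is neither parallel nor antiparallel to it. They are not potentially identical either, since $\prod_{i}\alpha(f_i^{(\epsilon)})=\epsilon^{k}\prod_{\alpha(f_i)\neq 0}\alpha(f_i)\neq 1$ for generic $\epsilon$, so no composite can equal the identity. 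Thus Theorem~\ref{theorem:main_theorem-x}(iii) describes the optimal permutations of the $f_i^{(\epsilon)}$ exactly as the counterclockwise ones $\mu$ with $\theta((f^{(\epsilon)})^{\mu})+\pi\in[\Tfp[\mu]{t},\Tfp[\mu]{s}]_{2\pi}$ taken with respect to $f_i^{(\epsilon)}$, and by Lemma~\ref{lemma:epslion1} each of these is optimal for $f_1,\dots,f_n$.

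Next I would pass to the limit $\epsilon\to 0^+$. There being finitely many permutations, some $\mu$ is $f_i^{(\epsilon_k)}$-optimal along a sequence $\epsilon_k\downarrow 0$; counterclockwiseness is a closed condition (only weak inequalities among angles), $\theta(f_i^{(\epsilon_k)})\to\theta(f_i)$ for each $i$, and $\theta((f^{(\epsilon_k)})^{\mu})\to\theta(f^{\mu})$ since the limiting vector $\Mto{\beta_{\rm OPT}}{1}$ is nonzero, so such a $\mu$ is counterclockwise for the $f_i$ and satisfies $\Tfall[\mu]+\pi\in[\Tfp[\mu]{t},\Tfp[\mu]{s}]_{2\pi}$. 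Conversely, to see that every counterclockwise $\mu$ with this angle condition is itself optimal, I would reorder its prefix by slotting each perturbed constant into the block of functions sharing its limiting angle, in the order dictated by the sign of the perturbation, obtaining $\mu'$ that is $f_i^{(\epsilon)}$-optimal and agrees with $\mu$ on $[q_\mu,n]$; then $f^{\mu}=f^{\mu'}=\beta_{\rm OPT}$ by Lemma~\ref{lemma:epslion1}. Together with the first paragraph this gives the ``if'' direction; for ``only if'', given an optimal $\sigma$ I would reorder its prefix $\sigma(1),\dots,\sigma(q_\sigma-1)$ into such a $\mu$ while keeping $\sigma|_{[q_\sigma,n]}$ fixed, so that $\sigma=\mu\circ\rho$ with $\rho$ fixing $[q_\sigma,n]$, again reducing to the prefix-reordering analysis.

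I expect the main obstacle to be precisely this prefix-reordering equivalence: reconciling the rigidity of Theorem~\ref{theorem:main_theorem-x}(iii) for the perturbed problem with the looseness of $\rho$ in the nondecreasing problem. Concretely, one must show that the block assignments needed to fit the perturbed constants into a counterclockwise order never disturb positions $q_\sigma,\dots,n$, and that a counterclockwise $\mu$ with the angle condition necessarily employs a $\beta_{\rm OPT}$-achieving suffix. The delicate case is when a perturbed constant's angle coincides with the angle of some non-constant $f_j$, so that the block receiving the constant meets the suffix; handling it will require a careful comparison of the one-sided limits of the perturbed constants' angles against the fixed angles of the non-constant functions.
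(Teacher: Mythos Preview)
Your overall plan---reduce to the strictly monotone case via the $\epsilon$-perturbation of Lemma~\ref{lemma:epslion1} and invoke Theorem~\ref{theorem:main_theorem-x}(iii)---is genuinely different from the paper's direct structural argument via Lemma~\ref{lemma:basic_for_nondecreasing} and Theorem~\ref{theorem:localopt_opt_nondecreasing}. Your first-paragraph observation that $f^\sigma$ depends only on $\sigma|_{[q_\sigma,n]}$ is correct and is exactly the mechanism behind the $\rho$-freedom; the paper uses it too.

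There is, however, a real gap in your (i)$\Rightarrow$(ii) direction. Lemma~\ref{lemma:epslion1} is one-directional: optimality for $f_i^{(\epsilon)}$ implies optimality for $f_i$, but not conversely. So if you are handed an arbitrary optimal $\sigma$ for the unperturbed problem, the perturbation tells you nothing about $\sigma$ itself. Your plan to ``reorder its prefix into such a $\mu$ while keeping $\sigma|_{[q_\sigma,n]}$ fixed'' presupposes that the suffix $f_{\sigma(q_\sigma)},\dots,f_{\sigma(n)}$ is already arranged counterclockwise and sits in the correct angular arc; but nothing in your argument establishes this. The perturbation yields \emph{some} optimal counterclockwise $\mu$, not one sharing the suffix of a prescribed $\sigma$. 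The paper closes this gap without perturbation: it invokes Theorem~\ref{theorem:localopt_opt_nondecreasing} to get local optimality of $\sigma$, then Lemma~\ref{lemma:basic_for_nondecreasing}(iii)(iv) to pin the angular ranges of the prefix and suffix separately, and finally applies Theorem~\ref{theorem:main_theorem-x} to the (genuinely monotone) tail $f_{\sigma(q_\sigma+1)},\dots,f_{\sigma(n)}$ to force it to be counterclockwise. That is precisely the ``prefix-reordering equivalence'' you flagged as the main obstacle, and it is not resolvable by limiting arguments alone.

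Your (ii)$\Rightarrow$(i) sketch has a related difficulty: you need a $\mu'$ optimal for $f_i^{(\epsilon)}$ that agrees with $\mu$ on $[q_\mu,n]$, but the constant sitting at position $q_\mu$ is itself perturbed, so even the suffix of $\mu$ need not be counterclockwise for $f_i^{(\epsilon)}$, and the angle condition need not persist. The paper sidesteps this by instead comparing $\mu$ to an optimal $\nu$ obtained from the already-proved (i)$\Rightarrow$(ii), observing that both are counterclockwise so $\nu$ is a shift $\mu_k$ of $\mu$, and deriving a contradiction from $f^\nu<f^\mu$ via Lemma~\ref{corollary:constant}. If you want to salvage the perturbation route, you will still need Lemma~\ref{lemma:basic_for_nondecreasing} to control the suffix of an arbitrary optimal $\sigma$; once you grant yourself that lemma, the direct argument is shorter.
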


We remark that 
there exists no case corresponding to   Theorem \ref{theorem:main_theorem-x} (ii),  since $\Fall$ is a constant for any permutation $\sigma$.
In this section, 
we also show that the local optimality implies the global optimality for the nondecreasing case.

\begin{theorem}\label{theorem:localopt_opt_nondecreasing}
     For monotone nondecreasing linear functions $f_1, \dots, f_n$, a permutation $\sigma:[n]\to [n]$ is optimal if and only if it is locally optimal.
 \end{theorem}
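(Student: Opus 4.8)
The plan is to reduce the nondecreasing case to the monotone case via the perturbation $f_i^{(\epsilon)}$ introduced in \eqref{eq:epsilon}, combined with the already-established equivalence of local and global optimality in the monotone case (Theorem \ref{theorem:localopt_opt}). The only-if direction is immediate from the definitions, so only the if direction needs work: I must show that a locally optimal permutation $\sigma$ for nondecreasing $f_1, \dots, f_n$ is globally optimal. If some $f_i$ is a constant, recall that any permutation yields a constant function; if none is a constant, we are already in the monotone case and Theorem \ref{theorem:localopt_opt} applies directly. So assume at least one $f_i$ is a constant.

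The key observation is that local optimality is a \emph{finitely} checkable, closed condition: $\sigma$ is locally optimal iff $f^\sigma \le f^{\mu}$ for the finitely many $\mu \in N(\sigma)$. First I would argue that if $\sigma$ is locally optimal for the $f_i$'s, then $\sigma$ is locally optimal for $f_i^{(\epsilon)}$'s for all sufficiently small $\epsilon > 0$. This requires care, because local optimality involves non-strict inequalities $f^\sigma \le f^\mu$, and for the perturbed functions a tie $f^\sigma = f^\mu$ could in principle break the wrong way. The remedy is to use Lemma \ref{lemma:epsilon} in reverse contrapositive form: it gives a threshold $r>0$ such that for $|\epsilon|<r$, $(f^{(\epsilon)})^\sigma \le (f^{(\epsilon)})^\mu$ fails only if $f^\sigma > f^\mu$ in the limit — but more usefully, since $\beta((f^{(\epsilon)})^\mu)$ is a polynomial in $\epsilon$ with constant term $\beta(f^\mu)$, whenever $f^\sigma < f^\mu$ strictly we get $(f^{(\epsilon)})^\sigma < (f^{(\epsilon)})^\mu$ for small $\epsilon>0$, and whenever $f^\sigma = f^\mu$ the perturbed comparison is governed by the next nonzero coefficient and does not affect global optimality in the limit. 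Concretely, I would take $\epsilon > 0$ smaller than the threshold $r$ of Lemma \ref{lemma:epsilon} applied to all relevant pairs, and also smaller than any root of the finitely many difference polynomials $\beta((f^{(\epsilon)})^\sigma) - \beta((f^{(\epsilon)})^\mu)$ that are not identically zero; then the sign of every perturbed comparison is constant on $(0,\epsilon)$.

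With such an $\epsilon$ fixed, the argument is: by the above, $\sigma$ is locally optimal for the monotone functions $f_i^{(\epsilon)}$ (a local improvement for $f_i^{(\epsilon)}$ would, by Lemma \ref{lemma:epsilon}, give one for $f_i$); by Theorem \ref{theorem:localopt_opt}, $\sigma$ is globally optimal for $f_i^{(\epsilon)}$; and then by Lemma \ref{lemma:epslion1}, $\sigma$ is globally optimal for $f_i$. The main obstacle — and the place where the write-up must be careful — is handling the ties: ensuring that local optimality of $\sigma$ genuinely transfers to the perturbed problem despite equalities $f^\sigma = f^\mu$, i.e., showing that perturbation never turns a ``$\le$'' in the original problem into a strict ``$>$'' in the perturbed one for a small positive $\epsilon$. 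This is exactly controlled by the polynomial structure of $\epsilon \mapsto \beta((f^{(\epsilon)})^\mu)$ already exploited in the proof of Lemma \ref{lemma:epsilon}, so no genuinely new machinery is needed; it is a matter of choosing $\epsilon$ below the finitely many critical values and invoking Lemmas \ref{lemma:epsilon}, \ref{lemma:epslion1} and Theorem \ref{theorem:localopt_opt} in sequence.
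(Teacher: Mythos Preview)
Your reduction via the perturbation $f_i^{(\epsilon)}$ has a genuine gap at the tie-breaking step, and the fix you propose does not close it.

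The claim ``a local improvement for $f_i^{(\epsilon)}$ would, by Lemma~\ref{lemma:epsilon}, give one for $f_i$'' is not what Lemma~\ref{lemma:epsilon} says. That lemma only gives you $(f^{(\epsilon)})^{\mu}\le(f^{(\epsilon)})^{\sigma}\Rightarrow f^{\mu}\le f^{\sigma}$; a \emph{strict} perturbed improvement $(f^{(\epsilon)})^{\mu}<(f^{(\epsilon)})^{\sigma}$ therefore yields only $f^{\mu}\le f^{\sigma}$, which is perfectly consistent with local optimality of $\sigma$. So local optimality for the $f_i$'s need not transfer to local optimality for the $f_i^{(\epsilon)}$'s, and your Step~1 fails.

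Concretely: take $f_1(x)=2x+1$, $f_2(x)=x+1$, $f_3(x)=0$. Then $\sigma=(2,1,3)$ (i.e.\ $f_3\circ f_1\circ f_2$) gives $f^{\sigma}=0$, and one checks that every $\mu\in N(\sigma)$ satisfies $f^{\mu}\ge 0$, so $\sigma$ is locally (indeed globally) optimal. After perturbation, $f_3^{(\epsilon)}(x)=\epsilon x$, and
\[
\beta\bigl((f^{(\epsilon)})^{(2,1,3)}\bigr)=3\epsilon,\qquad \beta\bigl((f^{(\epsilon)})^{(1,2,3)}\bigr)=2\epsilon.
\]
Since $(1,2,3)\in N(\sigma)$, the permutation $\sigma=(2,1,3)$ is \emph{not} locally optimal for the $f_i^{(\epsilon)}$'s, for \emph{every} $\epsilon>0$. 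The difference polynomial here is $\epsilon\mapsto\epsilon$, identically positive on $(0,\infty)$: there is no ``critical value'' below which the sign is favourable, so your proposed remedy of shrinking $\epsilon$ cannot help. The phenomenon is generic: whenever the last constant $f_{\sigma(q_{\sigma})}$ hides a sub-optimal ordering among $f_{\sigma(1)},\dots,f_{\sigma(q_{\sigma}-1)}$ (cf.\ Example~\ref{ex-aa21}), the perturbation exposes it.

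The paper's proof avoids perturbation entirely. It first extracts, from local optimality alone, the structural facts in Lemma~\ref{lemma:basic_for_nondecreasing} (in particular $f_{\sigma(q_{\sigma})}=\beta_{\min}$ and the angular constraints on $\theta(f_{\sigma(i)})$ on either side of $q_{\sigma}$). Assuming for contradiction that $\beta_{\rm OPT}<f^{\sigma}$, it then takes a genuine optimal $\sigma^{\ast}$, shifts it to a suitable $\tau=\sigma^{\ast}_{\ell}$, and iteratively pushes indices in $[q_{\sigma}+1,n]$ to the top to produce a permutation $\tau$ with $f^{\tau}<f^{\sigma}$, $f_{\tau(q_{\tau})}=\beta_{\min}$, and $\{\tau(i)\mid i>q_{\tau}\}=[q_{\sigma}+1,n]$; this contradicts the optimality (from Theorem~\ref{theorem:localopt_opt}, applied only to the \emph{monotone} tail $f_{\sigma(q_{\sigma}+1)},\dots,f_{\sigma(n)}$) of the ordering $\sigma$ restricts to that tail. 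The argument is combinatorial and does not attempt to carry local optimality through a limit.
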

    

In order to prove   Theorems \ref{theorem:main_theorem_nondecreasing-1}, \ref{theorem:main_theorem_nondecreasing-2} and \ref{theorem:localopt_opt_nondecreasing}, 
we provide a few lemmas. 
 Recall that for a real $c$, the symbol $\theta(c)$  denotes the angle of the linear function $c\,(=0x+c)$. 
 \begin{lemma}\label{corollary:constant}
     For any linear function $r$, we have the following three equivalences. 
     \begin{description}
  \item[{\rm (i) }]
      $r(c)<c  \ \Leftrightarrow \ \theta(r)\in(\theta(c),\theta(c)+\pi)_{2\pi}$.
  \item[{\rm (ii)}]   
      $r(c)=c  \ \Leftrightarrow \ \theta(r)\in\{\theta(c),\theta(c)+\pi, \bot\}$. 
  \item[{\rm (iii)}]     
      $r(c)>c \ \Leftrightarrow \ \theta(r)\in (\theta(c)-\pi,\theta(c))_{2\pi}$.
     \end{description}
 \end{lemma}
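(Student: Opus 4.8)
The plan is to deduce all three equivalences from Lemma~\ref{lemma:sin} by letting the constant $c$ play the role of one of the two linear functions. Regard $c$ as the linear function $g(x)=0\cdot x+c$; it is non-identical, its vector $\Mto{c}{1}$ is not $\Mto{0}{0}$, so $\theta(c)\,(=\theta(g))$ is a genuine angle (in fact in $(0,\pi)$). For a non-identical $r$, the composite $r\circ g$ is the constant function with value $r(c)$ and $g\circ r$ is the constant function with value $c$; hence, recalling that an inequality or equality between functions means it holds at every argument, ``$r\circ g<g\circ r$'' is exactly ``$r(c)<c$'' and ``$r\circ g=g\circ r$'' is exactly ``$r(c)=c$''. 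Applying Lemma~\ref{lemma:sin}(i) with $h=r$ gives $r(c)<c\Leftrightarrow\theta(r)-\theta(c)\in(0,\pi)_{2\pi}$, which is statement~(i), and Lemma~\ref{lemma:sin}(ii) gives $r(c)=c\Leftrightarrow\theta(r)-\theta(c)\in\{0,\pi\}_{2\pi}$, which is statement~(ii) restricted to non-identical $r$.

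Next I would dispatch the degenerate case $r(x)=x$, i.e.\ $\theta(r)=\bot$: here $r(c)=c$, so both strict inequalities fail, while $\bot$ belongs to the set $\{\theta(c),\theta(c)+\pi,\bot\}$ of~(ii) and to neither of the open arcs appearing in~(i) and~(iii) (those consist only of real angles). So all three equivalences hold in this case, and together with the previous paragraph this completes~(ii).

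Finally, (iii) follows from (i), (ii), and trichotomy. For fixed $c$, the three right-hand sets $(\theta(c),\theta(c)+\pi)_{2\pi}$, $\{\theta(c),\theta(c)+\pi\}_{2\pi}$, and $(\theta(c)-\pi,\theta(c))_{2\pi}$ are pairwise disjoint and their union is the entire range of $\theta(r)$ (all angles modulo $2\pi$, together with $\bot$): the two open arcs each span an angle of $\pi$ and, with the two boundary points, tile the circle, while $\bot$ lies only in the middle set. Since exactly one of $r(c)<c$, $r(c)=c$, $r(c)>c$ holds and the first two are equivalent by~(i),(ii) to membership in the first two sets, the remaining alternative $r(c)>c$ must be equivalent to $\theta(r)\in(\theta(c)-\pi,\theta(c))_{2\pi}$, giving~(iii).

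I do not expect a genuine obstacle — this is a routine specialization of Lemma~\ref{lemma:sin} (the statement is even flagged as a corollary) — and the only care needed is bookkeeping: that $\theta(c)$ is never $\bot$, that $\bot$ is not contained in any interval of reals, and that the three arcs and points really do partition the circle. An alternative route, avoiding Lemma~\ref{lemma:sin}, is to compute directly that $r(c)-c=\beta(r)-c\,(1-\alpha(r))=|\vec{r}|\,|\vec{c}|\sin(\theta(c)-\theta(r))$ and then read each of the three cases off the sign of the sine; invoking Lemma~\ref{lemma:sin} keeps it shorter.
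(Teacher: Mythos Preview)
Your argument is correct and matches the paper's own proof, which simply says ``It follows from Lemmas~\ref{lemma:theta_composition=x1} and~\ref{lemma:sin}'': you invoke Lemma~\ref{lemma:sin} with $g=c$ and $h=r$ for the non-identical case, and your separate treatment of $r(x)=x$ is precisely what Lemma~\ref{lemma:theta_composition=x1} supplies. Your write-up is just a fleshed-out version of that one-line citation, with the trichotomy step for~(iii) made explicit.
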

 \begin{proof}
 It follows from 
Lemmas \ref{lemma:theta_composition=x1} and  \ref{lemma:sin}. 
\end{proof}
 
\begin{lemma}\label{lemma:basic_for_nondecreasing}
 Let $f_1,\dots,f_n$ be  monotone nondecreasing linear functions at least one of which is a constant.
 Let $\sigma:[n]\to [n]$ be a local optimal permutation for $f_i$'s.
 Then we have the following four statements.
    \begin{description}
 \item[{\rm (i)}] $f^\sigma \leq \beta_{\min}$.
 \item[{\rm (ii)}] $f_{\sigma(q_\sigma)}=\beta_{\min}$.
 \item[{\rm (iii)}] $\Tfp{i} \ \in \ [\theta(f^\sigma)-\pi,\theta(\beta_{\min})]_{2\pi} \, \cup \  \{\bot\}$  for any $i\in[q_{\sigma}]$.
 \item[{\rm (iv)}] $\Tfp{i} \ \in \ [\theta(\beta_{\min}),\theta(f^\sigma)+\pi] \ \cup \ \{\bot\}$ for any $i\in[q_\sigma,n]$.
    \end{description}
    \end{lemma}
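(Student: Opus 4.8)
The plan is to exploit the fact that the last constant function in $\sigma$ "erases everything before it", together with local optimality expressed through Lemma~\ref{lemma-mx1} (which is stated for monotone functions but whose proof, via Lemma~\ref{lemma:sin}, uses only non-identity, so it applies here verbatim to the relevant non-identical blocks). Write $q = q_\sigma$. Since $f_{\sigma(q)}$ is constant, $f^\sigma = \Fplr{n}{q} = \Fplr{n}{q+1}\circ f_{\sigma(q)}$, so $f^\sigma$ is itself a constant equal to $\Fplr{n}{q+1}(\beta(f_{\sigma(q)}))$. For (ii), suppose $\beta(f_{\sigma(q)}) > \beta_{\min}$, and let $j$ be the index with $f_{\sigma(j)} \equiv \beta_{\min}$ a constant; by definition of $q$ we have $j \le q$. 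Apply the neighborhood move that pulls the interval containing position $j$ past position $q$ (or swaps the two constants directly using $\sigma_{j,j,q}$-type shifts): because $f_{\sigma(q-1)}\circ\cdots\circ f_{\sigma(j+1)}$ is nondecreasing and the two constants are $\beta_{\min} < \beta(f_{\sigma(q)})$, the resulting composite is strictly smaller, contradicting local optimality. Hence $\beta(f_{\sigma(q)})=\beta_{\min}$, which is (ii); and then $f^\sigma = \Fplr{n}{q+1}(\beta_{\min})$.

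For (i): by local optimality we may compare $\sigma$ with the permutation that moves $f_{\sigma(j)}$ (the constant equal to $\beta_{\min}$) to the very end, i.e.\ makes it the last function composed; that composite equals $\beta_{\min}$, so $f^\sigma \le \beta_{\min}$. For (iii), fix $i \in [q]$ and consider the block decomposition $g_1 = \Flr{\sigma(i-1)}{\sigma(1)}$, the single function $f_{\sigma(i)}$, and the tail $g_2 = \Flr{\sigma(n)}{\sigma(i+1)}$ (note $g_2$ contains $f_{\sigma(q)}=\beta_{\min}$, so $g_2 \equiv \beta_{\min}$ whenever $i < q$, and $g_2 = \Fplr{n}{i+1}$ is constant $=\beta_{\min}$ when $i=q$ as well since $q$ is the last constant only among positions, but $f_{\sigma(i)}$ with $i=q$ is the constant itself). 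Treating the three blocks $g_1, f_{\sigma(i)}, g_2$ as three monotone-nondecreasing functions with a locally optimal ordering, Lemma~\ref{lemma-mx1} forces $\theta(f_{\sigma(i)}) - \theta(g_1 ) \in [0,\pi]_{2\pi}$ and $\theta(g_2) - \theta(f_{\sigma(i)}\circ g_1) \in [0,\pi]_{2\pi}$; since $\theta(g_2) = \theta(\beta_{\min})$ and $\theta(f^\sigma)=\theta(\beta(g_1\text{-side composite}))$ is determined by the swap inequality against $g_1$, combining these two angular constraints (and using $\theta$-arithmetic from Lemma~\ref{lemma:theta_composition} and Lemma~\ref{corollary:constant}) pins $\theta(f_{\sigma(i)})$ into $[\theta(f^\sigma)-\pi,\theta(\beta_{\min})]_{2\pi}$, except for the degenerate case $\theta(f_{\sigma(i)})=\bot$. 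Statement (iv) is the mirror image: for $i \in [q,n]$, decompose around $f_{\sigma(i)}$ with $g_1 = \Flr{\sigma(i-1)}{\sigma(1)}$ containing the constant $\beta_{\min}$ (so $\theta(f_{\sigma(i)}\circ g_1)$-side is governed by $\theta(\beta_{\min})$) and $g_2 = \Flr{\sigma(n)}{\sigma(i+1)}$, and run the same Lemma~\ref{lemma-mx1} argument to land $\theta(f_{\sigma(i)})$ in $[\theta(\beta_{\min}),\theta(f^\sigma)+\pi]$.

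The routine part is the $\theta$-bookkeeping: translating each locally-optimal swap inequality into an arc containment via Lemmas~\ref{lemma:sin}, \ref{lemma:theta_composition}, and \ref{corollary:constant}, and intersecting the arcs. The main obstacle I expect is (ii)--(iii): one must be careful that the neighborhood $N(\sigma)$ really provides the swap that moves a constant past an intervening nondecreasing block — this is exactly a $\sigma_{\ell,m,r}$ move, so it is legitimate, but the argument must handle the boundary case $i=q$ (where $f_{\sigma(i)}$ is the constant $\beta_{\min}$ itself, making several of the claimed containments trivial since $\theta(f_{\sigma(q)})=\theta(\beta_{\min})$) and the case where blocks degenerate to the identity (handled by the "non-identical" proviso already built into Lemma~\ref{lemma-mx1} and into the statement via $\{\bot\}$). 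Once (ii) is in place, (i), (iii), (iv) follow by the uniform three-block local-optimality argument.
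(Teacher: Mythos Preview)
Your overall strategy (exploit local optimality through specific neighborhood swaps, then translate each swap inequality into an arc constraint via Lemma~\ref{lemma:sin}/Lemma~\ref{corollary:constant}) is exactly what the paper does, and your arguments for (i) and (ii) are correct --- (ii) by a direct swap rather than the paper's route through the inequality $f_i(f_{q_\sigma})\ge f_{q_\sigma}$, but that works too since the suffix $\Fplr{n}{q+1}$ is strictly increasing.

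There is, however, a genuine gap in your plan for (iii). Your claim that $g_2=\Fplr{n}{i+1}\equiv\beta_{\min}$ when $i<q$ is wrong: the constant $f_{\sigma(q)}=\beta_{\min}$ sits in the \emph{middle} of $g_2$, not at the top, so $g_2 = \Fplr{n}{q+1}(\beta_{\min}) = f^\sigma$, and hence $\theta(g_2)=\theta(f^\sigma)$, not $\theta(\beta_{\min})$. Consequently your single three-block decomposition can only deliver \emph{one} arc constraint on $\theta(f_{\sigma(i)})$, namely $\theta(f_{\sigma(i)})\in[\theta(f^\sigma)-\pi,\theta(f^\sigma)]_{2\pi}$ (from moving $f_{\sigma(i)}$ to the end). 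It cannot see the endpoint $\theta(\beta_{\min})$. The paper obtains (iii) from \emph{two} different neighborhood moves: (a) moving $f_{\sigma(i)}$ to just after position $q$, which yields $f_{\sigma(i)}(\beta_{\min})\ge\beta_{\min}$ and hence $\theta(f_{\sigma(i)})\in[\theta(\beta_{\min})-\pi,\theta(\beta_{\min})]_{2\pi}\cup\{\bot\}$; and (b) the $i$-shift, which yields $f_{\sigma(i)}(f^\sigma)\ge f^\sigma$ and hence $\theta(f_{\sigma(i)})\in[\theta(f^\sigma)-\pi,\theta(f^\sigma)]_{2\pi}\cup\{\bot\}$. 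Intersecting these two arcs (using $\theta(\beta_{\min})\le\theta(f^\sigma)$ from (i)) gives the stated interval. You need both moves.

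For (iv) your ``mirror'' sketch is too vague to evaluate, and in fact the paper does \emph{not} argue (iv) by mirroring (iii). Instead it sets $v_i=\Fplr{i}{q}$ for $i\in[q,n]$, shows from local optimality that the sequence $v_q,\dots,v_n$ is nonincreasing (hence each $v_i\in[f^\sigma,\beta_{\min}]$), and then reads off $\theta(f_{\sigma(i+1)})\in[\theta(v_i),\theta(v_i)+\pi]\subseteq[\theta(\beta_{\min}),\theta(f^\sigma)+\pi]$ from $f_{\sigma(i+1)}(v_i)\le v_i$ via Lemma~\ref{corollary:constant}. This monotone-sequence argument is what you are missing.
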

    \begin{proof}
 Without loss of generality, we assume that $\sigma$ is the identity.
Let  $i$ be an index with $f_i(x)=\beta_{\min}$.
If $i=n$, then we have  $\Fall[\sigma]=\beta_{\min}$. 
On the other hand if $i <n$, then 
 $\Fall[\sigma_i]=\beta_{\min}$ and 
$ \sigma_i$ is a neighbor of $\sigma$.  
These imply {\rm (i)}.
To show the rest of the  properties, 
we note that the following inequalities hold by the local optimality:  
 \begin{align}
&f_n\circ \dots \circ  f_{q_{\sigma} + 1}\circ f_i\circ\Flr{q_{\sigma}}{i+1} \ \circ f_{i-1} \circ \dots \circ f_1 \geq \ f^\sigma, \label{eq:b_qaa1}\\
&f_i(f^\sigma)\ =\ f_i \circ \Flr{n}{i+1}\circ\Flr{i-1}{1} \ \geq \ f^\sigma 
\label{eq:Ball}
 \end{align}
for any $i\in [q_{\sigma}-1]$. 
Since  $f_{q_{\sigma}}$ is a constant, 
\eqref{eq:b_qaa1} implies 
 \begin{align}
&f_i(f_{q_{\sigma}}) \ =\ f_i\circ\Flr{q_{\sigma}}{i+1} \ \geq \ \Flr{q_{\sigma}}{i}\ =\ f_{q_{\sigma}}.  \label{eq:b_q}
 \end{align}
 Let $f_i$ be a constant, i.e., $\alpha(f_i)=0$. Then 
 by the definition of $q_{\sigma}$, 
 we have $i \leq q_{\sigma}$, which together with 
  \eqref{eq:b_q} implies  {\rm (ii)}, 
  since $f_i=f_i(f_{q_{\sigma}})$. 
 Moreover,  by \eqref{eq:b_q}, \eqref{eq:Ball}, and Lemma \ref{corollary:constant},  
 we have 
  \begin{align*}
     \theta(f_i)
     \in ([\theta(\beta_{\min})-\pi,\theta(\beta_{\min})]_{2\pi} \cup \{\bot\}) \cap ([\theta(f^\sigma)-\pi,\theta(f^\sigma)]_{2\pi} \cup \{\bot\})
     \end{align*}
 for any $i\in[q_{\sigma}-1]$. 
 Since (i) implies $\theta(f^\sigma)\geq  \theta(\beta_{\min})$,  it holds that 
 \begin{align*}
     \theta(f_i)
     \in [\theta(f^\sigma)-\pi,\theta(\beta_{\min})]_{2\pi} \cup \{\bot \},
 \end{align*}
which proves {\rm (iii)}.
 
 For any integer $i\in[q_{\sigma},n]$, define a real $v_i$ by  $v_i=\Flr{i}{q_{\sigma}}$.
 Then by the local optimality of $\sigma$, 
 we have
 $v_{i+1}=f_{i+1} \circ\Flr{i}{q_{\sigma}}\leq\Flr{i}{q_{\sigma}}\circ f_{i+1}=v_{i}$. 
That is, the sequence $v_{q_{\sigma}},v_{q_{\sigma}+1},\dots,v_n$ is non-increasing. 
This implies that 
$v_{i}\in[v_n,v_{q_{\sigma}}]=[f^\sigma,\beta_{\min}]$ for any $i\in[q_{\sigma},n]$.
Moreover, since $v_{i+1}=f_{i+1}(v_{i})\leq v_{i}$, Lemma \ref{corollary:constant} implies that 
$\theta(f_{i+1})\in[\theta(v_{i}),\theta(v_{i})+\pi] \cup \{\bot \}\subseteq[\theta(\beta_{\min}),\theta(f^\sigma)+\pi] \cup \{\bot\}$, which 
completes the proof. 
\end{proof}

We are now ready to prove Theorem \ref{theorem:localopt_opt_nondecreasing}. 


\begin{proof}[Proof of Theorem \ref{theorem:localopt_opt_nondecreasing}]
By Theorem \ref{theorem:localopt_opt},  it is sufficient to prove the case where at least one of $f_i$'s is a constant.
We assume 
without loss of generality that the identity $\sigma$ is local optimal for $f_i$'s.
Recall that $f_{q_\sigma}=\beta_{\min}$ by Lemma \ref{lemma:basic_for_nondecreasing} and $f_n\circ\cdots\circ f_{q_\sigma+1}$ is an optimal function for $f_{q_\sigma+1}, \dots, f_n$ by Theorem \ref{theorem:localopt_opt}.
Supposing to the contrary that $\beta_{\rm OPT}<f^\sigma$, we construct 
 a permutation $\tau:[n]\to [n]$ such that 
\begin{enumerate}[(I)]
    \item $f^{\tau}<f^\sigma$,
    \item $f_{\tau(q_{\tau})}=\beta_{\min}$,
    \item $\{\tau(i) \mid i>q_{\sigma}\} = [q_{\tau}+1,n]$, 
\end{enumerate}
which completes the proof, 
since  the existence of such a permutation $\tau$ contradicts $f_{q_\sigma}=\beta_{\min}$ and the optimality of $f_n\circ\cdots\circ f_{q_\sigma+1}$.

Let $\sigma^*$ be an optimal permutation, 
and let  $v_i=\Fplr[\sigma^*]{i}{q_{\sigma^*}}$ for $i \in [q_{\sigma^*}, n]$. 
By recalling the proof of Lemma \ref{lemma:basic_for_nondecreasing}, 
we have $v_n\leq\dots\leq v_{q_{\sigma^*}}=\beta_{\min}$, and moreover
there exists an optimal permutation $\sigma^*$ such that $v_n<\dots<v_{q_{\sigma^*}}=\beta_{\min}$.
Assume that $\sigma^*$ satisfies the property. Then there exists an $\ell \in[q_{\sigma^*}+1,n]$ such that $v_\ell <f^\sigma\leq v_{\ell-1}$.
Let  $\tau$ denote the $\ell$-shift of $\sigma^*$, i.e., $\tau=\sigma^*_\ell$, 
where we define $\sigma^*_n = \sigma^*$.
Note that this $\tau$ satisfies (I) and (II). 
Moreover, since $v_{i-1}\in[f^\sigma,\beta_{\min}]$ and  $\Fp[\tau]{i}(v_{i-1})<v_{i-1}$  for any $i \in [q_\tau+1, n], $
 Lemmas \ref{corollary:constant} (i) and  \ref{lemma:basic_for_nondecreasing} (iv) imply that 
 $\{\tau(i) \mid i>q_{\tau}\} \subseteq [q_{\sigma}+1,n]$.
Again by  Lemmas \ref{corollary:constant} and \ref{lemma:basic_for_nondecreasing} (iv),  any $i\in[q_\sigma+1,n]$ satisfies  at least one of the following two inequalities:
\begin{align}
    f_i(\beta_{\min})&\leq \beta_{\min} \ \mbox{ and }\ 
    f_i(f^{\tau})\leq f^{\tau}, \label{eq-9930}
\end{align}
which respectively imply that 
\begin{equation}
\label{eq-9931}
\begin{array}{l}
\Fplr[\tau]{n}{q_{\tau}+1} \circ \Fp[\tau]{j} \circ \Fplr[\tau]{q_{\tau}}{1} \  \leq \ f^{\tau} \ \mbox{ and} \\
    \Fp[\tau]{j}\circ\Fplr[\tau]{n}{j+1}\circ\Fplr[\tau]{j-1}{1} \ \leq \  f^{\tau}, 
\end{array}
\end{equation}
if an index  $j \in [q_\tau -1]$ 
satisfies $\tau(j)\in[q_\sigma+1,n]$. 
In either case, a new permutation $\tau$ corresponding to the left hand sides of \eqref{eq-9931}  satisfies 
\eqref{eq-9930}. 
Therefore, by repeatedly applying this  modification, we can obtain a permutation $\tau$ that satisfies (I), (II) and (III), completing the proof.
\end{proof}

Lemma \ref{lemma:basic_for_nondecreasing} and Theorem \ref{theorem:localopt_opt_nondecreasing} 
 state that $\theta(\beta_{\min})$ and $\theta(\beta_{\rm OPT})+\pi$ provide the {\it boundaries} of  optimal permutations.
The following lemma proves the equivalence of (i) and (ii) in Theorem \ref{theorem:main_theorem_nondecreasing-1}. 
   
\begin{lemma}\label{lemma_nondecreasing-11}    Let $f_1,\dots,f_n$ be monotone nondecreasing linear functions at least one of which is 
a constant.
 Then $\beta_{\rm OPT} =\beta_{\min}$ if and only if  $\Tf{i}\in[\theta(\beta_{\min})-\pi,\theta(\beta_{\min})]_{2\pi} \cup \{ \bot\}$ holds for any $i\in[n]$.  
\end{lemma}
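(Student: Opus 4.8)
The plan is to prove both directions by exploiting the characterization of locally optimal permutations already established in Lemma~\ref{lemma:basic_for_nondecreasing}, together with the observation (from the discussion preceding the lemma) that $\beta_{\rm OPT}$ is attained at a locally optimal permutation, which by Theorem~\ref{theorem:localopt_opt_nondecreasing} is in fact globally optimal. Fix an index $i_0$ with $f_{i_0}(x)=\beta_{\min}$.

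For the ``if'' direction, suppose $\theta(f_i)\in[\theta(\beta_{\min})-\pi,\theta(\beta_{\min})]_{2\pi}\cup\{\bot\}$ for every $i\in[n]$. I would take an optimal (hence locally optimal) permutation $\sigma$, and argue that $f^\sigma=\beta_{\min}$. By Lemma~\ref{lemma:basic_for_nondecreasing}(i) we already have $f^\sigma\le\beta_{\min}$, so it remains to rule out $f^\sigma<\beta_{\min}$. Write $f^\sigma=f_{\sigma(n)}\circ\cdots\circ f_{\sigma(q_\sigma+1)}\circ g$ where $g=f_{\sigma(q_\sigma)}=\beta_{\min}$ by Lemma~\ref{lemma:basic_for_nondecreasing}(ii); thus $f^\sigma = (f_{\sigma(n)}\circ\cdots\circ f_{\sigma(q_\sigma+1)})(\beta_{\min})$. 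It suffices to show that applying any of the functions $f_{\sigma(q_\sigma+1)},\dots,f_{\sigma(n)}$ (all monotone increasing, being placed after the last constant) to a value in $[\beta_{\min},\infty)$... wait — more carefully, I would set $v_j = f_{\sigma(j)}\circ\cdots\circ f_{\sigma(q_\sigma)}$ for $j\ge q_\sigma$, note $v_{q_\sigma}=\beta_{\min}$, and show inductively $v_j\ge\beta_{\min}$: indeed $v_{j}=f_{\sigma(j)}(v_{j-1})$ and, since $v_{j-1}\ge\beta_{\min}=\beta(f_{i_0})$ and $\theta(f_{\sigma(j)})\in[\theta(\beta_{\min})-\pi,\theta(\beta_{\min})]_{2\pi}\cup\{\bot\}$, Lemma~\ref{corollary:constant} (applied with the constant $v_{j-1}$, using monotonicity of $\theta(v_{j-1})\mapsto\theta(v_{j-1})$ along the relevant arc) gives $f_{\sigma(j)}(v_{j-1})\ge v_{j-1}\ge\beta_{\min}$ — I need to check the arc condition matches, comparing $\theta(v_{j-1})$ with $\theta(\beta_{\min})$ when $v_{j-1}\ge\beta_{\min}>0$ or handling the sign of $\beta_{\min}$ via the convention for $\theta$ of constants. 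Hence $f^\sigma=v_n\ge\beta_{\min}$, so $f^\sigma=\beta_{\min}=\beta_{\rm OPT}$.

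For the ``only if'' direction, suppose some $f_i$ has $\theta(f_i)\notin[\theta(\beta_{\min})-\pi,\theta(\beta_{\min})]_{2\pi}\cup\{\bot\}$, i.e.\ $\theta(f_i)\in(\theta(\beta_{\min}),\theta(\beta_{\min})+\pi)$ (the remaining open arc); note such $f_i$ is monotone increasing. Then $f_i(\beta_{\min})<\beta_{\min}$ by Lemma~\ref{corollary:constant}(i). I would exhibit a permutation beating $\beta_{\min}$: take $\rho$ with $\rho(1)=i_0$, $\rho(2)=i$, and the remaining indices in any order afterward. Then $f_i(f_{i_0}(x)) = f_i(\beta_{\min}) =: \beta' < \beta_{\min}$, a constant, so $f^\rho$ is a constant equal to $(f_{\rho(n)}\circ\cdots\circ f_{\rho(3)})(\beta')$; but this is $\le$ ... hmm, that chain need not decrease things below $\beta'$. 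Cleaner: since $f_i(\beta_{\min})$ is itself a constant function strictly below $\beta_{\min}$, consider instead the permutation placing $i_0$ then $i$ at the \emph{end}: $\rho(n-1)=i_0$, $\rho(n)=i$, other indices first. Then $f^\rho = f_i(f_{i_0}(\text{anything})) = f_i(\beta_{\min}) = \beta' < \beta_{\min}$, since $f_{i_0}$ being constant wipes out everything before it. Thus $\beta_{\rm OPT}\le\beta'<\beta_{\min}$, so $\beta_{\rm OPT}\ne\beta_{\min}$, which is the contrapositive of what we want.

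The main obstacle I anticipate is bookkeeping with the angle convention for constant functions and the ``$=_{2\pi}$'' arithmetic: Lemma~\ref{corollary:constant} is stated for a real $c$ and the function $r$, and I must consistently translate statements like ``$v_{j-1}\ge\beta_{\min}$'' into the correct containment of $\theta(v_{j-1})$ relative to $\theta(\beta_{\min})$, being careful about whether the constants are positive, negative, or zero (which changes their polar angle between $\theta=0$, $\theta=\pi$, or $\bot$). Once the base cases of the induction in the ``if'' part are set up correctly against these conventions, the rest is a routine monotone induction, and the ``only if'' part is essentially immediate from the annihilating property of the constant function $f_{i_0}$.
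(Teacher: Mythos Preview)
Your ``only if'' direction is correct and matches the paper's argument exactly: place the constant $f_{i_0}$ at position $n-1$ and the offending $f_i$ at position $n$, so that $f^\rho=f_i(\beta_{\min})<\beta_{\min}$.

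Your ``if'' direction has the right overall shape but the inductive step as written contains a genuine error. You claim $f_{\sigma(j)}(v_{j-1})\ge v_{j-1}$ via Lemma~\ref{corollary:constant}, but that lemma would require $\theta(f_{\sigma(j)})\in[\theta(v_{j-1})-\pi,\theta(v_{j-1})]_{2\pi}\cup\{\bot\}$, whereas your hypothesis only gives $\theta(f_{\sigma(j)})\in[\theta(\beta_{\min})-\pi,\theta(\beta_{\min})]_{2\pi}\cup\{\bot\}$. When $v_{j-1}>\beta_{\min}$ (so $\theta(v_{j-1})<\theta(\beta_{\min})$), these half-planes differ and the implication can fail: e.g.\ with $\beta_{\min}=1$ and $f(x)=\tfrac12 x+1$ one has $\theta(f)\in[\theta(1)-\pi,\theta(1)]_{2\pi}$, yet $f(3)=\tfrac52<3$. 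The fix is simple and avoids the angle bookkeeping you were worried about: since $j>q_\sigma$ the function $f_{\sigma(j)}$ is monotone increasing, so from $v_{j-1}\ge\beta_{\min}$ you get
\[
v_j=f_{\sigma(j)}(v_{j-1})\ \ge\ f_{\sigma(j)}(\beta_{\min})\ \ge\ \beta_{\min},
\]
the last inequality by Lemma~\ref{corollary:constant} applied at the constant $\beta_{\min}$ itself. With this correction your induction goes through. The paper takes a slightly different route for this direction: it composes $f_{\sigma(n)}\circ\cdots\circ f_{\sigma(q_\sigma+1)}$ in one shot, observes via Lemmas~\ref{lemma:theta_composition=x1} and~\ref{lemma:theta_composition} that the composite's angle remains in $[\theta(\beta_{\min})-\pi,\theta(\beta_{\min})]_{2\pi}\cup\{\bot\}$, and then applies Lemma~\ref{corollary:constant} a single time, thereby avoiding the step-by-step induction.
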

\begin{proof}
Suppose on the contrary that there exists a linear function $f_j$ such that  $\Tf{j}\in(\theta(\beta_{\min}),$ $\theta(\beta_{\min})+\pi)_{2\pi}$, and    
let $f_i=0x+\beta_{\min}$. 
Then consider 
 a permutation $\sigma:[n]\to [n]$ with $\sigma(n-1)=i$ and $\sigma(n)=j$.  
Since $f^\sigma=f_j(\beta_{\min})$,  
Lemma \ref{corollary:constant} implies  that $f^\sigma< \beta_{\min}$, which concludes that 
$\beta_{\rm OPT}< \beta_{\min}$. 

On the other hand, if all $f_i$'s satisfy 
$\Tf{i}\in[\theta(\beta_{\min})-\pi,\theta(\beta_{\min})]_{2\pi} \cup \{\bot\}$, 
then for any optimal permutation $\sigma:[n]\to [n]$ for $f_i$'s,  we have 
$f^\sigma= f_{\sigma(n)} \circ \dots \circ f_{\sigma(q_\sigma+1)}(\beta_{\min})$ by Lemma  \ref{lemma:basic_for_nondecreasing} (ii). 
It follows from  Lemmas \ref{lemma:theta_composition=x1} and \ref{lemma:theta_composition} that 
\[\theta(f_{\sigma(n)} \circ \dots \circ f_{\sigma(q_\sigma+1)})\in[\theta(\beta_{\min})-\pi,\theta(\beta_{\min})]_{2\pi}
\cup \{\bot\}, 
\]
which together with Lemmas  \ref{corollary:constant}  and \ref{lemma:basic_for_nondecreasing} (i)
implies that $\beta_{\rm OPT} = \beta_{\min}$. 
\end{proof}

\begin{proof}[Proof of Theorem \ref{theorem:main_theorem_nondecreasing-1}]
By Lemma \ref{lemma_nondecreasing-11}, (i) and (ii) are equivalent. 
If $\beta_{{\rm OPT}}=\beta_{\min}$, 
then Lemma \ref{lemma:basic_for_nondecreasing} (iii) and (iv) imply that 
any optimal permutation $\sigma:[n]\to [n]$ for $f_i$'s  satisfies 
$i<q_\sigma$ for any $i$ with 
 $\theta(f_{\sigma(i)})\in (\theta(\beta_{\min})-\pi,\theta(\beta_{\min}))_{2\pi}$. 
On the other hand, if a permutation $\sigma$ satisfies the condition above, then we have 
$f^\sigma= f_{\sigma(n)} \circ \dots \circ f_{\sigma(q_\sigma+1)}(\beta_{\min})=\beta_{\min}$, 
where the last equality follows from 
Lemma \ref{corollary:constant} (ii). 
This implies that  $\sigma$ is optimal for $f_i$'s, proving the implication of  (ii) $\Longrightarrow$ (iii).  

To show (iii) $\Longrightarrow$ (i), 
let $f_j$ satisfy 
$\Tf{j}\in(\theta(\beta_{\min}),$ $\theta(\beta_{\min})+\pi)$ for some $j$ and assume that the identity {\rm id} is optimal for $f_i$'s.   Then by Lemma \ref{lemma:basic_for_nondecreasing} (iv), 
the inequality $j > q_{\rm id}$ must hold, 
since $\theta(\beta_{\min})\leq  \theta(\beta_{\rm OPT})$. 
However, (iii) does not imply this property, which completes the proof.  
\end{proof}

\begin{proof}[Proof of Theorem \ref{theorem:main_theorem_nondecreasing-2}]
By Lemmas \ref{lemma:basic_for_nondecreasing} (i) and   
\ref{lemma_nondecreasing-11}, 
we have $\beta_{\rm OPT} < \beta_{\min}$ if and only if some $f_j$ satisfies $\Tf{j}\in(\theta(\beta_{\min}),\theta(\beta_{\min})+\pi)$. 

To show (i) $\Longrightarrow$ (ii),  
let us first assume that the identity ${\rm id}:[n]\to [n]$ is optimal for $f_i$'s. 
By Lemma \ref{lemma:basic_for_nondecreasing}, $\theta(\beta_{\min})$ and $\theta(\beta_{\rm OPT})+\pi$ are the boundaries of $\{\theta(f_i)\mid i<q_{\rm id}\}$ and $\{\theta(f_i)\mid i>q_{\rm id}\}$.
Thus we only have to prove that $f_{q_{\rm id}+1},\cdots,f_n$ is counterclockwise.
Since they are monotone, Theorem \ref{theorem:main_theorem-x} implies that it is true if the following condition is not satisfied. 
\begin{equation}
 \{\psi,\psi+\pi\} \ \subseteq  \  
 \{\theta(f_i) \mid i > q_{\rm id}
 \} \ \subseteq \ 
 \{\psi,\psi+\pi,\bot\} \label{eq:psi_00}
\end{equation}
for some $\psi$ with $0 \leq \psi < \pi$.
Note that $\psi=\theta(f_j)$ for some $j$, 
since $\theta(\beta_{\min}) <\theta(\beta_{\rm OPT})$.
Suppose on the contrary that \eqref{eq:psi_00} is satisfied. Then 
by repeatedly applying Lemma \ref{lemma:theta_composition} {\rm (ii)}, we have 
\begin{equation*}
    \theta(\Flr{n}{q_{\rm id}+1})\in\{\psi,\psi+\pi,\bot\}.
\end{equation*}
Since $\beta_{\rm OPT} < \beta_{\min}$,  
Lemma \ref{corollary:constant} (i) implies that 
$\theta(\Flr{n}{q_{\rm id}+1})=\psi$. 
By Lemma \ref{lemma:vector_composition}, 
we have 
\begin{equation*}
  \theta(\beta_{\rm OPT})  \in (\theta(\beta_{\min}),\psi), 
\end{equation*}
which contradicts that 
$\psi+\pi \leq \theta(\beta_{\rm OPT}) + \pi$
by Lemma \ref{lemma:basic_for_nondecreasing} (iv).  

 To show that (ii) $\Longrightarrow$ (i), 
 let $\mu:[n]\to [n]$ be a counterclockwise permutation that satisfies $\theta(f^\mu)+\pi\in[\theta(f_{\mu(n)}),\theta(f_{\mu(1)})]_{2\pi}$.
 Since $f^\mu=f^{\mu\circ \rho}$ for any permutation $\rho:[n]\to [n]$ such that $\rho(i) = i$ for any $i \in [q_\sigma, n]$, it is enough to show that $\mu$ is optimal for $f_i$'s. 
 By (i) $\Longrightarrow$ (ii), which has already been proven, we can take an optimal permutation $\nu$ such that $\theta(f^\nu)+\pi\in[\theta(f_{\nu(n)}),\theta(f_{\nu(1)}]_{2\pi}$ and $\mu_k=\nu$ for some $k$.
 Suppose on the contrary that $f^\nu< f^\mu$. Then we have $\{\theta(f_{\mu(1)}),\dots,\theta(f_{\mu(k)})\}\subseteq[\theta(f^\mu)+\pi,\theta(f^\nu)+\pi]\cup \{\bot\}$ and 
\begin{eqnarray*}
f^\nu&=&f_{\mu(k)} \circ \dots \circ f_{\mu(1)}\circ f_{\mu(n)} \circ \dots \circ  f_{\mu(k+1)} \nonumber\\
&=& f_{\mu(k)} \circ \dots \circ f_{\mu(1)} (f^\mu) 
 \ < \  f^\mu, \nonumber
\end{eqnarray*}
where the second equality follows because any constant function in $f_i$'s is in $\{f_{\mu(k+1)},\dots,f_{\mu(n)}\}$.
This together with Lemma \ref{corollary:constant} implies that 
\begin{equation}
\label{eq-non-las1}
\theta(f_{\mu(k)} \circ \dots \circ f_{\mu(1)} )\in(\theta(f^\mu),\theta(f^\mu)+\pi), 
\end{equation}
which contradicts $\{\theta(f_{\mu(1)}),\dots,\theta(f_{\mu(k)})\}\subseteq[\theta(f^\mu)+\pi,\theta(f^\nu)+\pi] \cup \{\bot\}$.
\end{proof}

We note that the equivalence of (i) and (ii) in Theorem 
\ref{theorem:main_theorem_nondecreasing-2} does not imply     
that $\beta_{\rm OPT}< \beta_{\min}$. 

Before concluding this section, 
we mention the maximization problem. 
\begin{remark}
\label{remark-11}
As discussed in Section \ref{section:notation}, 
the maximization for $f_i$'s is equivalent to the minimization for $\tilde{f}_i$'s given by \eqref{eq--0001}. 
Thus all the results for monotone nondecreasing functions are applicable for the maximization problem. 
Since the mapping \eqref{eq--0001} is the reflection across the $(1-a)$-axis in the vector representation, we can obtain the results by exchanging the term ``counterclockwise'' by ``clockwise''. 
\end{remark}

\section{Composition of General Linear Functions}\label{section:general}
In this section, we discuss the composition for general linear functions $f_1, \dots,$ $f_n$, 
where an example  of composition for general linear functions 
is given in Example \ref{ex-general1}. 
    Let $k$ denote the number of monotone decreasing functions in them, i.e., 
$k=|\{i\in[n]\mid \alpha(f_i)<0\}|$.
In Sections \ref{section:monotone} and \ref{section:nondecreasing}, 
we provided structure characterizations for the composite when $k=0$, and as their corollary we presented a polynomial-time algorithm for the minimization/maximization for monotone nondecreasing linear functions. 
In this section, we present several structural properties for the composites of general linear functions and show fixed-parameter tractability for the minimization problem with respect to $k$, whose complexity status was open \cite{KMS:linear}. 

In the rest of this section,  we mainly restrict our attention to the case in which no linear function
is identical or constant, i.e., $f_i\not=x$ and $\alpha(f_i)\not= 0$ for all $i \in [n]$.
Note that the identical function plays no role in optimal composition, and 
Lemma \ref{lemma:epslion1} in Section \ref{section:nondecreasing} implies that  
the general composition problem  can be transformed to this case, by using $f_i^{(\epsilon)}$'s  for some $\epsilon>0$. 
We remark that our algorithm does not make use of $\epsilon$ explicitly, since the orderings of angles $\theta(f_i^{(\epsilon)})$'s are only needed. 

\begin{example}   
\label{ex-general1}
Let $f_i$ $(i=1, \dots 7)$ be linear functions defined by 
    \begin{align*}
 &f_1=\frac{1}{3}x, \ f_2=\frac{2}{3}x+1,\ f_3=x+\frac{1}{2},\ f_4=-x-3,\ f_5=x-1, \ f_6=\frac{3}{2}x, \ f_7=2x+1, 
    \end{align*}
where all but $f_4$ are monotone. The vector representation is shown in Fig. \ref{figure:vectors}
 \begin{figure}[htb]
     \centering
     \includegraphics[width=60mm]{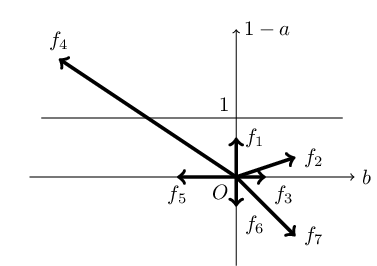}
     \caption{The vector representation for $f_1, ..., f_7$.}
     \label{figure:vectors}
 \end{figure}
Note that the identity permutation provides an optimal function $\Flr{7}{1}$. 
Recall  that
$\Tfp{i+1}-\Tfp{i} \in [0,\pi]_{2\pi}$ holds for 
any optimal (i.e., minimum) permutation $\sigma$ of monotone linear functions by Lemma \ref{lemma:sin}. 
However, this crucial property for monotone linear functions
does not hold in general. 
For example, the instance satisfies $\theta(f_2)-\theta(f_1)\in  (\pi, 2\pi)$. 
Instead, we point out 
the following properties:
$f_3\circ f_2\circ f_1$ is provided by a  maximum permutation for $f_1,f_2$, and $f_3$, while $f_7\circ f_6\circ f_5$ is provided by a minimum permutation for  $f_5,f_6$, and $f_7$.
\end{example}


To see the properties mentioned in Example \ref{ex-general1}, 
We define two sets $L^\sigma$ and $U^\sigma$ of monotone nondecreasing linear functions.  
 For a permutation $\sigma:[n]\to [n]$, 
 let $n^\sigma_1, \ldots, n^\sigma_k$ be integers such that 
 \[
 0 \ (= n^\sigma_0) \ < \ n^\sigma_1 \ < \ \cdots\  < \ n^\sigma_k\ < \   n+1 \ (=n^\sigma_{k+1})
 \] and 
 $\alpha(f_{\sigma(n^\sigma_j)})<0$ for all $j \in [k]$. 
 For $j \in \{0, 1,\dots , k\}$, let  
\begin{align*}
  I_j^{\sigma} &=\{ i\in[n] \mid n^{\sigma}_j< i < n^{\sigma}_{j+1} \}
\end{align*}
and define
\[
L^\sigma=\bigcup_{k-j: \text{even}} I_j^\sigma \ \text{ and } \ U^\sigma=\bigcup_{k-j: \text{odd}} I_j^\sigma. 
\]

By definition, 
the set of indices of all monotone nondecreasing functions $\{i\in[n]\mid \alpha(f_{\sigma(i)})\geq 0\}$ is partitioned into   $L^{\sigma}$ and $U^{\sigma}$. 
In the example in Example \ref{ex-general1}, we have $L^{{\rm id}}=I_1^{{\rm id}}=\{5,6,7\}$ and $U^{{\rm id}}=I_0^{{\rm id}}=\{1,2,3\}$. 

 For an integer $j \in \{0, \dots , k\}$, define linear functions 
\[
r_{i-n^\sigma_j}=f_{\sigma(i)} \ \
\text{ for }\ i \in I_j^\sigma. 
\]

\begin{lemma}\label{lemma:LU_clockwise1}
Let $\sigma:[n]\to [n]$ be an optimal permutation for non-constant and non-identical linear functions $f_1, \dots , f_n$. 
For an integer $j \in \{0, \dots , k\}$, 
let $r_i$ $(i \in I_j^\sigma)$ be defined as above. 
Then we have the following two statements. 
\begin{description}
\item[{\rm (i)}] 
If $k-j$ is even, then the identity ${\rm id}:[ |I_j^\sigma|] \to [ |I_j^\sigma|] $ is a minimum permutation for $r_1, \dots , r_{|I_j^\sigma|}$.  

\item[{\rm (ii)}]
If $k-j$ is odd, then the identity ${\rm id}:[ |I_j^\sigma|] \to [ |I_j^\sigma|] $ is a  maximum permutation for $r_1, \dots , r_{|I_j^\sigma|}$.  
\end{description}
\end{lemma}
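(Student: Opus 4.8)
The plan is to argue by contradiction, exploiting the fact that the blocks $I_j^\sigma$ are separated by the decreasing functions $f_{\sigma(n_j^\sigma)}$, and that rearranging within a single block $I_j^\sigma$ while leaving everything else fixed is a legitimate move that keeps $\sigma$ a permutation. First I would observe that for a fixed $j$, if we replace the ordering of the functions indexed by $I_j^\sigma$ by any other ordering of the same set (via a permutation $\pi$ of $I_j^\sigma$ acting as the identity outside $I_j^\sigma$), we obtain a new global permutation $\sigma'$, and the composite $f^{\sigma'}$ differs from $f^\sigma$ only through the inner piece $g_\pi = \Fplr[\pi]{n_{j+1}^\sigma-1}{n_j^\sigma+1}$, i.e. $f^{\sigma'} = A \circ g_\pi \circ B$ where $A = \Fplr{n}{n_{j+1}^\sigma}$ and $B = \Fplr{n_j^\sigma}{1}$ are fixed linear functions that do not depend on $\pi$. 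Since $g_\pi$ is a composite of the monotone nondecreasing functions $r_1,\dots,r_{|I_j^\sigma|}$, it is itself monotone nondecreasing, so $\alpha(g_\pi) = \prod_i \alpha(r_i) \ge 0$ is a constant independent of $\pi$; only $\beta(g_\pi)$ varies.

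Next I would track the sign of $\alpha(A)$. By definition of $n_1^\sigma,\dots,n_k^\sigma$, exactly the decreasing functions among $f_{\sigma(n_{j+1}^\sigma)},\dots,f_{\sigma(n_k^\sigma)}$ sit inside $A$ (the blocks $I_{j+1}^\sigma,\dots,I_k^\sigma$ contribute only nonnegative slopes), so $\alpha(A)$ has the sign of $(-1)^{k-j}$ — more precisely $\alpha(A) = \prod_{i \ge n_{j+1}^\sigma}\alpha(f_{\sigma(i)})$, whose sign is $(-1)^{k-j}$ (assuming all slopes are nonzero, which holds since no $f_i$ is constant). Now write $f^{\sigma'}(c) = A(g_\pi(B(c)))$ for the relevant constant $c$; since $A$ is linear, $f^{\sigma'}(c) = \alpha(A)\,g_\pi(B(c)) + \beta(A)$, and $g_\pi(B(c)) = \alpha(g_\pi) B(c) + \beta(g_\pi)$. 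Thus $f^{\sigma'}(c)$ is an increasing affine function of $\beta(g_\pi)$ when $\alpha(A) > 0$ (the case $k-j$ even) and a decreasing affine function of $\beta(g_\pi)$ when $\alpha(A) < 0$ (the case $k-j$ odd). Since $\beta(g_\pi) = \beta(g_\pi(B(c) \text{ evaluated}))$ up to the fixed shift, minimizing $f^{\sigma'}(c)$ over $\pi$ is equivalent to minimizing $g_\pi(B(c))$, hence to minimizing $\beta(g_\pi)$, when $k-j$ is even, and to maximizing it when $k-j$ is odd.

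Therefore, if in case (i) the identity were not a minimum permutation for $r_1,\dots,r_{|I_j^\sigma|}$, there would be a permutation $\pi$ with $f_{r}^\pi < f_r^{\rm id}$ (strictly, for the relevant argument $B(c)$), which by the computation above yields $f^{\sigma'}(c) < f^\sigma(c)$, contradicting optimality of $\sigma$; the odd case is symmetric, using that $f_r^\pi$ being a maximum for the $r_i$'s is what minimizes $f^{\sigma'}(c)$. I expect the main technical care to be two-fold: first, being precise that "the relevant argument" is not a single real but a function — one should invoke the convention (footnote to Lemma \ref{lemma:sin}) that an inequality between linear functions means it holds for all arguments, and check that composing with the fixed affine maps $A$ (orientation-reversing or not) and $B$ preserves or reverses the inequality uniformly; second, handling the edge cases $I_j^\sigma = \emptyset$ (nothing to prove) and $j = 0$ or $j = k$ (so that $B$ or $A$ is the identity), and confirming that "no $f_i$ is constant" guarantees all the slope products are nonzero so the sign of $\alpha(A)$ is genuinely $(-1)^{k-j}$ rather than possibly zero. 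Once these bookkeeping points are settled, the argument is just the observation that optimality of $\sigma$ forces local optimality of each block, with the parity of the number of downstream sign flips deciding whether "local optimality" means minimizing or maximizing.
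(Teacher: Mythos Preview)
Your proposal is correct and follows essentially the same approach as the paper: decompose $f^\sigma = p \circ g \circ q$ with $p = \Fplr{n}{n_{j+1}^\sigma}$ and $q = \Fplr{n_j^\sigma}{1}$, observe that the sign of $\alpha(p)$ is $(-1)^{k-j}$ (since exactly $k-j$ decreasing functions lie in $p$), and conclude that optimality of $\sigma$ forces the block $g$ to be a minimum (resp.\ maximum) composite of the $r_i$'s when $p$ is increasing (resp.\ decreasing). The paper's proof is only a few lines and leaves implicit precisely the bookkeeping points you flag (edge cases, nonzero slopes, that the inequality between linear functions is pointwise), so your version is if anything more careful than the original.
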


\begin{proof}
 Define $p=f_{\sigma(n)} \circ \dots \circ f_{\sigma(n^\sigma_{j+1})}$
and 
$q=f_{\sigma(n^\sigma_{j})} \circ \dots \circ f_{\sigma(1)}$. 
Then  we have $f^\sigma=p\circ r_{|I_j^\sigma|} \circ \dots \circ r_1 \circ q$. 
If $k-j$ is even, then 
$p$ is monotone (increasing). 
This implies that 
$r_{|I_j^\sigma|} \circ \dots \circ r_1$ is provided by a minimum permutation for $r_i$'s. 
On the other hand, if $k-j$ is odd, 
then $p$ is monotone decreasing. 
We can see that $r_{|I_j^\sigma|} \circ \dots \circ r_1$ is provided by a maximum permutation for $r_i$'s. 
\end{proof}

Lemma \ref{lemma:LU_clockwise1}, together with Theorem \ref{theorem:main_theorem-x} and Remark \ref{remark-11},
implies that each interval $I_j$ is permuted either counterclockwisely or  clockwisely, unless all the functions in $I_j$ are colinear. The following lemma states that 
not only intervals $I_j$'s but also $L^\sigma$ and $U^\sigma$ satisfy such a property if $\sigma$ is optimal. 
Let 
$L^{\sigma} = \{ \ell_1,  \dots , \ell_{|L^{\sigma}|}\} $ and   
$U^{\sigma}=\{u_1, \dots , u_{|U^\sigma|}\}$, 
where $\ell_1 < \dots < \ell_{|L^\sigma|}$ and 
$u_1 < \dots < u_{|U^\sigma|}$, and 
let 
\begin{equation*}
p_i=f_{\sigma(\ell_i)} \ \text{ for } \ i \in [|L^{\sigma}|] \ \text{ and } \
q_i=f_{\sigma(u_i)}  \ \text{ for } \  i \in [|U^{\sigma}|]. 
\end{equation*}




\begin{lemma}\label{lemma:LU_clockwise}
Let $\sigma:[n]\to [n]$ be an optimal permutation for non-constant and non-identical linear functions $f_1, \dots , f_n$. 
Let $p_i$ $(i \in [|L^{\sigma}|])$ and $q_i$ $(i \in [|U^{\sigma}|])$ denote monotone linear functions defined as above. 
Then we have the following two statements. 
\begin{description}
\item[{\rm (i)}] 
The identity ${\rm id}:[ |L^{\sigma}|] \to 
 [ |L^{\sigma}|] $ is counterclockwise for $p_i$'s, unless they are colinear. 
 

\item[{\rm (ii)}]
The identity ${\rm id}:[ |U^{\sigma}|] \to [ |U^{\sigma}|] $ is clockwise for $q_i$'s, unless  they are colinear.   
\end{description}
    \end{lemma}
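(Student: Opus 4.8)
The plan is to prove (i) in detail; statement (ii) follows from the analogous argument, now using the intervals $I_j^\sigma$ with $k-j$ odd, whose functions must form a \emph{maximum} ordering by Lemma \ref{lemma:LU_clockwise1}(ii), and concluding \emph{clockwise}-ness via Theorem \ref{theorem:main_theorem-x} applied to maxima. So fix an optimal $\sigma$ and let $p_1,\dots,p_m$ ($m=|L^\sigma|$) be as in the statement; assume the $p_i$ are not colinear. Let $I_{j_1},\dots,I_{j_t}$ be the nonempty intervals among $\{I_j^\sigma : k-j\text{ even}\}$, in increasing order of position, and write $G_{j_s}$ for the composite of the functions of $I_{j_s}$ in $\sigma$-order. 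The engine is a parity observation: the indices $j_1,\dots,j_t$ are all congruent to $k$ modulo $2$, so between two consecutive such intervals there lie exactly two of the monotone decreasing functions (together with one, possibly empty, $U^\sigma$-interval); hence the composite $B_s$ of everything lying strictly between $I_{j_s}$ and $I_{j_{s+1}}$ is again monotone increasing. Likewise everything after $I_{j_t}$ composes to a monotone increasing function $P$ (it has an even number of decreasing functions after it), while everything before $I_{j_1}$ composes to a non-constant linear function $Q$, hence a bijection. Thus
\[
f^\sigma \;=\; P\circ G_{j_t}\circ B_{t-1}\circ G_{j_{t-1}}\circ\cdots\circ B_1\circ G_{j_1}\circ Q ,
\]
where $P,B_1,\dots,B_{t-1},G_{j_1},\dots,G_{j_t}$ are all monotone increasing and $Q$ is a bijection.

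Next I would use optimality of $\sigma$. Since $P$ is monotone increasing and $Q$ is a bijection, $f^\sigma$ is minimum among all permutations obtained from $\sigma$ by a rearrangement that leaves $L^\sigma$ (and $P$, $Q$) unchanged. Permuting the $L^\sigma$-functions among their slots, and relocating a whole bridge block (the two decreasing functions plus the $U^\sigma$-interval between two consecutive $I_{j_s}$'s), are such rearrangements: each is a genuine permutation of $[n]$, and each preserves $L^\sigma$ precisely because moving only monotone increasing functions, or moving a block that carries an even number of decreasing functions, changes the number of decreasing functions following any fixed $L^\sigma$-function by $0$ or $\pm 2$, never its parity. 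Ranging over these, $\sigma$ realizes a composite of the $m+t-1$ monotone increasing functions $p_1,\dots,p_m,B_1,\dots,B_{t-1}$ that is minimum over all of their orderings. These functions are not colinear (the $p_i$'s among them are not), so Theorem \ref{theorem:main_theorem-x} forces every such minimum ordering to be counterclockwise; in particular the ordering induced by $\sigma$ — the functions of $I_{j_1}$, then $B_1$, then the functions of $I_{j_2}$, \dots, then $B_{t-1}$, then the functions of $I_{j_t}$ — is counterclockwise. Deleting $B_1,\dots,B_{t-1}$ leaves a subsequence of a counterclockwise sequence, hence again counterclockwise, and that subsequence is exactly $p_1,\dots,p_m$. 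This proves (i); for colinear $p_i$ the claim is vacuous, and the order inside a single interval is the case $t=1$ already given by Lemma \ref{lemma:LU_clockwise1}.

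The hard part will be making the rearrangement step airtight. I would need to check that relocating an even-parity bridge block really does yield a valid permutation with the same $L^\sigma$ and keeps $f^\sigma$ in the factored form above — which forces care with the boundary: $I_k^\sigma$ may be empty, and when $k$ is odd the prefix $Q$ carries an extra decreasing function and is itself monotone decreasing, so a bridge block slid past $Q$ no longer produces a clean comparison between two orderings of $p_1,\dots,p_m,B_1,\dots,B_{t-1}$ and must be treated separately. An alternative that avoids block relocation is to work with size-preserving moves only: from the non-improving adjacent transpositions — an ordinary swap inside an interval, which gives $\theta(p_{i+1})-\theta(p_i)\in[0,\pi]_{2\pi}$ by Lemma \ref{lemma:sin}, and a swap of the last function $u$ of $I_{j_s}$ with the first function $v$ of $I_{j_{s+1}}$ across the monotone increasing bridge $B_s$, which gives $v\circ B_s\circ u\le u\circ B_s\circ v$, equivalent by Lemmas \ref{lemma:vector_composition} and \ref{lemma:sin} to a sign condition relating the vector of $B_s\circ u$ to $\vec u-\vec v$ — one then shows these inequalities chain to force cyclic monotonicity of $\theta(p_1),\dots,\theta(p_m)$. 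I expect that chaining step to be the real technical heart either way.
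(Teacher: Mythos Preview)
Your approach is correct and is essentially the paper's proof. The paper does exactly what you do: it writes $f^\sigma$ as an alternating composition of the individual $p_i$'s and ``bridge'' composites (your $B_s$'s; the paper calls them $g_j$ or $h_j$), observes that each bridge is a product of an even number of decreasing slopes and hence monotone increasing, and then invokes Theorem~\ref{theorem:main_theorem-x} on this collection of monotone functions. The paper separates the parity of $k$ only to make the boundary explicit, but the argument is the same as yours.

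Your ``hard part'' worries are unfounded, and this is where you are overcomplicating things. You do \emph{not} need the rearranged permutation $\tau$ to satisfy $L^\tau=L^\sigma$; you only need $\tau$ to be \emph{some} permutation of $[n]$, so that global optimality of $\sigma$ gives $f^\sigma\le f^\tau$. Any reordering of the blocks $p_1,\dots,p_m,B_1,\dots,B_{t-1}$, with $P$ and $Q$ kept fixed at the ends, is such a permutation. Then $f^\sigma\le f^\tau$ reads $P\circ M_\sigma\circ Q\le P\circ M_\tau\circ Q$ with $M_\sigma,M_\tau$ two orderings of the middle; since $P$ is strictly increasing this gives $M_\sigma\circ Q\le M_\tau\circ Q$, and since $Q$ is a bijection of $\Br$ and $\alpha(M_\sigma)=\alpha(M_\tau)$ this is equivalent to $M_\sigma\le M_\tau$. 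Thus the block ordering realised by $\sigma$ is a minimum ordering of the monotone functions $p_1,\dots,p_m,B_1,\dots,B_{t-1}$; since the $p_i$'s are not colinear neither is this larger family, so Theorem~\ref{theorem:main_theorem-x} forces the ordering to be counterclockwise, and the subsequence $p_1,\dots,p_m$ is counterclockwise as well. No separate treatment of the odd-$k$ boundary (your $Q$ decreasing) is needed, and the alternative ``size-preserving moves'' route you sketch is unnecessary.
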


\begin{proof}
We first prove the case where $k$ is even. 
Note that 
\begin{align}
f^\sigma & =
\overbrace{p_{|L^{\sigma}|} \circ \dots \circ p_{|L^{\sigma}|-|I_k^\sigma|+1}}^{I^\sigma_k}  \circ \ g_{k/2} \circ \cdots  
\circ  g_{2}  \circ \overbrace{p_{|I_0^{\sigma}|+|I_2^{\sigma}|} \circ \dots \circ p_{|I_0^\sigma|+1}}^{I^\sigma_2}   \circ \ g_{1} \circ   \overbrace{p_{|I^\sigma_0|} \circ \dots \circ p_1}^{I^\sigma_0}
\label{eq-last31}\\
 & =
h_{k/2+1} \,\, \circ \overbrace{q_{|U^{\sigma}|} \circ \dots \circ q_{|U^{\sigma}|-|I_{k-1}^\sigma|+1}}^{I^\sigma_{k-1}}  \ \circ \ h_{k/2} \, \circ \ \cdots \nonumber \\
& \hspace*{.5cm} 
\circ \  h_{3} \, \circ \  \overbrace{q_{|I_1^{\sigma}|+|I_3^{\sigma}|} \circ \dots \circ q_{|I_1^\sigma|+1}}^{I^\sigma_3}  \circ \  h_{2} 
\, \circ \  \overbrace{q_{|I^\sigma_1|} \circ \dots \circ q_1}^{I^\sigma_1} 
\ \circ \ h_{1},\label{eq-last32}
\end{align}
where
\begin{align*}
g_j&=f_{\sigma(n^\sigma_{2j})} \circ  f_{\sigma(n^\sigma_{2j}-1)}\circ \dots  \circ f_{\sigma(n^\sigma_{2j-1})}&&\text{for }j\in \left\{1, \dots , \left\lceil \frac{k}{2} \right\rceil \right\},\\
h_j&=f_{\sigma(n^\sigma_{2j-1})} \circ  f_{\sigma(n^\sigma_{2j-1}-1)}\circ \dots  \circ f_{\sigma(n^\sigma_{2j-2})}&&\text{for }j \in \left\{1, \dots , \left\lceil \frac{k+1}{2} \right\rceil\right\},
\end{align*}
and we set $f_{\sigma(n+1)}=f_{\sigma(0)}=x$. 

Since all the linear functions in the right-hand side of  \eqref{eq-last31} are monotone, 
Theorem \ref{theorem:main_theorem-x} implies (i) of the lemma. 
Since all the linear functions in the right-hand side of  \eqref{eq-last32} but $h_1$ 
 and $h_{k/2+1}$ are monotone, 
Theorem \ref{theorem:main_theorem-x} and Remark \ref{remark-11} imply (ii) of the lemma. 

We next prove the case where $k$ is odd. 
In this case,  we have 
\begin{eqnarray}
f^\sigma & =&
g_{(k+1)/2} \,\, \circ \overbrace{q_{|U^{\sigma}|} \circ \dots \circ q_{|U^{\sigma}|-|I_{k-1}^\sigma|+1}}^{I^\sigma_{k-1}}  \ \circ \  g_{(k-1)/2} \,\, \circ \ \cdots  \nonumber \\[.2cm]
&& \hspace*{.3cm} \circ \  g_{2} \,\, \circ \  \overbrace{q_{|I_0^{\sigma}|+|I_2^{\sigma}|} \circ \dots \circ q_{|I_0^\sigma|+1}}^{I^\sigma_2}  \ \circ \ g_{1} \,\, \circ \  \overbrace{q_{|I^\sigma_0|} \circ \dots \circ q_1}^{I^\sigma_0}.
\label{eq-last33}\\[.2cm]
 & =&
 \overbrace{p_{|L^{\sigma}|} \circ \dots \circ p_{|L^{\sigma}|-|I_k^\sigma|+1}}^{I^\sigma_{k}} \ \circ \  h_{(k+1)/2} \,\,  \circ \ \cdots \nonumber \\[.2cm]
&& \hspace*{.3cm} \circ \  h_{3} \, \circ \  \overbrace{p_{|I_1^{\sigma}|+|I_3^{\sigma}|} \circ \dots \circ p_{|I_1^\sigma|+1}}^{I^\sigma_3}  \circ \  h_{2} 
\, \circ \  \overbrace{p_{|I^\sigma_1|} \circ \dots \circ p_1}^{I^\sigma_1} 
\ \circ \ h_{1}.\label{eq-last34}
\end{eqnarray}
Note that  all the functions in \eqref{eq-last33} and \eqref{eq-last34} but $h_1$ and $g_{(k+1)/2}$ are monotone, and thus (i) and (ii) in the lemma can be proved. 
\end{proof}

By Lemma \ref{lemma:LU_clockwise}, $L^\sigma$ and $U^\sigma$ are permuted counterclockwisely and clockwisely, respectively.  
Moreover, the following crucial lemma shows that they are partitioned by two angles $\psi_1$ and $\psi_2$ (See Lemma \ref{lemma:strong_L_left_R_right} 
 (iii)). 
For an set $I \subseteq [n]$, let  $\theta(I)=\{\theta(f_{\sigma(i)}) \mid i \in I\}$. 
 
\begin{lemma}\label{lemma:strong_L_left_R_right}
There exists an optimal permutation $\sigma:[n]\to [n]$ for  non-constant and non-identical linear functions $f_1, \dots , f_n$ such that 
\begin{description}
\item[{\rm (i)}]
$f_{\sigma(\ell)}$ $(\ell \in L^\sigma)$ are permuted counterclockwisely,
\item[{\rm (ii)}]
$f_{\sigma(u)}$ $(u \in U^\sigma)$ are permuted clockwisely, 
\item[{\rm (iii)}]
 $\theta(L^\sigma)  \subseteq[\psi_1,\psi_2]$ and $\theta(U^\sigma)\subseteq  (\psi_2,\psi_1)_{2\pi}$ for some 
two angles $\psi_1\in(0,\pi)$ and $\psi_2\in(\pi,2\pi)$,   
\item[{\rm (iv)}]
$\theta(I^\sigma_s) \cap \theta(I^\sigma_t) =\emptyset$ for any distinct $s$ and $t$.  
\end{description}
\end{lemma}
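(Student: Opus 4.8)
The plan is to start from any optimal permutation $\sigma$ and successively massage it so that all four properties hold, while keeping optimality at every step. Properties (i) and (ii) are already granted by Lemma \ref{lemma:LU_clockwise} applied to an optimal $\sigma$, \emph{unless} the functions in $L^\sigma$ (resp.\ $U^\sigma$) are colinear; for colinear pieces Theorem \ref{theorem:main_theorem-x}(i) says any internal ordering is optimal, so we may simply \emph{choose} a counterclockwise (resp.\ clockwise) ordering on those pieces without loss. Thus the real content is to produce an optimal $\sigma$ for which (iii) and (iv) hold, and then re-invoke Lemma \ref{lemma:LU_clockwise} on that particular $\sigma$ to recover (i)--(ii).

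For (iii), first I would show that, because $\prod_i\alpha(f_i^{(\epsilon)})$ need not be positive, the parity structure of the $I_j$'s forces a ``layered'' picture: consider the composite written as in \eqref{eq-last31}--\eqref{eq-last34}. The key observation is that each block corresponding to $L^\sigma$ sits between two monotone-decreasing ``walls'' $f_{\sigma(n^\sigma_j)}$, so by Lemma \ref{lemma:LU_clockwise1}(i) it is composed as a \emph{minimum} permutation of monotone functions, while each $U^\sigma$-block is composed as a \emph{maximum} permutation. Applying Theorem \ref{theorem:main_theorem-x}(iii) (and Remark \ref{remark-11}) to these blocks gives that $\theta(f^\sigma)$ together with $\theta(f^\sigma)+\pi$ act as the boundary angles: for the minimum ($L$) blocks the relevant angles lie in a half-turn on one side, and for the maximum ($U$) blocks on the other side. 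Setting $\psi_1$ and $\psi_2$ to be essentially $\theta(g)$ and $\theta(g)+\pi$ for the overall composite $g=f^\sigma$ (shifted so $\psi_1\in(0,\pi)$, $\psi_2\in(\pi,2\pi)$, which is possible because a monotone-decreasing factor is present so $\theta(f^\sigma)\notin\{0,\pi\}$ after the generic $\epsilon$-perturbation) should yield $\theta(L^\sigma)\subseteq[\psi_1,\psi_2]$ and $\theta(U^\sigma)\subseteq(\psi_2,\psi_1)_{2\pi}$. Here I would need to argue that if some $f_{\sigma(\ell)}$ with $\ell\in L^\sigma$ had angle strictly outside $[\psi_1,\psi_2]$, an exchange across the nearest wall would strictly decrease the composite, contradicting optimality — this exchange argument is the technical heart.

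For (iv), I would argue by a reordering/merging move: if $\theta(I^\sigma_s)\cap\theta(I^\sigma_t)\neq\emptyset$ with $s<t$, pick functions $f_{\sigma(a)}$ ($a\in I^\sigma_s$) and $f_{\sigma(b)}$ ($b\in I^\sigma_t$) of equal angle; since between them lie some decreasing walls, I would use Lemma \ref{lemma:sin}(ii) (equal-angle functions commute) together with the structural constraints from (iii) to slide one of them past intervening monotone functions without changing the composite, eventually either merging the two blocks' shared direction into one side consistently or exhibiting that the configuration is equivalent to one with strictly fewer ``crossings,'' and induct on the number of such coincidences. Care is needed that the sliding never moves a function across a decreasing wall in a way that changes the value; this is exactly where the half-turn containment from (iii) is used to certify the direction of each inequality.

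The main obstacle I anticipate is (iii): unlike the monotone case, here the global composite's angle interacts with \emph{several} disjoint monotone blocks separated by decreasing walls, and one must show a \emph{single} pair $(\psi_1,\psi_2)$ works simultaneously for all $L$-blocks and all $U$-blocks. This requires carefully tracking how a decreasing function $f$ (with $\alpha(f)<0$) transforms the angle of what comes before it — via $\Ora{h\circ g}=\vec h+\alpha(h)\vec g$ from Lemma \ref{lemma:vector_composition}, with $\alpha(h)<0$ this is a reflection-like flip — so that the ``minimum side'' of one block lines up with the ``maximum side'' of the neighboring block across each wall. Once that alignment lemma is in place, (iii) follows by choosing $\psi_1,\psi_2$ from the overall composite and propagating, and then (i), (ii), (iv) are comparatively routine given the earlier results.
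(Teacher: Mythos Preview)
Your handling of (i) and (ii) via Lemma \ref{lemma:LU_clockwise} (plus the colinear fallback through Theorem \ref{theorem:main_theorem-x}(i)) is exactly what the paper does, so that part is fine. The gaps are in (iii) and (iv).

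For (iii) the paper does \emph{not} set $\psi_1,\psi_2$ from $\theta(f^\sigma)$; it has a separate preparatory result, Lemma \ref{lemma:L_left_R_right}, which compares each pair $\ell\in L^\sigma$, $u\in U^\sigma$ directly: with $\psi$ the angle of the composite strictly between positions $\ell$ and $u$ (a block whose slope is negative by the parity of $L,U$), optimality forces $\theta(f_u)\leq\psi\leq\theta(f_\ell)$ in $(0,\pi)$ and the mirrored inequality in $(\pi,2\pi)$, yielding a single separating pair $(\psi_1,\psi_2)$. Your proposed choice $\psi_1\approx\theta(f^\sigma)$ is not justified, and the supporting claim ``a monotone-decreasing factor is present so $\theta(f^\sigma)\notin\{0,\pi\}$'' is false: for instance two factors of slope $-1$ give $\prod_i\alpha(f_i)=1$ and hence $\theta(f^\sigma)\in\{0,\pi\}$. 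So your route to (iii) does not go through; the paper's pairwise argument (which does not reference $\theta(f^\sigma)$ at all) is what you are missing.

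For (iv) your plan to ``slide one of them past intervening monotone functions'' cannot work as stated: between the two equal-angle positions there sit decreasing walls and functions of arbitrary angle, none of which commute individually with the function you want to move. The paper's key observation, which your proposal does not reach, is a two-swap squeeze: if $\theta(f_{\sigma(i_s)})=\theta(f_{\sigma(i_t)})$ and $g=f_{\sigma(i_t-1)}\circ\cdots\circ f_{\sigma(i_s+1)}$, then optimality of $\sigma$ applied to the swap of $g$ with $f_{\sigma(i_s)}$ gives $\theta(g)-\theta(f_{\sigma(i_s)})\in[0,\pi]_{2\pi}$, while optimality applied to the swap of $f_{\sigma(i_t)}$ with $g$ gives $\theta(g)-\theta(f_{\sigma(i_t)})\in[\pi,2\pi]_{2\pi}$; since the two reference angles coincide, $\theta(g)\in\{\theta(f_{\sigma(i_s)}),\theta(f_{\sigma(i_s)})+\pi,\bot\}$. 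Hence by Lemma \ref{lemma:sin}(ii) the \emph{whole} block $g$ commutes with $f_{\sigma(i_t)}$ in one step, $f_{\sigma(i_t)}\circ g\circ f_{\sigma(i_s)}=g\circ f_{\sigma(i_t)}\circ f_{\sigma(i_s)}$, moving $f_{\sigma(i_t)}$ across all walls at once without changing the value. Repeating this yields (iv), and then (iii) becomes strict and Lemma \ref{lemma:LU_clockwise} re-applied gives (i)--(ii). Your ``sliding past individual functions'' idea needs to be replaced by this single block-commutation.
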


In order to prove Lemma \ref{lemma:strong_L_left_R_right}, we first show the following weak partitionability. 
\begin{lemma}\label{lemma:L_left_R_right}
Let $f_1, \dots , f_n$ be non-constant and non-identical linear functions, at least one of which is monotone decreasing. 
 For any optimal solution $\sigma$ for $f_i$'s, there exist two angles $\psi_1\in(0,\pi)$ and $\psi_2\in(\pi,2\pi)$ such that 
    $\theta(L^\sigma)  \subseteq [\psi_1,\psi_2]$ and
   $\theta(U^\sigma) \subseteq  [\psi_2,\psi_1]_{2\pi}$.
\end{lemma}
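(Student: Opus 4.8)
The plan is to extract from an optimal permutation $\sigma$ the ``barriers'' separating the nondecreasing functions lying in $L^\sigma$ from those lying in $U^\sigma$, using the monotone-decreasing functions themselves as the natural dividing points. Recall that $I^\sigma_j = \{i : n^\sigma_j < i < n^\sigma_{j+1}\}$ and that $L^\sigma = \bigcup_{k-j\text{ even}} I^\sigma_j$, $U^\sigma = \bigcup_{k-j\text{ odd}} I^\sigma_j$. By Lemma \ref{lemma:LU_clockwise}, the functions $p_i$ indexed by $L^\sigma$ are counterclockwise and the $q_i$ indexed by $U^\sigma$ are clockwise, unless colinear (in the colinear case any choice of $\psi_1,\psi_2$ straddling the common line works, so I treat the non-colinear situation). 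First I would invoke the characterization of counterclockwise orderings: a counterclockwise sequence of angles fits inside an arc of length at most $2\pi$, so there is a closed arc $A_L = [\psi_1, \psi_2']_{2\pi}$ containing $\theta(L^\sigma)$, and likewise a closed arc $A_U$ containing $\theta(U^\sigma)$. The task is to show these two arcs can be chosen to be \emph{complementary} — i.e. $A_L$ lives in $[\psi_1,\psi_2]$ with $\psi_1\in(0,\pi)$, $\psi_2\in(\pi,2\pi)$, and $A_U$ in the complementary arc $[\psi_2,\psi_1]_{2\pi}$.

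The key step is a local exchange argument at each monotone-decreasing function $f_{\sigma(n^\sigma_j)}$. Consider the consecutive blocks $I^\sigma_{j-1}$ and $I^\sigma_j$, one of which belongs to $L^\sigma$ and the other to $U^\sigma$, separated by $f_{\sigma(n^\sigma_j)}$. Writing $f^\sigma = p \circ (\text{tail of }I^\sigma_j) \circ f_{\sigma(n^\sigma_j)} \circ (\text{head of }I^\sigma_{j-1}) \circ q$ as in the proof of Lemma \ref{lemma:LU_clockwise1}, optimality of $\sigma$ forces, via Lemma \ref{lemma:sin}, an inequality on the angle difference between the composite of the $I^\sigma_j$-block and that of the $I^\sigma_{j-1}$-block. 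Since within each block the composites have angles lying in the closed arc spanned by that block (Lemma \ref{lemma:theta_composition}), this propagates to a separation statement: there is an angle $\psi$ (namely one of the two directions perpendicular to $\vec{f_{\sigma(n^\sigma_j)}}$, or a direction read off from the composite) such that all of $\theta(I^\sigma_{j-1})$ lies weakly on one side of the line through $\psi$ and all of $\theta(I^\sigma_j)$ weakly on the other. Doing this at every decreasing function and then intersecting/combining the half-plane constraints across all $j$ — using that $L^\sigma$ is a union of alternate blocks and $U^\sigma$ the other alternate blocks, all of which are counterclockwise resp. clockwise — collapses the many barriers into a single pair $\psi_1, \psi_2$, with $\psi_1\in(0,\pi)$ and $\psi_2\in(\pi,2\pi)$ attainable by relabelling (rotating so the barrier through the origin is horizontal).

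I expect the main obstacle to be the bookkeeping that turns the \emph{per-block} separation (one barrier angle $\psi^{(j)}$ for each decreasing function) into a \emph{global} two-angle separation. A priori the barriers $\psi^{(j)}$ could spiral around the circle. The resolution should be that each $L^\sigma$-block's arc is contained in the counterclockwise arc $A_L$ and each $U^\sigma$-block's arc in $A_U$, and the per-block inequalities force $A_L$ and $A_U$ to be disjoint as arcs on the circle; two disjoint arcs are automatically separated by two antipodal-ish angles, and one checks these can be taken one in $(0,\pi)$ and one in $(\pi,2\pi)$ after a rotation. A secondary technical point is handling the degenerate sub-cases where some block is colinear or where the composite of a block is the identity ($\theta = \bot$); here Lemma \ref{lemma-cor:no_x} and Lemma \ref{lemma:theta_composition} (iv) let me discard such blocks or absorb them, so they do not affect the separation.
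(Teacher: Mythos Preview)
Your approach has a real gap, and it diverges from the paper's in a way that creates the difficulty you yourself flag.

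First, the claim that $\psi_1\in(0,\pi)$ and $\psi_2\in(\pi,2\pi)$ are ``attainable by relabelling (rotating so the barrier through the origin is horizontal)'' is incorrect. The vector representation $\vec f=\binom{b}{1-a}$ is fixed; you cannot rotate it. The specific ranges for $\psi_1,\psi_2$ are not a convention but a structural consequence: the directions $0$ and $\pi$ correspond to $a=1$, and the lemma is asserting that $\theta(L^\sigma)$ clusters around $\pi$ while $\theta(U^\sigma)$ clusters around $0$. This must be \emph{proved}, and your per-block exchange at a single decreasing function $f_{\sigma(n^\sigma_j)}$ does not see it, because that function's angle need not be near $0$ or $\pi$.

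Second, the bookkeeping obstacle you identify is real and your sketch does not resolve it. Separating $\theta(I^\sigma_{j-1})$ from $\theta(I^\sigma_j)$ for each $j$ gives local barriers $\psi^{(j)}$, but since the $L$-blocks and $U$-blocks alternate, nothing prevents these from winding. Invoking Lemma~\ref{lemma:LU_clockwise} tells you $L^\sigma$ sits in some arc $A_L$ and $U^\sigma$ in some arc $A_U$, but these arcs can overlap; counterclockwise/clockwise alone does not force disjointness.

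The paper sidesteps all of this with a direct \emph{pairwise} argument. Take any $\ell\in L^\sigma$ and $u\in U^\sigma$, and let $g$ be the composite of everything strictly between positions $\ell$ and $u$. By the definition of $L^\sigma$ and $U^\sigma$, $\alpha(g)<0$, so $\theta(g)\in(0,\pi)$. Optimality (swapping $f_{\sigma(\ell)}$ past $g$, and $f_{\sigma(u)}$ past $g$, and watching the sign of the outer composite) then gives $\theta(g)-\theta(f_{\sigma(u)})\in[0,\pi]_{2\pi}$ and $\theta(f_{\sigma(\ell)})-\theta(g)\in[0,\pi]_{2\pi}$. Hence on the upper half-circle $(0,\pi)$ every $L$-angle dominates every $U$-angle, and on the lower half-circle $(\pi,2\pi)$ the reverse holds; a separate short argument handles $\theta=0$ and $\theta=\pi$. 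This immediately gives the two separating angles with $\psi_1\in(0,\pi)$ and $\psi_2\in(\pi,2\pi)$, with no gluing needed. The key idea you are missing is to compare an arbitrary $L$-element against an arbitrary $U$-element through the \emph{entire} intervening composite, whose negative slope is exactly what pins the barriers to the upper and lower half-circles.
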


\begin{proof}
 Without loss of generality, we assume that  the identity {\rm id} is optimal for $f_i$'s. 
 First, we show that if $\theta(f_i)=\pi$ then $i\in L^{\rm id}$ and if $\theta(f_i)=0$ then $i\in U^{\rm id}$.
 Suppose on the contrary that there exists an $i\in U^{\rm id}$ such that $\theta(f_i)=\pi$.
 Let $g=f_n\circ\cdots\circ f_{i+1}$ then $\alpha(g)<0$ by the definition of $U$.
 Then we have $g\circ f_i > f_i\circ g$ by Lemma \ref{lemma:sin} 
 and $\theta(g)\in(0,\pi)$, which contradicts the optimality of {\rm id}.
 Similarly, suppose on the contrary that there exists an $i\in L^{\rm id}$ such that $\theta(f_i)=0$.
 To obtain a contradiction, we divide the discussion into two cases, whether $\alpha(f_j)<0$ for some $j$ with $j>i$ or not.
 If such $j$ exists, then let $j$ be the minimum integer which satisfies the statement above and $g=f_j\circ\cdots\circ f_{i+1}$.
 Then we have $g\circ f_i < f_i\circ g$ by Lemma \ref{lemma:theta_composition}, which contradicts the optimality of {\rm id}, because $\alpha(f_n\circ\cdots\circ f_{j+1})<0$ by the definition of $L^{\rm id}$.
 On the other hand, if no such $j$ satisfies $\alpha(f_j)<0$, then there exists $j<i$ such that $\alpha(f_j)<0$ by the assumption.
 Let $j$ be the maximum integer which satisfies the statement above and $g=f_{i-1}\circ\cdots\circ f_{j}$.
 Then we have $f_i\circ g > g\circ f_i$ by Lemma \ref{lemma:theta_composition}, which contradicts the optimality of {\rm id}, because $\alpha(f_n\circ\cdots\circ f_i)>0$ by the definition of $L^{\rm id}$.

Next, we discuss a monotone $f_i$ such that $\theta(f_i)\not\in\{0,\pi\}$. 
For any $\ell \in L^{\rm id}$ and $u \in U^{\rm id}$, we claim that the following two implications are satisfied.  
 \begin{equation}
 \label{eq-aac0}
 \begin{array}{lll}
0 \ < \ \theta(f_\ell),  \theta(f_u) \ <\ \pi &\Rightarrow &\Tf{\ell} \ \geq  \ \Tf{u},\\
\pi \ < \ \theta(f_\ell),  \theta(f_u) \ <\  2\pi &\Rightarrow& \Tf{\ell} \ \leq \ \Tf{u}.
 \end{array}
 \end{equation}
 We first consider the case $\ell<u$.
 Let $\psi=\Tflr{u-1}{\ell+1}$.
 By definition of $L^{\rm id}$ and $U^{\rm id}$, it holds that $\alpha(\Flr{u-1}{\ell+1})<0$, which implies  $\psi\in(0,\pi)$.
 Since $f_n \circ \dots \circ f_{u+1}$ is monotone decreasing (i.e., $\alpha(f_n \circ \dots \circ f_{u+1})<0$), 
 the optimality of ${\rm id}$ implies that 
 \[
     \psi- \Tf{u}, \
     \Tf{\ell}- \psi \,\,\in \,\, [0,\pi]_{2\pi}. 
 \]
 Hence we have 
 \begin{align}
     &0 \ < \ \Tf{u} \ \leq \ \psi \ \leq \ \Tf{\ell} \ <  \ \pi &&\text{if } \ 0 \ < \ \theta(f_\ell),  \theta(f_u) \ < \ \pi \label{eq-aac1}\\
     &\pi \ < \ \Tf{\ell} \ \leq \ \psi+\pi \ \leq \ \Tf{u} \ < \  2\pi &&\text{if } \ \pi \ < \ \theta(f_\ell),  \theta(f_u) \ <\ 2\pi.\label{eq-aac2}
  \end{align}
 
On the other hand, if $u<\ell$,  let $\psi =\theta(f_{\ell-1} \circ \dots \circ f_{u+1})$.
 Then $\psi \in(0,\pi)$ by the definition of $L^{\rm id}$ and $U^{\rm id}$.
 Since $f_n \circ \dots \circ f_{\ell+1}$ is monotone and ${\rm id}$ is optimal, we have 
 \begin{eqnarray*}
     \psi - \Tf{u}, \,\,
     \Tf{\ell}-\psi \,\,\in \,\, [0,\pi]_{2\pi},
 \end{eqnarray*}
which implies  \eqref{eq-aac1} and \eqref{eq-aac2}. 

Therefore, \eqref{eq-aac0} holds for any $\ell \in L^{\rm id}$ and $u \in U^{\rm id}$, completing the proof. 
\end{proof}

Note that 
 Lemmas
\ref{lemma:LU_clockwise} and \ref{lemma:L_left_R_right} do {\em not} imply the polynomial solvability of optimal composition if $k$, the number of decreasing functions, is a constant,  since it is not clear how to partition linear functions with the same angle. The following lemma states that  some optimal permutation $\sigma$ makes linear functions with the same angle contained in the same interval $I^\sigma_j$.

\begin{proof}[Proof of Lemma \ref{lemma:strong_L_left_R_right}]

Let $\sigma: [n]\to [n]$ be an optimal permutation for $f_i$'s. 
Then it satisfies the conditions  in  Lemmas \ref{lemma:LU_clockwise} and \ref{lemma:L_left_R_right}. 
For two integers $s$ and $t$ with $s<t$, 
assume that  $i_s \in I^\sigma_s$ and $i_t \in I^\sigma_t$ satisfy $\theta(f_{\sigma(i_s)})=\theta(f_{\sigma(i_t)})$. 
Let $g=f_{\sigma(i_t-1)} \circ \dots \circ f_{\sigma(i_s+1)}$, and consider 
the composition $f_{\sigma(i_t)}\circ g \circ f_{\sigma(i_s)}$. 
By Lemma \ref{lemma:sin} and the optimality of $\sigma$, we must have $\theta(g) \in 
\{\theta(f_{\sigma(i_s)}), \theta(f_{\sigma(i_s)})+\pi\}_{2\pi}\cup\{\bot\}$, 
which implies 
$f_{\sigma(i_t)}\circ g \circ f_{\sigma(i_s)} \ = \ g\circ f_{\sigma(i_t)} \circ f_{\sigma(i_s)}$.
Thus the permutation corresponding to this modification is also optimal for $f_i$'s. 
By repeatedly applying this modification, we can arrive at an optimal permutation that satisfies (i), (ii), (iii), and (iv) in the lemma, if no $j$ satisfies 
$\theta(I_j^\sigma)=\{\lambda, \lambda+\pi\}$ for some $\lambda$.
If $\theta(I_j^\sigma)=\{\lambda, \lambda+\pi\}$ for some $j$, then $f_{\sigma(i)}$ ($i \in I_j^\sigma$)
can permuted counterclockwisely (equivalently, clockwisely) by Theorem \ref{theorem:main_theorem-x} (i), which  completes the proof. 
\end{proof}

This lemma directly implies
that an optimal permutation for linear functions $f_1, \dots , f_n$ can be computed in $O(k! \, n^{k+4})$ time, 
where $k$ denotes the number of monotone decreasing  $f_i$'s.

Assume first that no $f_i$ is identical
and we utilize $f_i^{(\epsilon)}$'s instead of $f_i$'s in Lemma \ref{lemma:epslion1}. 
By Lemma \ref{lemma:strong_L_left_R_right} (iii), we essentially have $n^2$ possible angles  $\psi_1$ and $\psi_2$. 
Based on such angles, 
we partition the set of indices of 
monotone linear functions into $I_0, \ldots, I_k$. 
By Lemma \ref{lemma:strong_L_left_R_right} (i), (ii), and (iv), 
we have  at most $n^{k+1}$ many such partitions.  
Since there exist $k!$ orderings of monotone decreasing functions, 
by checking at most $k! \,n^{k+3} (=n^2 \times n^{k+1} \times k!)$
permutations $\sigma$, 
we obtain an optimal permutation for $f_i$'s. 
Note that each such permutation $\sigma$ and the composite $f^\sigma$ can be computed in $O(n)$ time, after sorting  $\theta(f^{(\epsilon)}_i)$'s.  
Since $\theta(f_s^{(\epsilon)})$ and $\theta(f_t^{(\epsilon)})$ can be compared in $O(1)$ time for sufficiently small $\epsilon >0$ without exactly computing their angles, 
we can sort $\theta(f^{(\epsilon)}_i)$'s 
in $O(n\log n)$ time. 
Thus in total we require 
$O(k! \,n^{k+4}+n\log n)=O(k! \,n^{k+4})$
time, if no $f_i$ is identical. 
If some $f_i$'s are identical, then we can 
put them into $I_0$, where $I_0$ is obtained in the procedure above for the non-identical functions. Therefore, an optimal permutation can be computed in 
$O(k! \,n^{k+4})$
time. 

In order to improve this XP result, namely, to 
have an FPT algorithm with respect to $k$, 
we apply the dynamic programming approach to the following problem.  

\medskip

\noindent
{\bf Problem} {\sc LU-Ordered Optimal Composition}

\smallskip

\noindent
{\bf Input}: Two sets of monotone linear functions 
$L=\{p_1, \dots , p_{|L|}\} \ \text{ and } \ U= 
\{q_1, \dots , q_{|U|}\}$,
and monotone decreasing linear functions $g_1, \dots , g_{k}$ with $k>0$.
\smallskip

\noindent
{\bf Output}: An optimal permutation $\sigma$ for linear functions in $L \cup U \cup \{g_1, \dots , g_k\}$ such that 

\begin{description}
\item[{\rm (i)}]$L^\sigma=L$ and $U^\sigma=U$,
\item[{\rm (ii)}]the restriction of $\sigma$ on $L$ produces the ordering $(p_1, \dots , p_{|L|})$, 
\ and 
\item[{\rm (iii)}]
the restriction of $\sigma$ on $U$  produces the ordering $(q_1, \dots , q_{|U|})$. 
\end{description}
\noindent
  Note that an optimal permutation for the original problem can be computed by solving  {\bf Problem} {\sc LU-Ordered Optimal Composition} $O(n^4)$ times for $|L|+|U| \leq n-k$. 
  Since the problem can be solved in 
  $O(2^{k} k (|L|+|U|+k)^2)$ time, 
  we obtain the following result. 

\begin{theorem}\label{theorem:FPT}
An optimal permutation for linear functions $f_1, \dots , f_n$ can be computed in $O(2^k k n^6)$ time if $k>0$, 
where $k$ denotes the number of monotone decreasing  $f_i$'s. 
\end{theorem}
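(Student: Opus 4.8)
The plan is to convert the structural description in Lemma~\ref{lemma:strong_L_left_R_right} into an enumeration over a polynomial family of structured permutations, each handled by a dynamic program whose states are indexed by subsets of the $k$ decreasing functions. First I would remove degenerate inputs: identical functions can simply be deleted, as they do not affect any composite, and constant functions can be replaced by the perturbations $f_i^{(\epsilon)}$ of Section~\ref{section:nondecreasing}; by Lemma~\ref{lemma:epslion1} an optimal permutation for the perturbed instance is optimal for the original one, and since the algorithm only ever \emph{compares} the angles $\theta(f_i^{(\epsilon)})$ --- which can be done in $O(1)$ without computing $\epsilon$ --- I may assume $f_1,\dots,f_n$ are all monotone, exactly $k$ of them decreasing. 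By Lemma~\ref{lemma:strong_L_left_R_right} there is then an optimal permutation $\sigma$ in which the monotone part splits into a counterclockwise-ordered list $L$ and a clockwise-ordered list $U$, the two lists separated by angles $\psi_1\in(0,\pi)$ and $\psi_2\in(\pi,2\pi)$, and in which each interval $I_j^\sigma$ appears as a contiguous chunk of $L$ or of $U$. I would enumerate the $O(n^2)$ essentially distinct pairs $(\psi_1,\psi_2)$ (only the values falling between consecutive angles of the $f_i$'s are relevant), which fixes the bipartition of the monotone functions into sets carrying, up to cyclic rotation, the counterclockwise order $L=(p_1,\dots,p_{|L|})$ and the clockwise order $U=(q_1,\dots,q_{|U|})$; enumerating in addition the $O(n)$ rotations of each leaves $O(n^4)$ instances of {\sc LU-Ordered Optimal Composition}, and taking the best composite found over all these instances gives a globally optimal permutation.

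The heart of the argument is a dynamic program solving {\sc LU-Ordered Optimal Composition} in $O(2^k k\,(|L|+|U|+k)^2)$ time, which builds $f^\sigma$ from the innermost block $I_0$ outward, one function at a time. A state is a triple $(S,a,b)$, where $S\subseteq\{1,\dots,k\}$ is the set of decreasing functions already placed and $p_1,\dots,p_a$, $q_1,\dots,q_b$ are the prefixes of $L$, $U$ already placed; the parity of $k-|S|$ tells whether the block $I_{|S|}$ under construction draws from $L$ or from $U$, so at every state exactly one of the monotone moves $h\mapsto p_{a+1}\circ h$, $h\mapsto q_{b+1}\circ h$ is legal, besides the $k-|S|$ decreasing moves $h\mapsto g\circ h$ with $g\notin S$. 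The point that makes the state space small is that every partial composite reachable at a fixed state is a composition of the \emph{same} multiset of functions, hence shares one and the same slope, so such partial composites are totally ordered by their intercept; moreover every later move is an affine map that is monotone (positive slope, for a monotone move) or antitone (negative slope, for a decreasing move). Consequently it suffices to store at each state the pointwise smallest and the pointwise largest partial composite: a monotone move carries these extremes to the corresponding extremes, a decreasing move interchanges them, and when several transitions enter a state one retains the overall minimum and the overall maximum. The optimum is the minimum partial composite at the terminal state $(\{1,\dots,k\},|L|,|U|)$, recovered as a permutation by standard backtracking. With $O(2^k(|L|+1)(|U|+1))=O(2^k(|L|+|U|+k)^2)$ states, $O(k)$ transitions per state, $O(1)$ work each, and an $O((|L|+|U|)^2)$ precomputation of all block composites $p_{a+1}\circ\cdots\circ p_{a'}$ and $q_{b+1}\circ\cdots\circ q_{b'}$, this gives the claimed running time.

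Putting the two parts together yields $O(n^4)\cdot O(2^k k\,n^2)=O(2^k k\,n^6)$ overall, the $O(n\log n)$ sorting of the $\theta(f_i)$'s and the re-insertion of the discarded identical functions (they may be put into $I_0$) being absorbed. The step I expect to be the main obstacle is proving that the dynamic program is at once \emph{sound} and \emph{exhaustive}. Soundness hinges on the ``same state, same slope'' observation --- without it ``minimum'' and ``maximum'' partial composite would be meaningless --- together with careful tracking of the monotonicity flip caused by each decreasing function. Exhaustiveness is exactly where the full combinatorial force of Lemma~\ref{lemma:strong_L_left_R_right} is required: because some optimal permutation realizes every interval $I_j$ as a contiguous chunk of a single counterclockwise order on $L$ or clockwise order on $U$ (items (i), (ii), (iv) of that lemma) and the two orders are separated by $\psi_1,\psi_2$ (item (iii)), ranging over the rotations and over the cut positions --- rather than over all interleavings of the monotone and decreasing functions --- already reaches an optimal solution.
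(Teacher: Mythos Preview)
Your proposal is correct and follows essentially the same route as the paper: reduce via Lemma~\ref{lemma:strong_L_left_R_right} to $O(n^4)$ instances of {\sc LU-Ordered Optimal Composition}, and solve each by a dynamic program over states $(s,t,G)$ with $G\subseteq[k]$, giving $O(2^kkn^2)$ per instance and $O(2^kkn^6)$ overall. The one small difference is that you keep \emph{both} the pointwise minimum and maximum partial composite at every state, whereas the paper observes that the parity of $k-|G|$ already determines which of the two is relevant (the remaining composition has positive or negative slope accordingly), so a single value $v(s,t,G)$ per state suffices; this halves the storage but changes nothing asymptotically, and your precomputation of block composites is likewise harmless but unnecessary since the DP composes one function at a time.
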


 We first show that an optimal permutation for the original problem can be computed by solving    
  {\bf Problem} {\sc LU-Ordered Optimal Composition} $O(n^4)$ times, 
where the formal description can be found in {\bf Algorithm} \ref{algo:FPTa}. 
  
\begin{lemma}\label{lemma:reduce_to_ST_order_given}
An optimal permutation for linear functions $f_1, \dots , f_n$ can be computed 
in $O(n \log n$ $+n^4 T^*)$ time, where $T^*$ denotes the time required to solve  {\bf Problem} {\sc LU-Ordered Optimal Composition}. 
\end{lemma}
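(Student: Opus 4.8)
The plan is to reduce the problem to $O(n^4)$ calls of {\sc LU-Ordered Optimal Composition} by exhaustively guessing the combinatorial data that Lemma~\ref{lemma:strong_L_left_R_right} guarantees for some optimal permutation. As a first step I would, by Lemma~\ref{lemma:epslion1} and the surrounding discussion, pass to the functions $f_i^{(\epsilon)}$ so that every function becomes monotone, and discard the identity functions, which never affect any composite; this uses only the sorted order of the angles $\theta(f_i^{(\epsilon)})$, computable in $O(n\log n)$ time since two consecutive such angles can be compared in $O(1)$ time. Write $n'\ (\le n)$ for the number of remaining functions, so that exactly $k$ of them are monotone decreasing and $n'-k$ are monotone increasing.

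Next I would enumerate the structure of an optimal permutation $\sigma$ of the form guaranteed by Lemma~\ref{lemma:strong_L_left_R_right}. By part~(iii) of that lemma, such a $\sigma$ has boundary angles $\psi_1\in(0,\pi)$ and $\psi_2\in(\pi,2\pi)$ with $\theta(L^\sigma)\subseteq[\psi_1,\psi_2]$ and $\theta(U^\sigma)\subseteq(\psi_2,\psi_1)_{2\pi}$; since the monotone increasing indices are partitioned into $L^\sigma\cup U^\sigma$ and, by part~(iv), indices with equal angle lie in the same interval $I^\sigma_j$ hence on the same side, the pair $(\psi_1,\psi_2)$ already determines the sets $L^\sigma=\{\,i:\alpha(f_i)>0,\ \theta(f_i)\in[\psi_1,\psi_2]\,\}$ and $U^\sigma$ (the remaining increasing indices). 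It suffices to let $\psi_1$ and $\psi_2$ each range over the $O(n)$ angles $\theta(f_i^{(\epsilon)})$ together with $O(1)$ sentinel values lying in the arcs $(0,\pi)$ and $(\pi,2\pi)$, i.e.\ over $O(n^2)$ pairs. Since $\theta(L^\sigma)$ and $\theta(U^\sigma)$ each span strictly less than $2\pi$, every counterclockwise (respectively clockwise) ordering of $L^\sigma$ (respectively $U^\sigma$) coincides, up to reordering of colinear functions, with the order obtained by sorting angles increasingly (respectively decreasingly); by Theorem~\ref{theorem:main_theorem-x}~(i) such reorderings change no composite, so the only remaining freedom is a cyclic rotation, i.e.\ the choice of which function is innermost, giving $O(|L^\sigma|)=O(n)$ and $O(|U^\sigma|)=O(n)$ possibilities. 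Hence $O(n^4)$ tuples $(\psi_1,\psi_2,s_L,s_U)$ exhaust the structure of every optimal permutation supplied by Lemma~\ref{lemma:strong_L_left_R_right}, and from each tuple the ordered sequences $(p_1,\dots,p_{|L|})$ and $(q_1,\dots,q_{|U|})$ are read off in $O(n')$ time.

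For each tuple I would call {\sc LU-Ordered Optimal Composition} on the two ordered sequences above and the decreasing functions $g_1,\dots,g_k$ (note $|L|+|U|=n'-k$), and keep the best composite over all tuples. Correctness is a two-sided bound: every solution the subroutine returns is a genuine permutation of $f_1,\dots,f_n$, so its composite is at least the global optimum; conversely, for the tuple that matches an optimal permutation $\sigma$ of the type in Lemma~\ref{lemma:strong_L_left_R_right}, that $\sigma$ is itself feasible for the corresponding call, so the subroutine's optimum is at most $f^\sigma$. Therefore the best value found is exactly the global optimum. For the running time, preprocessing costs $O(n\log n)$, and each of the $O((n')^4)=O(n^4)$ iterations costs $O(n')$ to construct the sequences plus $T^*$ for the call; since the subroutine must at least read its $\Omega(|L|+|U|+k)=\Omega(n')$ functions we have $T^*=\Omega(n')$, so the per-iteration bookkeeping is absorbed and the total is $O(n\log n+n^4 T^*)$.

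The crux is that $O(n^4)$ discrete guesses suffice only because Lemma~\ref{lemma:strong_L_left_R_right} pins down so much structure: without its parts~(iii) and~(iv), the two internal orderings --- and in particular the side and interval assigned to functions sharing an angle --- would not be determined by polynomially many choices. The complementary difficulty, namely how to optimally interleave the $k$ decreasing functions once the two orderings are fixed, is deliberately isolated into {\sc LU-Ordered Optimal Composition}, whose dynamic-programming solution is analyzed separately; I expect that interleaving argument, rather than the reduction sketched here, to be the technically demanding step.
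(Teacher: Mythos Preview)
Your proposal is correct and follows essentially the same approach as the paper: reduce to non-constant, non-identical functions via $f_i^{(\epsilon)}$, then enumerate the $O(n^2)$ choices of boundary angles $(\psi_1,\psi_2)$ from Lemma~\ref{lemma:strong_L_left_R_right}~(iii) together with the $O(n)\times O(n)$ cyclic rotations of the counterclockwise and clockwise orderings on $L$ and $U$, calling {\sc LU-Ordered Optimal Composition} for each of the resulting $O(n^4)$ tuples. The only cosmetic difference is that the paper first \emph{merges} all monotone functions sharing a common angle into a single function $\check f_j$ (using Lemma~\ref{lemma:sin}~(ii)), whereas you keep them separate and argue that their relative order is immaterial; both viewpoints are equivalent, though the paper's merging makes the ``distinct angles'' assumption literal and slightly streamlines the enumeration.
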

\begin{proof}
By the discussion before the description of {\bf Problem} {\sc LU-Ordered Optimal Composition},  
we assume without loss of generality that 
no $f_i$ is constant and identical. 
In order to make use of {\bf Problem} {\sc LU-Ordered Optimal Composition}, 
we first arrange all monotone functions $f_{i_1}, \dots , f_{i_n}$ by their angles, and compose $f_{i_t} \circ \dots \circ f_{i_s}$ if $\theta(f_{i_s})=\dots = \theta(f_{i_t})$ by Lemmas \ref{lemma:sin} and \ref{lemma:strong_L_left_R_right}. 
Let $\check{f_1}, \ldots, \check{f}_{\check{n}}$ be the resulting monotone functions such that $\check{f_i}\not= \check{f_j}$ for distinct $i$ and $j$ in $[\check{n}]$. 
By Lemma \ref{lemma:strong_L_left_R_right} (iii), we consider at most $n^2$ many partitions $(L,U)$ of $[\check{n}]$. For each such partition,  Lemma \ref{lemma:strong_L_left_R_right} (i) (resp., (ii)) implies that $L$ (resp., $U$)
is permuted in at most $n$ counterclockwise (resp., clockwise) ways.
Namely, by solving  {\bf Problem} {\sc LU-Ordered Optimal Composition} at most $n^4\,(=n^2 \times n \times n)$ times, we can find an optimal function $(\check{f})^{\check{\sigma}}$, from which an optimal permutation $\sigma$ for $f_i$'s can be computed.  
These computation totally requires in $O(n \log n+n^4 T^*)$ time. 
\end{proof}

To apply a dynamic programming approach to  {\bf Problem} {\sc LU-Ordered Optimal Composition},
 for $s \in \{0,\dots,|L|\}$, $t \in \{0,\dots,|U|\}$, and $G\subseteq [k]$, let $E(s,t,G)$ denote
 the set of permutations $\sigma$ for  $p_i$ ($i=1, \dots , s$), $q_i$  ($i=1, \dots , t$), and $g_i$ ($i \in G$) such that 
\begin{description}
\item[{\rm (i)}]$L^\sigma$ satisfies 
\begin{equation*}
L^\sigma=
\left\{
\begin{array}{ll}
\{p_1, \dots , p_s\} &\text{if $k-|G|$ is even}\\
\{q_1, \dots , q_t\}&\text{if $k-|G|$ is odd},  \end{array}\right. 
\end{equation*}
\item[{\rm (ii)}]$U^\sigma$ satisfies 
\begin{equation*}
U^\sigma=
\left\{
\begin{array}{ll}
 \{q_1, \dots , q_t\}&\text{if $k-|G|$ is even}\\
\{p_1, \dots , p_s\}&\text{if $k-|G|$ is odd},  \end{array}\right. 
\end{equation*}

\item[{\rm (iii)}]
the restriction of $\sigma$ on $\{p_1, \dots , p_s\}$ produces the ordering $(p_1, \dots , p_{s})$, 
\ and 
\item[{\rm (iv)}]
the restriction of $\sigma$ on $\{q_1, \dots , q_t\}$  produces the ordering $(q_1, \dots , q_{t})$, 
\end{description}
and let $F[\sigma]$ denote the linear function obtained by a permutation $\sigma \in E(s,t,G)$. 
By definition, $E(s,t,G)=\emptyset$ if and only if either (I) $s=t=|G|=0$, (II) $G=\emptyset$, $t>0$, and $k$ is even, or (III) $G=\emptyset$, $s>0$, and $k$ is odd.  
Furthermore, we define $v(s,t,G)$ by 
 \begin{equation*}
 v(s,t,G)=\left\{
\begin{array}{ll}
\min\{ F[\sigma] \mid \sigma \in E(s,t,G) \} &\text{if $k-|G|$ is even}\\
\max\{ F[\sigma] \mid \sigma \in E(s,t,G) \}  &\text{if $k-|G|$ is odd}, 
 \end{array}
 \right.
 \end{equation*}
where we define 
$v(s,t,G)=x$ in Case (I) of $E(s,t,G)=\emptyset$, 
and 
$\bot$ in Cases (II) and (III) of $E(s,t,G)=\emptyset$. 
By definition, $v(|L|,|U|, [k])$ denotes the optimal function for {\bf Problem} {\sc LU-Ordered Optimal Composition}. 
We note that $v(s,t,G)$ satisfies the following recursion if  $E(s,t,G)\not=\emptyset$.  
 \begin{equation}
 \label{eq-lastlast1}
 v(s,t,G)=\left\{\!\!
\begin{array}{ll}
\min\bigl(\{ p_s \circ v(s-1,t,G)\} 
\cup \{ g \circ v(s,t,G\setminus \{g\}) \mid g \in G\}\bigr) &\text{ if $k-|G|$ is even},\\
\max \bigl(\{ q_t \circ v(s,t-1,G)\} 
\cup \{ g \circ v(s,t,G\setminus \{g\}) \mid g \in G\}\bigr) &\text{ if $k-|G|$ is odd}.  
 \end{array}
 \right.
 \end{equation}
{\bf Algorithm} \ref{algo:LU-ORDERED}
formally describes the dynamic programming approach for {\bf Problem} {\sc LU-Ordered Optimal Composition}. 
\begin{lemma}\label{theorem:FPTe}
  {\bf Problem} {\sc LU-Ordered Optimal Composition}
  can be  solved  in $O(2^k k n^2)$ time.
\end{lemma}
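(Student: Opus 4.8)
The plan is to prove the lemma in two parts: first, that the recursion \eqref{eq-lastlast1} correctly computes the quantities $v(s,t,G)$ together with the witnessing permutations, and second, that the bottom-up dynamic program of Algorithm \ref{algo:LU-ORDERED} evaluating this recursion runs in $O(2^k k n^2)$ time.

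For correctness I would argue by induction on $s+t+|G|$. The base cases are precisely the empty cases (I), (II), (III) of $E(s,t,G)$, where $v$ is stipulated to be $x$ or $\bot$, so there is nothing to check. For the inductive step, fix a triple with $E(s,t,G)\neq\emptyset$ and assume $k-|G|$ is even (the odd case is symmetric, via the minimization--maximization transformation of Remark \ref{remark-11}, with ``$\min$'' and $p_s$ replaced by ``$\max$'' and $q_t$). The heart of the argument is to determine which function can appear outermost in a $\sigma\in E(s,t,G)$: the $|G|$ decreasing functions $\{g_i\mid i\in G\}$ split $\sigma$ into blocks $I_0^\sigma,\dots,I_{|G|}^\sigma$, and the outermost block $I_{|G|}^\sigma$ is one of the blocks comprising $L^\sigma=\{p_1,\dots,p_s\}$ (its index differs from $|G|$ by $0$); hence the outermost composed function is either $p_s$, when $I_{|G|}^\sigma\neq\emptyset$, using condition (iii) which fixes the $L$-ordering, or else some $g_j$ with $j\in G$, when $I_{|G|}^\sigma=\emptyset$. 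Deleting that outermost function from $\sigma$ yields a feasible permutation of the remaining functions: for $E(s-1,t,G)$ in the first case, and for $E(s,t,G\setminus\{j\})$ in the second, where the key check is that removing the outermost decreasing function lowers every remaining block index by one, thereby swapping the $L$/$U$ assignment and turning $k-|G\setminus\{j\}|$ odd, exactly as the definition of $E(s,t,G\setminus\{j\})$ requires; conversely, prepending $p_s$ or $g_j$ to any member of the relevant set produces an element of $E(s,t,G)$. Since all functions in $\{F[\sigma]\mid\sigma\in E(s,t,G)\}$ share the same slope (the product of the slopes of the $s+t+|G|$ functions involved), they are totally ordered by their intercepts, so $\min$ and $\max$ are well defined; and since $p_s$ is monotone increasing (composition with it is order preserving) while each $g_j$ is monotone decreasing (composition with it is order reversing), the $\min$/$\max$ structure of \eqref{eq-lastlast1}---noting that $v(s,t,G\setminus\{j\})$ is a maximum because its index parity is flipped---propagates the correct optimum. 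Storing, at each cell, which branch attains the optimum then lets us reconstruct an optimal permutation.

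For the running time I would observe that the table has $(|L|+1)(|U|+1)\cdot 2^k$ cells $v(s,t,G)$. Filling them in increasing order of $|G|$, and within a fixed $G$ in increasing order of $s$ when $k-|G|$ is even and of $t$ when $k-|G|$ is odd, guarantees that every value appearing on the right-hand side of \eqref{eq-lastlast1} has already been computed. Evaluating \eqref{eq-lastlast1} at one cell requires at most $|G|+1\le k+1$ compositions of linear functions and comparisons of their intercepts, each costing $O(1)$. Hence the total running time is $O\bigl(2^k k\,(|L|+1)(|U|+1)\bigr)=O(2^k k\,n^2)$, using $|L|,|U|\le n$.

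The main obstacle is the correctness step, specifically the bookkeeping showing that deleting the outermost decreasing function flips both the $L$/$U$ roles and the parity of $k-|G|$ in precisely the way the definition of $E(s,t,G)$ and the two branches of \eqref{eq-lastlast1} demand; once that correspondence is pinned down, the monotonicity facts from Lemma \ref{lemma:sin} and the complexity count are routine.
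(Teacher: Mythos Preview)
Your proposal is correct and follows essentially the same approach as the paper: both use the dynamic programming recursion \eqref{eq-lastlast1} over the table indexed by $(s,t,G)$, count $(|L|+1)(|U|+1)2^k=O(2^k n^2)$ cells each computable in $O(k)$ time, and conclude the $O(2^k k n^2)$ bound. The paper's own proof is terse---it simply asserts that \eqref{eq-lastlast1} holds and then does the cell-count arithmetic---whereas you supply the inductive correctness argument (the case analysis on the outermost function, the parity flip when a decreasing function is peeled off, and the order-preserving/reversing behavior under composition with $p_s$ versus $g_j$) that the paper leaves implicit.
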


\begin{proof}
As discussed above, the function $v(|L|, |U |, [k])$ denotes the optimal function for {\bf Problem} {\sc LU-Ordered Optimal Composition}. 
By \eqref{eq-lastlast1}, 
we can apply a dynamic programming approach to the problem. 
Since  $v$ has $(|L|+1) \times (|U|+1) \times 2^k= O(2^k n^2)$ entries and each entry can be computed in $O(k)$ time, $v(|L|, |U |, [k])$ can be computed in  $O(2^k k n^2)$ time. 
Since the corresponding permutation can also be computed in the same amount of time, the proof is completed. 
    \end{proof}
\begin{algorithm}[htb]\small
    \caption{ to solve {\sc LU-Ordered Optimal Composition}}
    \label{algo:LU-ORDERED}
    \begin{algorithmic}[1]
    \STATE $v(0,0,\emptyset)\leftarrow$ identical function
    \FOR {$s=0$ to $|L|$}
    \FOR {$t=0$ to $|U|$}
    \FORALL {$G\subseteq [k]$}
    \IF{$\Sigma(s,t,G)=\emptyset$}
    \STATE \textbf{continue}
    \ENDIF
    \IF{$k-|G|$ is even}
        \STATE $v(s,t,G)\leftarrow +\infty$
        \FOR {$g\in G$}
        \STATE $v(s,t,G)\leftarrow\min\{v(s,t,G),g\circ v(s,t,G\setminus\{g\})\}$
        \ENDFOR
        \IF{$s>0$}
        \STATE $v(s,t,G)\leftarrow\min\{v(s,t,G),p_s\circ v(s-1,t,G)\}$
    \ENDIF
    \ELSE
        \STATE $v(s,t,G)\leftarrow -\infty$
        \FOR {$g\in G$}
        \STATE $v(s,t,G)\leftarrow\max\{v(s,t,G),g\circ v(s,t,G\setminus\{g\})\}$
        \ENDFOR
        \IF{$t>0$}
        \STATE $v(s,t,G)\leftarrow\max\{v(s,t,G),q_t\circ v(s,t-1,G)\}$
        \ENDIF
    \ENDIF
    \ENDFOR
    \ENDFOR
    \ENDFOR
    \RETURN $v(|L|,|U|,[k])$
    \end{algorithmic}
\end{algorithm}

\begin{proof}[Proof of Theorem \ref{theorem:FPT}]
It follows from Lemmas \ref{lemma:reduce_to_ST_order_given} and  \ref{theorem:FPTe}.  
\end{proof}

\begin{algorithm}[ht]\small
    \caption{to solve the composition ordering problem for general linear functions}
    \label{algo:FPTa}
    \begin{algorithmic}[1]
    \STATE $val\leftarrow +\infty$ and $F,G\leftarrow \emptyset$ \ 
  \ /*Regard $F$ and $G$ as multisets */ 
    \FOR {each $i=1, \dots , n$}
     \IF{$\alpha(f_i)> 0$ and $f_i\not=x$}
     \STATE $F \leftarrow F \cup \{f_i\}$
     \ELSIF{$\alpha(f_i)=0$}
     \STATE $F \leftarrow F \cup \{f^{(\epsilon)}_i\}$
     \ELSE
     \STATE $G \leftarrow G \cup \{f_i\}$
     \ENDIF
     \ENDFOR
    \STATE arrange functions in $F$ for their angles and compose functions with the same angle\ \ \\
    /* Denote by $\check{F}$ the set of resulting functions */
    \FOR {each $\psi_1 \in \{   \theta(\check{f}) \in (0,\pi) \mid \check{f} \in \check{F} \} \cup \{\pi\}$  and $\psi_2 \in  \{   \theta(\check{f}) \in (\pi,2\pi) \mid \check{f} \in \check{F} \} \cup \{\pi\}$}
    \STATE $L\leftarrow \{ \check{f} \in \check{F} \mid \psi_1 \leq \theta(\check{f}) \leq \psi_2\}$
    and  $U\leftarrow \check{F}\setminus L$
    \FOR{each counterclockwise permutation $\tau_L$ for $L$ and clockwise permutation $\tau_U$ for $U$}
    \STATE $val\leftarrow\min\{val,$ Algorithm \ref{algo:LU-ORDERED}
    $( L,U,G)\}$ \\   
    /* Here functions in $L$ and $U$ are assumed to be arranged according to $\tau_L$ and $\tau_U$ */
    \ENDFOR
    \ENDFOR
    \RETURN $val$
    \end{algorithmic}
\end{algorithm}

\section{Multiplication Ordering for Matrices} \label{section:matrix}
In this section, we 
consider matrix multiplication orderings as a generalization of
composition orderings for linear functions. 
Recall that 
the problem  is to 
 find a permutation $\sigma:[n]\rightarrow[n]$ that minimizes $\bm{w}^\top \Mplr{n}{1} \bm{y}$ for given  $n$ matrices $M_1,\dots,$ $M_n\in \Br^{m\times m}$ and two vectors $\bm{w},\bm{y}\in \Br^m$, 
 where $m$ denotes a positive integer. 
As mentioned in the introduction, 
if we set $\bm{w}=\Mto{1}{0}
$,  $\bm{y}=\Mto{0}{1}$, and $M_i=\Mtt{a_i}{b_i}{0}{1}$ for any $i \in [n]$, 
then the matrix multiplication ordering problem is equivalent to the composition ordering problem for linear functions $f_i(x) = a_i x+ b_i$. 
We  show that the results for linear functions can be extended to the matrix multiplication for $m=2$. 
Furthermore, by applying max-plus algebra, 
we obtain the result in \cite{MaxPlus}, which is an extension of  Johnson's rule \cite{Johnson} for the two-machine flow shop scheduling,  as a corollary of our result.
Finally we show that possible generalizations of the problem turn out to be intractable, unless P$=$NP. 

\subsection{Matrices in linear algebra}
In order to prove Theorem \ref{theorem:triangularizable}, 
we assume that matrices $M_i$ in $\Br^{2\times 2}$ are all upper triangular, i.e., $M_i=\Mtt{a_i}{b_i}{0}{d_i}$, and claim that the problem can be reduced in linear time to the problem of the minimum multiplication problem for upper triangular matrices with nonnegative 
(2,2)-entry, $\bm{w}^{\top}=\Mot{1}{0}$, and $\bm{y}^{\top}=\Mot{0}{1}$.  

Let $p$ denote the number of matrices $M_i$ with negative (2,2)-entry.  
For a matrix $M_i$, define a matrix $N_i$ by $M_i$ if $d_i \geq 0$, and $-M_i$ otherwise. 
For a permutation $\sigma:[n]\to [n]$, we have 
 \begin{align*}
     \Mot{w_1}{w_2} N^\sigma \Mto{y_1}{y_2}
     &= w_1 y_1 (N^\sigma)_{1,1} + w_2 y_2(N^\sigma)_{2,2} + w_1 y_2 (N^\sigma)_{1,2}\\
     &= (-1)^p \left(w_1y_1 \prod_{i=1}^n a_i + w_2y_2\prod_{i=1}^n d_i + w_1y_2 \Mot{1}{0} \Mall \Mto{0}{1}\right), 
 \end{align*}
where $\bm{w}^{\top}=\Mot{w_1}{w_2}$ and $\bm{y}^{\top}=\Mot{y_1}{y_2}$.  
Since $p$, 
$w_1y_1 \prod_{i=1}^n a_i$,  $w_2y_2\prod_{i=1}^n d_i$, and $w_1y_2$ are constant, 
 it is enough to consider the $(1, 2)$-th entry of $\Mall$, i.e., $\Mot{1}{0} \Mall \Mto{0}{1}$. 
Moreover, if $w_1y_2=0$ then any permutation is optimal.
 Therefore, we assume that $w_1y_2\neq 0$, and  
 by the following lemma, 
 we only need to examine the minimization.  
 This completes the claim. 

For a $2 \times 2$ matrix $M=\Mtt{a}{b}{0}{d}$, 
define $\tilde{M}=\Mtt{a}{-b}{0}{d}$.  
 \begin{lemma}\label{lemma:tilde_matrix}
     For any permutation $\sigma:[n]\to [n]$, we have 
     \begin{align*}
  \Mot{1}{0} \Mall \Mto{0}{1} = - \Mot{1}{0} \tilde{M}_{\sigma(n)}\cdots\tilde{M}_{\sigma(1)} \Mto{0}{1}.
     \end{align*}
 \end{lemma}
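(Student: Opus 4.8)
## Proof Proposal for Lemma \ref{lemma:tilde_matrix}

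The plan is to identify the operation $M \mapsto \tilde{M}$ with conjugation by a fixed diagonal sign matrix, which commutes past products and thereby reduces the whole identity to a single bookkeeping step. Concretely, set $D=\Mtt{1}{0}{0}{-1}$, so that $D^{-1}=D$. First I would check directly that $D^{-1} M D = \Mtt{a}{-b}{0}{d} = \tilde{M}$ for any upper triangular $M=\Mtt{a}{b}{0}{d}$, which is an immediate $2\times 2$ computation. Since $D$ is fixed and independent of the index, for any permutation $\sigma$ we get the telescoping identity
\begin{equation*}
\tilde{M}_{\sigma(n)}\cdots\tilde{M}_{\sigma(1)} = (D^{-1}M_{\sigma(n)}D)(D^{-1}M_{\sigma(n-1)}D)\cdots(D^{-1}M_{\sigma(1)}D) = D^{-1}\Mall D = D\,\Mall\,D,
\end{equation*}
where the last equality uses $D^{-1}=D$.

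Next I would left-multiply by $\Mot{1}{0}$ and right-multiply by $\Mto{0}{1}$. The key observation is that $\Mot{1}{0}D = \Mot{1}{0}$ and $D\Mto{0}{1} = \Mto{0}{-1} = -\Mto{0}{1}$. Therefore
\begin{equation*}
\Mot{1}{0}\tilde{M}_{\sigma(n)}\cdots\tilde{M}_{\sigma(1)}\Mto{0}{1} = \Mot{1}{0} D\,\Mall\,D \Mto{0}{1} = \Mot{1}{0}\Mall\left(-\Mto{0}{1}\right) = -\,\Mot{1}{0}\Mall\Mto{0}{1},
\end{equation*}
which is exactly the claimed equality after multiplying both sides by $-1$. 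This also matches the transformation \eqref{eq--0001} on linear functions under the correspondence $M_i=\Mtt{a_i}{b_i}{0}{d_i}$, so the lemma is the matrix analogue of the identity $\beta(f^\sigma)=-\beta(\tilde f^\sigma)$ used in Section \ref{section:notation}.

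There is essentially no obstacle here: the only thing to be careful about is that the argument relies on all $M_i$ (and hence all $\tilde{M}_i$) being upper triangular so that $D^{-1}M_iD=\tilde{M}_i$ holds with the stated sign pattern — for a general $2\times 2$ matrix, conjugation by $D$ would flip both off-diagonal entries rather than produce $\tilde M$. Since we are already in the reduced setting where every $M_i$ is upper triangular, this hypothesis is satisfied. A one-line alternative, if one prefers to avoid conjugation language, is simply to expand $\Mot{1}{0}\Mall\Mto{0}{1}$ as the sum over products of entries along paths from row $1$ to column $2$, note that every such path uses exactly one off-diagonal (necessarily $(1,2)$) entry, and observe that replacing $b_i$ by $-b_i$ flips the sign of every term; I would mention this as a remark but present the conjugation proof as the main argument since it is cleanest.
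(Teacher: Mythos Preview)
Your proof is correct. The paper's own proof is the single sentence ``It follows from the definition of $\tilde{M}$,'' with no further detail; your conjugation-by-$D=\Mtt{1}{0}{0}{-1}$ argument is a clean explicit realization of that same direct verification, and the alternative path-counting remark you mention is equally valid.
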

\begin{proof}
It follows from the definition of $\tilde{M}$.
\end{proof}

We next transform the problem into 
the composition ordering problem for linear functions. 
     For $M=\Mtt{a}{b}{0}{d}$ and a real number $\epsilon \neq 0$, we define 
     \begin{align*}
  M^{(\epsilon)}=
  \begin{cases}
      M & (d\neq 0),\\
      M + \Mtt{0}{0}{0}{\epsilon} & (d=0).\\
  \end{cases}
     \end{align*}

 Similar to Lemma \ref{lemma:epsilon}, we have the following lemma.
 \begin{lemma}
   For $n$ triangular matrices $M_1, \dots, M_n \in \Br^{2\times 2}$, there exists a real number $r >0$ such that
$  \Mot{1}{0} (M^{(\epsilon)})^{\sigma} \Mto{0}{1} \leq
\Mot{1}{0} (M^{(\epsilon)})^{\rho} \Mto{0}{1}$ implies
$  \Mot{1}{0} \Mall \Mto{0}{1} \leq \Mot{1}{0} \Mall[\rho]\Mto{0}{1}$
 for any two permutations $\sigma, \rho$ and any $\epsilon$ with $|\epsilon| < r$.   
 \end{lemma}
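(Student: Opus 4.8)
The proof will follow the same perturbation scheme as Lemma \ref{lemma:epsilon}. The plan is to observe that the scalar $\Mot{1}{0} (M^{(\epsilon)})^{\sigma} \Mto{0}{1}$ is nothing but the $(1,2)$-entry of the matrix product $M^{(\epsilon)}_{\sigma(n)}\cdots M^{(\epsilon)}_{\sigma(1)}$, and to treat this entry as a polynomial in the single variable $\epsilon$ whose constant term is exactly $\Mot{1}{0} \Mall \Mto{0}{1}$: indeed, every entry of $M^{(\epsilon)}_i$ is an affine function of $\epsilon$, and setting $\epsilon=0$ turns $M^{(\epsilon)}_i$ back into $M_i$. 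Hence a sufficiently small perturbation of $\epsilon$ changes the objective value by an arbitrarily small amount, uniformly over all permutations, so it cannot reverse a strict inequality between two distinct objective values.

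Concretely, I would first dispose of the trivial case: if $\Mot{1}{0} \Mall \Mto{0}{1} \le \Mot{1}{0} \Mall[\rho] \Mto{0}{1}$ the conclusion holds for any $r$, so assume $\Mot{1}{0} \Mall \Mto{0}{1} > \Mot{1}{0} \Mall[\rho] \Mto{0}{1}$. Set $\Delta=\max\{1,|a_i|,|b_i|,|d_i|\mid i\in[n]\}$ (if all $M_i$ are zero the claim is vacuous) and let $\delta$ be the smallest strictly positive value of $\bigl|\,\Mot{1}{0} \Mall[\mu] \Mto{0}{1}-\Mot{1}{0} \Mall[\nu] \Mto{0}{1}\,\bigr|$ taken over all ordered pairs of permutations $\mu,\nu$ — this minimum over finitely many pairs is what makes $r$ work uniformly. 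Since each entry of $M^{(\epsilon)}_i$ is a polynomial in $\epsilon$ of degree at most $1$ with coefficients of absolute value at most $\Delta$, the $(1,2)$-entry of the product is a polynomial in $\epsilon$ of degree at most $n$ whose coefficients are bounded by some $B=B(n,\Delta)$ (a crude explicit bound such as $(2n\Delta)^n$ suffices) and whose constant term is $\Mot{1}{0} \Mall \Mto{0}{1}$. Thus I may write
\[
\Mot{1}{0} (M^{(\epsilon)})^{\sigma} \Mto{0}{1} \;=\; \epsilon\, Q_\sigma(\epsilon)\;+\;\Mot{1}{0} \Mall \Mto{0}{1},
\]
with $Q_\sigma$ of degree at most $n-1$ and coefficients bounded by $B$, and similarly for $\rho$.

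Finally I would take $r=\min\{1,\delta/(2nB)\}$ and, for any $\epsilon$ with $|\epsilon|<r$, estimate
\[
\Mot{1}{0} (M^{(\epsilon)})^{\sigma} \Mto{0}{1} - \Mot{1}{0} (M^{(\epsilon)})^{\rho} \Mto{0}{1} \;\ge\; \delta - |\epsilon|\bigl(|Q_\sigma(\epsilon)|+|Q_\rho(\epsilon)|\bigr)\;\ge\;\delta - |\epsilon|\cdot 2nB\;>\;0,
\]
using $|\epsilon|<1$ to bound each $|Q_\sigma(\epsilon)|,|Q_\rho(\epsilon)|$ by $nB$. This contradicts the hypothesis $\Mot{1}{0} (M^{(\epsilon)})^{\sigma} \Mto{0}{1}\le\Mot{1}{0} (M^{(\epsilon)})^{\rho} \Mto{0}{1}$, which proves the lemma. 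There is no serious obstacle here: the only points needing care are that $r$ must be chosen globally (hence the definition of $\delta$) and that the constant term of the relevant polynomial is correctly identified with the unperturbed $(1,2)$-entry; everything else is the same bookkeeping as in Lemma \ref{lemma:epsilon}. (Alternatively, one could derive the statement from Lemma \ref{lemma:epsilon} itself, since by Lemma \ref{lemma:tilde_matrix} and upper triangularity the $(1,2)$-entry of $\Mall$ is the intercept of a composition of linear functions associated with the $M_i$, and $M_i^{(\epsilon)}$ corresponds to $f_i^{(\epsilon)}$; but this translation is no shorter than the direct estimate above.)
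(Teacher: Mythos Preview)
Your proposal is correct and takes essentially the same approach as the paper, which simply states that the proof is similar to that of Lemma~\ref{lemma:epsilon} and leaves it to the reader. You have filled in precisely those details: identifying the $(1,2)$-entry as a polynomial in $\epsilon$ with the unperturbed value as constant term, defining a global $\delta$ over all pairs of permutations, and bounding the perturbation uniformly.
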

 
 \begin{proof}
 The proof is similar to that of Lemma \ref{lemma:epsilon}, and is left to the reader.
 \end{proof}
 

Thus we assume that given upper triangular matrices   $M_i=\Mtt{a_i}{b_i}{0}{d_i}$  have positive  $d_i$. We then have  
 \begin{align*}
     \Mot{1}{0}\Mall\Mto{0}{1}
     \ &= \ \left(\prod_{i=1}^n d_i \right) \Mot{1}{0} \Mtt{\dfrac{\Ap{n}}{\Dp{n}}}{\dfrac{\Bp{n}}{\Dp{n}}}{0}{1} \cdots \Mtt{\dfrac{\Ap{1}}{\Dp{1}}}{\dfrac{\Bp{1}}{\Dp{1}}}{0}{1} \Mto{0}{1}\\
     &=  \ \left(\prod_{i=1}^n d_i \right) \Fall(0),
 \end{align*}
 where we define 
 \begin{align*}
     f_i(x) \,\,= \,\,\frac{a_i}{d_i}x+\frac{b_i}{d_i} \ \ \ \mbox{ for } i \in [n]. 
 \end{align*}
This implies that matrix multiplication for upper triangular matrices can be solved by solving the composition ordering problem for linear functions. 

We remark that our algorithm concerns the comparison of  polar angles $\theta(f_i)$'s, but not of the vectors $\Mto{b_i/d_i}{1-a_i/d_i}$, and hence we do not need to care about the case when $d_i=\epsilon$.     
 Therefore, we have the following lemma for $2 \times 2$ triangular matrices.
 \begin{lemma}\label{lemma:triangular_basic}
     For  upper triangular matrices $M_1, \dots, M_n$ in $\Br^{2 \times 2}$, we have the following statements. 
     \begin{enumerate}
  \item[{\rm (i)}] If all matrices have nonnegative  determinants, then an optimal multiplication ordering can be computed in $O(n\log n)$ time.
  \item[{\rm (ii)}] If some matrix has negative determinant, then an optimal multiplication ordering can be computed in $O(k2^k n^6)$ time, where $k$ denotes the number of matrices with negative determinants. 
     \end{enumerate}
 \end{lemma}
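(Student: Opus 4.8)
The plan is to combine the linear-time reduction carried out in the paragraphs preceding the lemma with the algorithmic results of Sections~\ref{section:monotone}--\ref{section:general}. Recall that these paragraphs reduce the problem to the composition ordering problem for the linear functions $f_i(x)=\frac{a_i}{d_i}x+\frac{b_i}{d_i}$ obtained after replacing each $M_i$ by $N_i=\pm M_i$ with nonnegative $(2,2)$-entry, focusing on the $(1,2)$-entry of the product $N^{\sigma}$, using Lemma~\ref{lemma:tilde_matrix} to pass to a minimization, applying the $\epsilon$-perturbation to the rows with $d_i=0$, and factoring out the positive constant $\prod_i d_i$. The only point requiring care is the sign chosen for $N_i$ when $d_i=0$: there I would pick $N_i=\pm M_i$ so that its $(1,1)$-entry is also nonnegative, which is legitimate because the global sign $(-1)^p$ appearing in the computation is independent of the permutation. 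Under this convention, $f_i$ is monotone decreasing exactly when $a_i<0$ and $d_i>0$, that is, exactly when $\det M_i=a_id_i<0$; hence the number of monotone decreasing functions among $f_1,\dots,f_n$ equals $k$, the number of matrices of negative determinant. Since the orderings of the angles $\theta(f_i)$ can be decided in $O(1)$ time per comparison without computing $\epsilon$, the whole reduction takes $O(n\log n)$ time, dominated by one sort.

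For part~(i), the hypothesis $\det M_i\geq 0$ for all $i$ makes every $f_i$ monotone nondecreasing, and a minimum composition ordering for such functions is computable in $O(n\log n)$ time: this is the algorithm of Kawase, Makino and Seimi~\cite{KMS:linear}, which also follows constructively from our characterizations (Theorems~\ref{theorem:main_theorem-x}, \ref{theorem:main_theorem_nondecreasing-1} and~\ref{theorem:main_theorem_nondecreasing-2}) by sorting the functions by angle and, using the unimodality of $f^\sigma$ over the shifts of a counterclockwise permutation (Lemma~\ref{lemma:unimodal}), picking the optimal shift. Composing this with the $O(n\log n)$ reduction gives the bound in~(i). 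For part~(ii), if some $\det M_i<0$ then $k>0$ and there are $k$ monotone decreasing functions among the $f_i$, so Theorem~\ref{theorem:main3} (equivalently Theorem~\ref{theorem:FPT}) computes a minimum composition ordering for $f_1,\dots,f_n$ in $O(k2^kn^6)$ time; translating it back into a matrix multiplication ordering and adding the dominated $O(n\log n)$ reduction cost yields the claimed $O(k2^kn^6)$ bound.

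The main obstacle is not a hard calculation but the bookkeeping behind the equivalence ``$\det M_i<0\iff f_i$ is monotone decreasing'', and in particular making sure that the degenerate rows with $d_i=0$ are not turned into spurious monotone decreasing functions, which would inflate the parameter $k$ and invalidate the stated running time. Once the sign convention for $N_i$ above is in force, this equivalence is exact, and both parts become a direct appeal to results already proved in the paper.
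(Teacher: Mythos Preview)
Your proposal is correct and follows the paper's own approach: the lemma is a summary of the reduction carried out in the paragraphs preceding it, followed by an appeal to the composition-ordering results of Sections~\ref{section:monotone}--\ref{section:general}. Your explicit treatment of the degenerate case $d_i=0$, $a_i<0$ (choosing the sign of $N_i$ so that its $(1,1)$-entry is nonnegative, to ensure $f_i$ is not spuriously decreasing and hence that the parameter $k$ really counts matrices with negative determinant) is in fact more careful than the paper's own presentation, which glosses over this point.
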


This immediately implies Theorem  \ref{theorem:triangularizable}. 
\begin{proof}[Proof of Theorem \ref{theorem:triangularizable}]
By Lemma \ref{lemma:triangular_basic}, 
we only reduce the problem to the one for upper triangular matrices.
 
 Let $M_1, \ldots, M_n$ be $2 \times 2$ simultaneously triangularizable matrices. 
 Since there exists a regular matrix $P\in\Br^{2\times 2}$ such that $P^{-1}M_iP$ is an upper triangular matrix $T_i$ for any $i\in[n]$, we have
 \begin{align*}
     \bm{w}^\top \Mall \bm{y}
  &= \bm{w}^\top P (P^{-1} M_{\sigma(n)}P) (P^{-1}M_{\sigma(n-1)}P) \cdots (P^{-1} M_{\sigma(1)} P) P^{-1}\bm{y}\\
  &= \bm{w'}^\top T^{\sigma} \bm{y'},
 \end{align*}
 where $\bm{w'}=P^\top\bm{w}$ and $\bm{y'}=P^{-1}\bm{y}$.
 Thus we can reduce the problem to the one for triangular matrices.
  \end{proof}

Unfortunately, this positive results cannot be extended to 
1) the nonnegative determinant case for $m=2$, 
2) the case of $m \geq 3$ and 
3) the target version; See Theorems \ref{theorem:matrix-hardness1} (i), (ii) and \ref{theorem:minimize_norm-hardness}.

\begin{proof}[Proof of Theorem \ref{theorem:matrix-hardness1} (i)]
     We show that {\sc 3-Partition} can be reduced to the problem, where {\sc 3-Partition} is,  given $n(=3m)$ positive integers $a_1,\dots,a_n$ with $\sum_{i=1}^n a_i = mT$, to decide if  there exist $m$ disjoint sets $P_1,\dots,P_m \subseteq [n]$ such that $|P_j|=3$ and $\sum_{i\in P_j} a_i=T$ for all $j \in [m]$.

  The problem is strongly NP-complete even if each $a_i$ satisfies $\frac{T}{4}<a_i<\frac{T}{2}$.
 Hereafter we assume such condition.
 It follows from this assumption that $\sum_{i\in P} a_i = T$ only when $|P|=3$.
 Thus we do not have to care about the constraint $|P_j|=3$.
 
Given an instance of {\sc 3-Partition}, 
we construct $n+m-1$ matrices $M_i\in \Br^{2\times 2}$ and $2$ vectors $\bm{w},\bm{y}\in \Br^2$ as follows:
     \begin{align*}
         M_i&=
         \begin{cases}
             \Mtt{1}{a_i}{0}{1} & \text{if } i=1,\dots,n\\
             \Mtt{0}{0}{1}{0} & \text{if } i=n+1,\dots,n+m-1, 
         \end{cases} \,
         \bm{w} = \Mto{1}{0},  \mbox{ and }\ 
         \bm{y} = \Mto{0}{-1}.
     \end{align*}
     It is easy to see that $\det(M_i)\geq 0$ for all $M_i$'s.
     We claim that $-T^m$ is the optimal value for the optimal multiplication ordering problem if and only if there exists a desired partition $P_1,\dots,P_m$ 
    of the corresponding instance of  {\sc 3-Partition}, which completes the proof. 
 For a permutation $\sigma:[n+m-1]\rightarrow[n+m-1]$, 
define positive integers $\ell_1<\dots<\ell_{m-1}$ by $\sigma(\ell_j) \in [n+1,n+m-1]$, i.e., 
 \begin{align*}
         M_{\sigma(\ell_j)}&= \Mtt{0}{0}{1}{0}
     \end{align*}
for $j\in[m-1]$, and let 
$\ell_0=0$ and $\ell_m=n+m$. 
Let  
\[P_j = \{i\in[n+m-1] \mid \ell_{j-1}<i<\ell_j\} \ \ \mbox{ for } j\in[m].
\]
     Then we have
     \begin{align*}
         M^\sigma 
         &= \prod_{i\in P_m} M_{\sigma(i)} \Mtt{0}{0}{1}{0} \prod_{i\in P_{m-1}}M_{\sigma(i)}  \cdots \prod_{i\in P_2}M_{\sigma(i)} \Mtt{0}{0}{1}{0} \prod_{i\in P_1}M_{\sigma(i)}. 
     \end{align*}
Note that $\bm{w}^{\rm T}M^\sigma
         \bm{y}=0$ if some $P_j$ is empty. 
  On the other hand, if all $P_j$'s are nonempty, then  $M^\sigma$ can be  restated as follows.           
  \begin{align*}
         M^\sigma 
         &= \Mtt{1}{\displaystyle \sum_{i\in P_m} a_{\sigma(i)}}{0}{1} \Mtt{0}{0}{1}{0} \Mtt{1}{\displaystyle \sum_{i\in P_{m-1}} a_{\sigma(i)}}{0}{1} \cdots \Mtt{0}{0}{1}{0} \Mtt{1}{\displaystyle \sum_{i\in S'_1} a_{\sigma(i)}}{0}{1}\\
         &= \Mtt{1}{\displaystyle \sum_{i\in P_m} a_{\sigma(i)}}{0}{1} \Mtt{0}{0}{1}{\displaystyle \sum_{i\in P_{m-1}} a_{\sigma(i)}} \cdots \Mtt{0}{0}{1}{\displaystyle \sum_{i\in P_1} a_{\sigma(i)}}\\
         &= \Mtt{1}{\displaystyle \sum_{i\in P_m} a_{\sigma(i)}}{0}{1} \Mtt{0}{0}{\displaystyle\prod_{j\in[2,m-1]}\sum_{i\in P_j}a_{\sigma(i)}}{\displaystyle\prod_{j\in[m-1]}\sum_{i\in P_j}a_{\sigma(i)}}\\
         &= \Mtt{\displaystyle \prod_{j\in[2,m]}\sum_{i\in P_j}a_{\sigma(i)}}{\displaystyle\prod_{j\in[m]}\sum_{i\in P_j}a_{\sigma(i)}}{\displaystyle\prod_{j\in[2,m-1]}\sum_{i\in P_j}a_{\sigma(i)}}{\displaystyle\prod_{j\in[m-1]}\sum_{i\in P_j}a_{\sigma(i)}}, 
     \end{align*}
which implies that   $\bm{w}^\top M^\sigma \bm{y} = -\prod_{j\in[m]}\sum_{i\in P_j}a_{\sigma(i)}$.
Therefore, it is regarded as the problem of computing a partiotion $P_1, \dots , P_m$ of $[n]$ 
with the minimum  $-\prod_{j\in[m]} \sum_{i\in P_j} a_i$.
     By the inequality of arithmetic and geometric means, we have $-\prod_{j\in[m]} \sum_{i\in P_j} a_i\geq -T^m$, where the equality holds if and only if 
      $\sum_{i\in P_j} a_i = T$ for all $j\in[m]$.
      Since each $a_i$ satisfies $\frac{T}{4}<a_i<\frac{T}{2}$, $\sum_{i\in P_j} a_i = T$ implies $|P_j|=3$. 
     This proves the claim.     
\end{proof}

In order to prove Theorem \ref{theorem:minimize_norm-hardness}, we next consider the problem of computing a permutation $\sigma$ of monotone linear functions $f_1, \dots , f_n$ that minimizes $|\beta(f^\sigma)-t|$ for a given target  $t \in \Br$. 
 

We again reduce {\sc 3-Partition} 
to the problem. 
Given an instance of {\sc 3-Partition}, 
We construct $n+m\,(=4m)$ linear functions $f_i$ and a target $t \in \Br$ as follows:
     \begin{align*}
         f_i(x)&=
         \begin{cases}
            x+a_i & \text{if } i=1,\dots,n\\
            (c_2+mT)(x-(i-n)T)+(i-n)T & \text{if } i=n+1,\dots,n+m\\
         \end{cases}\\
         t&=mT,
     \end{align*}
Note that all $f_i$'s are monotone, and the following statements hold. 
\begin{align}
&\mbox{any } i\in[n+m] \mbox{ satisfies }
f_{i}(x)>x  \mbox{ if } x> mT\label{eq--xs1}\\
&\mbox{any } j\in[m] \mbox{ satisfies }f_{n+j}(jT)=jT  
\mbox{ and }
f_{n+j}(x)<x  \mbox{ if } x<0 \label{eq--xs2}
\end{align}
 Furthermore, we have the following lemmas. 
 \begin{lemma}
 \label{lemma-npc--a1}
Let $f_1, \dots , f_{n+m}$, and $t$ be defined as above. 
Then there exists a partition $
P_1,\dots,$ $P_m\subseteq[n]$ such that $\sum_{i\in P_j} a_i=T$ for $j\in[m]$ if and only if there exists a permutation $\sigma:[n+m]\rightarrow[n+m]$ such that  $\Fplr{\sigma^{-1}(n+j)-1}{1}(0)=jT$ for any $j\in[m]$.
\end{lemma}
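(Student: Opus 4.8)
The plan is to establish the biconditional by showing that each partition $P_1, \dots, P_m$ can be turned into a permutation $\sigma$ with the claimed "checkpoint" property $\Fplr{\sigma^{-1}(n+j)-1}{1}(0)=jT$ for all $j\in[m]$, and conversely. First I would observe a basic arithmetic fact about the functions $f_i=x+a_i$ ($i\le n$): composing any subset of them adds the corresponding $a_i$'s to the argument, so if $S\subseteq[n]$ with $\sum_{i\in S}a_i = T$, then applying these functions (in any order) to an input of the form $(j-1)T$ produces $(j-1)T + T = jT$. Also recall \eqref{eq--xs2}, which says $f_{n+j}$ fixes $jT$. So for the forward direction, given a partition $P_1,\dots,P_m$ with $\sum_{i\in P_j}a_i = T$, I would define $\sigma$ to first list the indices in $P_1$ (in any order), then $n+1$, then the indices in $P_2$, then $n+2$, and so on, ending with the indices in $P_m$ followed by $n+m$. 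Then by induction on $j$: after composing all functions strictly before position $\sigma^{-1}(n+j)$, the value obtained from input $0$ is $jT$. The base case uses $\sum_{i\in P_1}a_i = T$; the inductive step uses that $f_{n+(j-1)}$ fixes $(j-1)T$ (so the running value just after the $(j-1)$st checkpoint block is still $(j-1)T$) and then that $\sum_{i\in P_j}a_i = T$ carries $(j-1)T$ to $jT$. Hence $\Fplr{\sigma^{-1}(n+j)-1}{1}(0)=jT$ for all $j$.

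For the converse, suppose $\sigma$ satisfies $\Fplr{\sigma^{-1}(n+j)-1}{1}(0)=jT$ for all $j\in[m]$. I would first argue that the positions $\sigma^{-1}(n+1), \sigma^{-1}(n+2), \dots, \sigma^{-1}(n+m)$ occur in this increasing order in $\sigma$, i.e., $\sigma^{-1}(n+1)<\sigma^{-1}(n+2)<\cdots<\sigma^{-1}(n+m)$. This is where the monotonicity and the specific slopes/intercepts come in: the value of $\Fplr{\ell}{1}(0)$ as $\ell$ increases is nondecreasing whenever we only compose functions $f_i$ with $i\le n$ (each adds a nonnegative $a_i$), and the checkpoints must be hit in the order $T, 2T, \dots, mT$. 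If some $f_{n+j'}$ with $j'$ out of order were composed before reaching the value $jT$, the argument would already have exceeded or be positioned wrongly — more carefully, since between consecutive checkpoints the running value only increases (it moves from $(j-1)T$ to $jT$ by adding $a_i$'s, and the large-slope functions $f_{n+j'}$ satisfy $f_{n+j'}(x)>x$ for $x> mT$ by \eqref{eq--xs1} but $f_{n+j'}((j-1)T)\ne (j-1)T$ for $j'\ne j$, in fact it is easy to check $f_{n+j'}((j-1)T) = jT$ forces $j'=j$ using the explicit formula), the checkpoint functions must appear in the natural order and each $f_{n+j}$ must sit exactly at the spot where the running value is $jT$. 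Then define $P_j$ to be the set of indices $i\in[n]$ appearing in $\sigma$ strictly between position $\sigma^{-1}(n+j-1)$ and position $\sigma^{-1}(n+j)$ (with the convention that position $0$ bounds $P_1$). Since between these positions only functions $f_i=x+a_i$ with $i\le n$ are composed, the value advances from $(j-1)T$ to $jT$ by exactly $\sum_{i\in P_j}a_i$, so $\sum_{i\in P_j}a_i = T$. Finally, the $P_j$ are disjoint subsets of $[n]$ and their union is all of $[n]$ (every $i\le n$ falls between two consecutive checkpoints, because $\sigma^{-1}(n+m)$ need not be the last position — but any $i\le n$ after the last checkpoint would contribute to a surplus; I would handle this by noting those trailing indices contribute to no $P_j$ yet $\sum_{j}\sum_{i\in P_j}a_i = mT = \sum_{i\in[n]}a_i$ forces there to be none). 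Thus $P_1,\dots,P_m$ is the desired partition.

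The main obstacle I expect is the careful bookkeeping in the converse: precisely pinning down that the checkpoint functions $f_{n+j}$ must occur in the correct relative order and at the unique positions where the running value equals $jT$. This rests on the explicit choice of slope $c_2+mT$ and intercept for $f_{n+j}$, engineered so that $f_{n+j}(x)=x$ exactly at $x=jT$ and $f_{n+j}$ is a steep expansion away from that fixed point; one must use this, together with the fact that composing the $f_i$ ($i\le n$) only ever increases the value from $0$ upward and never overshoots within a block (since all partial sums of the $a_i$ are bounded appropriately — here the condition $\tfrac{T}{4}<a_i<\tfrac{T}{2}$ from the surrounding construction could be invoked if needed, though for this lemma it may suffice that the $a_i$ are positive), to conclude the checkpoints are hit in order and the blocks between them sum to exactly $T$. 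Once that structural claim is in hand, the translation between partitions and permutations is routine.
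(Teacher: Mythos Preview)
Your forward direction is exactly the paper's construction. For the converse, your overall strategy matches the paper's, but you overcomplicate the key step and leave it gappy. You try to argue the checkpoint positions $\ell_j=\sigma^{-1}(n+j)$ are increasing by invoking the steep slope $c_2+mT$, the bounds $T/4<a_i<T/2$, and a contradiction argument about ``out of order'' checkpoints; none of this is needed, and as written your argument is circular (you assume the running value only increases ``between consecutive checkpoints'' before you have established the checkpoint order).

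The paper's observation is much simpler: set $v_i=\Fplr{i}{1}(0)$. At a checkpoint position $i=\ell_j$, the hypothesis gives $v_{i-1}=jT$, and \eqref{eq--xs2} says $f_{n+j}(jT)=jT$, so $v_i=v_{i-1}$. At every other position, $v_i=v_{i-1}+a_{\sigma(i)}>v_{i-1}$. Hence $(v_i)$ is globally nondecreasing, with no case analysis required. Since $v_{\ell_j-1}=jT$ is strictly increasing in $j$, the positions $\ell_j$ are strictly increasing immediately. Defining $P_j=\{\sigma(i)\mid \ell_{j-1}<i<\ell_j\}$ then gives $\sum_{i\in P_j}a_i=v_{\ell_j-1}-v_{\ell_{j-1}}=T$, and your total-sum remark correctly shows nothing lies beyond $\ell_m$. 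So your plan works once you insert this single fixed-point observation in place of the slope and bound arguments.
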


\begin{proof}
     To prove only if part, let 
     $\{P_j=\{p_{j,1},p_{j,2},p_{j,3}\} \mid j\in[m]\}$ be a desirable partition of $[n]$, i.e., 
     $a_{p_{j1}}+a_{p_{j2}}+a_{p_{j3}}=T$ for any $j \in [m]$. 
      We define a permutation $\sigma:[n+m]\rightarrow[n+m]$ by 
     \begin{align}
         \sigma(i)=
         \begin{cases}
             p_{j,k} & \text{if } i=4(j-1)+k \text{ with } j\in[m],k\in[3],\\
             n+j & \text{if } i=4j \text{ with } j\in[m].
         \end{cases}\label{eq-importa1}
     \end{align}
For any $j\in[m]$, we have 
     \begin{align*}
         \Fplr{4j}{4j-3} 
         &\,=\, f_{n+j}\circ(x+a_{p_{j,3}})\circ(x+a_{p_{j,2}})\circ(x+a_{p_{j,1}})\\
         &\,= \,f_{n+j}\circ(x+T).
     \end{align*}
Therefore,  by the induction on $j$, it is not difficult to see that $\Fplr{4j-1}{1}(0)=jT$ for $j\in[m]$, which completes the only-if part, 
since $\sigma^{-1}(n+j)=4j$. 

To prove if part, assume that  $\sigma$ satisfy $\Fplr{\sigma^{-1}(n+j)-1}{1}(0)=jT$ for any $j\in[m]$.
Define $v_i=\Fplr{i}{1}(0)$ for any $i\in[n+m]$ and $v_0=0$.
Then we can see that  $\sigma(i)\in[n]$ implies  $v_i=v_{i-1}+a_{\sigma(i)}$. 
     If $\sigma(i)=n+j$ for some $j\in[m]$ then we have $v_i=v_{i-1}$ by \eqref{eq--xs2} and the assumption of $\sigma$.
    Thus $v_1,\dots,v_{n+m}$ are nondecreasing.
     Let $\ell_1,\dots,\ell_m\in[n+m]$ satisfy $\sigma(\ell_j)=n+j$ for $j\in[m]$.
     Then they are increasing, 
     since $v_1,\dots,v_{n+m}$ are nondecreasing.
     Thus again by \eqref{eq--xs2} and the assumption of $\sigma$, $\{P_j=\{\sigma(i) \mid \ell_{j-1}<i<\ell_j\} \mid j\in[m] \}$ is a desirable partition of  $[n]$, where $\ell_0=0$. 
\end{proof}

 \begin{lemma}
 \label{lemma-npc--a2}
Let $f_1, \dots , f_{n+m}$, and $t$ be defined as above. 
If there exists a partition $
P_1,\dots,P_m\subseteq[n]$ such that $\sum_{i\in P_j} a_i=T$ for $j\in[m]$, then $\min_\rho |\beta(f^\rho)-t|=0$ holds. Otherwise, we have $\min_\rho |\beta(f^\rho)-t| > c_2$. 
\end{lemma}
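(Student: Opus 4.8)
The plan is to establish the two halves of the lemma separately, leaning on Lemma \ref{lemma-npc--a1} and on the ``barrier'' properties \eqref{eq--xs1}, \eqref{eq--xs2} of the constructed functions.

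The easy half: if a partition $P_1,\dots,P_m$ of $[n]$ with $\sum_{i\in P_j}a_i=T$ exists, I would take the permutation $\sigma$ of \eqref{eq-importa1}; as in the proof of Lemma \ref{lemma-npc--a1}, a short induction gives $\Fplr{4j-1}{1}(0)=jT$ for every $j\in[m]$. Since $\sigma(4m)=n+m$ is scheduled last and $f_{n+m}(mT)=mT$ by \eqref{eq--xs2}, we get $\beta(f^\sigma)=f^\sigma(0)=f_{n+m}\bigl(\Fplr{4m-1}{1}(0)\bigr)=f_{n+m}(mT)=mT=t$, so $\min_\rho|\beta(f^\rho)-t|=0$.

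For the hard half I would argue by contraposition: assuming some permutation $\rho$ has $|\beta(f^\rho)-t|\le c_2$, I produce the desired partition. First a harmless normalization: replacing $c_2$ by $\lceil c_2\rceil$ (which only strengthens the conclusion ``$>c_2$'') I may assume $c_2\in\mathbb{Z}$, hence $c:=c_2+mT\in\mathbb{Z}$; then, since every $f_i$ has integer coefficients ($f_i(x)=x+a_i$ for $i\le n$, and $f_{n+l}(x)=cx-(c-1)lT$), the running value $v_p:=\Fplr[\rho]{p}{1}(0)$ is an integer for all $p$. Two observations drive everything: (1) by \eqref{eq--xs1}, once $v_p>mT$ the value stays above $mT$ and is strictly increased by each remaining function, so $\beta(f^\rho)$ is at least the first value exceeding $mT$; and (2) applied to an integer $v$, the function $f_{n+l}$ is a no-op when $v=lT$, returns at least $c+lT=c_2+mT+lT>mT+c_2$ when $v\ge lT+1$, and returns at most $lT-c\le -c_2$ when $v\le lT-1$ (using $l\le m$). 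Now suppose $\rho$ applies some special function away from its own fixed point; let $f_{n+l}$ be the first such, fed a value $v=lT+\delta$ with $\delta\in\mathbb{Z}\setminus\{0\}$. If $\delta\ge 1$, then by (2) and (1) we get $\beta(f^\rho)>mT+c_2$. If $\delta\le -1$, the value drops to $\le -c_2$; thereafter it can only be raised by translations — whose amounts together use up at most the still-unused portion of $\sum_i a_i=mT$ — and can only be lowered or left fixed by later special functions, since a later special function applied strictly above its fixed point would, by (2), overshoot $mT$ and then run away by (1). Hence either $\beta(f^\rho)>mT+c_2$, or $v_p\le mT-c_2$ for all $p$; in the latter case the translations available after the drop sum to strictly less than $mT$ (the borderline case $\delta=-1,\ l=m$ is checked directly using $mT\ge 3$), so in fact $\beta(f^\rho)<mT-c_2$. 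Either way $|\beta(f^\rho)-t|>c_2$, a contradiction. Therefore in $\rho$ every special function is applied at its own fixed point; this forces the special functions to occur in increasing index order (otherwise $f_{n+l}$ would be fed a value $\ge l'T>lT$ for an already-scheduled index $l'>l$), forces each block of translations between consecutive special functions to sum to $T$, forces the block after $f_{n+m}$ to be empty, and yields $\beta(f^\rho)=mT$. Reading off the $m$ translation blocks gives a partition $P_1,\dots,P_m$ of $[n]$ with $\sum_{i\in P_j}a_i=T$ (automatically $|P_j|=3$ since $T/4<a_i<T/2$) — exactly the condition of Lemma \ref{lemma-npc--a1}.

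The main obstacle will be the ``recovery'' step: once the running value has plunged to $\le -c_2$, one must rule out that the remaining translations and special functions conspire to return it to $[mT-c_2,\,mT+c_2]$. The crux is the rigid dichotomy in observation (2) — on integers, a special function is either a no-op or drags the value past one of the barriers $0$, $mT$ by a controlled amount — combined with the conservation law $\sum_i a_i=mT$ bounding the total contribution of the translations; making the value integer (via the normalization) is what makes that dichotomy available.
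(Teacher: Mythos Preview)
Your approach is essentially the paper's: the existence direction uses the permutation from \eqref{eq-importa1} exactly as in Lemma \ref{lemma-npc--a1}, and the non-existence direction locates the first special function fed a value off its fixed point and splits into an overshoot case ($\delta\ge 1$) and an undershoot case ($\delta\le -1$). You are in fact more explicit than the paper's terse Case~2 in explaining why later special functions cannot rescue the value after a drop (integer inputs force the dichotomy ``overshoot past $mT+c_2$'' versus ``do not increase'').

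One bookkeeping slip in the undershoot case: your claim that ``the translations available after the drop sum to strictly less than $mT$'' fails when the very first function applied under $\rho$ is already a bad special function (its input is $0$, so no translations have been used and the remaining total is exactly $mT$); the borderline you single out, $\delta=-1,\ l=m$, is not the relevant one. Likewise ``$v_p\le mT-c_2$ for all $p$'' need not hold before the drop. The clean repair is to bypass the two-step estimate ``drop $\le -c_2$, then add $<\,mT$'' and bound directly: assuming no later special function overshoots, the final value is at most
\[
c\delta + lT \;+\; \bigl(mT-(lT+\delta)\bigr)\;=\;(c-1)\delta + mT \;\le\; 1-c_2 \;<\; mT-c_2
\]
whenever $\delta\le -1$, which covers all cases including $lT+\delta=0$.
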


\begin{proof}
Assume that a desirable partition $P_1, \dots , P_m$ exists. Then as shown in the proof of Lemma \ref{lemma-npc--a2}, 
a permutation given in \eqref{eq-importa1} provides the composite $f^\sigma$ such that  $\beta(f^\sigma)=mT\,(=t)$, which proves the first half of the lemma. 

On the other hand, if no desirable partition exists. Then  by Lemma \ref{lemma-npc--a2}, 
 any permutation $\sigma$ has some $j\in[m]$ such that $\Fplr{\sigma^{-1}(n+j)-1}{1}(0)\neq jT$.
     We separately consider the following two cases.

{\bf Case 1.} If $\Fplr{\sigma^{-1}(n+j)-1}{1}(0)=jT+\Delta$ for some $\Delta\in\mathbb{Z}_{>0}$, then we have 
     \begin{align*}
         \Fplr{\sigma^{-1}(n+j)}{1}(0) 
         &= f_{n+j}(jT+\Delta)\\
         &= (c_2 +mT)\Delta+jT
         > mT+c_2.
     \end{align*}
     By \eqref{eq--xs1}, 
    $f^\sigma(0)>\Fplr{\sigma^{-1}(n+j)}{1}(0)>mT+c_2$, implying that  $|f^\sigma(0)-t|>c_2$.

     {\bf Case 2.} If $\Fplr{\sigma^{-1}(n+j)-1}{1}(0)=jT-\Delta$ for some $\Delta\in\mathbb{Z}_{>0}$, 
     then we have  
     \begin{align*}
         \Fplr{\sigma^{-1}(n+j)}{1}(0) 
         &= f_{n+j}(jT-\Delta)\\
         &= -(c_2+mT)\Delta+jT
         <-c_2.
     \end{align*}
   By \eqref{eq--xs2} and $f_{i}(x)=x+a_i$ for any $j\in[n]$,  we can show that $f^\sigma(0) < -c_2+mT $, which implies  $|f^\sigma(0)-mT|>c_2$.

 In either case, we can prove $|f^\sigma(0)-mT|>c_2$, completing the proof. 
 \end{proof}

\begin{proof}[Proof of Theorem \ref{theorem:minimize_norm-hardness}]
This follows from Lemma \ref{lemma-npc--a2}.
\end{proof}


Theorem \ref{theorem:matrix-hardness1} (ii) will be proved in the next subsection since we reduce an NP-hard problem expressed using the max-plus algebra to the problem.

\subsection{Matrices in the max-plus algebra}\label{App.maxplus}
    In this section, we investigate multiplication of matrices in the max-plus algebra. 
    We obtain the result of Bouquard et al. in the case $m=2$ as a corollary of our result, and then prove Theorem \ref{theorem:minimize_norm-hardness}. 

Let $\Br_{\max}$ be the set $\Br \cup \{-\infty\}$ with two binary operations max and + denoted by $\oplus$ and $\otimes$ respectively, i.e., for $a,b\in\Br_{\max}$,
\begin{gather*}
    a\oplus b = \max\{a,b\}  \ \text{ and }\  a\otimes b = a+b.
\end{gather*}
$(\Br_{\max},\oplus,\otimes)$ is called the \textit{max-plus algebra}.
We denote by $\mathbb{0}$ the additive identity $-\infty$, and denote by $\mathbb{1}$ the multiplicative identity $0$.

Since the two operations $\oplus$ and $\otimes$ are extended to the matrices of $\Br_{\max}$ as in the  linear algebra, we can consider the problem to find a permutation $\sigma$ that minimizes/maximizes
$\bm{u}^\top \otimes N_{\sigma(n)} \otimes\cdots\otimes N_{\sigma(1)} \otimes \bm{v}$, where $N_i \in \Br_{\max}^{m \times m}$ and $\bm{u}, \bm{v}\in \Br_{\max}^m$.
We denote $N_{\sigma(n)} \otimes\cdots\otimes N_{\sigma(1)}$ by $N^{\sigma}$.




Bouquard, Lenté and Billaut \cite{MaxPlus} dealt with the problem to minimize the objective value
\begin{equation}\label{eq-maxtriangular}
\begin{pmatrix}
\mathbb{1} & \mathbb{0} & \ldots & \mathbb{0}
\end{pmatrix}
\otimes N_{\sigma(n)} \otimes\cdots\otimes N_{\sigma(1)} \otimes 
 \begin{pmatrix*}
     \mathbb{0}\\
     \vdots\\
     \mathbb{0}\\
     \mathbb{1}
 \end{pmatrix*},
 \end{equation}
 where each $N_i$ is an upper triangular matrix in $\mathbb{R}_{\max}^{m\times m}$.
They showed that the problem in the case $m=2$ is a  generalization of the two-machine flow shop scheduling problem to minimize the makespan, and is solvable in $O(n\log n)$ time by using an extension of Johnson's rule \cite{Johnson} for the two-machine flow shop scheduling  
We show that this result is obtained as a corollary of our result. 
by reducing their problem to the optimal multiplication ordering problem for $2 \times 2$ triangular matrices in linear algebra. 
The derivation can be found in Appendix  \ref{App.maxplus_m=2}.
They also proved that the problem in the case $m=3$ is strongly NP-hard by reduction from the three-machine flow shop scheduling problem to minimize the makespan, which is known to be strongly NP-hard \cite{Garey1976}.
We make use of this result to prove Theorem \ref{theorem:matrix-hardness1} (ii). 

In preparation, we show an important relation to matrices in linear algebra.
\begin{lemma}\label{lemma:maxplus_linear}
Let $\bm{u}^\top = (u_1, \ldots, u_m)$, $\bm{v}^\top = (v_1, \ldots, v_m)$, and 
 $N_1, \dots , N_n$ be matrices in $\Br_{\max}^{m \times m}$.  
For a positive real $\gamma$, 
define  $\bm{w}^\top = (\gamma^{u_1}, \ldots, \gamma^{u_m})$, $\bm{y}^\top = (\gamma^{v_1}, \ldots, \gamma^{v_m})$, 
and matrices $M_i$, $i \in [n]$, in  $\Br^{m \times m}$ by 
 $(M_i)_{jk}= \gamma^{(N_i)_{jk}}$. 
Then
\[
\bm{w}^\top M^\sigma \bm{y}
= \sum_{j,k=1}^m \sum_{t \in V[\sigma]_{jk}}\gamma^{u_j + t + v_k},
\]
where $V[\sigma]_{jk}$ denotes the arguments $($standard sums$)$ of the max operation  of $(N^\sigma)_{jk}$.
\end{lemma}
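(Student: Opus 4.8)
The plan is to unwind both sides by induction on $n$ and observe that the recursion for matrix products in $(\Br_{\max},\oplus,\otimes)$ versus $(\Br,+,\cdot)$ differ only by replacing $\oplus=\max$ with $+$ and $\otimes=+$ with exponentiation base $\gamma$. Concretely, I would first fix the permutation $\sigma$ and, writing $N^\sigma = N_{\sigma(n)}\otimes\cdots\otimes N_{\sigma(1)}$ and $M^\sigma = M_{\sigma(n)}\cdots M_{\sigma(1)}$, define for each pair $(j,k)$ the multiset $V[\sigma]_{jk}$ of \emph{standard sums} arising in the evaluation of $(N^\sigma)_{jk}$: expanding the max-plus product fully, $(N^\sigma)_{jk} = \bigoplus_{j=i_0,i_1,\dots,i_n=k}\bigl((N_{\sigma(n)})_{i_n i_{n-1}}\otimes\cdots\otimes (N_{\sigma(1)})_{i_1 i_0}\bigr)$, so $V[\sigma]_{jk}$ is the multiset of the $m^{n-1}$ ordinary sums $\sum_{\ell=1}^n (N_{\sigma(\ell)})_{i_\ell i_{\ell-1}}$ over all index paths (entries equal to $\mathbb{0}=-\infty$ contribute $-\infty$, which I would handle by the convention $\gamma^{-\infty}=0$, so such terms simply vanish on the linear side).

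Next I would compute the linear side by the same full expansion: $(M^\sigma)_{jk} = \sum_{\text{paths}} \prod_{\ell=1}^n (M_{\sigma(\ell)})_{i_\ell i_{\ell-1}} = \sum_{\text{paths}} \prod_{\ell=1}^n \gamma^{(N_{\sigma(\ell)})_{i_\ell i_{\ell-1}}} = \sum_{\text{paths}} \gamma^{\sum_\ell (N_{\sigma(\ell)})_{i_\ell i_{\ell-1}}} = \sum_{t\in V[\sigma]_{jk}} \gamma^{t}$, where the key algebraic fact used is $\gamma^a\cdot\gamma^b=\gamma^{a+b}$, i.e. the monoid homomorphism $(\Br_{\max},\otimes)\to(\Br_{>0},\cdot)$, $a\mapsto\gamma^a$. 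Then
\[
\bm{w}^\top M^\sigma \bm{y} = \sum_{j,k=1}^m w_j (M^\sigma)_{jk} y_k = \sum_{j,k=1}^m \gamma^{u_j}\Bigl(\sum_{t\in V[\sigma]_{jk}}\gamma^{t}\Bigr)\gamma^{v_k} = \sum_{j,k=1}^m\sum_{t\in V[\sigma]_{jk}}\gamma^{u_j+t+v_k},
\]
which is exactly the claimed identity.

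The only real subtlety, and the step I would be most careful about, is bookkeeping the multiset $V[\sigma]_{jk}$ correctly: it must be defined as a multiset indexed by index paths (not as the set of distinct sum-values), since on the linear side each path contributes its own $\gamma$-term and collisions of values must not be collapsed — otherwise the identity fails. A secondary point is the treatment of $-\infty$ entries: one should note $\gamma^{-\infty}:=0$ is consistent with the convention that a path through a $\mathbb{0}$-entry of some $N_{\sigma(\ell)}$ contributes $-\infty$ to the max-plus sum (hence is irrelevant there) and $0$ to the ordinary sum (hence is irrelevant here), so the two sides remain in lockstep. Everything else is a routine distributivity/associativity expansion, which I would state once and not belabor.
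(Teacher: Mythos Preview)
Your proposal is correct and matches the paper's approach: the paper's proof is simply ``Straightforward by induction on $n$,'' and your full expansion over index paths $j=i_0,i_1,\dots,i_n=k$ together with the homomorphism $a\mapsto\gamma^a$ is exactly what that induction unwinds to. Your explicit remarks that $V[\sigma]_{jk}$ must be a multiset and that $\gamma^{-\infty}=0$ handles $\mathbb{0}$-entries are the right bookkeeping details.
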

\begin{proof}
Straightforward by induction on $n$.
\end{proof}

\begin{example}\label{ex:maxplus22}
For  $\bm{u}^\top = (u_1, u_2)$, $\bm{v}^\top = (v_1, v_2)$, and 
$N_i = \Mtt{a_i}{b_i}{\mathbb{0}}{d_i}\in \Br_{\max}^{2 \times 2}$ for $i \in [n]$,  let 
$\bm{w}^\top = (\gamma^{u_1}, \gamma^{u_2})$, $\bm{y}^\top = (\gamma^{v_1}, \gamma^{v_2})$, and 
$M_i = \Mtt{\gamma^{a_i}}{\gamma^{b_i}}{0}{\gamma^{d_i}}$. 
Then we have
\begin{align*}
\bm{w}^\top M^\sigma \bm{y}
&= \bm{w}^\top\Mtt{\prod_{i}\gamma^{a_i}}{\sum_{i}\gamma^{b_{\sigma(i)}}\prod_{j_1<i}\gamma^{d_{\sigma(j_1)}}\prod_{j_2>i}\gamma^{a_{\sigma(j_2)}}}{0}{\prod_{i}\gamma^{d_i}}\bm{y}\\
&= \bm{w}^\top\Mtt{\gamma^{\sum_{i}a_i}}{\sum_{i}\gamma^{b_{\sigma(i)}+\sum_{j_1<i}d_{\sigma(j_1)}+\sum_{j_2>i}a_{\sigma(j_2)}}}{0}{\gamma^{\sum_{i}d_i}}\bm{y}\\
&= \gamma^{u_1+\sum_{i}a_i+v_1}+\sum_{i=1}^n\gamma^{u_1+b_{\sigma(i)}+\sum_{j_1<i}d_{\sigma(j_1)}+\sum_{j_2>i}a_{\sigma(j_2)}+v_2} +\gamma^{u_2 +\sum_{i}d_i+v_2}.
\end{align*}
On the other hand, we have
\begin{align*}
N^\sigma
&= \Mtt{a_1 \otimes \cdots \otimes a_n}{\bigoplus_{i}b_{\sigma(i)}\bigotimes_{j_1<i}d_{\sigma(j_1)}\bigotimes_{j_2>i}a_{\sigma(j_2)}}{\mathbb{0}}{d_1 \otimes \cdots \otimes d_n}\\
&= \Mtt{\sum_i a_i}{\max_{i}\left[b_{\sigma(i)} +\sum_{j_1<i}d_{\sigma(j_1)}+\sum_{j_2>i}a_{\sigma(j_2)}\right]}{\mathbb{0}}{\sum_i d_i }.
\end{align*}
Therefore, we obtain $V[\sigma]_{11}=\{\sum_i a_i\}$, $V[\sigma]_{12}=\{b_{\sigma(i)} +\sum_{j_1<i}d_{\sigma(j_1)}+\sum_{j_2>i}a_{\sigma(j_2)} \mid i \in [n]\}$, $V[\sigma]_{21}=\{\mathbb{0}\}$, and $V[\sigma]_{22}=\{\sum_i d_i\}$.
\end{example}

\subsubsection{Derivation of the result of Bouquard et al. for $m=2$ }\label{App.maxplus_m=2}
We reduce the problem of \eqref{eq-maxtriangular} to the multiplication ordering problem for $2 \times 2$ triangular matrices in linear algebra.

\begin{lemma}\label{lemma:reduce_to_scheduling}
For an index $i\in[n]$, let $N_i=\Mtt{a_i}{b_i}{\mathbb{0}}{d_i}\in\Br_{\max}^{2\times 2}$  with $a_i,b_i,d_i\neq \mathbb{0}$, and $M_i$ be a matrix in 
 $\Br^{2\times 2}$ defined as in Lemma \ref{lemma:maxplus_linear}, i.e., $M_i=\Mtt{\gamma^{a_i}}{\gamma^{b_i}}{0}{\gamma^{d_i}}$
 for a positive number $\gamma$. 
 Then there exists an $\Omega\in\Br$ such that 
 $\Mot{1}{0}M^{\sigma}\Mto{0}{1} \leq \Mot{1}{0}M^{\rho}\Mto{0}{1}$
 implies 
 $\Mot{\mathbb{1}}{\mathbb{0}}\otimes N^{\sigma} \otimes\Mto{\mathbb{0}}{\mathbb{1}} \leq \Mot{\mathbb{1}}{\mathbb{0}}\otimes N^{\rho} \otimes\Mto{\mathbb{0}}{\mathbb{1}}$
  for any two permutations $\sigma, \rho:[n]\to [n]$ and any $\gamma$ with $\gamma>\Omega$. 
     \end{lemma}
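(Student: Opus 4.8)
The plan is to use the explicit formula from Lemma \ref{lemma:maxplus_linear}, applied to $\bm{u}^\top=\Mot{\mathbb{1}}{\mathbb{0}}=(0,-\infty)$ and $\bm{v}^\top=\Mot{\mathbb{0}}{\mathbb{1}}=(-\infty,0)$, so that the only surviving term comes from the $(1,2)$-entry: in the max-plus side the objective is $\max_{t\in V[\sigma]_{12}} t = (N^\sigma)_{12}$, while in the linear side $\Mot{1}{0}M^\sigma\Mto{0}{1}=\sum_{t\in V[\sigma]_{12}}\gamma^{t}$. Here $V[\sigma]_{12}$ is a set of at most $n$ real numbers (the ``standard sums'' $b_{\sigma(i)}+\sum_{j_1<i}d_{\sigma(j_1)}+\sum_{j_2>i}a_{\sigma(j_2)}$), and the hypotheses $a_i,b_i,d_i\neq\mathbb{0}$ guarantee these are genuine reals, not $-\infty$, so all the $\gamma^t$ are positive. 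First I would record these two identities and observe that the whole problem reduces to comparing $\sum_{t\in V[\sigma]_{12}}\gamma^t$ against $\sum_{t\in V[\rho]_{12}}\gamma^t$ versus comparing $\max V[\sigma]_{12}$ against $\max V[\rho]_{12}$.

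Next I would make the standard ``large base'' argument: for $\gamma$ sufficiently large, a sum of powers $\gamma^t$ is dominated by its largest term, so the linear ordering of the sums agrees with the ordering of the maxima of the exponent sets. Concretely, set $M_{\max}=\max\{|a_i|,|b_i|,|d_i| : i\in[n]\}$; then every element of every $V[\sigma]_{12}$ lies in $[-nM_{\max}, nM_{\max}]$, and any two distinct elements of $\bigcup_\sigma V[\sigma]_{12}$ that are unequal differ by at least some fixed $\delta>0$ (this $\delta$ exists because there are only finitely many permutations, hence finitely many such sums). For $\gamma>1$ one has, whenever $\max V[\sigma]_{12} > \max V[\rho]_{12}$, that $\sum_{t\in V[\sigma]_{12}}\gamma^t \geq \gamma^{\max V[\sigma]_{12}} \geq \gamma^{\max V[\rho]_{12}+\delta} = \gamma^\delta\cdot \gamma^{\max V[\rho]_{12}}$, while $\sum_{t\in V[\rho]_{12}}\gamma^t \leq n\,\gamma^{\max V[\rho]_{12}}$; so choosing $\Omega=\max\{1,\ n^{1/\delta}\}$ forces $\sum_{t\in V[\sigma]_{12}}\gamma^t > \sum_{t\in V[\rho]_{12}}\gamma^t$ for all $\gamma>\Omega$. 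Contrapositively, $\Mot{1}{0}M^\sigma\Mto{0}{1}\leq\Mot{1}{0}M^\rho\Mto{0}{1}$ implies $\max V[\sigma]_{12}\leq \max V[\rho]_{12}$, i.e.\ $\Mot{\mathbb{1}}{\mathbb{0}}\otimes N^\sigma\otimes\Mto{\mathbb{0}}{\mathbb{1}} \leq \Mot{\mathbb{1}}{\mathbb{0}}\otimes N^\rho\otimes\Mto{\mathbb{0}}{\mathbb{1}}$, as desired.

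The only mildly delicate point — and the one I would write out carefully — is the existence of the uniform gap $\delta$: one must note that $V[\sigma]_{12}$ as a \emph{set} depends only on $\sigma$, that there are finitely many permutations, and that the finitely many pairwise differences among all elements of $\bigcup_\sigma V[\sigma]_{12}$ that are nonzero have a positive minimum; this is what converts ``for each fixed pair $\sigma,\rho$ there is a threshold'' into ``there is a single threshold $\Omega$ working for all pairs''. Everything else is the routine domination estimate. This lemma, combined with Lemma \ref{lemma:triangular_basic}(i), immediately yields the $O(n\log n)$ algorithm of Bouquard et al.\ for $m=2$, since one can instantiate $\gamma$ as a symbolic parameter and only ever compare exponents — exactly as in the $\epsilon$-perturbation arguments used earlier for linear functions.
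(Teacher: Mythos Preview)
Your proposal is correct and follows essentially the same argument as the paper: both reduce to the identities $\Mot{1}{0}M^\sigma\Mto{0}{1}=\sum_{t\in V[\sigma]_{12}}\gamma^t$ and $\Mot{\mathbb{1}}{\mathbb{0}}\otimes N^\sigma\otimes\Mto{\mathbb{0}}{\mathbb{1}}=\max V[\sigma]_{12}$, then use a uniform positive gap and the bound $|V[\sigma]_{12}|\le n$ to choose $\Omega=n^{1/\text{gap}}$. The only cosmetic difference is that the paper defines the gap $\omega$ as the minimum nonzero difference among the \emph{maxima} $\max V[\sigma]_{12}$ over all $\sigma$, whereas you take the (possibly smaller) minimum nonzero difference among \emph{all} exponents in $\bigcup_\sigma V[\sigma]_{12}$; either choice works, and your explicit inclusion of $\max\{1,\cdot\}$ in $\Omega$ is a harmless safeguard the paper leaves implicit.
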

     
    \begin{proof}
    By Example \ref{ex:maxplus22}, we have
    \begin{align*}
     \Mot{1}{0}M^\sigma\Mto{0}{1}
     &= \sum_{t \in V[\sigma]_{12}} \gamma^{t},\\
     \Mot{\mathbb{1}}{\mathbb{0}}\otimes N^{\sigma} \otimes\Mto{\mathbb{0}}{\mathbb{1}}
     &= \max V[\sigma]_{12},
 \end{align*}
 where $V[\sigma]_{12}=\{b_{\sigma(i)} +\sum_{j_1<i}d_{\sigma(j_1)}+\sum_{j_2>i}a_{\sigma(j_2)} \mid i \in [n]\}$.
 Let us denote $\max V[\sigma]_{12}$ simply by $v(N^\sigma)$.
     If $v(N^\sigma) \leq v(N^\rho)$, then the statement clearly holds. We thus consider the case in which $v(N^\sigma) > v(N^\rho)$.  
 Define $\omega$ and $\Omega$ by
 \begin{align*}
     \omega= \min\{|v(N^\sigma)-v(N^\rho)| \colon v(N^\sigma) \neq v(N^\rho)\} \ \mbox{ and } \ 
     \Omega= n^{\frac{1}{\omega}}.
 \end{align*}
Let us  fix $\gamma>\Omega$. Since $\gamma^{\omega} > n$, we have
$\gamma^{v(N^\rho)+\omega} > n\gamma^{v(N^\rho)}$.
For any two permutations $\sigma, \rho$ such that $v(N^\sigma) < v(N^\rho)$,
the inequality $v(N^\sigma)+\omega < v(N^\rho)$ holds. 
 Moreover, the following two inequalities hold:
 \begin{align*}
     \sum_{t \in V[\sigma]_{12}} \gamma^{t} \leq n \gamma^{v(N^\sigma)} \  \mbox{ and }
      \ \gamma^{v(N^\rho)} \leq \sum_{t \in V[\rho]_{12}} \gamma^{t}.
 \end{align*}
 Combining the four inequalities, we obtain
 \begin{align*}
     \sum_{t \in V[\sigma]_{12}} \gamma^{t}
     < \sum_{t \in V[\rho]_{12}} \gamma^{t}.
 \end{align*} 
      \end{proof}

    Since $\gamma^{d_i} > 0$ for any $i \in [n]$, we have
    \begin{align*}
        \Mot{1}{0}\Mall\Mto{0}{1}
        &= \left(\prod_{i=1}^n \gamma^{d_i} \right) \Fall(0),
    \end{align*}
    where
    \begin{align*}
        f_i(x) = \gamma^{a_i - d_i}x+\gamma^{b_i - d_i}.
    \end{align*}
    It is guaranteed that $\theta(f_i)\in \left(-\frac{\pi}{2},\frac{\pi}{2}\right)_{2\pi}$ for any $i\in[n]$.
    Thus a permutation $\sigma$ is 
    minimum in both problems if $\frac{1 - \gamma^{\Ap{i} - \Dp{i}}}{\gamma^{\Bp{i}-\Dp{i}}}$ ($i=1, \dots ,n$) is nondecreasing. 

    Note that the length of $M_i$ is exponential of the length of $N_i$.
    Thus polynomial solvability  for the problem in the max-plus algebra does not immediately follow from the one in linear algebra.
    However, in this case, we can use the following index $\kappa$, 
    instead of computing $M_i$ explicitly.
    For a max-plus matrix $N=\Mtt{a}{b}{\mathbb{0}}{d}$, we define 
    \begin{align*}
 \kappa(N)=
 \begin{cases}
     (-1,b-a,d-b) & (a>d),\\
     (0,0,0) & (a=d),\\
     (1,d-b,a-b) &(a<d).
 \end{cases}
    \end{align*}
    It is not difficult to see that the following equivalence holds for a sufficiently large $\gamma$ 
    \begin{align*}
 \kappa(N_i)\,\preceq \,\kappa(N_j) 
 \ \Longleftrightarrow \ 
 \frac{\gamma^{d_i}-\gamma^{a_i}}{\gamma^{b_i}}\, \leq \,\frac{\gamma^{d_j}-\gamma^{a_j}}{\gamma^{b_j}},
    \end{align*}
    where $\preceq$ denotes the lexicographic order of $\kappa(N_i)$'s. 
    Therefore we can solve the problem in the max-plus algebra by simply sorting $\kappa(N_1),\dots,\kappa(N_n)$ in the lexicographic order.
 In fact, we do not have to take the third entry of $\kappa$ into account by the following lemma. 

\begin{lemma}\label{lemma:maxcommuting}
For an $i \in [m]$, let $N_i =\Mtt{a_i}{b_i}{\mathbb{0}}{d_i}$ with $a_i, b_i$, and $d_i$ in $\Br_{\max} \setminus \{\mathbb{0}\}$. 
Then the operation $\otimes$ is commutative on  $\{N_1, \dots, N_m\}$ if and only if
    \begin{enumerate}
      \item[{\rm (i)}] $a_i \geq d_i$ for any $i \in [m]$ and there exists a constant $c \in \Br$ such that {\rm (i-1)}  $c = b_k - a_k$ for any $k$ with $a_k > d_k$
      and {\rm (i-2)} $c \geq b_k - a_k$ for any $k$ with $a_k = d_k$,
      \item[or]
      \item[{\rm (ii)}] $a_i \leq d_i$ for any $i \in [m]$ and there exists a constant  $c \in \Br$ such that {\rm (ii-1)} $c = d_k - b_k$ for any $k$ with $a_k < d_k$ 
      and {\rm (ii-2)} $c \leq d_k - b_k$ for any $k$ with $a_k = d_k$.
   \end{enumerate}
  \end{lemma}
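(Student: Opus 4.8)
The plan is to compute the product $N_i \otimes N_j$ explicitly and compare it with $N_j \otimes N_i$, reducing commutativity to a condition on the $(1,2)$-entries only, since the diagonal entries of a product of upper triangular matrices are automatically symmetric. For $N_i = \Mtt{a_i}{b_i}{\mathbb{0}}{d_i}$ one has
\[
N_i \otimes N_j = \Mtt{a_i \otimes a_j}{(a_i \otimes b_j)\oplus(b_i \otimes d_j)}{\mathbb{0}}{d_i \otimes d_j},
\]
so the diagonal entries $a_i + a_j$ and $d_i + d_j$ are unchanged under swapping $i$ and $j$. Hence $N_i \otimes N_j = N_j \otimes N_i$ if and only if
\[
\max\{a_i + b_j,\ b_i + d_j\} \;=\; \max\{a_j + b_i,\ b_j + d_i\}.
\tag{$\star$}
\]
First I would record that $(\star)$ must hold for every pair $i,j \in [m]$, and analyze when it can hold.

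The next step is a case analysis on the relative order of $a_k$ and $d_k$ across the family. The key observation is that if some index $i$ has $a_i > d_i$ and some index $j$ has $a_j < d_j$, then in $(\star)$ the left side is $a_i + b_j$ (the term $b_i + d_j$ being dominated once we also use $(\star)$ for the pair with itself, or more directly by a direct inequality chase), while the right side is $b_j + d_i < b_j + a_i$ — giving a strict inequality, a contradiction. So either $a_k \geq d_k$ for all $k$, or $a_k \leq d_k$ for all $k$; these are the dichotomy (i) vs.\ (ii). In case (i), for two indices with $a_i > d_i$ and $a_j > d_j$, $(\star)$ reads $\max\{a_i+b_j, b_i+d_j\} = \max\{a_j+b_i, b_j+d_i\}$; using $b_i + d_j < b_i + a_j$ and $b_j + d_i < b_j + a_i$, both maxima are attained by the ``$a+b$'' terms, so $a_i + b_j = a_j + b_i$, i.e.\ $b_i - a_i = b_j - a_j$; call this common value $c$. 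For an index $k$ with $a_k = d_k$ paired against an index $i$ with $a_i > d_i$, $(\star)$ becomes $\max\{a_k + b_i,\ b_k + d_i\} = \max\{a_i + b_k,\ b_i + a_k\}$; since $a_k + b_i = a_k + a_i + c$ and $b_i + a_k = a_i + c + a_k$ coincide, the equation forces $b_k + d_i \leq a_k + b_i$, which rearranges to $b_k - a_k \leq a_i - d_i + b_i - a_i = c + (a_i - d_i)$ — hmm, I would recheck this to land exactly on $c \geq b_k - a_k$; the correct bookkeeping uses $d_i$ versus $a_i$ carefully, and when the family contains at least one strict index the constant $c = b_i - a_i$ is pinned down, yielding (i-1) and (i-2). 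Case (ii) is symmetric, obtained from (i) by transposing matrices (equivalently reversing the roles of $a$ and $d$, $b$ and $b$), and yields (ii-1), (ii-2).

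For the converse, I would simply verify that conditions (i-1)--(i-2) make both sides of $(\star)$ equal to $a_i + a_j + c$ for every pair (and likewise $d_i + d_j + c$ under (ii)), so all the matrices commute; this direction is a short substitution. The main obstacle I anticipate is the degenerate subcase where $a_k = d_k$ for several indices simultaneously with no strict index present: then $c$ is only constrained by inequalities and not uniquely determined, so the statement must allow any $c$ making (i-2) (resp.\ (ii-2)) feasible — I would handle this by noting that when every index is a tie, $(\star)$ reduces to $\max\{a_i + b_j, b_i + a_j\} = \max\{a_j + b_i, b_j + a_i\}$, which is automatically true, so commutativity is free and one may pick $c = \max_k (b_k - a_k)$ (for (i)) to satisfy the stated form. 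Getting the edge cases and the exact direction of the inequalities in (i-2)/(ii-2) right is the only delicate point; the rest is routine max-plus arithmetic.
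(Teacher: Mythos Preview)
Your proposal is correct and follows essentially the same route as the paper: both reduce commutativity of $N_i$ and $N_j$ to the single scalar equation on the $(1,2)$-entries, namely $\max\{a_i+b_j,\,b_i+d_j\}=\max\{a_j+b_i,\,b_j+d_i\}$, and then do a case analysis on the signs of $a_k-d_k$. The paper lays this out as six explicit cases for a pair (strict/strict, strict/tie, strict/opposite-strict, tie/tie, etc.), whereas you first derive the global dichotomy ``all $a_k\ge d_k$'' versus ``all $a_k\le d_k$'' from the mixed case and then handle the subcases; the content is the same. Your hesitation in the mixed strict/tie subcase is warranted but easily resolved: with $a_i>d_i$ and $a_k=d_k$, equality of the two maxima forces the right-hand side to be attained by $a_k+b_i$ (since the other term $b_k+d_i$ is strictly below $a_i+b_k$ on the left), and then $a_i+b_k\le a_k+b_i$ gives exactly $b_k-a_k\le b_i-a_i=c$, which is (i-2).
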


\begin{proof}
    Let $N_1$ and $N_2$ be commuting matrices, i.e., $N_2 \otimes N_1 = N_1 \otimes N_2$. This means that
    \begin{equation}\label{equation:Maxcommute}
    a_2 \otimes b_1 \oplus b_2 \otimes d_1 = a_1 \otimes b_2 \oplus b_1 \otimes d_2.
  \end{equation}
  Consider the following six cases:
\begin{itemize}
  \item Case 1: $a_1 > d_1$ and $a_2 > d_2$. Then we have $a_1 \otimes b_2 > b_2 \otimes d_1$ and $a_2 \otimes b_1 > b_1 \otimes d_2$. Hence (\ref{equation:Maxcommute}) is equivalent to the condition that  $a_2 \otimes b_1 = a_1 \otimes b_2$, which implies that  $b_1 - a_1 = b_2 - a_2$.

  \item Case 2: $a_1 > d_1$ and $a_2 = d_2$. Then we have $a_1 \otimes b_2 > b_2 \otimes d_1$. Hence (\ref{equation:Maxcommute}) is equivalent to the condition that  $a_2 \otimes b_1 = a_1 \otimes b_2 \oplus b_1 \otimes a_2$. Therefore, we have $a_1 \otimes b_2 \leq b_1 \otimes a_2$, i.e., $b_2 - a_2 \leq b_1 -a_1$.

  \item Case 3: $a_1 > d_1$ and $a_2 < d_2$. Then we have $a_1 \otimes b_2 > b_2 \otimes d_1$ and $a_2 \otimes b_1 < b_1 \otimes d_2$, which never implies  (\ref{equation:Maxcommute}).

  \item Case 4: $a_1 = d_1$ and $a_2 = d_2$. Then (\ref{equation:Maxcommute}) is equivalent to the condition that  $a_2 \otimes b_1 \oplus b_2 \otimes a_1 = a_1 \otimes b_2 \oplus b_1 \otimes a_2$, which is a tautology. 

  \item Case 5: $a_1 = d_1$ and $a_2 < d_2$. Since  $a_2 \otimes b_1 < b_1 \otimes d_2$,  (\ref{equation:Maxcommute}) is equivalent to the condition that  $b_2 \otimes d_1 = d_1 \otimes b_2 \oplus b_1 \otimes d_2$. Therefore, we have $d_1 \otimes b_2 \geq b_1 \otimes d_2$, i.e., $d_1 - b_1 \geq d_2 - b_2$.

  \item Case 6: $a_1 < d_1$ and $a_2 < d_2$. Since $a_1 \otimes b_2 < b_2 \otimes d_1$ and $a_2 \otimes b_1 < b_1 \otimes d_2$,  (\ref{equation:Maxcommute}) is equivalent to the condition that  $b_2 \otimes d_1 = b_1 \otimes d_2$. Therefore, we have $d_1 - b_1 = d_2 - b_2$.
\end{itemize}
This case analysis completes the proof. 
\end{proof}

\begin{lemma}
For an  $i \in [m]$,   let $N_i =\Mtt{a_i}{b_i}{\mathbb{0}}{d_i}$ with  
$a_i, b_i$, and $d_i$ in $\Br_{\max} \setminus \{\mathbb{0}\}$.
If
  \begin{enumerate}[(1)]
    \item $a_i > d_i$ for any $i \in [m]$ and there exists a constant $c \in \Br$ such that $c= b_i - a_i$ for any $i \in [m]$,  
     \item $a_i = d_i$ for any $i \in [m]$,
    \item[or]
    \item $a_i < d_i$ for any $i \in [m]$ and there exists a constant $c \in \Br$ such that $c = d_i - b_i$ for any $i \in [m]$,
  \end{enumerate}
  then the operation $\otimes$ is commutative on the set $\{N_1, \dots, N_m\}$.
\end{lemma}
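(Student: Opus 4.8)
The plan is to verify directly that $N_j\otimes N_i = N_i\otimes N_j$ for every pair $i,j\in[m]$; since commutativity on a set means pairwise commutativity, this suffices. Writing out the product of two upper triangular $2\times 2$ max-plus matrices,
\[
N_i\otimes N_j=\Mtt{a_i\otimes a_j}{a_i\otimes b_j\oplus b_i\otimes d_j}{\mathbb{0}}{d_i\otimes d_j},
\]
and the analogous expression for $N_j\otimes N_i$, one sees that the $(1,1)$ and $(2,2)$ entries already agree because ordinary addition is commutative. Hence the whole claim reduces to showing $\max\{a_i+b_j,\,b_i+d_j\}=\max\{a_j+b_i,\,b_j+d_i\}$ for all $i,j$, which is exactly equation~\eqref{equation:Maxcommute} from the proof of Lemma~\ref{lemma:maxcommuting}. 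In fact the statement is an immediate corollary of Lemma~\ref{lemma:maxcommuting}: hypothesis (1) gives condition (i) there with the same $c$ and empty set $\{k : a_k=d_k\}$; hypothesis (2) gives condition (i) with $c=\max_k(b_k-a_k)$; hypothesis (3) gives condition (ii) with $c=\min_k(d_k-b_k)$. But a self-contained case check is just as short, so I would present that.

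I would then treat the three hypotheses separately. In case (1), substitute $b_i=a_i+c$ and $b_j=a_j+c$: the left side becomes $a_i+c+\max\{a_j,d_j\}$ and the right side becomes $a_j+c+\max\{a_i,d_i\}$, and since $a_i>d_i$ and $a_j>d_j$ each maximum is attained at the $a$-term, so both sides equal $a_i+a_j+c$. In case (3), substitute $b_i=d_i-c$ and $b_j=d_j-c$ and argue symmetrically: the two sides become $d_j-c+\max\{a_i,d_i\}$ and $d_i-c+\max\{a_j,d_j\}$, and now $a_i<d_i$, $a_j<d_j$ force both maxima onto the $d$-terms, giving the common value $d_i+d_j-c$. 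In case (2), with $a_i=d_i$ and $a_j=d_j$ the two argument sets $\{a_i+b_j,\,b_i+a_j\}$ and $\{a_j+b_i,\,b_j+a_i\}$ coincide termwise, so their maxima are trivially equal.

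The only point needing any attention — and it is entirely routine — is tracking which argument of each $\max$ dominates; the strict inequalities $a_i>d_i$ (resp.\ $a_i<d_i$) in cases (1) and (3) make this unambiguous, while case (2) needs no inequality at all. I therefore do not expect a genuine obstacle here: the lemma is essentially a bookkeeping specialization of Lemma~\ref{lemma:maxcommuting}.
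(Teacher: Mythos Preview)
Your proposal is correct and matches the paper's treatment: the paper states this lemma without proof immediately after Lemma~\ref{lemma:maxcommuting}, treating it as an obvious specialization, and you have identified exactly that dependence (hypotheses (1), (2), (3) instantiate conditions (i), (i), (ii) of Lemma~\ref{lemma:maxcommuting} respectively). Your self-contained case check is a nice bonus and entirely sound.
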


The lemma means that the third entry of $\kappa$ is unnecessary, i.e., the lexicographic ordering for $\kappa^{\ast}$ gives an minimum permutation, where
\[
\kappa^{\ast}(N) =
\begin{cases}
    (-1,b-a) & (a>d),\\
    (0,0) & (a=d),\\
    (1,d-b) &(a<d)
\end{cases}
\]
for $N = \Mtt{a}{b}{\mathbb{0}}{d}$. Thus we obtain the next theorem.
\begin{theorem}
The problem (\ref{eq-maxtriangular}) for  $m=2$ can be solved by computing the lexicographic order for $\kappa^{\ast}$.
\end{theorem}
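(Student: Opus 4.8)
The plan is to chain together the reductions already established in this subsection. Fix a parameter $\gamma$ larger than the finitely many thresholds that arise below. By Lemma~\ref{lemma:reduce_to_scheduling} together with the factorization $\Mot{1}{0}\Mall\Mto{0}{1}=\bigl(\prod_{i=1}^{n}\gamma^{d_i}\bigr)\Fall(0)$, minimizing \eqref{eq-maxtriangular} for $m=2$ reduces to the composition ordering problem for the monotone linear functions $f_i(x)=\gamma^{a_i-d_i}x+\gamma^{b_i-d_i}$, whose vectors $\Ora{f_i}$ all lie in the open right half-plane; and, as recorded just before this theorem, such a problem is minimized by any permutation that arranges the $f_i$ so that $(1-\alpha(f_i))/\beta(f_i)=\dfrac{\gamma^{d_i}-\gamma^{a_i}}{\gamma^{b_i}}$ is nondecreasing. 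By the equivalence $\kappa(N_i)\preceq\kappa(N_j)\Leftrightarrow\frac{\gamma^{d_i}-\gamma^{a_i}}{\gamma^{b_i}}\le\frac{\gamma^{d_j}-\gamma^{a_j}}{\gamma^{b_j}}$ (valid for all large $\gamma$), ordering $N_1,\dots,N_n$ by the lexicographic order of $\kappa(N_i)$ realizes exactly such an arrangement, hence a minimum permutation for \eqref{eq-maxtriangular}; and since the max-plus objective does not involve $\gamma$, this permutation is in fact optimal for the original problem.

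It remains to show that the third coordinate of $\kappa$ is dispensable, i.e.\ that the lexicographic order of $\kappa^\ast$ already suffices. The two orders agree except on a maximal block of matrices sharing a common value $\kappa^\ast(N_i)=\kappa^\ast(N_j)$: within such a block the $\kappa$-order would break ties by the third coordinate, whereas the $\kappa^\ast$-order leaves the block in an arbitrary internal order. The key point — which I would isolate as the crux of the theorem — is that $\kappa^\ast(N_i)=\kappa^\ast(N_j)$ forces $\{N_i,N_j\}$ into case (i) or (ii) of Lemma~\ref{lemma:maxcommuting} (equivalently, into hypothesis (1), (2), or (3) of the commuting-matrices lemma stated just above): a shared value $(-1,c)$ gives $a_i>d_i$, $a_j>d_j$, $b_i-a_i=c=b_j-a_j$; a shared value $(0,0)$ gives $a_i=d_i$, $a_j=d_j$; and a shared value $(1,c)$ gives $a_i<d_i$, $a_j<d_j$, $d_i-b_i=c=d_j-b_j$. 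In every case $N_i\otimes N_j=N_j\otimes N_i$, so each maximal $\kappa^\ast$-block consists of pairwise $\otimes$-commuting matrices; its contribution to $N^\sigma$ is then independent of the internal ordering, and any permutation consistent with the $\kappa^\ast$-order yields the same product $N^\sigma$, hence the same objective value, as some permutation consistent with the $\kappa$-order. Combined with the first paragraph, this shows that sorting by $\kappa^\ast$ produces an optimal permutation, as claimed.

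The only routine-but-delicate ingredient is the uniform asymptotic matching underlying the $\kappa$-equivalence cited above: writing $\frac{\gamma^{d_i}-\gamma^{a_i}}{\gamma^{b_i}}=\gamma^{d_i-b_i}-\gamma^{a_i-b_i}$, one checks that for $a_i>d_i$ it tends to $-\infty$ with leading term $-\gamma^{a_i-b_i}$ and is ordered by $b_i-a_i$ within this class, for $a_i=d_i$ it equals $0$, and for $a_i<d_i$ it tends to $+\infty$ with leading term $\gamma^{d_i-b_i}$ and is ordered by $d_i-b_i$, the three classes separating as negative $<$ zero $<$ positive; a single sufficiently large $\gamma$ settles all the finitely many pairwise comparisons at once. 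I do not expect any new idea beyond this bookkeeping and the commutativity observation above, since the asymptotic equivalence is exactly what the preceding discussion has already summarized.
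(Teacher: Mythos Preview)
Your proposal is correct and follows essentially the same route as the paper: reduce the max-plus problem to the linear-algebra matrix problem via Lemma~\ref{lemma:reduce_to_scheduling}, then to the composition ordering problem for monotone $f_i$ with $\theta(f_i)\in(-\pi/2,\pi/2)_{2\pi}$, invoke the equivalence between the $\kappa$-lexicographic order and the ratio $(\gamma^{d_i}-\gamma^{a_i})/\gamma^{b_i}$ for large $\gamma$, and finally drop the third coordinate of $\kappa$ by observing that matrices sharing a $\kappa^\ast$-value fall into one of the three commuting cases of the lemma immediately preceding the theorem. The paper does not give a separate proof block for this theorem---it is presented as the culmination of the surrounding discussion---and your write-up is simply a more explicit recapitulation of that discussion, including the asymptotic bookkeeping that the paper leaves to the reader.
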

We remark that the permutation above is different from the one provided by Bouquard et al. \cite{MaxPlus}, which is obtained by the lexicographic ordering for $\kappa_{\text{BLB}}$, where
\[
\kappa_{\text{BLB}}(N) =
\begin{cases}
    (-1,b-a) & (a \geq d),\\
    (1,d-b) &(a<d).
\end{cases}
\]
By Lemma \ref{lemma:maxcommuting} (1), 
the ordering $\tilde{\kappa}$ obtained from the lexicographic ordering for $\kappa^{\ast}$ by sorting $N_i$'s with $a_i \geq d_i$ in nondecreasing order with respect to $b_i - a_i$ is also minimum. 
It is not difficult to see that $\tilde{\kappa}$ is  the lexicographic ordering for $\kappa_{\text{BLB}}$.
orders of optimal orderings of the minimization problem are hence optimal.

We also  note that it is difficult to introduce ``simultaneous triangularizability" in the max-plus algebra because only matrices that can be obtained from diagonal matrices by permuting the rows and/or columns are invertible.

Finally we prove Therem \ref{theorem:matrix-hardness1} (ii).

\begin{proof}[Proof of Therem \ref{theorem:matrix-hardness1} (ii)]
 We first show the case of $m=3$. 
 We reduce the strongly NP-hard problem \eqref{eq-maxtriangular} to the problem. 
    For an index $i \in [n]$, 
    let $N_i$ be a matrix in $ \Br_{\max}^{3\times3}$  such that 
    $$N_i=
    \begin{pmatrix*}
        a_i^{(1,1)} & a_i^{(1,2)} & a_i^{(1,3)}\\
        \mathbb{0} & a_i^{(2,2)} & a_i^{(2,3)}\\
        \mathbb{0} & \mathbb{0} & a_i^{(3,3)}
    \end{pmatrix*},
    $$
    where each $a_i^{(j,k)}\in\mathbb{Z}$.
    For a real $\gamma$ with $\gamma>\frac{n(n+1)}{2}$, we construct $n$ matrices 
    $M_1,\dots,M_n\in\Br^{3\times3}$ as follows:
    $$
    M_i = 
    \begin{pmatrix*}
        \gamma^{a_i^{(1,1)}} & \gamma^{a_i^{(1,2)}} & \gamma^{a_i^{(1,3)}}\\
        0 & \gamma^{a_i^{(2,2)}} & \gamma^{a_i^{(2,3)}}\\
        0 & 0 & \gamma^{a_i^{(3,3)}}
    \end{pmatrix*}.
    $$
    We note that the size of $M_i$ is 
    $O(\sum_{j,k} \log(\gamma^{a_i^{(j,k)}})) = O(\sum_{j,k} {a_i^{(j,k)}\log(\gamma)})$,
    which implies that the sum of the size of $M_1,\dots,M_n$ is bounded by a polynomial of $n$ and $\max_{i,j,k}a_i^{(j,k)}$ if $\gamma =\frac{n(n+1)}{2}+1$.
     In a manner similar to Lemma \ref{lemma:reduce_to_scheduling}, it is enough to show that a permutation $\sigma:[n]\rightarrow[n]$ that  minimizes
    $
    \begin{pmatrix*}
        1 & 0 & 0
    \end{pmatrix*}
    M^\sigma
    \begin{pmatrix*}
        0 \\ 0 \\ 1
    \end{pmatrix*}
    $
     also minimizes 
    $
    \begin{pmatrix*}
        \mathbb{1} & \mathbb{0} & \mathbb{0}
    \end{pmatrix*}
    \otimes
    N^\sigma
    \otimes
    \begin{pmatrix*}
        \mathbb{0} \\ \mathbb{0} \\ \mathbb{1}
    \end{pmatrix*}$. 
    For simplicity of notation, we write $v(M^\sigma)$ (resp., $v(N^\sigma)$) instead of 
    $
    \begin{pmatrix*}
        1 & 0 & 0
    \end{pmatrix*}
    M^\sigma
    \begin{pmatrix*}
        0 \\ 0 \\ 1
    \end{pmatrix*}
    $
    (resp.,
    $
    \begin{pmatrix*}
        \mathbb{1} & \mathbb{0} & \mathbb{0}
    \end{pmatrix*}
    \otimes
    N^\sigma
    \otimes
    \begin{pmatrix*}
        \mathbb{0} \\ \mathbb{0} \\ \mathbb{1}
    \end{pmatrix*}
    $).
    Then we need to show that,  for two permutations $\sigma$ and $\rho$, 
    $v(N^\sigma) < v(N^\rho)$ 
    implies 
    $v(M^\sigma) < v(M^\rho)$.
    By Lemma \ref{lemma:maxplus_linear}, we have 
    \begin{align*}
    v(M^\sigma)
    =
    \sum_{t\in V[\sigma]_{13}} \gamma^t \ \ \mbox{ and } \ \ 
    v(N^\sigma)
    =
    \max V[\sigma]_{13}. 
    \end{align*}
    Note that  
    \begin{align*}
        |V[\sigma]_{jk}|=
        \begin{cases}
            1 & \text{if } j=k,\\
            n & \text{if } j+1=k, \\
            \frac{n(n+1)}{2} & \text{if } j+2=k,
        \end{cases}
    \end{align*}
which can be easily proved by using the induction of $i \in [n]$. 
    Let $\sigma$ and $\rho$ be permutations such that  $v(N^\sigma)<v(N^\rho)$, then we have
    \begin{align*}
        v(M^\sigma)
        &\leq |V[\sigma]_{13}| \gamma^{v(N^\sigma)}
        = \frac{n(n+1)}{2} \gamma^{v(N^\sigma)}
        < \gamma^{v(N^\sigma)+1}
        \leq \gamma^{v(N^\rho)} 
        \leq v(M^\rho),
    \end{align*}
    which proves the case of $m=3$.

    For $m \geq 4$,  consider the problem to minimize the value
    \[
    \bm{w}^\top \bar{M}_{\sigma(n)}\cdots \bar{M}_{\sigma(1)} \bm{y},
    \]
    where 
    \[
    \bm{w}=\left(\begin{array}{c}
        \\
        O_{m-3,1}\\
        \\
        \hline
        1\\
        0\\
        0
    \end{array}\right), \,
    \bm{y}=\left(\begin{array}{c}
        \\
        O_{m-3,1}\\
        \\
        \hline
        0\\
        0\\
        1
        \end{array}\right),\, 
        \bar{M}_i= \left(\begin{array}{c|c}
        & \\
        O_{m-3,m-3} & O_{m-3,3}\\
        &\\
        \hline
        & \\
        O_{3, m-3} & M_i \\
        & 
        \end{array}\right).
    \]
    The symbol $O_{j,k}$ stands for the $j \times k$ zero matrix.  
It is clear that each $\bar{M_i}$ is nonnegative and upper triangular, and 
the objective value is equal to $v(M^\sigma)$, which proves the theorem.
\end{proof}

\bibliographystyle{abbrv}
\bibliography{reference.bib}
\end{document}